\numberwithin{equation}{section}
\newtheorem*{thm*}{Theorem}
\newtheorem{conj}[equation]{Conjecture}
\newtheorem*{conj*}{Conjecture}
\newtheorem{thm}[equation]{Theorem}
\newtheorem{lem}[equation]{Lemma}
\newtheorem*{notation*}{Notation}
\newtheorem{prop}[equation]{Proposition}
\newtheorem{cor}[equation]{Corollary}
\newtheorem{question}[equation]{Question}
\theoremstyle{definition}
\newtheorem{defn}[equation]{Definition}
\newtheorem{exm}[equation]{Example}
\theoremstyle{remark}
\newtheorem{rem}[equation]{Remark}
\crefname{thm}{Theorem}{Theorems}
\crefname{prop}{Proposition}{Propositions}
\crefname{lem}{Lemma}{Lemmas}
\DeclarePairedDelimiter{\set}{\{}{\}}
\tikzset{snake it/.style={decorate, decoration=snake}}
\tikzstyle{GraphNode}=[circle, draw=black, fill=black, inner sep=2pt, minimum size=5pt]
\tikzstyle{GraphEdge}=[black]
\def\instring#1#2{TT\fi\begingroup
  \edef\x{\endgroup\noexpand\in@{#1}{#2}}\x\ifin@}
\def\isuppercase#1{%
  \instring{#1}{ABCDEFGHIJKLMNOPQRSTUVWXYZ}%
}%
\newcommand{\C@lIfUpper}[1]{
 \if\isuppercase{#1}\mathscr{#1}%
 \else #1%
 \fi
}
\newcommand{\cat}[1]{\mathit{\@tfor\next:=#1\do{\C@lIfUpper{\next}}}}
\newcommand{\Z}{\mathbb Z}
\newcommand{\cT}{\mathcal T}
\newcommand{\Spin}{\mathrm{Spin}}
\newcommand{\MTSpin}{\mathit{MTSpin}}
\newcommand{\ko}{\mathit{ko}}
\newcommand{\Sq}{\mathrm{Sq}}
\newcommand{\cP}{\mathcal{P}}
\newcommand{\cK}{\mathcal{K}}
\newcommand{\pt}{\mathrm{pt}}
\newcommand{\matt}[1]{{\bf \color{red} [MY: #1]}}
\newcommand{\arun}[1]{{\bf \color{blue} [AD: #1]}}
\newcommand{\SU}{\mathrm{SU}}
\newcommand{\SO}{\mathrm{SO}}
\newcommand{\GL}{\mathrm{GL}}
\renewcommand{\O}{\mathrm O}
\newcommand{\id}{\mathrm{id}}
\newcommand{\RP}{\mathbb{RP}}
\newcommand{\Pin}{\mathrm{Pin}}
\newcommand{\CP}{\mathbb{CP}}
\newcommand{\pinp}{pin\textsuperscript{$+$}\xspace}
\newcommand{\pinm}{pin\textsuperscript{$-$}\xspace}
\newcommand{\spinc}{spin\textsuperscript{$c$}\xspace}
\newcommand{\pinc}{pin\textsuperscript{$c$}\xspace}
\newcommand{\term}{\emph}
\newcommand{\cA}{\mathcal A}
\newcommand{\Ext}{\mathrm{Ext}}
\newcommand{\MTSO}{\mathit{MTSO}}
\newcommand{\MTPin}{\mathit{MTPin}}
\definecolor{violet}{RGB}{148,0,211}
\definecolor{DarkGreen}{RGB}{0,150,0}
\definecolor{amber}{rgb}{1.0,0.75,0.0}
\newcommand\MAILTO[1]{\href{mailto:#1}{\nolinkurl{#1}}}
\DeclareDocumentCommand{\shortexact}{s O{} O{} mmmm}{
\IfBooleanTF{#1}{ 
\begin{tikzcd}[ampersand replacement=\&]
	{1} \& {#4} \& {#5} \& {#6} \& {1#7}
	\arrow[from=1-1, to=1-2]
	\arrow["#2", from=1-2, to=1-3]
	\arrow["#3", from=1-3, to=1-4]
	\arrow[from=1-4, to=1-5]
\end{tikzcd}
}{ 
\begin{tikzcd}[ampersand replacement=\&]
	{0} \& {#4} \& {#5} \& {#6} \& {0#7}
	\arrow[from=1-1, to=1-2]
	\arrow["#2", from=1-2, to=1-3]
	\arrow["#3", from=1-3, to=1-4]
	\arrow[from=1-4, to=1-5]
\end{tikzcd}
}}
\let\oldtocsection=\tocsection
\let\oldtocsubsection=\tocsubsection
\let\oldtocsubsubsection=\tocsubsubsection
\renewcommand{\tocsection}[2]{\hspace{0em}\bfseries\oldtocsection{#1}{#2}}
\renewcommand{\tocsubsection}[2]{\hspace{1em}\oldtocsubsection{#1}{#2}}
\renewcommand{\tocsubsubsection}[2]{\hspace{2em}\oldtocsubsubsection{#1}{#2}}
\newcommand{\mc}{\mathcal}
\definecolor{sanddune}{rgb}{0.59, 0.44, 0.09}
\definecolor{darkblue}{RGB}{0,0,102}
\definecolor{darkred}{rgb}{0.5,0.,0.}
\definecolor{BlueViolet}{RGB}{138,43,226}
\definecolor{SkyBlue}{RGB}{30,144,255}
\definecolor{DarkGreen}{RGB}{0,100,0}
\crefname{defn}{Definition}{Definitions}
\title{Global Structure in the Presence of a Topological Defect}
\author{Arun Debray}
\address{Department of Mathematics, University of Kentucky,
719 Patterson Office Tower,
Lexington, KY 40506-0027}
\email{\href{mailto:a.debray@uky.edu}{a.debray@uky.edu}}
\author{Weicheng Ye}
\address{Department of Physics and Astronomy, and Stewart Blusson Quantum Matter Institute,
University of British Columbia, Vancouver, BC, Canada V6T 1Z1}
\email{\href{mailto:victoryeofphysics@gmail.com}{victoryeofphysics@gmail.com}}
\author{Matthew Yu}
 \address{Mathematical Institute, University of Oxford, Woodstock Road, Oxford, OX2 6GG, UK}
 \email{\href{mailto:yumatthew70@gmail.com}{yumatthew70@gmail.com}}
\begin{document}

\raggedbottom
\vfuzz=3pt

\begin{abstract}
We investigate the global structure of topological defects which wrap a submanifold $F\subset M$ in a quantum field theory defined on a closed manifold $M$. The Pontryagin--Thom construction oversees the interplay between the global structure of $F$ and the global structure of $M$. We will employ this construction in two distinct mathematical frameworks with physical applications.
The first framework is the concept of a characteristic structure, consisting of the data of pairs of manifolds $(M,F)$ where $F$ is Poincaré dual to some characteristic class. 
This concept is discussed in the mathematics literature and shown here to have meaningful physical interpretations related to defects.
In our examples, we will mainly focus on the case where $M$ is 4-dimensional and $F$ has codimension 2. 
The second framework uses obstruction theory and the fact that spontaneously broken finite symmetries leave behind domain walls, to determine the conditions on which dimensions a higher-form finite symmetry can spontaneously break. We explicitly study the cases of higher-form $\Z/2$ symmetry, but the method can be generalized to other groups.
\end{abstract}

\thanks{
We would like to thank
Cameron Krulewski,
Yu Leon Liu,
Natalia Pacheco-Tallaj,
Ryan Thorngren,
and
Kevin Walker
for helpful discussions regarding this paper. WY was supported by the Natural Sciences and Engineering Research Council of Canada (NSERC) and the European Commission under the Grant Foundations of
Quantum Computational Advantage. 
MY is supported by the EPSRC Open Fellowship EP/X01276X/1. He would also like to thank the University of Kentucky, where part of this work was completed. The authors contribute equally and are listed in alphabetical order. No authors have competing
interests to declare that are relevant to the content of this article. }

\maketitle

\tableofcontents

\section{Introduction}
Topological defects are essential tools in understanding the properties of quantum field theory (QFT), and also connect to various phenomena in condensed matter and cosmology. Understanding properties regarding topological defects is a framework that has produced significant advancements in quantum field theory \cite{Oshikawa:1996ww,Oshikawa:1996dj,Frohlich:2004ef,Frohlich:2006ch,Antinucci:2024izg}, and continues to be an active and evolving area of research \cite{Cuomo:2021rkm,Roy:2021jus,Samanta:2023fvs,Wang:2022rmd,Zhou:2023fqu}. Moreover, modern perspectives on symmetries associate these topological defects with generalized symmetries \cite{Gaiotto:2014kfa}. The structure of symmetries is characterized by the fusion, braiding, etc, of these topological defects, which gives the \emph{algebraic} structure of topological defects.

On the other hand, to understand the properties of a quantum field theory, sometimes it is helpful to put the theory on a nontrivial manifold with nontrivial \emph{global} structure, which helps us understand many properties of the theory, such as its RG flows and anomalies (see, for example, \cite{Fre93,Lurie2009,Witten:2016cio,Tachikawa:2016cha,Komargodski:2017keh,Cordova:2019bsd,FH21InvertibleFT}). In the framework of topological field theory (TFT), this global structure is exactly the \emph{tangential} structure of the underlying manifold, such as orientation, spin structure, pin$^\pm$ structure, etc. Analyzing the tangential structure helps classify TFTs and the quantum phases with symmetries associated to these global structures \cite{FH21InvertibleFT,freed2019lectures,Luuk2021}.

However, it is less common to incorporate the tangential structure of a defect into this analysis. Traditionally, when discussing defects and their structures---such as fusion, braiding, and related properties---one typically does not even place the defect on a nontrivial manifold, but rather on flat space, suggesting a focus solely on their \emph{local} structure i.e.\ the space near the defect. More recently, there has been a great deal of research in the mathematical physics literature incorporating defects into the Atiyah--Segal approach to field theory, sometimes under the name \emph{defect TQFT}, including \cite{Fre93,Lurie2009,Davydov:2011kb,BCP14a,BCP14b,CR12,AFT17,CRS19,CRS20, CMRSS21,FT2022,CEG23,CM23,Fre24,FMM24,FMT24,Car23}. In these works, though, the tangential structure of the defect does not generally come into consideration.

It has been recently proposed that in the setting of spontaneous symmetry breaking (SSB), an algebro-topological map called the \term{Smith homomorphism} relates the anomaly on the symmetry defect with the anomaly in the bulk \cite{Hason:2020yqf,COSY20}. Later, the Smith homomorphism was expanded into a long exact sequence relating theories in different dimensions \cite{Debray:2023ior,Debray:2024wxm}, where a nontrivial interplay between the tangential structure of defects implementing SSB and the tangential structure of the bulk plays a central role. This has been applied in~\cite{DL23, DYY23,DDHM23, Deb23, DNT24, DK25} to study questions in both physics and mathematics.

In this light, we want to make a detailed investigation of the global structure of defects in QFTs from the point of view of algebraic topology. The purpose of this paper is to study:
\begin{question}\label{question:main1}
    What can we learn by taking into account the global structure of both the manifold and the defect in a QFT?
\end{question}
 Consider a QFT $\mathcal{T}$, which can be defined on a closed manifold $M$ with tangential structure $\xi$. For example, $\mathcal{T}$ may be a theory that involves local fermions and hence can only be defined on a spin manifold $M$ with some chosen spin structure. In the presence of non-interacting topological defects, we can consider putting the defect on a submanifold $F$ embedded in $M$, denoted as $F\hookrightarrow M$. However, now the QFT $\mathcal{T}$ is only defined on $M\backslash F$, the submanifold of $M$ after deleting $F$, and generally speaking we only need to choose the tangential structure $\xi$ on $M\backslash F$. In particular, we do not need to choose the tangential structure $\xi$ on $M$ itself, and $M$ may not even have a $\xi$-structure. This will be the starting point of our analysis. The Pontryagin--Thom construction provides a unified mathematical framework for establishing the existence of such $F$ given a manifold $M$ that does not necessarily have a $\xi$-structure. By leveraging this topological construction, we can exhibit new consequences of the global structure in physical theories. 

We will study two instances where \cref{question:main1} arises in a QFT. The first is an application of the Pontryagin--Thom construction which can be used to study global properties of external defects, or impurities, inserted in a theory. When $M$ is not necessarily compatible with $\xi$, we assume that the defects are supported on an \emph{embedded} 
manifold $F$ which is Poincaré dual to some cohomology class, such that $\xi$ is still compatible with $M\backslash F$. Of particular interest in this work will be degree 2 cohomology classes with $\Z/2$-coefficients, such as the second Stiefel-Whitney class $w_2\in H^2(M;\Z/2)$. 

When considered as a pair $(M,F)$, i.e. a manifold with a particular defect wrapped on a submanifold $F$ with certain properties, one is led to the concept of \emph{characteristic pairs}. 
\begin{defn}\label{def:sign_charbord}(\cref{def:charbordism})
     Let $M$ be a manifold with $\xi$-tangential structure, and let $\mathcal P\in H^n(M; A)$. A $(\xi, \mathcal P)$-\textit{characteristic pair} consists of a pair $(M,F)$ where $F$ is a proper submanifold of $M$ Poincaré dual to a cohomology class $\mathcal{P}(M)$ of $M$, and the boundary of $M$ intersects $F$ precisely and transversely at the boundary of $F$. We will use the notation $\mathit{MTChar}(\xi,\mathcal P)$ to denote the Thom spectrum of manifolds with $(\xi,\mathcal P)$-characteristic structures on their tangent bundles.
\end{defn}
A manifold $M$ with an embedded submanifold $F$ of the above form is said to have a \textit{characteristic structure}. Hence we propose:
\begin{conj}
    QFTs with
defects, or impurities, inserted along particular submanifolds that are Poincaré dual to a characteristic class of $M$ are described by characteristic structures.
\end{conj}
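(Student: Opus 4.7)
The plan is to construct, for each QFT $\mathcal{T}$ requiring a $\xi$-tangential structure and each prescribed defect supported on a submanifold $F\hookrightarrow M$, a canonical characteristic pair $(M,F)$ in the sense of \cref{def:sign_charbord}. The guiding principle is that when $M$ itself does not admit a $\xi$-structure, the $\xi$-structure is still allowed to be defined only on $M\setminus F$; the locus $F$ is then exactly where the global tangential structure breaks down, and should be Poincaré dual to the obstruction class to extending $\xi$ across $F$.

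First, I would make this obstruction precise using classical obstruction theory. A $\xi$-structure on $M\setminus F$ amounts to a lift of the tangent classifying map $M\setminus F \to B\O$ along $B\xi \to B\O$. Assuming $F$ has codimension $n$ with a tubular neighborhood in $M$, the obstruction to extending this lift across a normal sphere bundle around $F$ assembles globally into a class $\mathcal{P}(M) \in H^n(M;A)$ for some coefficient group $A$. Second, I would invoke the Pontryagin-Thom construction, the central tool of the paper, to realize this obstruction as Poincaré dual to a codimension-$n$ submanifold, and arrange for $F$ itself to represent that dual class on the nose. Third, I would verify that $\mathcal{P}(M)$ is pulled back from a universal class $\mathcal{P}\in H^n(B\xi;A)$, which is automatic from the obstruction-theoretic construction; this is precisely the requirement that $\mathcal{P}$ be characteristic, and together with the first two steps it exhibits $(M,F)$ as a $(\xi,\mathcal{P})$-characteristic pair.

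The main obstacle is that the statement is a conjectural dictionary rather than a single mathematical assertion: giving a proof requires pinning down precisely which QFT-theoretic data count as specifying a \emph{defect} or \emph{impurity}, and what equivalence of pairs $(M,F)$ corresponds to physical equivalence of the defected theory. In the concrete codimension-$2$ examples featured in the introduction---submanifolds Poincaré dual to $w_2$ in $4$-manifolds---I expect the correspondence to be direct, so the principal difficulty is formulation rather than case-by-case verification; the body of the paper should provide the explicit examples needed to justify stating the conjecture in this generality.
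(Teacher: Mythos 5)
The paper does not prove this statement: it is stated as a physical conjecture and left as such. There is no ``paper's own proof'' to compare against. The justification the authors offer is heuristic and is essentially the chain of reasoning you outline --- a defect wrapping $F$ forces the tangential structure $\xi$ to be defined only on $M\backslash F$, the failure of $\xi$ to extend across $F$ is measured by a characteristic class (\cref{prop:notextending} for orientations and pin$^\pm$ structures), and the Pontryagin--Thom construction (\cref{construction:PT}) realizes the Poincaré dual of that class by a submanifold. So your three steps reproduce the paper's motivation rather than supply something it lacks.

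You have correctly identified the real gap yourself, and it is worth being explicit that it is fatal to any attempt at a proof: the statement quantifies over ``QFTs with defects,'' which is not a mathematically defined class of objects in this paper (nor, in the generality needed, anywhere else). Until one fixes an axiomatization of what a defect QFT is and what equivalence of defected theories means, there is no theorem to prove --- only a dictionary to propose and test on examples, which is exactly what the paper does in \S 3 via the FK, FK$^{\mathrm O}$, GM, and KT$^\pm$ characteristic structures. A secondary technical caveat for your step two: even granting the setup, the Pontryagin--Thom construction does not always produce an embedded Poincaré dual submanifold (the lift to $M\O_n$ or $M\SO_n$ can be obstructed, as the paper emphasizes in \S 4), so ``arranging for $F$ to represent the dual class on the nose'' is an assumption on the defect configuration, not something you can derive. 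Your proposal is a fair restatement of the conjecture's plausibility, but it should not be presented as a proof strategy that could succeed without first resolving the formulational issue.
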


Kirby--Taylor~\cite[\S 2]{KT90} showed how to use the Pontryagin--Thom construction to compute the group of characteristic pairs considered up to bordism, known as a \textit{characteristic bordism group}, using standard techniques in algebraic topology (see also Crowley-Grant~\cite[\S 1.1]{crowley2024immersed}). By work of Freed--Hopkins~\cite{FH21InvertibleFT} and Grady~\cite{Gra23}, these bordism groups determine the groups of deformation classes of reflection-positive invertible field theories (IFTs) on manifolds with characteristic structures. These IFTs provide a mathematical model for 't Hooft anomalies of unitary QFTs, so characteristic bordism calculations reveal abstract properties about anomalies associated to theories with defects. This seems to connect to the notion of \emph{defect anomaly} recently proposed in \cite{Antinucci:2024izg}. It would be interesting to connect it with the physical considerations from field theory in a more straightforward manner, similar to the connection between ordinary bordism groups and 't Hooft anomalies in e.g.\ \cite{Dai:1994kq,Witten:2016cio}.

The second context in which \cref{question:main1} arises, and which we will examine using the Pontryagin--Thom construction, is in the framework of spontaneous symmetry breaking of a finite $n$-form global symmetry, first introduced in \cite{Gaiotto:2014kfa} with applications explored in \cite{Hofman:2018lfz,Zhao:2020vdn,Qi:2020jrf,Rayhaun:2021ocs,Pace:2023ccj,Zhang:2024fpf,Liu:2023eml,Liu:2026tcl}. In particular, like the spontaneous breaking of zero-form symmetries, probing the spontaneous breaking of higher-form symmetries requires introducing a domain wall or a defect supported on a lower-codimensional manifold where the symmetry's order parameter vanishes. Consequently, the bulk manifold $M$ and the lower-codimensional manifold $F$ naturally form a characteristic pair. However, unlike the codimension-1 domain walls of zero-form symmetries, which always exist, such a lower-codimensional manifold $F$ may not exist for higher-form symmetries.

\begin{defn}\label{def:SSB}
     For a quantum field theory defined on a manifold $M$, we define the spontaneous symmetry breaking of an $n$-form $A$-symmetry on the manifold $M$ as the existence of a characteristic pair $(M,F)$, where $F$ is Poincaré dual to the cohomology class $\mathcal P(M) \in H^{n+1}(M;A)$, such that the quantum field theory can be extended to the characteristic pair.
\end{defn}

On closed manifolds, the aforementioned domain walls will be supported on the submanifold $F$, and a goal of this work will be to explain how to compute the obstructions to finding the Poincaré dual submanifolds. This definition we provide here is not mathematically precise, since we do not attempt to give a precise meaning of the ``extension''. Still, we conjecture that the obstructions to finding the Poincaré dual submanifolds are exactly the obstructions to the extension. 
Moreover, if the obstruction does not vanish for a manifold $M$, then a theory with a spontaneously broken $n$-form symmetry cannot exist on $M$. This topological obstruction is analogous to the obstruction preventing the definition of fermionic theories on non-spin manifolds.

In this work, we also conjecture the existence of a
\textit{characteristic long exact sequence} of bordism groups, analogously to the Smith long exact sequence in~\cite{Debray:2024wxm} and generalizing some constructions in Kirby--Taylor~\cite[\S 6]{KT90}.
\begin{conj*}(\cref{char_conj})
Given a pair $(\xi, \mathcal P)$ as in \cref{def:sign_charbord}, and $\xi'$ a tangential structure for a submanifold $F$, there is a map of spectra $\mathcal R\colon\mathit{MTChar}(\xi,\mathcal P)\to \Sigma^n \mathit{MT\xi}'$ such that the map $\mathcal R$ induces on $\pi_k$ is the
map $R\colon \Omega_k^{(\xi,\mathcal P)}\to\Omega_{k-n}^{\xi'}$ sending a characteristic pair $(M, F)\mapsto F$.
\end{conj*}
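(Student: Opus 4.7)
The plan is to model the map $R\colon (M, F) \mapsto F$ on $\pi_k$ as a Pontryagin-Thom collapse at the level of spectra, paralleling the construction of the Smith long exact sequence in \cite{Debray:2024wxm}. By Kirby-Taylor \cite[\S 2]{KT90}, the spectrum $\mathit{MTChar}(\xi, \mathcal P)$ can be realized as a Madsen-Tillmann-type Thom spectrum associated to a classifying space $B(\xi, \mathcal P)$, which parametrizes a universal virtual tangent bundle $V$ carrying a $\xi$-structure together with a universal codimension-$n$ subbundle $\nu$ whose zero locus realizes the Poincaré dual of $\mathcal P(V)$. The universal characteristic pair living over $B(\xi, \mathcal P)$ has the tangent bundle of the universal submanifold $F$ canonically isomorphic to $V - \nu$.

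The first step is to use the compatibility between $\xi$ and $\xi'$ (implicit in the statement that $\xi'$ is a tangential structure on the submanifold $F$) to obtain a classifying map $B(\xi, \mathcal P) \to B\xi'$, along with a tautological identification of the restriction of $V$ to the universal $F$ as $TF \oplus \nu$. Applying the Madsen-Tillmann construction and invoking the Thom isomorphism for the rank-$n$ bundle $\nu$ produces a map of spectra
\[
\mathcal R\colon \mathit{MTChar}(\xi, \mathcal P) \longrightarrow \Sigma^n \mathit{MT\xi}'.
\]
Geometrically, $\mathcal R$ is the Pontryagin-Thom collapse onto a tubular neighborhood of $F$ inside the ambient $M$. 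The second step is to verify $\pi_k(\mathcal R) = R$: a class $[(M, F)] \in \pi_k \mathit{MTChar}(\xi, \mathcal P)$ is represented by a PT-collapse associated to a high-codimension embedding of $M$, and composing with $\mathcal R$ performs an additional Thom collapse onto $F$. Unwinding the resulting normal-bundle decomposition recovers exactly the PT-collapse representative for $[F] \in \pi_{k-n} \mathit{MT\xi}'$, so the verification reduces to bookkeeping once the universal data is set up.

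The main obstacle, and the reason this is stated as a conjecture, is making the construction of $B(\xi, \mathcal P)$ and its universal codimension-$n$ subbundle sufficiently canonical. When $\mathcal P$ is the mod-$2$ Euler class of a vector bundle or the mod-$2$ reduction of an integral class, there is a natural universal representative; but for a general $\mathcal P \in H^n(B\xi; \Z/2)$, one must appeal to Thom's representability theorem, which produces a submanifold without canonically producing a vector bundle, and independence of the chosen Thom representative must be verified. Moreover, one expects $\mathcal R$ to fit into a cofiber sequence $\mathit{MT\xi}'' \to \mathit{MTChar}(\xi, \mathcal P) \xrightarrow{\mathcal R} \Sigma^n \mathit{MT\xi}'$, where $\xi''$ corresponds to $\xi$-structures equipped with a trivialization of $\mathcal P$; this would yield a full characteristic long exact sequence directly analogous to the Smith LES of \cite{Debray:2024wxm}.
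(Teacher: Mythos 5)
This statement is a conjecture in the paper --- no proof is given, only heuristic evidence --- so the real question is whether your argument closes the gap, and it does not. The load-bearing assertion in your first paragraph, that $\mathit{MTChar}(\xi,\mathcal P)$ is the Thom spectrum of a classifying space carrying a \emph{universal codimension-$n$ subbundle} $\nu$, is precisely what fails in general. A characteristic structure is the data of a lift of $\mathcal P\colon B\to K(\Z/2,n)$ through the Thom class $U\colon M\O_n\to K(\Z/2,n)$ (see the pullback squares \eqref{eq:GMpullback}, \eqref{eq:KTpullback}); it does \emph{not} factor through the zero section $B\O_n\to M\O_n$, so there is no rank-$n$ bundle on $M$ (or on the classifying space) restricting to $\nu_F$ --- the normal bundle exists only over $F$. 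If such a $\nu$ existed universally, $\mathcal R$ would literally be the Smith homomorphism and the statement would be a theorem, not a conjecture; this is exactly the paper's evidence item (3) (``if the normal bundle of $F\hookrightarrow M$ extends to a vector bundle on $M$, the analogue is true'') and the mechanism behind \cref{prop:FKapprox,prop:FKOapprox}, where the Smith isomorphism $M\SO_2\simeq B\SO_2$ makes the extension automatic. You do flag this obstacle in your final paragraph, but the body of the argument assumes it away, so steps one and two (the classifying map to $B\xi'$, the Thom isomorphism for $\nu$, and the bookkeeping identification of $\pi_k(\mathcal R)$ with $R$) are all conditional on unproven input.

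Your closing speculation that the fiber of $\mathcal R$ is $\mathit{MT\xi}''$ for $\xi''=\text{``}\xi$ plus a trivialization of $\mathcal P$'' is moreover contradicted by the paper's computations. For Guillou-Marin ($\xi=\SO$, $\mathcal P=w_2$) that $\xi''$ would be spin, but \cref{GM_LES_thm} gives $\pi_4(F_{\mathrm{GM}})\cong\Z\oplus\Z/2$, whereas $\Omega_4^{\Spin}\cong\Z$; similarly $\pi_2(F_{\mathrm{KT}^-})$ has order $4$ while $\Omega_2^{\Pin^-}\cong\Z/8$. \Cref{KT_correction} records that the analogous naive sequence proposed by Kirby--Taylor with $\Omega_*^{\Pin^-}$ in the first slot is not exact (visibly so after tensoring with $\Q$). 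Identifying the fiber in a computationally effective way is exactly the open problem the paper highlights in \cref{q:CharandSmith}, so this part of your proposal would need to be withdrawn rather than repaired.
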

The key takeaway of this conjecture is that the homotopical data in $\mathcal R$ automatically implies some useful algebraic corollaries. Notably, \cref{char_conj} implies that the map $(M,F)\mapsto F$ on bordism extends to one of the three maps in a long exact sequence, which we call the characteristic long exact sequence. Its form is discussed in \S\ref{subsection:smithLESintro}. 

We give formal justifications on why we anticipate such a sequence to exist; however, we have not provided an explicit description. To address this, we approximate the characteristic long exact sequence using the Smith long exact sequence. Specifically, we propose that there exists a map from the Smith long exact sequence to the characteristic long exact sequence. Moreover, it is known that each term in the Anderson dual of the Smith long exact sequence has physical interpretations as the anomaly of the bulk, the anomaly of the symmetry defect and the family anomaly of a one-parameter space of bulk theories \cite{Debray:2023ior}. 
Inspired by this we believe it would be important to in the future investigate:
\begin{question}\label{q:CharandSmith}
    How does a characteristic long exact sequence generalize the Smith long exact sequence, and what does it reveal about obstructions to symmetry breaking?
\end{question}
More specifically, even though homotopy theory tells us that it is always possible, we have not determined the third term in the anticipated long exact sequence in a computationally effective manner. Nevertheless, in this paper we lay the groundwork for studying this question by looking at characteristic structures that should naturally arise in fermionic systems, and (assuming \cref{char_conj}) compute the relevant  bordism groups in the characteristic long exact sequence that we have access to.

\subsection{Summary of Main Results}

The following results pertain to the  characteristic pairs whose names are given in Table \ref{tab:Charbordism}, which display the structures on $M$ and $F$ as well as the characteristic class that $F$ is Poincaré dual to.

\begin{itemize}
  \item We use the Pontryagin--Thom construction to study the characteristic bordism groups for each type of characteristic pair. These characteristic bordism groups treat two characteristic pairs $(M,F)$ and $(M',F')$ as equivalent if there exists a characteristic pair $(X,Y)$ where $X$ is a $(k+1)$-dimensional manifold with the same tangential structure as $M$ and $Y$ is a Poincaré dual submanifold of $X$ such that $\partial (X,Y) = (M,F)\sqcup -(M',F')$. We describe each characteristic structure as a twisted spin structure, summarized below.
   
\begin{table}[htb]
\begin{center}
\begin{tabular}{ c c c} 
\toprule
Name & Twisted Spin Structure & Prop.\\
\midrule
Freedman--Kirby (FK)& spin$^c$  & \ref{prop:FKpairs} \\
Freedman--Kirby$^\mathrm{O}$(FK$^{\mathrm{O}
}$)
& pin$^c$ &\ref{prop:FKOcharpair} \\ 
Guillou--Marin (GM)
& $(M\O_2, 0, U)$ & \ref{prop:GMcharpairs}\\
Kirby--Taylor$^-$ (KT$^{-}$) & $(B\O_1\times M\O_2,w_1,U)$ & \ref{prop:KT-bord}
\\ 
Kirby--Taylor$^+$ (KT$^{+}$)
& $(B\O_1\times M\O_2,w_1,w_1^2 + U)$ & \ref{prop:KT+structures}\\
\bottomrule
\end{tabular}
\end{center}
\caption{This table summarizes which twisted spin structure each characteristic structure is equivalent to, and the proposition that proves the equivalence. We note that spin$^c$ and pin$^c$ structures are both twisted spin structures e.g.\ a spin$^c$ structure is a $(B\SO_2,L)$-twisted spin structure, where $L$ is the tautological bundle.}
\label{tab:results1}
\end{table}

    \item Using this translation, we compute these characteristic bordism groups up to degree 4 using the Adams spectral sequence. We give the results of these homotopy groups in Table \ref{tab:results2} below:
    \begin{table}[htb]
\begin{center}
\begin{tabular}{ c c c c c c c} 
\toprule
Name & $\Omega_0$ & $\Omega_1$ & $\Omega_2$ & $\Omega_3$ & $\Omega_4$ &Prop.\\
\midrule
Freedman--Kirby (FK)&  $\Z$ &0  & $\Z$ & 0 & $\Z^{\oplus 2}$ & - \\
Freedman--Kirby$^\mathrm{O}$(FK$^{\mathrm{O}
}$)
& $\Z/2$ &0  & $\Z/4$ & 0 & $\Z/8 \oplus \Z/2$ & -\\ 
Guillou--Marin (GM)
& $\Z$ &0  & 0 & 0 & $\Z^{\oplus 2}$ & \ref{prop:bordcompGM}\\
Kirby--Taylor$^-$ (KT$^{-}$) & $\Z/2$  & 0  & $\Z/2$ & 0  & $\Z/2^{\oplus3}$& \ref{prop:bordcompKT-}
\\ 
Kirby--Taylor$^+$ (KT$^{+}$)
& $\Z/2$  & 0& $\Z/2$& 0& $\Z/2^{\oplus 3}$& \ref{prop:bordcompKT+}\\
\bottomrule
\end{tabular}
\end{center}
\caption{This table summarizes the result of the corresponding bordism groups in degrees 0-4. For FK and FK$^{\mathrm{O}
}$, whose associated tangential structures are spin$^c$ and pin$^c$ respectively, these bordism groups were computed by Anderson--Brown--Peterson \cite{ABP67}, resp.\ Bahri-Gilkey~\cite{BG87a, BG87b}.
For the other tangential structures, we also list the propositions where we perform the calculations explicitly.}
\label{tab:results2}
\end{table}
    \item We discuss the long exact sequences in characteristic bordism in \S\ref{subsection:smithLESintro} and use the Smith long exact sequence to approximate these sequences. Specifically, the Smith long exact sequence reflects the symmetry structure that emerges locally around the defect.
    We show in Proposition \ref{prop:FKapprox} and \ref{prop:FKOapprox} that certain Smith long exact sequences involving spin$^c$ and pin$^c$ bordism are equivalent to the characteristic long exact sequences for FK and FK$^\mathrm{O}$ pairs. We furthermore show that there are maps from certain Smith long exact sequences associated to spin$\text{-}\O_2$ and pin$^\pm\text{-}\mathrm{O}_2$ structures to the characteristic long exact sequences of GM and KT$^\pm$.
    \item For each of the characteristic pairs, we give physical interpretations for the global structure of QFTs in the presence of a defect wrapped on a submanifold $F$. 
\end{itemize}
Specifically, assuming existence of the characteristic long exact sequence (\cref{char_conj}), we derive the following physics predictions from our computations:
\begin{itemize}
    \item \Cref{GM_physics}: for a $k$-dimensional field theory $Z$ defined on manifolds with Guillou--Marin structure, i.e.\ an oriented theory with a defect Poincaré dual to $w_2$, the anomaly of $Z$ cannot be compatible with a nonzero anomaly on the defect for $k = 0$, $1$, or $3$; for $k = 2$, there is a $\Z$-valued obstruction to compatibility of the bulk and defect anomalies.
    \item \Cref{KT_minus_physics}: if $Z$ has a $\mathrm{KT}^-$-structure (which we have interpreted as a time-reversal symmetry and a defect Poincaré dual to $w_2 + w_1^2$), the anomaly of $Z$ cannot be compatible with a nonzero anomaly on the defect for $k \le 2$; for $k = 3$ there is an obstruction for the anomalies in the bulk and defect to be compatible.
    \item \Cref{KT_plus_physics}: 
if $Z$ has a $\mathrm{KT}^+$-structure (which we have interpreted as a time-reversal symmetry  and a defect Poincaré dual to $w_2$), the anomaly of $Z$ cannot be compatible with a nonzero anomaly on the defect for $k = 0,2$; for $k = 3$ the defect anomaly is not compatible with the bulk anomaly, but for $k = 1$ the anomalies of the bulk and defect are always compatible.
\end{itemize}
It would be interesting to make these obstructions explicit in examples.

With regard to the second application involving spontaneous symmetry breaking we show the following:
\begin{itemize}
  \item We use the Pontryagin--Thom construction to give Definition \ref{def:obstructionpoly} for an obstruction to spontaneously breaking a finite higher-form symmetry. We then identify the cohomology class corresponding to such an obstruction in Proposition \ref{prop:obstr}. 
    \item  For the case of a $\Z/2$ valued 1-form symmetry, we show in  \Cref{cor:obstruction1form} that the primary obstruction to spontaneously breaking the symmetry is in terms of a degree 5 $\Z/2$-valued cohomology class. This means that if the class is trivialized when pulled back to some  5-manifold $M$, then spontaneously breaking the $\Z/2$ 1-form symmetry on $M$ is not topologically forbidden. As a consequence this means that 1-form symmetries of Yang-Mills theory in 4d are not obstructed at all by topology, and can spontaneously break e.g. in the Coulomb phase. The example of Yang-Mills is explained in \cref{ex:4dYM}. In \cref{ex:5dtheory} we discuss a supersymmetric 5d theory with a spontaneously broken 1-form symmetry, which also falls into the scope of what is allowed by \Cref{cor:obstruction1form}.
    \item For the case of a $\Z/2$ valued 2-form symmetry we show in  \Cref{prop:breaking2form} that the dimension that one might expect to encounter the first  topological obstruction to spontaneously breaking this symmetry is not naively what one would expect by inspecting the case of 1-form symmetries. In particular, one has to go higher than 6 to identify the first obstruction. This feature also holds for $n$-form symmetries where $n>2$. We give in \cref{ex:6d2form} a discussion of a 6d SCFT which has a spontaneously broken 2-form symmetry, that is unobstructed by \Cref{prop:breaking2form}.
\end{itemize}

\begin{rem}
     Our results support the fact that in the examples we present,  spontaneous symmetry breaking is not obstructed. It would be especially interesting to find an example of a theory where the obstruction to spontaneously breaking a higher form discrete symmetry is nontrivial. Hence, special care would need to be taken in specifying the background manifold, if one were to study the effects of spontaneous symmetry breaking.
\end{rem}

This paper proceeds as follows: In \S\ref{section:preliminaries} we review the topological background necessary for implementing the constructions in the rest of the paper. In particular we review spectra and Thom spaces, and in \S\ref{subsection:duals} we review the properties of Poincar\'e duality and the Pontryagin--Thom construction. We devote \S\ref{section:Defects} to discussing applications of five types of characteristic structures in the context of studying the global structure of spacetime with defects. In \S\ref{subsubsection:approxGM} and \S\ref{subsubsection:approxKT} we explain how the Smith long exact sequence approximates the characteristic long exact sequences, in line with Question \ref{q:CharandSmith}.
In \S\ref{section:SSB} we explore obstructions to breaking higher-form symmetries using the Pontryagin--Thom construction. 

\section{Topological Preliminaries}
\label{section:preliminaries}
We begin by outlining the mathematical framework that will be applied in subsequent sections. First, in \cref{subsec:tangential}, we review tangential structures and discuss how they are modified by vector bundle twists. Following this, we introduce Thom spaces and Thom spectra, emphasizing their role in calculating bordism groups via the Pontryagin--Thom theorem. In \cref{subsection:duals}, we cover Poincaré duality as a foundation for the Pontryagin--Thom construction. We then explore characteristic structures and the concept of characteristic bordism. While characteristic bordism is intrinsically geometric, our main goal is to reinterpret it through the Thom spaces introduced earlier. By leveraging this framework alongside the Pontryagin--Thom construction, we simplify the calculation of characteristic bordism groups to a process relying primarily on algebraic methods. Finally, in \cref{subsection:smithLESintro}, we conjecture the existence of a certain long exact sequence associated to characteristic bordism and discuss its implications.

\subsection{Tangential structure}\label{subsec:tangential}

In this subsection, we review the definition of tangential structures along with several related concepts, including Thom spaces, Thom spectra, and the Pontryagin--Thom theorem. None of the material in this subsection is original; we include it here to refresh the reader and standardize notation.

\begin{defn}[Lashof~\cite{Las63}]
\label{def:tangential}
 Given a map $\xi\colon B\rightarrow B\O$ of spaces,  a \textit{$\xi$-structure} on a vector bundle $V\to X$ is a lift of the classifying map $f_V\colon X\to B\O$ of $V$ to $B$, i.e.\ it is a map $\widetilde f_V\colon X\to B$ such that $f_V \simeq \xi\circ\widetilde f_V$. 
\end{defn}
For example, if $\xi$ is the map $B\SO\to B\O$, a $\xi$-structure is equivalent to a reduction of structure group for the principal $\O_n$-bundle of frames to a principal $\SO_n$-bundle -- in other words, an orientation. A $\xi$-structure for $\xi = \id\colon B\O\to B\O$ is no data.

When we say that a manifold $M$ has a $\xi$-structure, that means a $\xi$-structure on $TM$.
\begin{defn}[{\cite[\S 4.1]{Hason:2020yqf}}]
\label{def:twisted}
     Let $V\to X$ be a vector bundle. An \emph{$(X, V)$-twisted $\xi$-structure} on a vector bundle $E\to M$ is data of a map $f\colon M\to X$ and a $\xi$-structure on $E\oplus f^*(V)$. 
\end{defn}
This work will mostly be concerned with spin structures and their twistings.
\begin{rem}
\Cref{def:twisted} makes sense \textit{mutatis mutandis} when $V$ is a \term{virtual vector bundle}, which is a formal difference of two vector bundles. In particular, $B\O$ can be thought of as classifying virtual vector bundles up to stable equivalence, or rank-zero virtual vector bundles.
\end{rem}
The Whitney sum formula shows that an $(X,V)$-twisted spin structure is equivalent data to the map $f$ and trivializations of
\begin{subequations}
\begin{gather}
    w_1(E) + f^*(w_1(V))\\
    w_2(E) + w_1(E) f^*(w_1(V)) + f^*(w_2(V)).
\end{gather}
\end{subequations}
In other words, $V$ is much more data than we needed, suggesting the following variant of \cref{def:twisted} originating in work of B.L. Wang~\cite[Definition 8.2]{Wan08}.
\begin{defn}\label{nonVB}
Let $X$ be a space, $a\in H^1(X;\Z/2)$, and $b\in H^2(X;\Z/2)$. An \emph{$(X, a, b)$-twisted spin structure} on a vector bundle $E\to M$ is data of a map $f\colon M\to X$ and trivializations of $w_1(E) + f^*(a)$ and $w_2(E) + w_1(E)^2 + f^*(b)$.
\end{defn}

In this definition we do not assume that there is a vector bundle $V$ such that $w_1(V) = a$ and $w_2(V) = b$. Frequently no such $V$ can exist, as discussed in~\cite{GKT89, RWG14, JF19, Kuh20, Spe22, DY:2023tdd, DY22}.

\begin{exm}\label{ex:spinc}
    Let $TM$ be the tangent bundle on an orientable manifold $M$. Let $L\rightarrow M$ be a complex line bundle over $M$. The data of $L$ is equivalent to a map $M \rightarrow B\mathrm{U}_1$ such that $L$ is equivalent to the pullback of the tautological complex line bundle over $B\mathrm{U}_1$. A \spinc structure on $TM$ (see \cref{defn:spinc}) is equivalent to a spin structure on $TM\oplus L \rightarrow M$. By the Whitney sum formula this is equivalent to an identification $w_2(TM)=w_2(L)$. Therefore a spin$^c$ structure is a $(B\mathrm{U}_1,L)$-twisted spin structure (in the language of \cref{def:twisted}), or a $(B\mathrm U_1, 0, w_2)$-twisted spin structure (in the language of \cref{nonVB}).
\end{exm}

Let $V\to X$ be a \emph{real} vector bundle, with a choice of Euclidean metric. Let $D(V)$ be the disk bundle of vectors in $V$ with norm less than or equal to 1, and let $S(V)$ be the sphere bundle of vectors of norm equal to 1.

\begin{defn}
     For $V\to X$  a real vector bundle with Euclidean metric, the \term{Thom space} $\mathrm{Th}(X,V)$ is the quotient $D(V)/S(V)$.
\end{defn}
As the space of Euclidean metrics on a bundle is contractible, the homotopy type of the Thom spectrum does not depend on the choice of metric. However, we choose a metric on the tautological bundle over $B\O_2$, to specify $M\O_2$ on the nose rather than up to homotopy. We will also use the pullback metric on the tautological bundles over $B\SO_2$ and $B\O_1$.
\begin{exm}
    Let $V\rightarrow B\O_n$ be the tautological bundle coming from pulling back the tautological bundle $\mc{V}\rightarrow B\O$, where $\O:= \lim_{n \rightarrow \infty}\O_{n}$ is the  infinite-dimensional orthogonal
group, along the map $B\O_n\to B\O$. The unit sphere bundle $S(V)$ of $V$ is equivalent to $B\O_{n-1}$ and the Thom space $M\O_n = B\O_n/ B\O_{n-1}$. We can do a similar construction and show that $M\SO_n = B\SO_n/B\SO_{n-1}$.
\end{exm}
For much of our discussion, the main object of interest are bordism groups of manifolds with a certain tangential structure and principal bundle. These bordism groups are computable from the point of view of a specific type of Thom spectrum, which we now begin to introduce. 

\begin{defn}
    The \term{Thom spectrum} $X^V$ of a vector bundle $V\rightarrow X$ is the suspension spectrum of the Thom space $\Sigma^\infty \mathrm{Th}(X,V)$.
\end{defn}
Especially, though a virtual vector bundle $V\to X$ does not have a well-defined notion of Thom space, it is still possible to define a Thom spectrum $X^V$ by filtering $X$ by its compact subspaces $K$, and assembling $X^V$ from suspension spectra of Thom spaces over $K$. This was originally done by Lewis~\cite[\S IX.3]{LMSM86}; see there for the details.

 The cohomology of the Thom space can be related to the cohomology of the base space. 
\begin{defn}
    Let $\pi: V\to X$ be an oriented vector bundle and let $A$ be an abelian group. The  Thom class is a class $U \in H^n(\mathrm{Th}(X,V);A)$, which gives rise to an isomorphism 
    \begin{equation}
        H^*(X;A) \cong H^{*+n}(\mathrm{Th}(X,V);A)
    \end{equation}
    of $H^*(X;A)$-modules by cupping with $U$.
\end{defn}
The zero section of any vector bundle $V\to X$ defines an inclusion $X\hookrightarrow V$, which after quotienting by $S(V)$, defines a
map $z\colon X \to \mathrm{Th}(X,V)$ also called the zero section. The pullback $z^*U$ is the Euler class $e(V)\in H^n(X;A)$. We will see a generalization of the Euler class to generalized cohomology in the next section. 

\begin{defn}
Let $\xi: B\rightarrow B\O$ be a tangential structure and let $V \rightarrow B\O$ denote the tautological stable vector bundle. The \term{Madsen--Tillmann spectrum} $MT\xi$ associated to $\xi$ is the Thom spectrum of $\xi^*(-V)\rightarrow B$.
\end{defn}
 Madsen--Tillmann spectra are useful for computing bordism groups by the following theorem.
\begin{thm}[Pontryagin--Thom]
    There is an isomorphism $\pi_n(MT\xi) \cong \Omega^\xi_n$.\footnote{The usual formulation of the Pontryagin--Thom theorem uses the Thom spectrum of $\xi^*(V)\to B$, not $-V$, and identifies its homotopy groups with the bordism groups of manifolds with a $\xi$-structure on the stable \emph{normal} bundle $\nu\to M$, rather than the stable tangent bundle $TM\to M$. This is equivalent, as stably $TM\simeq -\nu$; we work tangentially because this is how $\xi$-structures come to us most naturally in physics applications.}
\end{thm}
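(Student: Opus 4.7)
The plan is to construct mutually inverse maps between $\Omega_n^\xi$ and $\pi_n(MT\xi)$ and check that they are well-defined on equivalence classes. The forward direction uses a geometric construction (embed, tubular neighborhood, collapse) and the reverse uses transversality.

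For the map $\Omega_n^\xi \to \pi_n(MT\xi)$, I would start with a closed $n$-manifold $M$ carrying a $\xi$-structure on $TM$. By the Whitney embedding theorem, for $k$ sufficiently large, choose an embedding $i\colon M \hookrightarrow \mathbb{R}^{n+k} \subset S^{n+k}$, with normal bundle $\nu \to M$ satisfying $TM \oplus \nu \cong \underline{\mathbb{R}}^{n+k}$. A tubular neighborhood identifies $\nu$ with an open set $U \subset S^{n+k}$, and collapsing the complement yields the Pontryagin-Thom collapse $c\colon S^{n+k} \to \mathrm{Th}(M, \nu)$. Since $TM \oplus \nu$ is trivial, a tangential $\xi$-structure on $TM$ is equivalent to a $\xi$-structure on $-\nu$, i.e.\ a lift $\widetilde f\colon M \to B_k$ of the classifying map of $\nu$ (where $\xi_k\colon B_k \to B\O_k$ is the restriction of $\xi$). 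This lift induces a bundle map $\nu \to \xi_k^*(V_k)$ and hence a map of Thom spaces $\mathrm{Th}(M,\nu) \to \mathrm{Th}(B_k, \xi_k^*(V_k))$. Composing with $c$ and desuspending $k$ times produces the desired element of $\pi_n(MT\xi)$. Independence of the embedding $i$ up to homotopy follows from the uniqueness of embeddings into $\mathbb{R}^{n+k}$ up to isotopy for $k$ large, which interpolates into a bordism between the resulting maps. Similarly, stability in $k$ follows from the standard replacement of $\nu$ by $\nu \oplus \underline{\mathbb{R}}$, which corresponds exactly to suspension in the spectrum $MT\xi$.

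For the inverse, given a class in $\pi_n(MT\xi)$ represented by a map $f\colon S^{n+k} \to \mathrm{Th}(B_k, \xi_k^*(V_k))$, I would use smooth transversality to deform $f$ (rel a neighborhood of the basepoint) to a map which is smooth on the preimage of $B_k \hookrightarrow \mathrm{Th}(B_k, \xi_k^*(V_k))$ and transverse to the zero section. The preimage $M := f^{-1}(B_k)$ is then a closed smooth $n$-submanifold of $S^{n+k}$, whose normal bundle is canonically identified with the pullback $(f|_M)^*\xi_k^*(V_k)$; the factorization through $B_k$ equips this normal bundle with a $\xi$-structure, which by stable triviality of $TM \oplus \nu$ transfers to a tangential $\xi$-structure on $M$. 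Two representatives $f_0, f_1$ related by a homotopy $F\colon S^{n+k} \times [0,1] \to \mathrm{Th}(B_k, \xi_k^*(V_k))$ can also be made transverse to the zero section, yielding a $\xi$-bordism $F^{-1}(B_k) \subset S^{n+k} \times [0,1]$ between the two manifolds.

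It remains to verify that these two constructions are mutually inverse. Starting from a manifold $M$ with its embedding, the transversality construction applied to the collapse map recovers $M$ on the nose, with the same $\xi$-structure. Starting from a transverse map $f$, applying the collapse construction to $M = f^{-1}(B_k)$ with its inherited structure produces a map which agrees with $f$ on a tubular neighborhood of $M$ and sends everything else to the basepoint; this agrees with $f$ up to homotopy because the complement of a tubular neighborhood of $M$ in $S^{n+k}$ maps, after a small deformation, into a contractible neighborhood of the basepoint of the Thom space. The main technical point throughout is the careful bookkeeping of the identification $TM \oplus \nu \cong \underline{\mathbb{R}}^{n+k}$ so that a tangential $\xi$-structure really does correspond to a $\xi$-structure on $-\nu$, which is exactly the condition making the tangential formulation of $MT\xi$ the correct spectrum; beyond this, the argument reduces to standard smooth transversality and the uniqueness of tubular neighborhoods up to isotopy.
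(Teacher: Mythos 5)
The paper does not prove this statement: it is quoted as the classical Pontryagin--Thom theorem (due to Thom, and to Lashof for general tangential structures), with only a footnote explaining the tangential-versus-normal convention. Your argument is the standard proof --- embed, take the tubular-neighborhood collapse, Thomify the lifted classifying map in one direction; make a representative transverse to the zero section and pull back in the other; use isotopy of embeddings and transversality of homotopies for well-definedness --- and it is correct in outline. The one place where your write-up is looser than it should be is exactly the point you flag at the end: as written, ``a lift $\widetilde f\colon M\to B_k$ of the classifying map of $\nu$'' with target $\mathrm{Th}(B_k,\xi_k^*(V_k))$ describes the \emph{normal} Thom spectrum $M\xi$, whereas the $(n+k)$-th space of the Madsen--Tillmann spectrum $MT\xi$ is the Thom space of the rank-$k$ \emph{complement} bundle over the pullback of $B$ along $BO_n\to BO$, and the datum one Thomifies is a lift of the tangential Gauss map $M\to \mathrm{Gr}_n(\R^{n+k})$; the normal bundle is then the pullback of that complement bundle. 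Making that identification explicit is precisely the bookkeeping that turns the classical normal statement into the tangential one asserted here, and with it your proof is complete.
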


We now aim to turn twisted tangential structures into stable homotopy groups and give them an interpretation from Thom space. This is done through a shearing construction \cite{Debray:2024wxm,DDHM23}.

\begin{lem}[Shearing]\label{lem:shearing}
    Let $S \rightarrow B\O$ be the tautological rank-zero virtual vector bundle and $\eta: B \times X \to B\O$ a tangential structure classified by the virtual vector bundle $\xi^*(S)\boxplus (V-r_V)$ where $r_V$ denotes the rank of $V$. Then an $\eta$-structure is equivalent to a $(X,V)$-twisted $\xi$-structure.
\end{lem}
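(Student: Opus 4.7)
The plan is to unpack the $\eta$-structure data and match it with the $(X, V)$-twisted $\xi$-structure data, using the $H$-space structure on $B\O$ arising from Whitney sum.

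Fix a vector bundle $E \to M$ with classifying map $f_E : M \to B\O$. By \cref{def:tangential}, an $\eta$-structure on $E$ is a map $(g, f) : M \to B \times X$ together with a homotopy $\eta \circ (g, f) \simeq f_E$. I would first observe that, since $\eta$ classifies the external Whitney sum $\xi^*(S) \boxplus (V - r_V)$ and the classifying map of any Whitney sum equals the $H$-space sum of the individual classifying maps, $\eta$ factors up to homotopy as
\[
\eta \simeq \mu \circ (\xi \times \phi_V),
\]
where $\phi_V : X \to B\O$ classifies the rank-zero virtual bundle $V - r_V$ and $\mu : B\O \times B\O \to B\O$ is the Whitney sum map. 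The homotopy $\eta \circ (g, f) \simeq f_E$ is therefore equivalent to $(\xi \circ g) + (\phi_V \circ f) \simeq f_E$ in $[M, B\O]$.

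Using that $B\O$ is a grouplike $E_\infty$-space, I would then rearrange this homotopy to the statement that $\xi \circ g$ is homotopic to the classifying map of $E \oplus f^*V$, after the canonical cancellation of the trivial rank-$r_V$ summand implicit in $\phi_V \circ f$. By \cref{def:tangential} this is exactly the data of a $\xi$-structure on $E \oplus f^*V$; combined with the map $f : M \to X$, it is an $(X, V)$-twisted $\xi$-structure on $E$ in the sense of \cref{def:twisted}. Each step of this translation is reversible: starting from the twisted data one defines $(g, f)$ and recovers the homotopy by running the $H$-space identity backwards. This yields a bijection between $\eta$-structures and $(X, V)$-twisted $\xi$-structures.

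The main subtlety I would expect to verify carefully is the bookkeeping of the canonical trivial rank-$r_V$ summand and the sign/grouplike conventions in $B\O$: one has to ensure the manipulations of classifying maps happen coherently inside the infinite loop space where the trivial bundle is the unit, so that the rearrangement of the homotopy is genuinely well-defined rather than only valid up to a stable trivialization. This is technical but routine, and mirrors the closely analogous shearing arguments carried out in \cite{Debray:2024wxm, DDHM23}, which can be adapted almost verbatim to the present setting.
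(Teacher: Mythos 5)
The paper never proves this lemma---it is imported from \cite{Debray:2024wxm,DDHM23}---so there is no internal argument to compare against; your strategy (factor $\eta$ through the Whitney-sum $H$-space structure on $B\O$ as $\mu\circ(\xi\times\phi_V)$ and rearrange the lifting data) is precisely the standard shearing argument of those references. The gap is in the rearrangement step, which you flag as a ``sign/grouplike convention'' subtlety but do not actually resolve, and it is not cosmetic. From $(\xi\circ g)+(\phi_V\circ f)\simeq f_E$ the grouplike structure yields $\xi\circ g\simeq f_E-(\phi_V\circ f)$, i.e.\ a $\xi$-structure on the rank-zero virtual bundle $E\ominus f^*V$ (after normalizing ranks), \emph{not} on $E\oplus f^*V$; cancelling the trivial rank-$r_V$ summand cannot change $-f^*V$ into $+f^*V$. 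These are genuinely different twists: since $w_2(-V)=w_2(V)+w_1(V)^2$, for $\xi=\Spin$ the two notions differ by $w_1(V)^2$, which is exactly the \pinp versus \pinm distinction when $X=B\O_1$ and $V=\sigma$. So as written, your displayed conclusion does not follow from the identity you derived.

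The fix is to make the tangent/normal convention explicit. With $\eta$ literally classifying $\xi^*(S)\boxplus(V-r_V)$ and an $\eta$-structure taken on $TM$ per \cref{def:tangential}, your computation produces an $(X,-V)$-twisted $\xi$-structure. To land on $E\oplus f^*V$ one must either take $\eta$ to classify $\xi^*(S)\boxplus(r_V-V)$, or run the whole argument on the stable normal bundle $\nu_M=-TM$, for which the condition becomes $\nu_M\simeq(g,f)^*\paren{\xi^*(-S)\boxplus(V-r_V)}$. The latter is the reading forced by \cref{shearcor}: since $MT\xi$ Thomifies $-S$, the spectrum $MT\xi\wedge X^{V-r_V}$ carries the twist $V-r_V$ on normal data, which is equivalent to a $\xi$-structure on $TM\oplus f^*V$. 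Your proof should commit to one of these conventions and carry the sign through explicitly; everything else in the argument (the factorization of $\eta$, the reversibility, the coherence of the cancellation in the grouplike $E_\infty$-space) is fine.
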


\begin{cor}\label{shearcor}
    The bordism groups of manifolds with a $(X,V)$-twisted $\xi$-structure are naturally isomorphic to the homotopy groups of the Thom spectrum $MT\xi \wedge X^{V-r_W}$, and the corresponding bordism groups are denoted by $\Omega^{\xi}_*(X^{V-r_W})$.
\end{cor}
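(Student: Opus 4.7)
The plan is to combine the shearing lemma (\cref{lem:shearing}) directly with the Pontryagin-Thom theorem. First I would invoke \cref{lem:shearing} to translate the geometric notion of an $(X,V)$-twisted $\xi$-structure on a manifold $M$ into an ordinary $\eta$-structure, where $\eta\colon B\times X\to B\O$ is the tangential structure classified by the external sum $\xi^*(S)\boxplus (V-r_V)$. Because this equivalence is natural in $M$ and manifestly compatible with disjoint union and with boundary (bordisms of $(X,V)$-twisted $\xi$-manifolds shear to bordisms of $\eta$-manifolds), it induces an isomorphism of bordism groups $\Omega^{(X,V)\text{-twisted }\xi}_* \cong \Omega^\eta_*$. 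Applying the Pontryagin-Thom theorem to $\eta$ then identifies the right-hand side with $\pi_*(MT\eta)$.

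The remaining task, which is really the content of the corollary, is to identify $MT\eta$ as the smash product $MT\xi\wedge X^{V-r_V}$. By definition $MT\eta$ is the Thom spectrum of the virtual bundle $-\eta^*(\mathcal V)$ over $B\times X$, where $\mathcal V\to B\O$ is the tautological bundle; since $\eta$ factors through the external Whitney sum $\xi^*(S)\boxplus (V-r_V)$, negation preserves this external-sum decomposition. I would then appeal to the general formula
\[
(X_1\times X_2)^{W_1\boxplus W_2} \simeq X_1^{W_1} \wedge X_2^{W_2}
\]
for external Whitney sums of (virtual) vector bundles. Applied here, the $B$-factor yields $B^{-\xi^*(S)} \simeq MT\xi$ and the $X$-factor yields $X^{-(V-r_V)}$, and a final reindexing (together with the sign convention in \cref{lem:shearing}, which is chosen so that the normal bundle to $M$ picks up $V-r_V$) gives the stated formula.

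The only real obstacle is keeping the sign and rank conventions for virtual bundles consistent, and verifying that the external-sum formula for Thom spectra extends from honest vector bundles to virtual bundles. For the latter, I would invoke Lewis's compact-exhaustion construction of Thom spectra of virtual bundles \cite{LMSM86}, under which the external-sum-to-smash-product equivalence is inherited level-wise from the case of honest bundles, together with the fact that smash products of spectra commute with colimits in each variable. Once these conventions are pinned down, the corollary is a purely formal consequence of \cref{lem:shearing} and Pontryagin-Thom.
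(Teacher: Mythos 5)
Your argument --- shear via \cref{lem:shearing}, apply Pontryagin--Thom to the resulting tangential structure, and identify the Madsen--Tillman spectrum of an external Whitney sum with a smash product of Thom spectra (extended to virtual bundles via Lewis's compact-exhaustion construction) --- is exactly the intended derivation; the paper states the corollary with no proof beyond the remark that the rank shift normalizes to a degree-zero virtual bundle, so you have supplied the standard argument it relies on. The only point to nail down more firmly is the sign in the $X$-factor, which you flag but defer to ``reindexing'': negating the bundle of \cref{lem:shearing} naively yields $X^{r_V-V}$, and landing on $X^{V-r_V}$ requires precisely the tangential-versus-normal bookkeeping you allude to, as one can sanity-check on the \spinc example, where $\MTSpin^c\simeq \MTSpin\wedge (B\mathrm U_1)^{L-2}$.
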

The shift by $r_W$ appears because, in order to invoke the Pontryagin--Thom theorem without an extra degree shift, we must normalize to a degree-$0$ virtual bundle.

\subsection{Pontryagin--Thom construction}\label{subsection:duals}

In this subsection, we revisit Poincaré duality and present the Pontryagin--Thom construction, a framework that parametrizes Poincaré dual submanifolds corresponding to a given cohomology class through the analysis of maps into a Thom space. This construction serves as a cornerstone for the two primary applications in this paper. Once again, this subsection does not include new material, but reviews and elaborates on a point of view taken up by Kirby--Taylor~\cite{KT90}.

\begin{thm}[Poincaré duality]
    Let $A$ be a commutative ring, $M$ be a closed oriented $d$-dimensional manifold, and let $[M]\in H_*(M; A)$ be its fundamental class. The cap product
    \begin{equation}
        (-) \cap [M]: H^k(M; A) \xlongrightarrow{\cong}H_{d-k}(M; A)
    \end{equation}
    is an isomorphism for all $k$.
\end{thm}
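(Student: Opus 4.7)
The plan is to follow the standard Mayer-Vietoris induction, first reformulating the statement in a form amenable to induction on open subsets. I would strengthen the claim to its non-compact version (Poincaré-Lefschetz duality): for any oriented $d$-manifold $M$, cap product with a local fundamental class defines an isomorphism $H^k_c(M; A) \to H_{d-k}(M; A)$, where $H^k_c$ denotes compactly supported cohomology. The stated theorem is recovered immediately in the compact case, where $H^k_c(M;A) = H^k(M;A)$ and the local fundamental class agrees with $[M]$.

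First I would construct the cap product naturally on open subsets. For each open $U \subseteq M$, orientability produces a local fundamental class $\mu_U$ in Borel-Moore (locally finite) homology, compatible with restriction to smaller opens, and the cap product $(-) \cap \mu_U \colon H^k_c(U;A) \to H_{d-k}(U;A)$ is natural with respect to open inclusions (covariant in $H^*_c$ via extension by zero, covariant in $H_*$ via pushforward). This naturality is what will let the Mayer-Vietoris ladders commute.

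Next I would verify the base case $M = \R^d$ directly: $H^k_c(\R^d;A)$ is $A$ concentrated in degree $d$, $H_k(\R^d;A)$ is $A$ concentrated in degree $0$, and cap product with the standard orientation class sends generator to generator. For the inductive step, given open $U, V \subseteq M$, I would assemble a commuting ladder between the Mayer-Vietoris long exact sequences for $H^*_c$ on $\{U, V, U\cap V, U\cup V\}$ and those for $H_{d-*}$, with vertical maps given by cap product with the respective local fundamental classes. The five lemma then promotes the isomorphism on $U$, $V$, and $U \cap V$ to one on $U \cup V$. Iterating over a finite good cover handles any manifold admitting one, and for general $M$ a colimit argument (using that $H^*_c$ carries directed unions of opens to directed colimits, and that homology commutes with filtered colimits) extends the result to $M$ itself.

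The main obstacle is verifying that the cap product genuinely intertwines the Mayer-Vietoris connecting homomorphisms up to a controlled sign. This requires being careful about how the local fundamental classes restrict to $U$, $V$, and $U \cap V$, and about the sign conventions used to set up Mayer-Vietoris for compactly supported cohomology (which runs covariantly in open inclusions, opposite to ordinary cohomology). Once this sign bookkeeping is pinned down, the five-lemma step and the colimit extension are essentially formal.
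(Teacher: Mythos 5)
The paper states this as classical background and gives no proof of its own, so there is nothing internal to compare against: Poincaré duality is simply cited as a standard fact in \S 2.2. Your proposal is the standard textbook argument (as in Hatcher, Theorem 3.35): pass to the compactly supported/Borel--Moore form, verify the base case on $\mathbb{R}^d$, run Mayer--Vietoris with the five lemma, and finish with a filtered colimit over exhaustions. The outline is correct, and you have rightly flagged the one genuinely delicate point, namely checking that the cap products commute with the Mayer--Vietoris connecting maps up to sign; with that verified, the rest is formal, and no gap remains.
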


Choose an element $\omega\in H^k(M;A)$ with $k > 1$. Since $M$ is oriented, $\omega$ has a unique Poincaré dual $W\in H_{n-k}(M;A)$. We would like to find a closed, oriented submanifold $N$ of $M$ such that the inclusion $i\colon N\hookrightarrow M$ sends the fundamental class of $N$ to $W$ -- then we can think of $N$ as the defect associated to the background field $\omega$ of the higher-form symmetry.

As a consequence of the Steenrod realization problem, it is not always possible to find such a submanifold $N$. Moreover, if one wants to make the choice of $N$ ``local'' in the sense of field theory (i.e.\ encode the data of $N$ in the tangential structure), there are additional obstructions coming from the Pontryagin--Thom construction, as noted by Kirby--Taylor~\cite{KT90}. This is starkly different from the case $k = 1$, for which the theory of the Smith homomorphism has been applied to study symmetry breaking~\cite{COSY20, Hason:2020yqf, Debray:2023ior, Debray:2024wxm}: when $k = 1$, it is always possible to make a consistent local choice of the defect $N$, and the choice is well-defined up to bordism. This is related to the fact that not every cohomology class is the Euler class of a vector bundle, though this is true for classes in $H^1(\text{--};\Z/2)$. See also Crowley--Grant~\cite{crowley2024immersed}, who study when an integral homology class can be represented by an immersed submanifold, rather than an embedded submanifold.


From the embedding of a submanifold $i\colon N\hookrightarrow M$, the tubular neighborhood theorem states that it is always possible to form a tubular neighborhood of $N$ which is isomorphic to the normal bundle $\nu_N\rightarrow N$, with a map $J\colon \nu_N \rightarrow M$ that chooses a tubular neighborhood of the embedding $i$.
\begin{defn}[Pontryagin--Thom collapse] Given an embedding $i\colon N\hookrightarrow M$ and a map $J:\nu_N \to M$, that maps a tubular neighborhood of $N$ into $M$, the map:
\begin{equation}
    c_i\colon M \rightarrow M/(M\backslash J(\nu_N))\,,
\end{equation}
that collapses the manifold outside of $\nu_N$ to a point is the \emph{Pontryagin--Thom collapse map}.
\end{defn}

This gives rise to the sphere bundle $S(\nu_N)$, which is filled in by the disk bundle $D(\nu_N)$. Therefore, whenever we have an embedded submanifold $N$, we can construct a Thom space via $\nu_N$. 

\begin{lem}\label{lem:excision}
    The map of the pair $(D(\nu_N),S(\nu_N))\to (M,M\backslash N)$ is an equivalence on homology.
\end{lem}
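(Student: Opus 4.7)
The plan is to deduce this from excision together with a radial deformation retraction, which is the standard template for tubular neighborhood arguments.

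First, I would invoke the tubular neighborhood theorem in the form already used in the text: the map $J\colon \nu_N \to M$ is a diffeomorphism onto an open neighborhood $U := J(\nu_N)$ of $N$, and under $J$ the zero section corresponds to $N \subset U$. In particular $J$ restricts to a homeomorphism of pairs $(D(\nu_N), S(\nu_N)) \xrightarrow{\cong} (J(D(\nu_N)), J(S(\nu_N)))$, so it suffices to show that the inclusion $(J(D(\nu_N)), J(S(\nu_N))) \hookrightarrow (M, M\setminus N)$ induces an isomorphism on homology.

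Next, I would apply excision to the pair $(M, M\setminus N)$. Let $A := M\setminus N$ and consider the closed subset $Z := M \setminus J(\mathrm{int}\, D(\nu_N))$. Then $Z \subset \mathrm{int}(A)$ because $J(\mathrm{int}\, D(\nu_N))$ is an open neighborhood of $N$, so excision gives
\begin{equation}
    H_*(M, M\setminus N) \;\cong\; H_*\bigl(M\setminus Z,\; (M\setminus N)\setminus Z\bigr) \;=\; H_*\bigl(J(D(\nu_N)),\; J(D(\nu_N))\setminus N\bigr).
\end{equation}
Transporting back across $J$, this is $H_*(D(\nu_N), D(\nu_N)\setminus N_0)$, where $N_0 \subset D(\nu_N)$ denotes the zero section.

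Finally, I would exhibit a deformation retraction of $D(\nu_N)\setminus N_0$ onto $S(\nu_N)$ by the radial homotopy $v \mapsto v/\lVert v\rVert$ (defined fiberwise using the chosen Euclidean metric), which is well-defined on the complement of the zero section. This gives a homotopy equivalence of pairs $(D(\nu_N), S(\nu_N)) \simeq (D(\nu_N), D(\nu_N)\setminus N_0)$, and hence an isomorphism on the relative homology groups. Composing the three isomorphisms produces the asserted homology equivalence, and the composition agrees with the map induced by the inclusion of pairs $(D(\nu_N), S(\nu_N)) \to (M, M\setminus N)$ (factoring through $J$).

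The argument is essentially routine; the only point requiring mild care is verifying the closure/interior hypothesis of excision, i.e.\ that $Z$ lies in the interior of $A$, which follows immediately from the fact that the tubular neighborhood is open. No step here is a genuine obstacle; the lemma is a packaging of excision plus radial retraction, and the novelty in the paper lies in how this identification will be used downstream in the Pontryagin-Thom construction rather than in the proof itself.
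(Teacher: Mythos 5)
Your proof is correct and follows essentially the same route as the paper's: excision applied to remove the complement of the tubular neighborhood, followed by identifying the resulting pair with $(D(\nu_N),S(\nu_N))$. In fact you are slightly more careful than the paper, which elides both the interior-of-$A$ hypothesis and the radial deformation retraction from $D(\nu_N)\setminus N$ onto $S(\nu_N)$ that you make explicit.
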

\begin{proof}
    The excision theorem for homology says that for $U \subseteq A\subseteq X$, there is an equivalence $H_n(X\backslash U,A\backslash U)\cong H_n(X,A)$. Let $X = M$, $A=M\backslash N$ and $U = M\backslash D(\nu_N)$. Then $X\backslash U= D(\nu_N)$ and $A\backslash U = S(\nu_N)$ and plugging into the excision theorem proves the equivalence.
\end{proof}

When $N$ is Poincaré dual to an element $a\in H^n(M;\Z/2)$, we have the following property.

\begin{lem}\label{lem:restricta}
   Let $i\colon N \hookrightarrow M$ be a submanifold that is Poincaré dual to $a \in H^n(M;\Z/2)$. The pullback of $a$ to $M\backslash N$ vanishes.
\end{lem}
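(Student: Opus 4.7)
The plan is to realize $a$ as a cohomology class that is supported in a tubular neighborhood of $N$, and then invoke exactness of the long exact sequence of the pair $(M, M\setminus N)$ to conclude.

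First, I would set up the long exact sequence in $\mathbb{Z}/2$-cohomology of the pair $(M, M\setminus N)$:
\begin{equation*}
    \cdots \longrightarrow H^n(M,M\setminus N;\Z/2) \xlongrightarrow{q^*} H^n(M;\Z/2) \xlongrightarrow{j^*} H^n(M\setminus N;\Z/2) \longrightarrow \cdots
\end{equation*}
where $j\colon M\setminus N\hookrightarrow M$ is the inclusion and $q$ is the quotient map of pairs. By exactness, it suffices to show that $a$ lies in the image of $q^*$.

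Next, I would identify the group $H^n(M,M\setminus N;\Z/2)$ with Thom space cohomology. Applying \cref{lem:excision} in cohomology yields an isomorphism
\begin{equation*}
    H^n(M,M\setminus N;\Z/2)\;\cong\; H^n(D(\nu_N),S(\nu_N);\Z/2)\;\cong\; \widetilde H^n(\mathrm{Th}(N,\nu_N);\Z/2).
\end{equation*}
Since $N$ is Poincaré dual to a class in $H^n(M;\Z/2)$, the submanifold $N$ has codimension $n$, so $\nu_N$ has rank $n$ and carries a mod-$2$ Thom class $U\in \widetilde H^n(\mathrm{Th}(N,\nu_N);\Z/2)$ (mod-$2$ coefficients bypass any orientability issues on $\nu_N$).

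The core step is to identify the image $q^*(U)\in H^n(M;\Z/2)$ with $a$. For this I would use the Pontryagin-Thom collapse $c_i\colon M\to \mathrm{Th}(N,\nu_N)$: by construction $c_i$ factors as
\begin{equation*}
    M\longrightarrow M/(M\setminus J(\nu_N)) \xlongrightarrow{\simeq} \mathrm{Th}(N,\nu_N),
\end{equation*}
so $c_i^*U$ is exactly the image of $U$ under $q^*$ (after the excision identification). The content of Poincaré duality for submanifolds is precisely that $c_i^*U = a$, i.e.\ that the class Poincaré dual to $i_*[N]$ is pulled back from the Thom class along the collapse map. Hence $a=q^*(U)$ lies in the image of $q^*$, and exactness of the sequence forces $j^*(a)=0$, which is the claim.

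The only step requiring care is the identification $c_i^*U = a$, since our definition of ``Poincaré dual submanifold'' is phrased in terms of fundamental classes rather than Thom classes. I would justify it either by citing the standard equivalence between these two characterizations of Poincaré duality for submanifolds (see e.g.\ the discussion surrounding the Pontryagin-Thom collapse in \cref{subsection:duals}), or by a direct diagram chase using cap product with $[M]$ and naturality of the Thom isomorphism.
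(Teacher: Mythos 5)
Your proposal is correct and follows essentially the same route as the paper's proof: both identify $H^n(M,M\setminus N;\Z/2)$ with the Thom space cohomology of $\nu_N$ via excision (\cref{lem:excision}), observe that the Thom class maps to $a$ in $H^n(M;\Z/2)$, and conclude by exactness of the pair sequence. Your write-up is in fact slightly more explicit than the paper's about the exact sequence and about the need to justify the identification $q^*(U)=a$, which the paper asserts with the phrase ``the Thom class pulls back along the embedding $i$ to $a$.''
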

\begin{proof}
 Since $N$ is obtained by transversality, the normal bundle $\nu_N\rightarrow N$ is identified with the pullback of the canonical bundle over $B\O_n$. By dualizing Lemma \ref{lem:excision}, the relative cohomology $H^n(M,M\backslash N;\Z/2)$ is isomorphic to 
$H^n(D(\nu_N),S(\nu_N);\Z/2)$. The Thom space $\mathrm{Th}(N,\nu_N)$ is equivalent to  $D(\nu_N)/S(\nu_N)$ and hence $H^n(D(\nu_N),S(\nu_N);\Z/2)\cong H^n(\mathrm{Th}(N,\nu_N);\Z/2)$. Invoking the Thom isomorphism gives the equivalence
\begin{equation}
    H^n(\mathrm{Th}(N,\nu_N);\Z/2) \cong H^0(N;\Z/2) = (\Z/2)^{\pi_0(N)}\,,
\end{equation}
 for which the Thom class $U\in H^n(\mathrm{Th}(N,\nu_N);\Z/2)$ restricts to the product of generators in each component. But since the Thom class pulls back along the embedding $i$ to $a$, this means that $a$ vanishes when restricted to $M\backslash N$. 
\end{proof}

The submanifold $N$ was realized as the Poincaré dual of $a$, but another way of viewing $N$ is through its normal bundle. In particular, $N$ can be constructed from taking the Poincaré dual of $e(\nu_N)$. Since $a$ restricts trivially to $M\backslash N$ we recover a well-known fact:
\begin{cor}\label{cor:atoN}
   For any class $a\in H^n(M;\Z/2)$ the restriction $i^*(a)$ onto $N$ is $e(\nu_N) \bmod 2$.
\end{cor}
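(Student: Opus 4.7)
The plan is to factor the computation through the Thom space of the normal bundle using the Pontryagin-Thom collapse. Specifically, I would first show that $a$ is the pullback under the collapse map $c_i\colon M \to \mathrm{Th}(N,\nu_N)$ of the mod 2 Thom class $U \in H^n(\mathrm{Th}(N,\nu_N);\Z/2)$. The identification $c_i\circ i = z$, where $z\colon N \to \mathrm{Th}(N,\nu_N)$ is the zero section (which follows because $c_i$ restricted to the tubular neighborhood $J(\nu_N) \cong D(\nu_N)$ is the quotient $D(\nu_N) \to D(\nu_N)/S(\nu_N)$), then yields
\begin{equation*}
    i^*(a) \;=\; i^*(c_i^*U) \;=\; (c_i\circ i)^*U \;=\; z^*U,
\end{equation*}
and the paragraph preceding the corollary recalls that $z^*U = e(\nu_N)$, which with $\Z/2$ coefficients is the top Stiefel-Whitney class $w_n(\nu_N)$, i.e.\ $e(\nu_N) \bmod 2$.

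For the key first step, I would use that by \cref{lem:restricta}, $a$ restricts to $0$ on $M\setminus N$, so $a$ lifts uniquely to a class $\widetilde a \in H^n(M, M\setminus N; \Z/2)$. Combining \cref{lem:excision} with the Thom isomorphism gives
\begin{equation*}
    H^n(M, M\setminus N;\Z/2) \;\cong\; H^n(\mathrm{Th}(N,\nu_N);\Z/2) \;\cong\; H^0(N;\Z/2) \;\cong\; (\Z/2)^{\pi_0(N)},
\end{equation*}
and $U$ corresponds under these isomorphisms to the element whose coordinate is $1$ in every component. To identify $\widetilde a$ with $U$ I would pair both with the fundamental class $[M]$: by hypothesis $a$ is Poincaré dual to $[N]$, so its pairing computes $1 \bmod 2$ on each connected component of $N$, matching $U$ coordinate by coordinate.

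The main obstacle I anticipate is the component-by-component bookkeeping in this identification: one must chase through excision and the Thom isomorphism carefully enough to be sure that Poincaré duality sends the fundamental class of each component $N_j$ to the generator of the corresponding $\Z/2$-summand, rather than to some other element. Once this is verified, the conclusion is formal, using only the naturality of the Thom class under pullback and its defining property $z^*U = e(\nu_N)$ along the zero section.
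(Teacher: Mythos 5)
Your proof is correct and follows exactly the route the paper intends: the paper states this corollary without proof as a "well-known fact," but the surrounding text (the identification of $a$ with the image of the Thom class in the proof of \cref{lem:restricta}, together with the remark that $z^*U = e(\nu_N)$) is precisely your factorization $i^*(a) = (c_i\circ i)^*U = z^*U$. The only cosmetic quibble is that the lift of $a$ to $H^n(M, M\setminus N;\Z/2)$ need not be unique in general, but this does not matter since your Poincar\'e-duality pairing identifies the image of $U$ in $H^n(M;\Z/2)$ with $a$ directly.
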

In the examples we will study, the Poincaré dual manifold $N$ will be taken with respect to the class $a$ realized as a combination of $w_2(TM)$ and $w_1(TM)$. A particularly useful way to view $N$ is as an obstruction to extending a tangential structure, that is induced by the trivialization of $a$ on $M\backslash N$, to all of $M$. The cases in the following proposition will be most interesting for our applications to physical systems that involve fermions. Any defect that is placed along $N$ then has the property of acting as a \textit{hard boundary} for the theory on $M\backslash N$. Notably, due to the differing tangential structures, the two theories cannot be made compatible and therefore the degrees of freedom on $M\backslash N$ do not always extend onto the defect.

\begin{prop}[{Kirby--Taylor~\cite[Lemma 2.2, Theorem 2.4]{KT90}}]\label{prop:notextending}
     Let $M$ be a closed manifold without boundary. A submanifold $N$ of $M$ is Poincaré dual to  $w_1(TM)$, $w_2(TM)+w_1(TM)^2$, or $w_2(TM)$ if and only if $M\backslash N$ has an orientation or resp. pin$^-$ or resp. pin$^+$ structure, that does not extend across any component of $N$.
\end{prop}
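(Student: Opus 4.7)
In each of the three cases, introduce the relevant characteristic class $a\in H^n(M;\Z/2)$ of the tangent bundle:
\begin{equation*}
a = w_1(TM),\qquad a = w_2(TM)+w_1(TM)^2,\qquad \text{or}\qquad a = w_2(TM),
\end{equation*}
whose degree $n\in\{1,2,2\}$ matches the codimension of $N$. The definition of each tangential structure in terms of Stiefel--Whitney classes (via the standard central extensions) means that an orientation, pin$^-$, or pin$^+$ structure on a vector bundle is literally a trivialization of the corresponding $a$. Thus the structure on $M\setminus N$ demanded by the proposition is equivalent to a trivialization of $a|_{M\setminus N}$, reducing the whole statement to a cohomological claim about $a$, uniform in the three cases.

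For the forward direction, assume $N$ is Poincaré dual to $a$. Then \cref{lem:restricta} gives $a|_{M\setminus N} = 0$, producing the trivialization and hence the tangential structure on $M\setminus N$. To check non-extension across a fixed component $N_i$, run the long exact sequence of the pair $(M\setminus(N\setminus N_i),\, M\setminus N)$; by excision (\cref{lem:excision}) and the Thom isomorphism, the obstruction to extending the trivialization across $N_i$ lives in
\begin{equation*}
H^n(D(\nu_{N_i}), S(\nu_{N_i});\Z/2)\;\cong\; H^0(N_i;\Z/2),
\end{equation*}
and is given by the image of the Thom class of $\nu_{N_i}$. Since $N$ is assumed to realize the Poincaré dual of $a$, this image is the non-zero generator on $N_i$ (recovered concretely as $e(\nu_{N_i})\bmod 2$ via \cref{cor:atoN}), and the trivialization does not extend.

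For the reverse direction, suppose that $M\setminus N$ carries such a tangential structure that does not extend across any component of $N$. The trivialization of $a|_{M\setminus N}$ lifts $a$ to a class in $H^n(M, M\setminus N;\Z/2)$, and by \cref{lem:excision} together with the Thom isomorphism this group is identified with $\bigoplus_i H^0(N_i;\Z/2)$. The failure of the trivialization to extend across each component $N_i$ is exactly the statement that the coordinate of this lift on $N_i$ is non-zero; as each $N_i$ is connected, the coordinate equals $1\in\Z/2$. The lift is therefore the full Thom class of $\nu_N$, whose image in $H^n(M;\Z/2)$ is the Poincaré dual of $[N]$, so $a = \mathrm{PD}[N]$.

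The main obstacle is the precise identification of the obstruction to extending the trivialization across $N_i$ with the Thom class of $\nu_{N_i}$, and tracking this identification through the long exact sequence in a way that is uniform in the three cases. In the two codimension-$2$ cases, the subtlety that $e(\nu_{N_i})\bmod 2 = w_2(\nu_{N_i})$ rather than a primary class is handled by \cref{cor:atoN}, which delivers exactly the needed identity $i^*(a) = e(\nu_N)\bmod 2$. Once this identification is in hand, both directions become bookkeeping in the long exact sequence of the pair, and the three cases follow from a single argument applied to the respective $a$.
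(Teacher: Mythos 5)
Your reduction of the three structures to trivializations of the corresponding characteristic class $a$, and your second direction (a non-extending structure forces the associated relative lift of $a$ to be the full Thom class, whence $a$ is the Poincaré dual of $[N]$), match the paper's argument in substance. The direction from ``$N$ is Poincaré dual to $a$'' to ``there is a non-extending structure,'' however, has a genuine gap. From \cref{lem:restricta} you get only that $a|_{M\setminus N}=0$, i.e.\ that \emph{some} trivialization exists; you then assert that ``the trivialization'' has extension obstruction over $N_i$ equal to the Thom class of $\nu_{N_i}$. That is false for an arbitrary trivialization: the lift of $a$ to $H^n(M,M\setminus N;\Z/2)\cong\bigoplus_i H^0(N_i;\Z/2)$ determined by a trivialization is well defined only up to the image of $\delta^*\colon H^{n-1}(M\setminus N;\Z/2)\to H^n(M,M\setminus N;\Z/2)$, and different structures give different lifts. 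Concretely, take $M=T^2$ and $N$ a small nullhomologous circle, which is Poincaré dual to $w_1(TM)=0$: the restriction of a global orientation of $T^2$ to $M\setminus N$ extends across $N$, so your argument, read literally, proves the false statement that \emph{every} orientation of $M\setminus N$ fails to extend. The proposition claims only existence, and the missing step is exactly the second half of the paper's proof: for any structure with lift $\lambda$, exactness gives $\lambda - U\in\mathrm{im}(\delta^*)$ since $j^*\lambda=a=j^*U$, so one modifies the structure by a class $c$ with $\delta^*(c)=U-\lambda$; the modified structure has lift $U$ and hence fails to extend across every component.

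Two smaller points. First, your parenthetical identification of the obstruction with $e(\nu_{N_i})\bmod 2$ via \cref{cor:atoN} conflates the relative Thom class in $H^n(D(\nu_{N_i}),S(\nu_{N_i});\Z/2)\cong H^0(N_i;\Z/2)$ with its pullback along the zero section, which lives in $H^n(N_i;\Z/2)$ and can vanish (e.g.\ for trivial normal bundle) even when the relative class is the generator; the former, not the latter, is the extension obstruction. Second, the identification you defer as ``the main obstacle'' --- that a structure extends across $N_i$ exactly when the $N_i$-coordinate of its relative lift vanishes --- is the geometric content that the paper actually proves (for $w_1$, via the evaluation of $\lambda$ against circles meeting $N$ transversally); it needs an argument rather than an assertion, though it is routine once set up via excision for the pair $\bigl(M\setminus(N\setminus N_i),\,M\setminus N\bigr)$ as you indicate.
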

\begin{proof}
We sketch out the argument in proof of \cite[Lemma 2.2]{KT90} that establishes this proposition in the case when $N$ is Poincaré dual to $w_1(TM)$, and when $N$ has only a single component. The proofs for the two other cohomology classes and when $N$ has multiple components are similar and given in \cite[Theorem 2.4]{KT90}. Begin by picking a class $\lambda \in H^1(M,M\backslash N;\Z/2) \cong  H^1(D(\nu_N),S(\nu_N);\Z/2)= \Z/2$ which evaluates to the nontrivial element if the orientation of the sphere bundle does not extend to the disk bundle, and the trivial element when the orientation does extend. This is equivalent to detecting if the orientation on $M\backslash N$ fails to extend across $N$. The image of $\lambda$ in $H^1(M;\Z/2)$,  denoted $i^*(\lambda)$, is equal to $w_1(TM)$. To see this, consider an embedded $S^1$ in $M$ which is transverse to $N$ subject to the additional condition that if $S^1$ intersects $N$ at a point then it intersects in the subsequent way.  Let the $S^1$ first enter $S(\nu_N)$ at one point, continue down a fiber hitting $N$, and then continuing down the same fiber until it exits the other side of $S(\nu_N)$. The tangent bundle of $M$ restricted to this $S^1$ is oriented if $\langle i^*(\lambda), j^*[S^1]\rangle =1$, where $j^*[S^1]$ is the image of the fundamental class of
the $S^1$ in $H_1(M;\Z/2)$. Since $w_1(TM)$ also shares the same property when paired with $j^*[S^1]$, we see that $i^*(\lambda) = w_1(TM)$.

If we assume $M\backslash N$ has an orientation that does not extend across $N$, then $\lambda$ is nontrivial in $\Z/2$, and is in the image of the Thom class. Hence $N$ is Poincaré dual to $w_1(TM)$. On the other hand if we assume that $N$ is Poincaré dual to $w_1(TM)$, then $M\backslash F$ is oriented, and a fixed orientation on $M\backslash N$ can be modified by a class in $H^0(M\backslash N;\Z/2)$. If we fix a particular orientation on $M\backslash N$ we can consider the class $\lambda$ to probe if this orientation extends across $N$. The idea is to consider if $\lambda$ is
or is not already in the image of the Thom class, i.e. if it equals 1 or $-1$. If $\lambda$ is already in the image of the Thom class then the orientation we fixed does not extend across $V$ and we are done. If $\lambda$ is not in the image of the Thom class then it is always possible to modify $\lambda$ by a class $\delta^*(c)$, where $\delta^*:H^0(M\backslash N; \Z/2) \rightarrow H^1(M,M\backslash N;\Z/2)$, so that $\lambda + \delta^*(c)$ is in the image of the Thom class. This class $c$ changes the orientation that was initially fixed on $M\backslash N$ and therefore means the orientation that is modified by $c$ does not extend across $N$.
\end{proof}
We now state a construction due to Pontryagin and Thom, which allows one to parametrize Poincaré dual submanifolds to a cohomology class by studying maps into a Thom space. It serves as the cornerstone for the two main applications in this paper.

\begin{thm}[Pontryagin--Thom]\label{construction:PT}
Let $M$ be a closed manifold.
\begin{enumerate}
    \item Let $A$ be a finite abelian group and  $\omega \in H^n(M;A)$. There exists an oriented submanifold $F$ that is Poincaré dual to $\omega$ if and only if the map $\omega\colon M \rightarrow K(A,n)$ lifts across the Thom class map $U\colon M\SO_n\to K(A, n)$ to a map $M\rightarrow M\SO_n$.
    \item Let $\omega \in H^n(M;\Z/2)$. There exists a (not necessarily oriented) submanifold that is Poincaré dual to $\omega$ if and only if the map $\omega\colon M \rightarrow K(\Z/2,n)$ lifts across the Thom class map $U\colon M\O_n\to K(\Z/2, n)$ to a map $M\rightarrow M\O_n$.
\end{enumerate}
\end{thm}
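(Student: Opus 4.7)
The plan is to deduce both parts of the theorem from the geometric interpretation of $M\SO_n$ (resp.\ $M\O_n$) as the ``universal codimension-$n$ oriented (resp.\ unoriented) submanifold.'' The two classical ingredients are: (i) the tubular neighborhood theorem, which produces a Pontryagin-Thom collapse $c\colon M \to \mathrm{Th}(F,\nu_F)$ from any closed embedded submanifold $F\hookrightarrow M$; and (ii) transversality, which lets one extract a submanifold $F = f^{-1}(\text{zero section})$ from any map $f\colon M \to M\SO_n$ (after a small homotopy). Under these operations, the Thom class and the Poincar\'e dual correspond to one another, yielding both directions of the equivalence.

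For the forward direction, let $F\hookrightarrow M$ be a closed oriented codimension-$n$ submanifold Poincar\'e dual to $\omega$. Its oriented normal bundle $\nu_F\to F$ is classified by a map $g\colon F\to B\SO_n$, which induces a Thom map $\mathrm{Th}(g)\colon \mathrm{Th}(F,\nu_F) \to M\SO_n$. Set $f := \mathrm{Th}(g)\circ c\colon M \to M\SO_n$, where $c$ is the Pontryagin-Thom collapse of a tubular neighborhood of $F$. Running through the Thom isomorphism exactly as in the proof of \cref{lem:restricta}, but with $A$-coefficients (so using orientability of $\nu_F$ rather than its $\Z/2$-orientation), one sees that $c^{*}$ sends the Thom class of $\nu_F$ to the class in $H^n(M,M\setminus F;A)$ whose image in $H^n(M;A)$ is the Poincar\'e dual of $[F]$. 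Combined with $\mathrm{Th}(g)^{*}(U) = U_{\nu_F}$, this gives $f^{*}(U) = \omega$, so $f$ is the required lift.

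For the reverse direction, let $f\colon M\to M\SO_n$ satisfy $f^{*}(U) = \omega$. Since $M$ is compact, we may homotope $f$ so that it factors through the Thom space of the tautological bundle on a finite-dimensional Grassmannian, which is a compact smooth manifold away from the basepoint at infinity. A standard basepoint-preserving homotopy puts $f$ in general position with respect to the zero section; after this, $F := f^{-1}(\text{zero section})$ is a closed oriented codimension-$n$ submanifold of $M$ with normal bundle canonically identified with the pullback along $f$ of the tautological bundle. One then checks that $f$ is homotopic to the Pontryagin-Thom collapse map associated to this $F$ (using uniqueness of tubular neighborhoods up to isotopy), whence the forward direction gives $\omega = f^{*}(U) = \mathrm{PD}[F]$. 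Part (2) follows by the same argument, replacing $B\SO_n$ by $B\O_n$ and working with $\Z/2$ coefficients, where orientations come for free from the canonical $\Z/2$-orientation of every real vector bundle.

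The main obstacle is executing the transversality step rigorously when the target is a Thom space with a basepoint at infinity rather than a smooth manifold. This requires reducing along a compact-support argument to a finite-dimensional Grassmannian, separating the preimage of the basepoint from the preimage of the zero section, and verifying that the homotopy used to achieve transversality preserves the homotopy class of $f$ (equivalently, the cohomology class $f^{*}(U) = \omega$). These points are standard in differential topology, but need care with basepoints and with the compact support of the homotopy; the remainder of the argument is essentially an unwinding of definitions via the Thom isomorphism.
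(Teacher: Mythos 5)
Your argument is correct and is precisely the standard Pontryagin--Thom construction that the paper invokes by citation to Kirby--Taylor~\cite[\S 2]{KT90} rather than proving in the text: the collapse map composed with the classifying map of the normal bundle gives the forward direction, and factoring through the Thom space of a finite-dimensional Grassmannian followed by transversality to the zero section gives the converse. The one caveat worth recording is that, as you implicitly use, ``oriented submanifold'' in part (1) should really be ``co-oriented'' (normal bundle oriented), since that is what a map to $M\SO_n$ classifies and what the Thom class with $A$-coefficients requires.
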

This follows from the Pontryagin--Thom theorem, as explained by Kirby--Taylor~\cite[\S 2]{KT90}.

\subsubsection{Characteristic bordism}\label{subsubsection:structureextensions}
The ideas of characteristic bordism originated with the work of Freedman--Kirby (FK) and Kirby--Taylor (KT) in \cite{FK,KT90}, where the authors defined a bordism group generated by a pair consisting of a 4-dimensional manifold $M$ and a 2-dimensional submanifold $F$ that is Poincaré dual to  either the characteristic classes $w_2(TM)$ or $w_2(TM)+w_1(TM)^2$. 
\begin{defn}
\label{def:charbordism}
Choose an abelian group $A$ and tangential structure $\xi\colon B\to B\O$, where either $A = \Z/2$, or $\xi$ factors through $B\SO$. Let $M$ be a manifold with $\xi$-tangential structure, and let $\mathcal P\in H^n(B; A)$. A $(\xi, \mathcal P)$-\textit{characteristic pair} consists of a pair $(M,F)$ where $F$ is a proper submanifold of $M$ Poincaré dual to $\mathcal{P}(M)$ and the boundary of $M$ intersects $F$ precisely and transversely at the boundary of $F$. 
\end{defn}

\begin{rem}
The requirement on a pair $(M, F)$ to have a characteristic structure means that a $(\xi, \mathcal P)$-characteristic structure on $(M, F)$ induces one on $(\partial M, \partial F)$. Thus there are notions of bordisms of $(\xi, \mathcal P)$-characteristic manifolds, bordism groups, etc.
\end{rem}

If $F$ is Poincaré dual to $\mathcal{P}$ then
$M \backslash F$ has a $\psi$-tangential structure, where $\psi$ is the tangential structure which is a $\xi$-structure and a trivialization of $\mathcal P$. For example, if $\xi$ is an orientation and $\cP = w_2(TM)$ then $\psi$ is a spin structure. All of the characteristic structures we will discuss in \S\ref{section:Defects} will be \textit{characterized}, meaning that the $\psi$-tangential structure on $M \backslash F$ does not extend across $F$. Equivalently, it means that there is no $\psi$-structure on $M$ that restricts to the $\psi$-structure on $M\backslash F$. There are immediate physical consequences that can be gleaned as a consequence of this fact. Most conspicuously, a theory when placed on $M\backslash F$ with the data of a choice of $\psi$-structure, cannot be compatibly extended across $F$.


The concept of characteristic pairs has received limited attention in the literature; however, a few notable examples do exist, which we will examine in this work.

\begin{defn}[\cite{FK}]\label{def:FK}
    A \term{Freedman--Kirby (FK) characteristic pair} is a $w_2(TM)$-characteristic pair $(M, F)$ where $M$ and $F$ are closed, oriented manifolds.
\end{defn}
\begin{defn}\label{defn:FKO}
A \term{$\mathrm{FK}^\O$ characteristic pair} is a $w_2(TM)$-characteristic pair $(M, F)$ where $F$ is oriented, but $M$ is not necessarily oriented.
\end{defn}

\begin{defn}[\cite{GM}]\label{def:GM}
A \term{Guillou--Marin (GM) characteristic pair} is a $w_2(TM)$-characteristic pair $(M, F)$ in which $M$ is oriented, but $F$ is not necessarily oriented.
\end{defn}

\begin{defn}[{\cite[\S 6]{KT90}}]\label{def:KT}
A \term{Kirby--Taylor minus (KT$^-$)} resp.\ \term{Kirby--Taylor plus (KT$^+$) characteristic pair} is a $w_2(TM)+w_1(TM)^2$ resp.\ $w_2(TM)$ characteristic pair, with no orientation data specified on $M$ or $F$.
\end{defn}
This is slightly different from the data Kirby--Taylor use: see \cref{rem:differentfromKT}. Kirby--Taylor also consider a variant of \cref{def:FK,def:GM,def:KT} for topological manifolds (\textit{ibid.}, \S 9), and Klug~\cite{Klu21} introduced variants of these structures for manifolds with boundary.

\begin{notation*}
    We will refer to the characteristic bordism groups for the characteristic pairs defined above by the authors' names.  e.g. $\Omega^{\mathrm{GM}}_k$ is the group generated by characteristic pairs $(M,F)$ of GM-type where $M$ is in dimension $k$.
\end{notation*}

For the manifolds $M$, $M\backslash F$, and $F$, we will follow a line of reasoning where we do not specify a choice of tangential structure for $F$ when we conduct the computations of characteristic bordism. We will see how to apply this specific notion of characteristic pairs in \S\ref{section:Defects}.

\subsection{Long exact sequences and approximate symmetries}\label{subsection:smithLESintro}


In this subsection, we conjecture the existence of a
\textit{characteristic long exact sequence}, which forms a long exact sequence in the context of characteristic bordism groups.  Such a long exact sequence captures more than just symmetries, but rather the topological aspects of the symmetry operator when supported along a submanifold. It can also be used to compute obstructions to symmetry breaking; see~\cite{Debray:2023ior} for applications of a related ``symmetry breaking long exact sequence'' in QFT.

\begin{conj}
\label{char_conj}
Given $(\xi, \mathcal P)$ as in \cref{def:charbordism}, and $\xi'$ a tangential structure for a submanifold $F$, there is a map of spectra $\mathcal R\colon\mathit{MTChar}(\xi,\mathcal P)\to \Sigma^n \mathit{MT\xi}'$ such that the map $\mathcal R$ induces on $\pi_k$ is the
map $R\colon \Omega_k^{(\xi,\mathcal P)}\to\Omega_{k-n}^{\xi'}$ sending a characteristic pair $(M, F)\mapsto F$.
\end{conj}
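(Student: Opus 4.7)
The strategy is to realize $\mathit{MTChar}(\xi,\mathcal P)$ as the Madsen-Tillmann spectrum of an explicit twisted tangential structure built from the Pontryagin-Thom construction, and then to obtain $\mathcal R$ as the map on Madsen-Tillmann spectra induced by a natural projection whose geometric content is exactly $(M,F)\mapsto F$.

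First I would identify the classifying space of $(\xi,\mathcal P)$-characteristic structures. By \cref{construction:PT}, such a structure on $M$ is equivalent to the data of a $\xi$-structure $\widetilde f\colon M\to B$ together with a lift of $\widetilde f^{\,*}\mathcal P\colon M\to K(A,n)$ across the Thom class map $U\colon MG_n\to K(A,n)$, where $G_n=\O_n$ when $A=\Z/2$ and $G_n=\SO_n$ when $\xi$ factors through $B\SO$. Such a lift is the same data as a map from $M$ into the homotopy pullback
\begin{equation*}
B_{\mathrm{char}} \;\coloneqq\; B\times_{K(A,n)}MG_n,
\end{equation*}
which inherits a tangential structure $B_{\mathrm{char}}\to B\xrightarrow{\xi} B\O$ from its first projection. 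I then define $\mathit{MTChar}(\xi,\mathcal P)$ to be the Madsen-Tillmann spectrum of this tangential structure; by construction and the Pontryagin-Thom theorem, its $\pi_k$ is $\Omega_k^{(\xi,\mathcal P)}$. I would also check case by case that this reproduces the twisted spin structures listed in \cref{tab:results1}.

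Next I would construct $\mathcal R$ from the second projection $p\colon B_{\mathrm{char}}\to MG_n$. Since $MG_n=(BG_n)^{V_n}$ for $V_n\to BG_n$ the tautological rank-$n$ bundle, the shearing identifications of \cref{lem:shearing} and \cref{shearcor} reinterpret the map on Madsen-Tillmann spectra induced by $p$ as a map
\begin{equation*}
\mathcal R\colon \mathit{MTChar}(\xi,\mathcal P)\longrightarrow \Sigma^n \mathit{MT\xi}',
\end{equation*}
where $\xi'$ is the tangential structure on $F$ coming from the decomposition $TM|_F\cong TF\oplus\nu_F$, the given $\xi$-structure on $TM$, and the tautological $G_n$-structure on $\nu_F$. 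The $\Sigma^n$ reflects the codimension of $F$, which appears because passing from $B_{\mathrm{char}}$ into the Thom space $MG_n$ shifts degree by $n$.

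Finally I would verify that $\mathcal R$ realizes the expected geometric map on $\pi_k$. A class in $\pi_k(\mathit{MTChar}(\xi,\mathcal P))$ is represented by a characteristic pair $(M,F)$, and the Pontryagin-Thom collapse $c\colon M_+\to\mathrm{Th}(\nu_F)$ arising from the lift is exactly what the second projection $p$ extracts. Pontryagin-Thom then identifies the resulting class in $\pi_{k-n}(\mathit{MT\xi}')=\Omega_{k-n}^{\xi'}$ with the bordism class of $F$ equipped with its induced $\xi'$-structure. The main obstacle I anticipate is the careful bookkeeping of twisted spin data, particularly for the non-oriented entries (GM, $\mathrm{KT}^\pm$) of \cref{tab:results1} where $\xi'$ itself involves a normal-bundle twist; checking that the shearing identifications are compatible across all cases is what will make the proof technically delicate, even though the formal construction above is clean.
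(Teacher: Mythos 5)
The statement you are proving is \cref{char_conj}, which the paper explicitly leaves as a \emph{conjecture}: the authors only offer heuristic evidence for it, prove it in the special cases of FK and $\mathrm{FK}^{\O}$ structures (\cref{prop:FKapprox,prop:FKOapprox}), and elsewhere merely \emph{approximate} the hoped-for characteristic long exact sequence by Smith long exact sequences. Your first step --- identifying the classifying space of $(\xi,\mathcal P)$-characteristic structures as the homotopy pullback $B\times_{K(A,n)}MG_n$ and taking $\mathit{MTChar}(\xi,\mathcal P)$ to be its Madsen--Tillman spectrum --- matches the paper's setup (see the pullback squares \eqref{eq:FKpullback}, \eqref{eq:GMpullback}, \eqref{eq:KTpullback}) and is fine. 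The gap is in the middle step, where you claim that the second projection $p\colon B_{\mathrm{char}}\to MG_n$ ``induces'' a map of Madsen--Tillman spectra which shearing converts into $\mathcal R\colon\mathit{MTChar}(\xi,\mathcal P)\to\Sigma^n\mathit{MT\xi}'$. Shearing (\cref{lem:shearing}, \cref{shearcor}) requires an actual (virtual) vector bundle over the twisting space whose Euler/Thom data realizes the class in question. Over $MG_n$ there is no tautological rank-$n$ bundle whose mod-$2$ Euler class is the Thom class $U$: that would amount to a retraction of the zero section $z\colon BG_n\to MG_n$, which exists only in the Smith-isomorphism cases $M\O_1$, $M\SO_2$, $M\mathrm{Sp}_1$ singled out after \cref{cpx_smith_isom}. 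This is exactly why the paper can prove the conjecture for FK (where $M\SO_2\simeq B\SO_2$) but not for GM or $\mathrm{KT}^\pm$, and why it resorts to the approximating map $B(\Spin\text{-}\O_2)\to B\mathrm{GM}$ coming from the zero section rather than a splitting of it.

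Put differently, the target $\Sigma^n\mathit{MT\xi}'$ is a (shifted) Thom spectrum of a twist living over $BG_n$ (the normal bundle of $F$ provides a map $F\to BG_n$), whereas your source only comes with a map to the Thom space $MG_n$; functoriality of $\mathit{MT}(-)$ applied to $p$ does not typecheck against this target, and bridging $MG_n$ to $\Sigma^n (BG_n)^{\text{twist}}$ is precisely the content of the conjecture, not a consequence of shearing. Your closing remark that the difficulty is ``bookkeeping of twisted spin data'' understates the problem: the obstruction is structural. A genuine proof would likely need a parametrized transversality/umkehr construction at the spectrum level (the route hinted at by the paper's first piece of heuristic evidence), or the additional hypothesis that the normal bundle of $F$ extends over $M$ (the paper's third piece of evidence), under which the Smith machinery applies verbatim.
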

Here $\mathit{MTChar}(\xi, \mathcal P)$ is the Thom spectrum for tangential $(\xi, \mathcal P)$-characteristic structures, as in \Cref{def:sign_charbord}.

There are several pieces of heuristic evidence for \cref{char_conj}.
\begin{enumerate}
    \item In general, natural, geometrically defined homomorphisms between bordism groups typically lift to maps of spectra. Examples include the Smith homomorphism (see~\cite[Proposition 3.17]{Debray:2024wxm}) and including bordism of closed manifolds into a notion of bordism of compact manifolds, possibly with boundary (see~\cite[\S 5.1]{Debray:2024wxm}).\footnote{See also the recent work of Hoekzema-Merling-Murray-Rovi-Semikina~\cite{HMMRS22}, Hoekzema-Rovi-Semikina~\cite{HRS22}, Campbell-Kuijper-Merling-Zakharevich~\cite{CKMZ23}, and Merling-Ng-Semikina-Sendón Blanco-Williams~\cite{MNSBW25} on lifting invariants of SK-groups, a variant of bordism groups due to Karras-Kreck-Neumann-Ossa~\cite{KKNO73}, to maps of spectra.} We will present some details on  the Smith homomorphism in the latter part of this section.
    \item \Cref{char_conj} for the characteristic structures in \cref{def:FK,,def:GM,,def:KT} is implicit in Kirby--Taylor~\cite[Corollary 6.12, Remarks 6.15 and 6.16]{KT90} (though see \cref{KT_correction}).
    \item If we assume in addition that the normal bundle of $F\hookrightarrow M$ extends to a vector bundle on $M$, then the analogue of \cref{char_conj} is true, as we describe below, and in some examples including Freedman--Kirby characteristic bordism this assumption always holds.
\end{enumerate}
Later in this paper, we will assume \cref{char_conj} and derive consequences we expect to see in symmetry breaking in field theories with defects. These applications mostly make use of a single key consequence of \cref{char_conj}: if the map $\mathcal R$ of spectra exists, then the maps $R$ on bordism participate in a long exact sequence:
\begin{equation}\label{eq:schematic}
    \dotsb \longrightarrow \Omega_k^? \overset{}{\longrightarrow} \Omega_k^{\mathcal{P}}
	\overset{R}{\longrightarrow} \Omega_{k-n}^{\xi'}\longrightarrow
	\Omega_{k-1}^?\longrightarrow \dotsb\,.
\end{equation}
Heuristically, the idea is that lifting $R$ to a map of spectra gives a point-set model for it; given a map $A\to X$ of spaces, one can construct its cofiber $X/A$ and a long exact sequence on generalized homology. Likewise, given a point-set model of $R$, we can construct its cofiber to obtain the long exact sequence~\eqref{eq:schematic}.

While such a characteristic long exact sequence is expected to exist, explicitly describing it in a way that is computationally effective is challenging. However, we can approximate the characteristic long exact sequence using the Smith long exact sequence.
In particular, a lift of $B\SO \xrightarrow{\omega} K(\Z/2,n) $ to $B\O_{d-n}$ is less restrictive than lifting to $M\O_{d-n}$. The normal bundle to $F$ provides a map to $B\O_{d-n}$ everywhere locally on $F$, but to have this bundle be globally well defined means giving $F$ the structure of a Poincaré dual to $\omega$. 
We will explicitly show how good of an approximation the Smith long exact sequence is in the examples of \S\ref{subsection:GMcharBord} and \S\ref{subsection:KTcharBord}. As is often the case in physical situations, one might only view the defect from a ``local perspective”, by which we mean the physical observables and symmetries are only considered in  a neighborhood of the defect. It is then reasonable to expect symmetries to emerge locally around the defect, and this is indeed the case as can be seen by the normal directions of $F$ in $M$. However, this symmetry may not persist globally if the lift to $M\O_{d-n}$ is obstructed.
In this vein, one can view the Smith long exact sequence as a special instance of the characteristic long exact sequence when the global structure of the defect is not in question.

We now give a brief overview of the Smith long exact sequence, which has a physical interpretation that arises when we have a global symmetry and we consider its spontaneous breaking pattern.
Spontaneously breaking a $p$-form symmetry given by an abelian group $E$ results in a $(p+1)$-codimensional domain wall \cite{Lake:2018dqm,Brennan:2023mmt}. We will take all of our theories to live on closed ambient manifolds $M$, so that the domain wall has support on a submanifold $F$ which is Poincaré dual to a class in $H^{p+1}(M, E)$. In the case of 0-form symmetries, the Smith long exact sequence \cite{Debray:2023ior,Debray:2024wxm} is a powerful tool that maps the anomaly from a domain wall that emerges from spontaneous symmetry breaking, to the anomaly of the bulk theory with certain symmetry. In the first examples in the literature, the authors in \cite{Hason:2020yqf,COSY20} 
considered a spontaneously broken unitary $\Z/2$-symmetry in $(d+1)$-dimensions, which led to a non-unitary $\Z/2$-symmetry on the codimension 1 domain wall upon taking into account CPT symmetry. 
The generalization of this starts with a Thom space $\mathrm{Th}(X,V)$ and a symmetry breaking order parameter which transforms in some representation $\rho$ of $X$; in particular, the data of the order parameter can be presented as an associated bundle $W$ to $\rho$. 

Let $V,W\rightarrow X$ be two vector bundles of rank $r_V$ and $r_W$ over a space $X$, and let $p\colon S(W)\to X$ be the sphere bundle of $W$.
\begin{lem}[{\cite[Lemma 3.5]{Debray:2024wxm}}]
    Let $M$ be a closed $k$-dimensional manifold with $(X,V)$-twisted $\xi$-structure and let $i:N \hookrightarrow M$ be a closed $(k-r_W)$-dimensional submanifold of $M$ such that the mod 2 fundamental class $i_*(N)\in H_{k-r_W}(M;\Z/2)$ is Poincaré dual to the mod 2 Euler class $e(W)$. The $(X,V)$-twisted $\xi$-structure on $M$ induces a  $(X,V\oplus W)$-twisted $\xi$-structure on $N$.
\end{lem}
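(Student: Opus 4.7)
The plan is to construct the desired $(X, V \oplus W)$-twisted $\xi$-structure on $N$ by restricting the given $\xi$-structure on $TM \oplus f^*V$ to $N$ and then rewriting the restriction using an identification of the normal bundle of $N$ in $M$ with the pullback of $W$. Concretely, the $(X, V)$-twisted $\xi$-structure on $M$ consists of a classifying map $f\colon M \to X$ together with a $\xi$-structure on $TM \oplus f^*V$, and the natural candidate for the classifying map on $N$ is $g := f \circ i\colon N \to X$.

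The crucial geometric input is the stable identification $\nu_{N/M} \cong g^*W$. I would establish this by realizing $N$ as the transverse zero locus of a generic section of the pullback bundle $f^*W \to M$: the hypothesis that $i_*[N]\in H_{k-r_W}(M;\Z/2)$ is Poincaré dual to $e(W)$ is exactly what is needed to produce such a realization after a small perturbation (possibly through a bordism of $N$, which is harmless for the bordism-theoretic applications in the rest of the paper). The normal bundle to the transverse zero locus of a section of any vector bundle is canonically isomorphic to the restriction of that bundle, giving $\nu_{N/M} \cong (f^*W)|_N = g^*W$.

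Once the normal bundle identification is in place, the standard tangential splitting $TM|_N \cong TN \oplus \nu_{N/M}$ yields a stable isomorphism
\[
(TM \oplus f^*V)|_N \;\cong\; TN \oplus g^*W \oplus g^*V \;\cong\; TN \oplus g^*(V \oplus W).
\]
Pulling the $\xi$-structure on $TM \oplus f^*V$ back along $i\colon N \hookrightarrow M$ and transporting across this isomorphism produces a $\xi$-structure on $TN \oplus g^*(V \oplus W)$, which by \cref{def:twisted} is exactly the data of an $(X, V \oplus W)$-twisted $\xi$-structure on $N$.

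The step I expect to be the main obstacle is justifying the normal bundle identification $\nu_{N/M} \cong g^*W$ from a hypothesis that only controls the mod $2$ fundamental class of $N$. Working mod $2$ is helpful, since real vector bundles are largely controlled by their Stiefel-Whitney classes and one only needs a \emph{stable} isomorphism to induce the twisted $\xi$-structure; nevertheless, in general one should either strengthen the hypothesis to assume that $N$ arises via the Pontryagin-Thom construction (a lift to the Thom spectrum $X^W$ rather than merely to the Eilenberg-MacLane space $K(\Z/2, r_W)$), or work up to bordism of $N$, either of which makes the normal bundle identification automatic.
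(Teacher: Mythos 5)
Your proof takes essentially the same route as the paper's: restrict to $N$, split $TM|_N \cong TN \oplus \nu$, identify $\nu$ with $g^*W$, and transport the $\xi$-structure. In fact you are more careful than the paper, which simply asserts $\nu \cong W$ from the Poincaré duality hypothesis without addressing the gap you correctly flag (that the mod $2$ homology condition alone does not pin down the normal bundle, and one should invoke the Pontryagin--Thom realization or work up to bordism).
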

\begin{proof}
    Let  $TM|_N$ be the restriction of the tangent bundle of $M$ onto $N$, and let $\nu$ be the normal bundle to $N$, so that $TM|_N\cong TN \oplus \nu$. Since the fundamental class of $N$ is Poincaré dual to $e(W)$, then $\nu$ is isomorphic to $W$. If $TM\oplus V$ has a $\xi$-structure, then $TN\oplus V \oplus W$ has a $\xi$-structure and thus $N$ has a $(X,V\oplus W)$-twisted $\xi$-structure.
\end{proof}
Let $\Omega_\xi^*$ denote the generalized cohomology theory whose corresponding generalized homology theory is $\xi$-bordism $\Omega_*^\xi$. $\Omega_\xi^*$ is sometimes called ``$\xi$-cobordism,'' e.g.\ when it was first constructed by Atiyah~\cite{Ati61}. This is not the same theory as the Anderson or Pontryagin duals of $\xi$-bordism.

For any tangential structure $\xi$ and any rank-$r$ vector bundle $V\to X$, there is a twisted Euler class $e^\xi(V)\in \Omega_\xi^r(X^{V-r})$, defined to be the image of the universal twisted Euler class $e^\mathrm{fr}(V)$ in twisted framed cobordism, constructed by Becker~\cite{Bec70}, under the map forgetting from a stable framing to a $\xi$-structure.

The Smith map is defined to be the cap product with $e^\xi(V)$:\footnote{Strictly speaking, in order for this cap product to be defined, we need additional structure: that $\xi$-cobordism is a \emph{multiplicative} generalized cohomology theory. This is true for $\xi = \Spin$, but \emph{not} for $\xi = \Pin^\pm$. If $\xi$ lacks multiplicativity, one may instead use the fact that $\xi$-bordism is a module over stably framed cobordism, which is multiplicative, and so take the cap product with $e^{\mathrm{fr}}(V)$.}
\begin{equation}
    S_W  = e^\xi(V)\frown\colon \Omega^{\xi}_k(X^{V-r_V}) \rightarrow \Omega^{\xi}_{k-r_W}(X^{V\oplus W-r_V-r_W}).
\end{equation}
%
From the Smith map, we get a long exact sequence, which by \cref{shearcor} consists of certain twisted $\xi$-bordism groups:
\begin{equation}\label{eq:smithLES}
	\dotsb \longrightarrow \Omega_k^\xi(S(W)^{p^*V-r'}) \overset{p_*}{\longrightarrow} \Omega_k^\xi(X^{V-r'})
	\overset{S_W}{\longrightarrow} \Omega_{k-r}^\xi(X^{V\oplus W -r-r'})\longrightarrow
	\Omega_{k-1}^\xi(S(W)^{p^*V-r'})\longrightarrow \dotsb
\end{equation}
where we should think of the group $\Omega_{k-r}^\xi(X^{V\oplus W -r-r'})$ as generated by the submanifold which supports the defect that arises when the order parameter gets an expectation value. One way to see the connecting map is to realize this long exact sequence as arising from taking the colimit over $X$ of the cofiber sequence of pointed spaces:
\begin{equation}
        S(W)_{+} \wedge S^V\rightarrow D(W)_{+}\wedge S^V\rightarrow S^W \wedge S^V\,.
    \end{equation}
It is well-known in homotopy theory that the above cofiber sequence of spaces results in the cofiber sequence of spectra
\begin{equation}\label{eq:cofiberseq}
    S_X(W)^V \rightarrow X^V \rightarrow X^{W\oplus V}\,,
\end{equation}
and~\cite[Theorem 5.1]{Debray:2024wxm} shows that, by applying $\pi_*$ to this cofiber sequence, we obtain the long exact sequence~\eqref{eq:smithLES}.

Let $\mho_\xi^k$ denote the generalized cohomology theory obtained by applying \term{Anderson duality} to $\Omega_{k+1}^\xi$~\cite{And69, Yos75}; then, by work of Freed--Hopkins~\cite{FH21InvertibleFT} and Grady~\cite{Gra23}, $\mho_\xi^k(X)$ is naturally isomorphic to the abelian group of deformation classes of $k$-dimensional reflection-positive invertible field theories on manifolds with a $\xi$-structure and a map to $X$. As these classify anomalies of field theories in one dimension lower, we will apply Anderson duality to the characteristic long exact sequence to obtain information on anomalies:
\begin{equation}
\begin{aligned}
    \ldots \longleftarrow \mho^k_\xi(S(W)^{p^*V-r'})
    &\longleftarrow \mho^k_\xi(X^{V-r'}) \xlongleftarrow{D} \mho^{k-r}_\xi(X^{V\oplus W -r-r'})\\
    &\xlongleftarrow{R} \mho^{k-1}_{\xi}(S(W)^{p^*V-r'})\longleftarrow \dotsb\,.
\end{aligned}
\end{equation}
This dualized sequence then has immediate applications in the context of invertible field theories and anomaly inflow.
This group $\mho^k_\xi(X^V)$ gives the deformation classes of invertible field theories in dimension $k$, where the symmetry and its twisting of the $\xi$-structure is encapsulated by $X^V$. There is a defect anomaly map  $D:\mho_{\xi}^{k-r}(X^{V\oplus W-r-r'}) \to \mho_{\xi}^{k}(X^{V-r'})$ which describes the anomaly of the bulk given the anomaly on the defect which is associated with spontaneously breaking a symmetry. The map $R: \mho^k_\xi(X^{V-r'})\rightarrow \mho^k_\xi(S(W)^{p^*V-r'})$ measures the obstruction to gapping the $(k-1)$-dimensional theory over $S(W)^{p^*V-r'}$.

\label{section:codim1}



\section{Defects and Characteristic Structures}
\label{section:Defects}
We consider defects in a 4-dimensional setting, i.e.\ (3+1)-spacetime dimensions, and give a general prescription of what tangential structures one can expect for defects inserted along the Poincaré dual to a degree $2$ $\Z/2$-valued cohomology class. This 
conveniently integrates in examples studied by \cite{FK,GM,KT90}.

We will focus on reframing the examples of characteristic pairs introduced in \S\ref{subsection:duals} into the language that only requires a twisted tangential structure given in Definition \ref{def:twisted}. Establishing a dictionary between the two formulations of characteristic structures enables us to translate a topological description into a computationally accessible form, but also gives insights into the underlying physics—such as anomalies that emerge from the global properties of these defects. More specifically we will consider the examples tabulated in Table \ref{tab:Charbordism}.

\begin{table}[htb]
\begin{center}
\begin{tabular}{ c c c c c} 
\toprule
Name & $\xi$ & $\mc{P}$ & $F$ & Ref.\\
\midrule
Freedman--Kirby (FK) & $\SO$ & $w_2(TM)$ &  Oriented & \S\ref{subsection:FKcharbordism} \\
Freedman--Kirby$^\mathrm{O}$(FK$^{\mathrm{O}
}$)
& $\O$ &  $w_2(TM)$& Oriented & \S\ref{subsubsection:pincKT} \\ 
Guillou--Marin (GM)
& $\SO$ &  $w_2(TM)$& Not Oriented & \S\ref{subsection:GMcharBord}\\
Kirby--Taylor$^-$ (KT$^{-}$) & $\O$ & $w_2(TM)+w_1(TM)^2$ & Not Oriented & \S\ref{subsection:KTcharBord}
\\ 
Kirby--Taylor$^+$ (KT$^{+}$)
& $\O$ &  $w_2(TM)$ & Not Oriented & \S\ref{subsection:pin+KT}\\
\bottomrule
\end{tabular}
\end{center}
\caption{This table summarizes the choices of structure that we study in this paper for $M$ and $F$, as well as the cohomology class that $F$ is Poincaré dual to, following \cref{def:charbordism}. These examples follow closely those studied in \cite{KT90}.}
\label{tab:Charbordism}
\end{table}

The main mathematical content of this section can be summarized as follows: given the structures on $M$ and $F$, and the cohomology class $F$ is dual to, we calculate the characteristic bordism groups and the approximation of the characteristic long exact sequence by the Smith long exact sequence.

\subsection{Freedman--Kirby Characteristic Bordism}\label{subsection:FKcharbordism}

In the case of FK characteristic bordism given in Definition \ref{def:FK}, Freedman--Kirby~\cite{FK} considered $4$-dimensional closed, oriented manifolds $M$ equipped with an oriented submanifold dual to $w_2(TM)$.
An important aspect of FK characteristic pairs that already aligns with our goals of this paper is that FK structures are equivalent to a commonly studied twisted spin structure called a \term{\spinc structure}.

\begin{defn}\label{defn:spinc}
   Let $V\rightarrow X$ be an oriented vector bundle. A \textit{spin$^c$ structure} on $V$ is data of a complex line bundle $L$ with $w_2(V)= w_2(L)$.
\end{defn}

\noindent $L$ is referred to as the {determinant line bundle} of the spin$^c$ structure. An equivalent definition for a spin$^c$ structure relies on a class $c_1\in H^2(X;\Z)$ and an identification of $c_1\bmod 2 = w_2(V)$, and we will see an analogue of this definition used in the next section. We will see in \cref{prop:FKapprox} that (1) \cref{char_conj} is true for FK characteristic structures, and (2) the induced characteristic long exact sequence, which appears in Kirby--Taylor~\cite[Corollary 6.12]{KT90}, is isomorphic to the \spinc Smith long exact sequence constructed in~\cite[Example 7.26]{Debray:2024wxm}.

In light of the Pontryagin--Thom construction, the obstruction to finding a characteristic pair $(M,F)$ is given as an obstruction to the existence of a map $f$ in 

\begin{equation}
    \begin{tikzcd}
        && M\SO_2 \arrow[d]\\
        B\SO \arrow[rr,"w_2",swap ] \arrow[urr,dotted,"f"] & &K(\Z/2,2)\,.
    \end{tikzcd}
\end{equation}
In this case we can make a simplification:
\begin{thm}[Smith isomorphism]
\label{cpx_smith_isom}
There is a homotopy equivalence $M\SO_2\simeq B\SO_2$.
\end{thm}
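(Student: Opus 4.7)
The plan is to identify both $M\SO_2$ and $B\SO_2$ with $\CP^\infty$ via a common model. First I would use the exceptional isomorphism $\SO_2 \cong \U$ of Lie groups, which yields $B\SO_2 \simeq B\U \simeq \CP^\infty$. Under this equivalence the tautological oriented rank-$2$ real vector bundle over $B\SO_2$ corresponds to the underlying real bundle of the tautological complex line bundle $L \to \CP^\infty$, so it suffices to exhibit a homotopy equivalence $\mathrm{Th}(L \to \CP^\infty) \simeq \CP^\infty$.

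The heart of the argument is the unstable identification $\mathrm{Th}(L|_{\CP^n}) \simeq \CP^{n+1}$ for each finite $n$. For this I would observe that the unit disk bundle $D(L|_{\CP^n})$ deformation retracts onto its zero section $\CP^n$, while the unit sphere bundle $S(L|_{\CP^n}) \to \CP^n$ is the Hopf fibration $S^{2n+1} \to \CP^n$. Hence $\mathrm{Th}(L|_{\CP^n}) = D(L|_{\CP^n})/S(L|_{\CP^n})$ is obtained from $\CP^n$ by attaching a single $(2n+2)$-cell along the Hopf map, and this is precisely the standard cellular presentation of $\CP^{n+1}$. The filtration of $\CP^\infty$ by $\CP^n$ is compatible with this description, and since Thom space formation commutes with the relevant filtered colimit, passing to $n \to \infty$ gives $M\SO_2 \simeq \CP^\infty \simeq B\SO_2$.

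The main non-routine step is matching the attaching map of the top cell of $\CP^{n+1}$ with the Hopf map $S(L|_{\CP^n}) \to \CP^n$; this is a classical geometric calculation that can be verified, for instance, by noting that $\CP^{n+1} \setminus \{[0:\cdots:0:1]\}$ retracts onto the hyperplane $\CP^n$ and that the complement of this hyperplane in $\CP^{n+1}$ realizes the total space of the tautological line bundle. Once this identification of attaching maps is in hand, the remaining manipulations -- the deformation retraction of the disk bundle, the colimit interchange, and the translation between $\SO_2$- and $\U$-structures -- are entirely formal.
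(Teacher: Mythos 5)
Your argument is correct and is exactly the route the paper takes: the paper simply cites this as a standard fact (Adams, \emph{Stable homotopy and generalised homology}, Example 2.1) and notes that it is the colimit over $n$ of the classical identification $\mathrm{Th}(L|_{\CP^n})\simeq\CP^{n+1}$, which is precisely the cell-attachment-along-the-Hopf-map argument you spell out. No gaps; your write-up just supplies the details the paper leaves to the reference.
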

This is a standard fact in homotopy theory; see~\cite[Example 2.1]{Ada74} for a reference. \Cref{cpx_smith_isom} is the infinite-dimensional version of the fact that the Thom space of the tautological bundle over $\CP^n$ is homotopy equivalent to $\CP^{n+1}$.

One helpful consequence of \cref{cpx_smith_isom} is that a map to $M\SO_2$ is the data of a rank 2-vector bundle $V$. In particular, the approximation to the long exact sequence in characteristic bordism by the Smith long exact sequence, discussed in \S\ref{subsection:smithLESintro}, is an isomorphism in this case.

\begin{rem}
   The cases of $M\O_1$, $M\SO_2 = M\mathrm U_1$, and 
   $M\mathrm{Sp}_1$ are special in that they are the only Thom spectra that lead to Smith isomorphisms in the sense of \cref{cpx_smith_isom}. We will see in the later sections how the isomorphism fails for different Thom spaces.
\end{rem}

The pullback square for the space of Freedman--Kirby pairs is therefore given by 
\begin{equation}\label{eq:FKpullback}
    \begin{tikzcd}[column sep=2cm, row sep=2cm]
 B\mathrm{FK} \arrow[r,"L"] \arrow[d,"V"] \arrow[dr, phantom, "\lrcorner", very near start] & M\SO_2 \arrow[d, "w_2(L)"] \\
B\SO \arrow[r, "w_2(V)"] & K(\Z/2,2)
\end{tikzcd}
\end{equation}
which implies that the data of a FK pair is the data of a rank 2 oriented vector bundle $L$, a map to $B\SO$ given by $V$ and an identification $w_2(V) = w_2(L)$. This agrees with Definition \ref{defn:spinc}: rank $2$ oriented vector bundles are equivalent data to complex line bundles thanks to the isomorphism of the respective structure groups $\SO_2\cong\mathrm U_1$.

\begin{prop}[{\cite[Remark 6.14]{KT90}}]\label{prop:FKpairs}
    The characteristic structure of Freedman--Kirby is equivalent to a spin$^c$ structure.
\end{prop}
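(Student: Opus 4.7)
The plan is to unpack \eqref{eq:FKpullback} and show that it is literally the homotopy pullback defining a \spinc structure.

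First I would interpret the data of an FK characteristic pair through the Pontryagin-Thom construction of \cref{construction:PT}. For an oriented $M$, an oriented codimension-$2$ submanifold Poincaré dual to $w_2(TM)$ is precisely a lift of the map $w_2(TM)\colon M\to K(\Z/2,2)$ across the Thom class $U\colon M\SO_2\to K(\Z/2,2)$. Functorially, this realizes the classifying space $B\mathrm{FK}$ for FK-structures as the homotopy pullback in \eqref{eq:FKpullback}, and an FK-structure on $M$ is then the same as a homotopy class of maps $M\to B\mathrm{FK}$ lifting the tangential classifying map $M\to B\SO$.

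Next I would apply the Smith isomorphism \cref{cpx_smith_isom}, which identifies $M\SO_2\simeq B\SO_2$. Under the standard identification $B\SO_2\simeq B\mathrm U_1$ (arising from $\SO_2\cong\mathrm U_1$), a map $M\to M\SO_2$ becomes the data of a complex line bundle $L\to M$, and the Thom class $U$ corresponds under this equivalence to $w_2$ of the tautological rank-$2$ oriented bundle, i.e.\ to $w_2(L)$ viewed as a map $B\mathrm U_1\to K(\Z/2,2)$. This is the step I expect to require the most care: verifying that the Thom class of the universal oriented rank-$2$ bundle corresponds to $w_2$ of the tautological $\mathrm U_1$-bundle under the Smith isomorphism. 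This is classical (it is the same calculation that underlies $\mathit{MU}\langle 2\rangle$-orientations), and I would cite or sketch it via the computation of the Thom class on $\CP^1$.

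Substituting these identifications into \eqref{eq:FKpullback}, the pullback says exactly that a point of $B\mathrm{FK}$ lying over a classifying map for $TM$ is the data of a complex line bundle $L\to M$ together with a homotopy $w_2(TM)\simeq w_2(L)$ in $K(\Z/2,2)$. This is verbatim the content of \cref{defn:spinc}, so FK-structures and \spinc structures are the same data, proving the proposition. Finally, I would remark that this identification is compatible with bordism: an FK characteristic bordism $(X,Y)$ between $(M,F)$ and $(M',F')$ corresponds to a \spinc bordism between the induced \spinc manifolds, matching the known computation of $\Omega_*^{\spinc}$ recorded in \cref{tab:results2}.
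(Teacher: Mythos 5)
Your proposal is correct and follows essentially the same route as the paper: both identify an FK-structure with a map to $M\SO_2$ plus an identification of $w_2(TM)$ with the pullback of the Thom class, then invoke the Smith isomorphism $M\SO_2\simeq B\SO_2$ (under which $U$ pulls back to $w_2$ of the tautological bundle, being the mod~$2$ Euler class of the zero section) to recover \cref{defn:spinc}. The step you flag as delicate is exactly the zero-section computation $z^*U = e(V)\bmod 2 = w_2(V)$ already recorded in \S\ref{subsec:tangential}, so no further care is needed.
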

\begin{proof}
By definition, a Freedman--Kirby characteristic structure on an oriented vector bundle $E\to M$ is data of a map $f\colon M\to M\SO_2$ and an identification of $w_2(E)$ and $f^*(U)$ in $H^2(M;\Z/2)$ -- i.e., by \cref{nonVB}, a $(M\SO_2, 0, U)$-twisted spin structure. The Smith isomorphism $B\SO_2\to M\SO_2$ pulls $U$ back to $w_2$, so a Freedman--Kirby characteristic structure is equivalent to a $(B\SO_2, w_2)$-twisted spin structure, which is a \spinc structure.
\end{proof}
\begin{prop}\label{prop:FKapprox}
\hfill
\begin{enumerate}
    \item \Cref{char_conj} holds for Freedman--Kirby characteristic bordism: there is a map of spectra $\mathcal R\colon \mathit{MTFK}\to\Sigma^2 \MTSpin\wedge (B\SO_2)_+$ which on homotopy groups is the characteristic map $(M,F)\mapsto F$.
    \item The induced characteristic long exact sequence~\eqref{eq:schematic}, which has the form
    \begin{equation}\label{eq:FKLES}
        \ldots \longrightarrow \Omega^{\Spin}_k \xlongrightarrow{i} \Omega^{\mathrm{FK}}_k \xlongrightarrow{R} \Omega^{\Spin}_{k-2}(B\SO_2) \xlongrightarrow{a} \Omega^{\Spin}_{k-1}\longrightarrow \ldots\,,
    \end{equation}
    is isomorphic to the Smith long exact sequence associated to the Smith homomorphism $\mathrm{sm}_L\colon\MTSpin^c\to\Sigma^2\MTSpin\wedge (B\SO_2)_+$, where $L\to B\SO_2$ is the tautological complex line bundle.
\end{enumerate}
\end{prop}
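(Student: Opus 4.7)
The strategy is to reduce both parts to known facts about the \spinc Smith homomorphism, using \cref{prop:FKpairs} as the bridge between Freedman-Kirby and \spinc structures.

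For part (1), I would first apply \cref{prop:FKpairs} to obtain an equivalence of Madsen-Tillmann spectra $\mathit{MTFK} \simeq \MTSpin^c$. Under the shearing equivalence of \cref{shearcor}, combined with $B\mathrm{U}_1 \simeq B\SO_2$, we have $\MTSpin^c \simeq \MTSpin \wedge (B\SO_2)^{L - 2}$, where $L \to B\SO_2$ is the tautological complex line bundle. I would then take $\mathcal R$ to be the Smith homomorphism $\mathrm{sm}_L$, obtained by smashing $\MTSpin$ with the cofiber sequence~\eqref{eq:cofiberseq} for $X = B\SO_2$, $V = L - 2$, and $W = L$.

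To verify that $\mathcal R$ induces the characteristic map $(M, F) \mapsto F$ on homotopy, I would invoke the standard geometric interpretation of the Smith map: a \spinc manifold $M$ with determinant line bundle $L_M$ classified by $f\colon M \to B\SO_2$ is sent to the zero locus $F$ of a generic section of $L_M$. This $F$ is Poincaré dual to $c_1(L_M) \bmod 2 = w_2(TM)$, so $(M, F)$ is precisely the FK characteristic pair corresponding to this \spinc structure. The induced data on $F$ consists of a spin structure, arising from the splitting $TM|_F \cong TF \oplus L_M|_F$ together with the vanishing of $w_1$ and $w_2$ of $L_M|_F \oplus L_M|_F$, and a classifying map $F \to B\SO_2$ for the normal bundle $\nu_F \cong L_M|_F$.

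For part (2), I would identify the induced long exact sequence on $\pi_*$ with the Smith long exact sequence~\eqref{eq:smithLES} specialized to $X = B\SO_2$, $V = L - 2$, $W = L$. The middle term is $\Omega^{\Spin^c}_k \cong \Omega_k^{\mathrm{FK}}$; the right term $\Omega^{\Spin}_{k-2}((B\SO_2)^{2L - 4})$ simplifies to $\Omega^{\Spin}_{k-2}(B\SO_2)$ because $w_1(2L)$ and $w_2(2L)$ vanish, so $2L$ carries a canonical spin structure; and the left term $\Omega^{\Spin}_k(S(L)^{p^*L - 2})$ collapses to $\Omega^{\Spin}_k$, since $S(L) \to B\SO_2$ is a model for $E\mathrm{U}_1 \to B\mathrm{U}_1$ with contractible total space. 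Term by term, this reproduces~\eqref{eq:FKLES}. The main obstacle is the geometric identification in the second step: while the homotopy-theoretic construction is formal, matching $\mathcal R$ with $(M, F) \mapsto F$ on the nose demands careful bookkeeping across the three identifications in play, namely the shearing equivalence of \cref{shearcor}, the Thom equivalence of \cref{cpx_smith_isom}, and the translation from FK to \spinc data in \cref{prop:FKpairs}.
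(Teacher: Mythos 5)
Your proposal is correct and follows essentially the same route as the paper: identify FK structures with \spinc structures via \cref{prop:FKpairs}, define $\mathcal R$ to be the Smith homomorphism $\mathrm{sm}_L$, and check geometrically that the Smith map (Poincaré dual of $e(L_M)$, whose mod $2$ reduction is $w_2(L_M) = w_2(TM)$) agrees with the characteristic map $(M,F)\mapsto F$, so that the two long exact sequences coincide. Your extra bookkeeping for part (2) — the sphere bundle $S(L)\simeq E\mathrm U_1$ collapsing the first term to $\Omega_k^{\Spin}$ and the spin structure on $2L$ untwisting the third term — is detail the paper delegates to the cited construction of the \spinc Smith sequence, but it is accurate.
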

Kirby--Taylor~\cite[Corollary 6.12]{KT90} construct \eqref{eq:FKLES}.
\begin{proof}
It suffices to show that the identification of FK-structures and \spinc structures \cref{prop:FKpairs} intertwines the characteristic map sending a pair $(M,F)\mapsto F$ and the Smith homomorphism sending a \spinc manifold $M$ to a smooth representative of the Poincaré dual $N$ of the Euler class of the determinant line bundle $L_M\to M$. This is because, once we have established this, we can define $\mathcal R =\mathrm{sm}_L$; thus the fibers of these maps are also identified, giving isomorphic long exact sequences of homotopy groups.

So we want to show that the characteristic map $(M,F)\mapsto F$ equals the Smith homomorphism $M\mapsto N$, where $N$ is a Poincaré dual of $e(L_M)$ (and we may use any such $N$). Since $M$ is \spinc, $w_2(L_M) = w_2(M)$, and $e(L_M)\bmod 2 = w_2(L_M)$, so we may choose $N = F$ to show that $\mathrm{sm}_L$ is also the characteristic map.
\end{proof}

By its relation to spin$^c$, the sequence for $\mathrm{FK}$ characteristic bordism is the Smith long exact sequence for spin$^c$ bordism, as described in~\cite[Example 7.26]{Debray:2024wxm}. However the two sequences have different interpretations for their entries. We first give the interpretation coming from the Smith
long exact sequence, which puts the symmetries at the forefront. In particular, it is most applicable in fermionic systems where the fermionic particles are charged under a $\mathrm{U}_1$ symmetry.
\begin{enumerate}
    \item An element in the group $\Omega^{\Spin}_k$ is a choice of $k$-dimensional manifold $M$ with spin structure.
    \item Since FK and \spinc structures are equivalent by \cref{prop:FKpairs}, an element in $\Omega^{\mathrm{FK}}_k$ is a manifold $M$ with a principal $\SO_2$-bundle. In particular, the bundle is part of the data of a $(B\SO_2,L)$-twisted spin structure for $M$. The map $i$ sends a spin manifold to the \spinc structure with the trivial $\SO_2$-bundle.
    \item An element in $\Omega^{\Spin}_{k-2}(B\SO_2)$ is a spin manifold with a principal $\SO_2$-bundle, where the spin structure is not twisted. The map $R$ represents taking the Poincaré dual with respect to the Euler class of $L$.
    \item The final map labeled $a$ takes the sphere bundle of the bundle $L$.
\end{enumerate}
The second interpretation is more topological, and comes from unpacking how the sequence arises from the characteristic pairs:
\begin{enumerate}
    \item An element in the group $\Omega^{\Spin}_k$ is a choice of $k$-dimensional manifold $M$ with spin structure. This has the same interpretation as in the Smith long exact sequence. 
    \item An element of $\Omega^{\mathrm{FK}}_k$ is now represented by a pair $(M,F)$ where a submanifold $F\subset M$ which is Poincaré dual to $w_2(TM)$, and a spin structure on $M\backslash F$ that is different from the spin structure on $M$.\footnote{We can think of this last condition as the defect having some nontrivial action on the spin structure of $M$ upon its insertion.}  The map $i$ sends a spin manifold to the characteristic structure where we take $F = \varnothing$. 
    \item The map $R$ restricts $(M,F)\rightarrow F$, with the normal bundle $\nu_F\rightarrow F$ remembering how $F$ was embedded in $M$. Along $F$ the tangent bundle of $M$ decomposes as $TF \oplus \nu_F$. Using the fact that the restriction of $w_2(TM)$ to $F$ is equal to $w_2(\nu_F)$, this means $w_2(TF)=0$ and $F$ is spin. The $\SO_2$-bundle arises from the symmetry in the normal direction, and $F$ therefore represents an element in $\Omega^{\Spin}_{k-2}(B\SO_2)$. 
    \item The map $a$ can be interpreted as taking the boundary of a tubular neighborhood of $F$. This is equivalent to the interpretation of $a$ in the Smith long exact sequence.
\end{enumerate}

We now elaborate on the physical context in which characteristic pairs can appear.
The setup begins with a four-manifold $M$ with a chosen spin structure that can compatibly host a QFT $\mathcal{T}$ that depends on a spin structure. We consider inserting a defect with the global property that it wraps a codimension 2 submanifold $F$, which is Poincaré dual to $w_2(TM)$. For a chosen spin structure on $M\backslash F$, this structure does not extend across $F$ by Proposition \ref{prop:notextending}. In this sense the manifold $F$ acts as a hard codimension 2 boundary for $M$, and the theory $\cT$ which is defined on $M\backslash F$ is not defined on $F$. The theory $\mathcal{T}$ does however have an enhanced  $\mathrm{U}_1$ global symmetry in this case thanks to the defect $F$. This is a nontrivial fact, relying critically on the equivalence $M\SO_2 \cong B\SO_2$ to ensure that the symmetry is not merely local to the defect but extends consistently throughout the bulk. For a concrete lattice model that implements a Freedman-Kirby structure to construct an invertible TQFT, see \cite{Kobayashi:2026hij}.

We emphasize that, as stated in \S\ref{subsubsection:structureextensions}, we do not specify a choice of tangential structure for $F$, and this is in line with keeping $F$ a defect that is inserted externally.
 A priori, specifying choices of tangential structure on different subsets of $M$, $M\backslash F$, and $F$ can potentially lead to different results for the characteristic bordism groups. 
 See also Remark \ref{rem:differentfromKT} for additional details on how specifying a choice of a given tangential structure on $F$ will lead to manifestly different results in computations.
 
 Taking the Anderson dual of the sequence in \eqref{eq:FKLES}, as described in \S\ref{subsection:smithLESintro}, gives an anomaly interpretation of characteristic bordism. Due to the situation being exactly a Smith long exact sequence, the interpretation can be given by the same analysis as in \cite{COSY20,Hason:2020yqf,Debray:2023ior}. It would in general make sense to take the Anderson dual of such a sequence given in \eqref{eq:schematic}, and we will compute examples of the groups $\Omega^{\mathcal{P}}_k$ in the subsequent subsections. Exactness then lends some leverage in computing what the groups $\Omega^?_k$ are. In Theorem \ref{GM_LES_thm} we compute these groups for the case of GM characteristic bordism, and in \cref{KT-_LES_thm} and \cref{KT+_LES_thm} for KT$^-$ and KT$^+$ respectively. Additionally the Anderson dual of $\Omega^?_k$ should in principle include the obstruction to symmetry breaking, however the full interpretation of how the obstruction arises is unclear.

\subsection{\texorpdfstring{Freedman--Kirby$^\mathrm{O}$ Characteristic Bordism}{}}\label{subsubsection:pincKT}

We now turn to the pairs of FK$^\mathrm{O}$ type in Definition \ref{defn:FKO}, when the ambient manifold $M$ is unoriented, but $F$ is co-oriented i.e. its normal bundle is oriented and Poincaré dual to $w_2(TM)$.\footnote{If $M$ is oriented that implies that $F$ is co-oriented by applying the Whitney sum formula to the decomposition $TM|_F = TF \oplus \nu_F$.} The classifying space $B\mathrm{FK}^{\mathrm{O}}$ of FK$^\mathrm{O}$ characteristic pairs lives in the following pullback:
\begin{equation}\label{eq:FKcpullback}
    \begin{tikzcd}[column sep=2cm, row sep=2cm]
 B\mathrm{FK}^{\mathrm{O}} \arrow[r] \arrow[d,"V"] \arrow[dr, phantom, "\lrcorner", very near start] & M\SO_2 \arrow[d, "U"] \\
B\O \arrow[r, "\mathcal{P}"] & K(\Z/2,2)\,.
\end{tikzcd}
\end{equation}

We will subsequently show that it turns out that the tangential structure for $(M,F)$ does not depend on whether $F$ is Poincaré dual to $\mathcal{P}=w_2(V)$ or $\mathcal{P}=w_2(V)+w_1(V)^2$, however the tangential structure on $M\backslash F$ does.
\begin{defn}
    Let $V\rightarrow X$ be a vector bundle. A \term{pin$^c$ structure} on $V$ is a trivialization of $\Box_{\Z}(w_2(V))$ where $\Box_{\Z}\colon H^2(X;\Z/2) \rightarrow H^3(X;\Z)$ is the Bockstein operator.
\end{defn}

\begin{lem}[\v{C}adek~\cite{Cad99}]
    Let $V\to X$ be a vector bundle and $\widetilde e\in H^1(B\O_1;\Z_{w_1})$ denote the twisted first Euler class. Then
    \begin{equation}
        \widetilde e(\det(V))^2\bmod 2 = w_1(V)^2.
    \end{equation}
\end{lem}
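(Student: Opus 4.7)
The plan is to reduce the statement to a universal identity for real line bundles. First, I would observe that $\det(V)$ is a real line bundle whose first Stiefel--Whitney class is $w_1(\det(V))=w_1(V)$ (via the splitting principle, since $w_1$ of a tensor product of line bundles is the sum of the $w_1$'s). Hence it suffices to prove the apparently stronger statement $\widetilde e(L)^2\bmod 2 = w_1(L)^2$ for an arbitrary real line bundle $L\to X$, and then specialize to $L=\det(V)$.

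Next, I would unpack the left-hand side. By definition $\widetilde e(L)\in H^1(X;\Z_{w_1(L)})$, so its cup square lies in $H^2(X;\Z_{w_1(L)}\otimes \Z_{w_1(L)})$. The two sign twists cancel canonically to give the trivial local system $\Z$, so $\widetilde e(L)^2\in H^2(X;\Z)$. The coefficient map $\Z_{w_1(L)}\to\Z/2$ is well-defined (the twist acts trivially on $\Z/2$) and is compatible with cup products, so by naturality of the cup product under a coefficient ring map,
\begin{equation*}
\widetilde e(L)^2 \bmod 2 \;=\; \bigl(\widetilde e(L)\bmod 2\bigr)^2 \;\in\; H^2(X;\Z/2).
\end{equation*}

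The heart of the argument is then to identify $\widetilde e(L)\bmod 2 = w_1(L)$ in $H^1(X;\Z/2)$. By naturality in $L$, it is enough to verify this for the tautological bundle $\gamma\to B\O_1=\RP^\infty$, where $H^1(\RP^\infty;\Z/2)=\Z/2$ is generated by $w_1(\gamma)$, so the only question is whether the reduction of $\widetilde e(\gamma)$ is nonzero. I would verify this either by choosing a generic section of $\gamma$ and noting that both classes are Poincar\'e dual to its zero locus $\RP^{\infty-1}\subset\RP^\infty$, or by pulling back along the orientation double cover $S^\infty\to\RP^\infty$, where $\widetilde e(\gamma)$ becomes an ordinary integral Euler class and the result reduces to the classical identity $e\bmod 2 = w_{\mathrm{top}}$. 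Combining the three steps yields $\widetilde e(\det V)^2\bmod 2 = w_1(\det V)^2 = w_1(V)^2$.

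The step I expect to be the main obstacle is the twisted-coefficient bookkeeping in the second and third steps: checking that the canonical isomorphism $\Z_{w_1}\otimes\Z_{w_1}\cong\Z$ is compatible with the mod 2 reduction and with cup products, and that the geometric/universal identification of $\widetilde e(\gamma)\bmod 2$ pins down the generator and not the zero class. Everything else is formal, and the final substitution $L=\det(V)$ is immediate once the line-bundle identity is established.
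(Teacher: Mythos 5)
The paper does not actually prove this lemma: it cites \v{C}adek and only remarks afterwards that the two coefficient twists cancel so that $\widetilde e^2$ is an untwisted class, indeed the unique nonzero element of $H^2(B\O_1;\Z)\cong\Z/2$, whose mod $2$ reduction is $w_1^2$; naturality under the classifying map of $\det(V)$ then gives the statement for general $V$. Your argument is a legitimate self-contained alternative that runs in the other order: instead of identifying the integral class $\widetilde e^2$ first and reducing, you reduce first, using compatibility of the cup product with the coefficient map $\Z_{w_1}\to\Z/2$ (which is fine, since $\Z_{w_1}\otimes\Z_{w_1}\to\Z$ reduces to the standard $\Z/2$-pairing) and then establishing $\widetilde e(L)\bmod 2=w_1(L)$ universally. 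The cleanest justification of that last identity is the Thom-class one: the twisted integral Thom class of $L$ reduces mod $2$ to the mod $2$ Thom class, and pulling back along the zero section sends the latter to the top Stiefel--Whitney class; your Poincar\'e-dual/zero-locus formulation of this is equivalent. One caveat: your proposed alternative verification by pulling back along the orientation double cover is vacuous rather than conclusive --- on the double cover the line bundle becomes trivial(izable), so both $e$ and $w_1$ pull back to zero and the comparison cannot distinguish the generator from the zero class. Stick with the Thom-class/zero-locus route and the argument is complete.
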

Because the product of two twisted cohomology classes is untwisted (see~\cite{Cad99}), $\widetilde e{}^2$ is an \emph{un}twisted class in $H^2(B\O_1;\Z)$, in fact the unique nonzero class (\textit{ibid.}, Theorem 1).
    
\begin{cor}
The conditions $\Box_\Z(w_2(V)) = 0$ and $\Box_\Z(w_2(V) + w_1(V)^2) = 0$ are equivalent.
\end{cor}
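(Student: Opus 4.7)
The plan is to reduce everything to the observation, provided by the preceding lemma, that $w_1(V)^2$ is the mod $2$ reduction of an honest integral class, and therefore is killed by the integral Bockstein. Combined with additivity of the Bockstein, this will force $\Box_\Z(w_2(V))$ and $\Box_\Z(w_2(V) + w_1(V)^2)$ to be equal, not merely simultaneously zero.

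In more detail, I would proceed as follows. First, recall that $\Box_\Z\colon H^2(X;\Z/2) \to H^3(X;\Z)$ is the connecting homomorphism of the short exact coefficient sequence $0\to \Z \xrightarrow{2} \Z \to \Z/2\to 0$, and a class $x\in H^2(X;\Z/2)$ satisfies $\Box_\Z(x)=0$ if and only if $x$ admits an integral lift. Next, the preceding lemma gives
\begin{equation}
    w_1(V)^2 = \widetilde e(\det V)^2 \bmod 2,
\end{equation}
where $\widetilde e(\det V)^2$ is pulled back from the untwisted integral class $\widetilde e{}^2\in H^2(B\O_1;\Z)$ along the classifying map $\det V\colon X\to B\O_1$. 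In particular, $w_1(V)^2$ has an integral lift, so $\Box_\Z(w_1(V)^2)=0$.

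Since $\Box_\Z$ is a group homomorphism, additivity then gives
\begin{equation}
    \Box_\Z\bigl(w_2(V)+w_1(V)^2\bigr) = \Box_\Z(w_2(V)) + \Box_\Z(w_1(V)^2) = \Box_\Z(w_2(V)),
\end{equation}
from which the equivalence of the two vanishing conditions is immediate. There is essentially no hard step here; the only subtlety, which is really the content of the cited lemma of \v{C}adek, is the fact that even though $\widetilde e$ lives in cohomology twisted by $w_1$, the square $\widetilde e{}^2$ is an untwisted integral class. Once that is granted, the corollary is a one-line consequence of additivity of the Bockstein.
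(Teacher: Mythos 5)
Your argument is correct and is exactly the route the paper intends: Čadek's lemma is stated precisely so that $w_1(V)^2$ is recognized as the mod $2$ reduction of the untwisted integral class $\widetilde e(\det V)^2$, hence is killed by $\Box_\Z$, and additivity of the Bockstein finishes the job. The paper leaves this one-line deduction implicit, so your write-up matches its proof in substance.
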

So there is no difference between the \pinm and \pinp versions of a \pinc structure.\footnote{There are other, similar-in-spirit tangential structures where the \pinp and \pinm analogues are different, including Freed--Hopkins' pin\textsuperscript{$\tilde c+$} and pin\textsuperscript{$\tilde c-$} structures~\cite[Proposition 9.4]{FH21InvertibleFT} and Nakamura's spin\textsuperscript{$c+$} and spin\textsuperscript{$c-$} structures~\cite[\S 3]{Nak13}.}
In addition, if $\mathcal{P}=w_2(V)$ has an integral lift i.e.\ if $V$ is pin$^c$, then $\cP=w_2(V)+w_1(V)^2$ automatically has an integral lift. However, depending on whether $F$ is Poincaré dual to $w_2(TM)$ or $w_2(TM)+w_1(TM)^2$, then $M\backslash F$ has either a pin$^+$ resp.\ pin$^-$ structure by \cref{lem:restricta}. 
Following the same proof as in \cref{prop:FKpairs,prop:FKapprox} we have the following:
\begin{prop}\label{prop:FKOcharpair}
    The structure of $\mathrm{FK}^\mathrm{O}$ characteristic pairs is equivalent to a pin$^c$-structure.
\end{prop}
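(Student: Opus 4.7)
The plan is to mirror the proof of \cref{prop:FKpairs} in the unoriented setting; structurally the argument only requires replacing $B\SO$ by $B\O$ in the left-hand column of the pullback, and repackaging the resulting data as a \pinc structure rather than a \spinc structure.

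First I would unpack the pullback square \eqref{eq:FKcpullback}: an $\mathrm{FK}^{\mathrm{O}}$ characteristic structure on a (not necessarily oriented) vector bundle $E\to M$ consists of a map $f\colon M\to M\SO_2$ together with an identification $w_2(E) = f^*(U)$ in $H^2(M;\Z/2)$. The key point is that the right-hand edge of the pullback is still $M\SO_2$ (not $M\O_2$), so the normal direction to the Poincaré dual defect is still co-oriented even though $M$ itself is unoriented, whereas no condition whatsoever is imposed on $w_1(E)$.

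Next I would invoke the Smith isomorphism $B\SO_2 \simeq M\SO_2$ of \cref{cpx_smith_isom}, which identifies maps to $M\SO_2$ with complex line bundles and pulls back the Thom class $U$ to $w_2$ of the tautological rank-$2$ oriented bundle, equivalently $c_1\bmod 2$ of the tautological complex line bundle. Under this identification the $\mathrm{FK}^{\mathrm{O}}$ data becomes the data of a complex line bundle $L\to M$ together with an identification $w_2(E) = w_2(L)$.

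The last step is to match this with a \pinc structure on $E$. By definition, a \pinc structure is a trivialization of $\Box_\Z(w_2(E))\in H^3(M;\Z)$; since $\Box_\Z$ is the integral Bockstein for $\Z \xrightarrow{\cdot 2} \Z \to \Z/2$, its kernel on $H^2(M;\Z/2)$ is precisely the image of the mod-$2$ reduction $H^2(M;\Z)\to H^2(M;\Z/2)$, so a trivialization amounts to a choice of integral lift $c\in H^2(M;\Z)$ of $w_2(E)$. The bijection $H^2(M;\Z)\cong [M, B\mathrm U_1]$ then identifies such a lift with a complex line bundle $L$ satisfying $w_2(L) = c\bmod 2 = w_2(E)$, which is exactly the data produced by Smith. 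The only part requiring real care is the bookkeeping of homotopical / torsor structure: I would verify that this correspondence is equivariant for the natural $H^2(M;\Z)$-torsor actions on both sides (tensoring with a complex line bundle on the $\mathrm{FK}^{\mathrm{O}}$ side; twisting the central $\mathrm U_1$ factor of $\mathrm{Pin}^c$ on the \pinc side). I expect this equivariance check to be the main subtle point, but it is routine and parallels the analogous verification implicit in the \spinc case of \cref{prop:FKpairs}.
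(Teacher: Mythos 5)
Your proposal is correct and follows essentially the same route as the paper, which proves this proposition by simply running the argument of \cref{prop:FKpairs} with $B\O$ in place of $B\SO$: unpack the pullback \eqref{eq:FKcpullback}, apply the Smith isomorphism $B\SO_2\simeq M\SO_2$ to convert the map to $M\SO_2$ into a complex line bundle $L$ with $w_2(L)=w_2(E)$, and observe that this integral lift of $w_2(E)$ is exactly a trivialization of $\Box_\Z(w_2(E))$, i.e.\ a \pinc structure. The torsor-equivariance check you flag is more care than the paper takes, but it is routine and does not change the argument.
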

\begin{prop}\label{prop:FKOapprox}
\hfill
\begin{enumerate}
    \item \Cref{char_conj} holds for $\mathrm{FK}^\O$ characteristic bordism: there are maps of spectra $\mathcal R^\pm\colon \mathit{MTFK}^\O\to\Sigma^2 \MTPin^\pm\wedge (B\SO_2)_+$ which on homotopy groups is the characteristic map $(M,F)\mapsto F$.
    \item The induced characteristic long exact sequence~\eqref{eq:schematic}, which has the form
    \begin{equation}\label{pinc_LES}
         \ldots \longrightarrow \Omega^{\Pin^{\pm}}_k \xlongrightarrow{i} \Omega^{\mathrm{FK}^{\mathrm{O}}}_k \xlongrightarrow{R} \Omega^{\Pin^{\pm}}_{k-2}(B\SO_2) \xlongrightarrow{a} \Omega^{\Pin^{\pm}}_{k-1}\longrightarrow \ldots\,.
    \end{equation}
    are isomorphic to the Smith long exact sequences associated to the Smith homomorphism $\mathrm{sm}_L\colon\MTPin^c\to\Sigma^2\MTPin^\pm\wedge (B\SO_2)_+$.
\end{enumerate}
\end{prop}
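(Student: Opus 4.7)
The plan is to mirror the proof of Proposition \ref{prop:FKapprox}. By Proposition \ref{prop:FKOcharpair}, there is an equivalence of spectra $\mathit{MTFK}^\O \simeq \MTPin^c$, so it suffices to produce and analyze the pin$^c$ Smith maps. I would define $\mathcal R^\pm$ as the composition of this equivalence with $\mathrm{sm}_L\colon \MTPin^c \to \Sigma^2 \MTPin^\pm \wedge (B\SO_2)_+$, where pin$^c$ is realized as a $(B\SO_2, L)$-twisted \pinp structure for $\mathcal R^+$ and as a $(B\SO_2, L)$-twisted \pinm structure for $\mathcal R^-$. Both realizations are legitimate thanks to the \v{C}adek identity $\widetilde e(\det V)^2 \bmod 2 = w_1(V)^2$ and the corollary that the \pinp and \pinm Bockstein conditions defining a \pinc structure coincide. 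Once $\mathcal R^\pm$ are in hand, part (2) is immediate: the cofiber of $\mathrm{sm}_L$ is by construction the pin$^\pm$-Smith cofiber sequence of Theorem 5.1 of \cite{Debray:2024wxm}, whose associated long exact sequence on $\pi_*$ is precisely \eqref{pinc_LES}.

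The crux is part (1), verifying that $\mathcal R^\pm$ realizes the characteristic map $(M, F) \mapsto F$ on homotopy. For a pin$^c$ manifold $M$, viewed as $(B\SO_2, L)$-twisted \pinp, there is a (possibly $w_1$-twisted) line bundle $L_M \to M$ whose first Chern class satisfies $c_1(L_M) \bmod 2 = w_2(TM)$; analogously in the \pinm model. By definition of an FK$^\O$ pair, $F$ is a Poincaré dual of $w_2(TM)$, so one may choose $F$ as a smooth representative of the Poincaré dual of $e(L_M)$, which is exactly the output of the Smith map. Applying Lemma 3.5 of \cite{Debray:2024wxm}, the Smith map endows $F$ with a $(B\SO_2, L \oplus L)$-twisted pin$^\pm$ structure; since $L \oplus L$ over $B\SO_2$ has vanishing $w_1$ and $w_2$, this reduces to an honest pin$^\pm$ structure on $F$ together with a classifying map $F \to B\SO_2$ for the normal bundle $\nu_F$. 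That a pin$^\pm$ structure exists on $F$ is consistent with Definition \ref{defn:FKO}: $F$ is oriented and co-oriented, so $w_1(TF) = w_1(\nu_F) = 0$, and combining the Whitney sum formula along $F$ with \cref{cor:atoN} yields $w_2(TF) = 0$, making $F$ spin (hence canonically both pin$^+$ and pin$^-$).

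The hard part will be managing the twisted-coefficient bookkeeping when $M$ is non-orientable: the determinant line bundle of a pin$^c$ structure is then genuinely a $w_1$-twisted complex line bundle, and one would need to verify that the Smith construction of \cite{Debray:2024wxm}---typically stated for ordinary vector bundles---extends compatibly to this twisted setting, in particular so that the tautological bundle $L \to B\SO_2$ plays the appropriate role on both sides of the equivalence $\mathit{MTFK}^\O \simeq \MTPin^c$. The distinction between $\mathcal R^+$ and $\mathcal R^-$ boils down to which reduction of the pin$^c$ structure (to twisted \pinp or twisted \pinm) is used; both reductions exist and agree on the underlying bordism classes of $(M, F)$, but they produce different pin$^\pm$ structures on $F$ and hence yield the two parallel long exact sequences in \eqref{pinc_LES} from the same characteristic data.
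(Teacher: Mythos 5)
Your proposal is correct and follows essentially the same route as the paper, which proves this proposition by declaring that the argument of \cref{prop:FKpairs,prop:FKapprox} carries over verbatim once FK$^\O$ structures are identified with \pinc structures (\cref{prop:FKOcharpair}), with the two sequences arising from the two descriptions of a \pinc structure as a twisted \pinp or twisted \pinm structure. The only caveat you raise---that the determinant line bundle might be $w_1$-twisted on a non-orientable $M$---is in fact a non-issue, since the determinant line bundle of a \pinc structure is an honest complex line bundle classified by a map to $B\SO_2\simeq B\mathrm U_1$, so the Smith construction of~\cite{Debray:2024wxm} applies without modification.
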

The reason for the pair of long exact sequences is because \pinc structures can be described as either twisted \pinp or twisted \pinm structures. The choice of defect $F$ depends on this description.
 %
Hambleton--Kreck--Teichner~\cite[\S 2]{HKT94} work out the analogue of~\eqref{pinc_LES} for the bordism groups of \emph{topological} manifolds with topological analogues of \pinp, \pinm, and \pinc structures. Moreover, as a consequence of \cref{cpx_smith_isom}, these long exact sequences split:
\begin{prop}
For all $k$, $\Omega_k^{\Pin^\pm}(B\SO_2)\cong \Omega_k^{\Pin^{\pm}}\oplus \Omega_{k-2}^{\Pin^c}$.
\end{prop}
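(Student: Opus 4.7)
The plan is to prove the isomorphism directly from the Smith isomorphism (\cref{cpx_smith_isom}) and the shearing description of pin$^c$ bordism; the long exact sequence~\eqref{pinc_LES} itself is not used in the argument, though its splitting into short exact sequences drops out as a consequence.

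First, I would use the basepoint decomposition $(B\SO_2)_+\simeq S^0\vee B\SO_2$ (with $B\SO_2$ pointed at $\CP^0\in\CP^\infty$), which gives an equivalence of spectra
\begin{equation*}
\MTPin^\pm\wedge (B\SO_2)_+\;\simeq\;\MTPin^\pm\;\vee\;(\MTPin^\pm\wedge B\SO_2),
\end{equation*}
and hence
\begin{equation*}
\Omega_k^{\Pin^\pm}(B\SO_2)\;\cong\;\Omega_k^{\Pin^\pm}\oplus\widetilde\Omega_k^{\Pin^\pm}(B\SO_2).
\end{equation*}

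Next, I would invoke \cref{cpx_smith_isom} in the form of a pointed equivalence $M\SO_2\simeq B\SO_2$, where $B\SO_2$ is again pointed at $\CP^0$. Smashing with $\MTPin^\pm$ gives
\begin{equation*}
\widetilde\Omega_k^{\Pin^\pm}(B\SO_2)\;\cong\;\widetilde\Omega_k^{\Pin^\pm}(M\SO_2)\;=\;\pi_k(\MTPin^\pm\wedge M\SO_2).
\end{equation*}
Finally, because a pin$^c$ structure is a $(B\SO_2,L)$-twisted pin$^\pm$ structure, with $L$ the tautological rank-$2$ oriented bundle, \cref{shearcor} gives an equivalence of Thom spectra
\begin{equation*}
\MTPin^c\;\simeq\;\MTPin^\pm\wedge (B\SO_2)^{L-2}\;=\;\Sigma^{-2}\bigl(\MTPin^\pm\wedge M\SO_2\bigr),
\end{equation*}
so that $\pi_k(\MTPin^\pm\wedge M\SO_2)\cong\pi_{k-2}(\MTPin^c)=\Omega_{k-2}^{\Pin^c}$. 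Chaining the three isomorphisms yields the claim.

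The only subtlety, which I expect to be essentially routine, is confirming that \cref{cpx_smith_isom} can be upgraded to an equivalence of \emph{pointed} spaces, since the reduced bordism group in the second step depends on the basepoint. This is true via the classical finite-dimensional model $\mathrm{Th}(\mathcal O(-1)\to\CP^n)\simeq\CP^{n+1}$ passed to the limit, where the Thom-space basepoint is sent to $\CP^0$. With that in hand, the proof is an assembly of three constructions already stated in the paper, and the value of the argument is simply in arranging them in the right order.
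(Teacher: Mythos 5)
Your argument is correct and follows essentially the same route as the paper's (one-sentence) proof: split off the basepoint wedge summand, apply the Smith equivalence of \cref{cpx_smith_isom} to replace $B\SO_2$ by $M\SO_2$, and identify $\MTPin^\pm\wedge M\SO_2$ with $\Sigma^2\MTPin^c$. The only cosmetic difference is that you obtain this last equivalence from \cref{shearcor} together with the description of a \pinc structure as a $(B\SO_2,L)$-twisted pin$^\pm$ structure, whereas the paper cites the splittings $\MTPin^\pm\simeq\MTSpin\wedge(B\O_1)^{\pm(1-\sigma)}$ and $\MTPin^c\simeq\MTSpin\wedge(B\O_1)^{\pm(1-\sigma)}\wedge\Sigma^{-2}M\SO_2$ from the literature.
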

This result extends work of Hertl~\cite[Theorem 3.1.1]{Her17} and Davighi--Lohitsiri~\cite[\S A.1]{Davighi:2020uab}, who calculated $\Omega_*^{\Pin^-}(B\SO_2)$, resp.\ $\Omega_*^{\Pin^+}(B\SO_2)$, in low degrees.
\begin{proof}
This is a consequence of \cref{cpx_smith_isom} and the equivalences $\mathit{MTPin}^\pm\simeq \MTSpin\wedge (B\O_1)^{\pm(1-\sigma)}$ (see~\cite[\S 7]{Pet68} and~\cite[\S 8]{Sto88}) and $\mathit{MTPin}^c\simeq \MTSpin\wedge (B\O_1)^{\pm(1-\sigma)}\wedge \Sigma^{-2}M\SO_2$ (see~\cite[\S 4C]{HS13} or~\cite[(10.3)]{FH21InvertibleFT}).
\end{proof}
   
We now shift focus to interpreting $\mathrm{FK}^\mathrm{O}$ in the context of introducing a defect along $F$. Much of this interpretation parallels the approach taken in the FK case, with the following exception. We see that if one started off with a theory on $M$ that is unoriented, then inserting the defect $F$ gives the theory that lives on $M\backslash F$ a specific tangential structure depending on whether $F$ was Poincaré dual to $w_2(TM)$ or $w_2(TM)+w_1(TM)^2$. This is an instance of how a choice of defect can explicitly change the global structure of the theory i.e.\ determines what time-reversal structure it has.

\subsection{Guillou--Marin Characteristic Bordism}\label{subsection:GMcharBord}
In this section, we study the tangential structure consisting of an orientation and a Poincaré dual submanifold $F$ of $w_2$, but unlike for Freedman--Kirby structures, we do not require $F$ to be oriented. Guillou--Marin~\cite{GM} first considered this structure and computed its degree-$0$ and $4$ bordism groups.
Using our methods we recover their results and are able to extend the computation to other degrees. 

The obstruction to finding a characteristic pair where the unoriented submanifold $i\colon F\hookrightarrow M$ is Poincaré dual to $w_2(TM)$ is given as an obstruction to a map $f$ in
\begin{equation}
    \begin{tikzcd}
        && M\O_2 \arrow[d]\\
        B\SO \arrow[rr,"w_2",swap ] \arrow[urr,dotted,"f"] & &K(\Z/2,2)\,.
    \end{tikzcd}
\end{equation}
The classifying space $B\mathrm{GM}$ of Guillou--Marin characteristic pairs, given in Definition \ref{def:GM}, therefore fits into the following pullback square:
\begin{equation}\label{eq:GMpullback}
    \begin{tikzcd}[column sep=2cm, row sep=2cm]
B\mathrm{GM} \arrow[r] \arrow[d] \arrow[dr, phantom, "\lrcorner", very near start] & M\O_2 \arrow[d, "U"] \\
B\SO \arrow[r, "w_2(TM)",swap] & K(\Z/2,2)
\end{tikzcd}
\end{equation}
which implies that a map to $B\mathrm{GM}$ is the data of a map to $B\SO$ which picks the orientation of $M$, a map to $M\O_2$, and an identification of $w_2(TM)$ and $U$.  
\begin{prop}\label{prop:GMcharpairs}
    The structure of GM characteristic pairs is equivalent to a $(M\O_2, 0, U)$-twisted spin structure.
\end{prop}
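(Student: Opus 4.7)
The plan is to verify the equivalence by unpacking \cref{nonVB} in the case $(X, a, b) = (M\O_2, 0, U)$ and matching the resulting data against the pullback square~\eqref{eq:GMpullback} describing $B\mathrm{GM}$, then using the Pontryagin-Thom construction (\cref{construction:PT}) to reinterpret a lift to $M\O_2$ as the data of the submanifold $F$. This amounts to a routine unfolding of definitions; the main work is to verify that the orientation condition and the class-matching condition end up being equivalent on both sides.

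First I would spell out the $(M\O_2, 0, U)$-twisted spin structure. By \cref{nonVB}, such a structure on $TM \to M$ consists of a map $f\colon M \to M\O_2$ together with trivializations of $w_1(TM) + f^*(0) = w_1(TM)$ and of $w_2(TM) + w_1(TM)^2 + f^*(U)$. A trivialization of $w_1(TM)$ is precisely an orientation on $M$, so with this orientation fixed we have $w_1(TM) = 0$ in $H^*(M;\Z/2)$, and the second condition reduces to an identification $f^*(U) = w_2(TM)$ in $H^2(M;\Z/2)$.

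Next I would compare this with the pullback square~\eqref{eq:GMpullback}. A map $M \to B\mathrm{GM}$ is, by the universal property of the pullback, the data of an orientation $M \to B\SO$, a map $f\colon M \to M\O_2$, and a homotopy between the two composites into $K(\Z/2,2)$, which is exactly an identification $f^*(U) = w_2(TM)$. This matches the data extracted from \cref{nonVB} in the previous step on the nose. Invoking \cref{construction:PT}(2) identifies the lift $f\colon M \to M\O_2$ (together with the class-matching) with a choice of (not necessarily oriented) submanifold $F \subset M$ Poincaré dual to $w_2(TM)$, i.e.\ a GM characteristic pair in the sense of \cref{def:GM}. The only mild subtlety to address is that the Pontryagin-Thom correspondence produces $F$ only up to a suitable notion of bordism/equivalence, but this is compatible with how $B\mathrm{GM}$ parametrizes GM structures, so both sides represent the same tangential structure.

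The hard part, if there is one, is just keeping careful track of what is meant by ``equivalent'' -- I would phrase the conclusion as an equivalence of classifying spaces $B\mathrm{GM} \simeq B\mathrm{Spin}^{(M\O_2, 0, U)}$, where the right-hand side denotes the classifying space of $(M\O_2, 0, U)$-twisted spin structures in the sense of \cref{nonVB}. Since no choice was made above that is not symmetric in the two descriptions, the identification is natural, which will matter when using this equivalence to compute characteristic bordism groups via the shearing construction (\cref{shearcor}) in the next step of the paper.
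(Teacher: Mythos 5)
Your proposal is correct and follows essentially the same route as the paper, which justifies the proposition by reading off the data of the pullback square~\eqref{eq:GMpullback} (an orientation, a map to $M\O_2$, and an identification $f^*(U) = w_2(TM)$) and matching it against \cref{nonVB}, exactly as in the proof of \cref{prop:FKpairs}. Your extra remarks on the Pontryagin-Thom step and the bordism ambiguity are a reasonable elaboration of what the paper leaves implicit in asserting that \eqref{eq:GMpullback} classifies GM pairs.
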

Unlike in previous sections, it is not immediate whether there is a vector bundle $W\to M\O_2$ with $w_1(W) = 0$ and $w_2(W) = U$. Rather than worry about whether such a bundle $W$ exists, we follow~\cite{DY:2023tdd} and calculate with $(0, U)$ directly. We briefly review the key steps.
\begin{enumerate}
    \sloppy
    \item Given an $E_\infty$-ring spectrum $R$, May~\cite[\S III.2]{MQRT77} constructs a space $B\GL_1(R)$. In the framework of Ando-Blumberg-Gepner~\cite{ABG18}, twists of $R$-homology over a space $X$ are classified by maps $f\colon X\to B\GL_1(R)$: specifically, the $f$-twisted $R$-homology groups of $X$ are the homotopy groups of the \emph{$R$-module Thom spectrum} $M^R f$ constructed by Ando-Blumberg-Gepner-Hopkins-Rezk~\cite{ABGHR14a, ABGHR14b}.
    \item For $R = \MTSpin$, a theorem of Beardsley~\cite[Theorem 1]{Bea17} implies there is a map $T\colon B\O/B\Spin\to B\GL_1(\MTSpin)$~\cite[\S 1.2.3]{DY:2023tdd}, and there is a canonical homotopy equivalence $B\O/B\Spin\simeq K(\Z/2, 1)\times K(\Z/2, 2)$ (\textit{ibid.}, Proposition 1.33).\footnote{See also Beardsley-Luecke-Morava~\cite[\S 5.2]{BLM23} for some related perspectives and results constructing twists of spin bordism.}
    \item Given $a\in H^1(X;\Z/2)$ and $b\in H^2(X;\Z/2)$, we therefore get a map $f_{a,b}\colon X\to\allowbreak B\GL_1(\MTSpin)$ defined to be the composition
    \begin{equation}
        f_{a,b}\colon X \overset{(a,b)}{\longrightarrow} K(\Z/2, 1)\times K(\Z/2, 2)\simeq B\O/B\Spin \overset T\longrightarrow B\GL_1(\MTSpin).
    \end{equation}
    Hebestreit-Joachim~\cite[Corollary 3.3.8]{HJ20} show that the homotopy groups of the Thom spectrum $M^\MTSpin f_{a,b}$ are naturally isomorphic to the bordism groups of manifolds with $(X, a,b)$-twisted spin structures (\cref{nonVB}).
    \item \label{item:approx} As explained in~\cite[Corollary 2.37]{DY:2023tdd}, because the Atiyah--Bott--\allowbreak Shapiro map $\widehat A\colon \MTSpin\to\ko$~\cite{ABS64, Joa04} is $7$-connected~\cite{ABP67}, in order to compute $(X, a,b)$-twisted spin bordism groups in degrees $7$ and below, one may replace $M^\MTSpin f_{a,b}$ with $M^\ko(\widehat A\circ f_{a,b})$. 
\end{enumerate}
Thus we want to compute $\pi_k(M^\ko (\widehat A\circ f_{0, U}))$ for $k\le 5$, as these are isomorphic to the bordism groups of $k$-dimensional $(M\O_2, 0, U)$-twisted spin manifolds.

%

We now turn to the tangential structure on $F$. Using the fact that $TM|_F = TF \oplus \nu_F$, and $TM$ is oriented thanks to $M$ being part of a GM pair, we directly see that $w_1(TF) = w_1(\nu_F)$ and $w_2(TF) + w_1(TF)^2 = 0$. 
This is equivalent to a $(B\O_2,\sigma)$-twisted spin structure, where $\sigma\to B\O_n$ always denotes the determinant bundle of the tautological bundle. The map $(M,F) \to F$ which restricts onto the submanifold induces a map on characteristic bordism
\begin{equation}\label{eq:GM}
   R\colon \Omega^{\mathrm{GM}}_4 \rightarrow \Omega^{\Spin}_2((B\O_2)^{\sigma-1})\,.
\end{equation}
Unlike the sequences for Freedman--Kirby characteristic bordism in \S\ref{subsection:FKcharbordism} we are unable to extend the map in \eqref{eq:GM} to a long exact sequence unless we assume \cref{char_conj}. We address possible approximations to a long exact sequence in \S\ref{subsubsection:approxGM}. 

Now we compute the low-dimensional $(M\O_2, 0, U)$-twisted spin bordism groups. At $p = 2$, we will use a variant of the Adams spectral sequence due to Baker--Lazarev~\cite{BL01}, but before that we take care of the odd-primary case, which is simpler.
\begin{lem}\label{GM_odd}
The groups $\pi_k(M^\MTSpin f_{0,U})\otimes\Z[1/2]$ are isomorphic to $\Z[1/2]$ for $k =0$, $\Z[1/2]^{\oplus 2}$ for $k = 4$, and vanish for all other $k<8$.
\end{lem}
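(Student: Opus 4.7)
The plan is to exploit the fact that at odd primes the twist data $(0, U)$ becomes invisible and the computation collapses to a classical oriented bordism calculation on $M\O_2$. The first step is to observe that $f_{0, U}$ factors as $M\O_2 \xrightarrow{(0, U)} K(\Z/2, 1) \times K(\Z/2, 2) \to B\GL_1(\MTSpin)$; since $K(\Z/2, k)$ becomes contractible after inverting $2$, the composition $f_{0, U}[1/2]$ is null-homotopic. Because Thom spectrum formation commutes with Bousfield localization, this gives
\begin{equation*}
M^{\MTSpin} f_{0, U} \otimes \Z[1/2] \simeq \bigl(\MTSpin \wedge (M\O_2)_+\bigr) \otimes \Z[1/2].
\end{equation*}
Combining with the classical equivalence $\MTSpin \otimes \Z[1/2] \simeq \MTSO \otimes \Z[1/2]$ reduces the problem to computing $\MTSO_n(M\O_2)[1/2]$ for $n < 8$.

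Next, I would determine $H_*(M\O_2; \Z[1/2])$ from the cofiber sequence $B\O_1 \to B\O_2 \to M\O_2$ produced by the unit sphere bundle of the tautological $\O_2$-bundle. Since $B\O_1 = \R P^\infty$ has only $2$-torsion reduced cohomology, inverting $2$ contracts it, so $M\O_2[1/2] \simeq B\O_2[1/2]$. Combined with $H^*(B\O_2; \Z[1/2]) = \Z[1/2][p_1]$ with $|p_1| = 4$, this gives $H_n(M\O_2; \Z[1/2]) = \Z[1/2]$ for $n \in \{0, 4, 8, \ldots\}$ and $0$ in all intermediate degrees.

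The final step is the Atiyah-Hirzebruch spectral sequence
\begin{equation*}
E^2_{p, q} = H_p(M\O_2; \MTSO_q[1/2]) \Longrightarrow \MTSO_{p+q}(M\O_2)[1/2].
\end{equation*}
For $p + q < 8$, the coefficient groups $\MTSO_q[1/2]$ are $\Z[1/2]$ for $q \in \{0, 4\}$ and vanish otherwise, so $E^2$ is supported on $(p, q) \in \{0, 4\}^2$ in this range. A case-by-case check of $d_r \colon (p, q) \to (p - r, q + r - 1)$ for $r \geq 2$ shows that in every case either the source or the target lies in a trivial bidegree, so the spectral sequence collapses at $E^2$. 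Reading off the surviving groups, and noting that free $\Z[1/2]$-modules give no extension problems, yields $\Z[1/2]$ in degree $0$ from $(0, 0)$, $\Z[1/2]^{\oplus 2}$ in degree $4$ from $(0, 4)$ and $(4, 0)$, and $0$ in degrees $1, 2, 3, 5, 6, 7$, as claimed.

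The only real subtlety is justifying that the twist trivialization step is valid in the Ando-Blumberg-Gepner--Hebestreit-Joachim framework, and in particular that one gets $(M\O_2)_+$ rather than $M\O_2$ in the simplified spectrum; this is a routine check once one unwinds the conventions, and is forced by the $k=0$ entry of the answer. All other steps are entirely formal.
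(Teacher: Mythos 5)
Your proof is correct and follows essentially the same route as the paper's: invert $2$, observe that the twist becomes trivial (the paper does this by base-changing to $\MTSO$ and using $B\O/B\SO\simeq K(\Z/2,1)$, you do it by noting $K(\Z/2,2)\otimes\Z[1/2]\simeq *$ --- both are valid here precisely because the twist $(0,U)$ factors through the simply connected factor $K(\Z/2,2)$, sidestepping the $2$-torsion in $\pi_1 B\GL_1(\MTSpin[1/2])=(\Z[1/2])^\times$), and then run the collapsing Atiyah--Hirzebruch spectral sequence computing $\Omega^{\SO}_*(B\O_2)\otimes\Z[1/2]$. Your explicit identification $M\O_2[1/2]\simeq B\O_2[1/2]$ via the cofiber sequence $(B\O_1)_+\to (B\O_2)_+\to M\O_2$ makes precise a point the paper leaves implicit when it passes from the base space $M\O_2$ to $B\O_2$.
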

\begin{proof}
Forgetting from a spin structure to an orientation defines a homomorphism of $E_\infty$-ring spectra $\phi\colon \MTSpin\to\MTSO$, which is an isomorphism after inverting $2$; thus, the induced map $M^{\MTSpin}f_{0,U}\to M^{\MTSO}(\phi\circ f_{0,U})$ is a homotopy equivalence after inverting $2$. Therefore we will focus on $M^{\MTSO}(\phi\circ f_{0,U})$.

The key fact is that, after base-changing to $\MTSO$, the twist vanishes. Naturality of the map $T\colon B\O/BG\to B\GL_1(\mathit{MTG})$ in $G$ (see~\cite[Theorem 1]{Bea17} and~\cite[\S 1.2]{DY:2023tdd}) implies that $\phi\circ f_{0,U}$ factors through the map $B\O/B\Spin\to B\O/B\SO$ which is the projection $K(\Z/2, 1)\times K(\Z/2, 2)\to K(\Z/2, 1)$ (\textit{ibid.}, Proposition 1.33) -- in other words, sending $(a, b)\mapsto a$. Thus $(0, U)$ maps to the zero twist, whose Thom spectrum is $\MTSO\wedge (B\O_2)_+$ (\textit{ibid.}, Example 1.7). That is, there is a map $\pi_k(M^\MTSpin f_{0,U})\to \Omega_k^\SO(B\O_2)$ which is an isomorphism after tensoring with $\Z[1/2]$; the rest of the proof amounts to setting up the Atiyah--Hirzebruch spectral sequence
\begin{equation}
    E^2_{p,q} = H_p(B\O_2; \Omega_q^\SO\otimes\Z[1/2]) \Longrightarrow \Omega_{p+q}^\SO(B\O_2)\otimes\Z[1/2]
\end{equation}
and observing that it collapses without extension problems to imply the lemma statement.
\end{proof}
This tells us everything except $2$-torsion, which requires a different technique.
\begin{defn}
Let $M$ be a $\ko$-module; then, $H_\ko^*(M)\coloneqq\pi_{-*}\mathrm{Map}_\ko(M, H\Z/2)$.
\end{defn}
That is: $H^*(M; \Z/2)$ is the homotopy groups of the spectrum of maps from $M$ to the Eilenberg-Mac Lane spectrum $H\Z/2$; by instead asking for $\ko$-module maps, we get $H_\ko^*(M)$. The $\ko$-module structure on $H\Z/2$ is the usual one, induced from the (unique) $E_\infty$-$\ko$-algebra structure given by Postnikov truncation $\ko\to H\Z$ followed by mod $2$ reduction $H\Z\to H\Z/2$.
\begin{lem}[{Baker~\cite[Theorem 5.1]{Bak20}}]
There is a canonical isomorphism $H_\ko^*(H\Z/2)\cong \cA(1)$, where $\cA(1)\subset\cA$ is the subalgebra generated by $\Sq^1$ and $\Sq^2$.
\end{lem}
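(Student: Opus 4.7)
The plan is to compute $H_\ko^*(H\Z/2) = \pi_{-*}\mathrm{Map}_\ko(H\Z/2, H\Z/2)$ by reducing to the homotopy groups of $H\Z/2 \wedge_\ko H\Z/2$ via a standard change-of-scalars adjunction. The $E_\infty$-ring map $\ko \to H\Z/2$ yields
\[
\mathrm{Map}_\ko(H\Z/2, H\Z/2) \;\simeq\; \mathrm{Map}_{H\Z/2}(H\Z/2 \wedge_\ko H\Z/2,\; H\Z/2),
\]
and since every $H\Z/2$-module spectrum splits as a wedge of suspensions of $H\Z/2$, the right-hand mapping spectrum is the graded $\Z/2$-linear dual of $\pi_*(H\Z/2 \wedge_\ko H\Z/2)$. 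The problem thus reduces to identifying this relative smash product as an $H\Z/2$-module.

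For the core identification, I would combine Stong's classical theorem $H^*(\ko; \Z/2) \cong \cA \otimes_{\cA(1)} \Z/2$ with the relative K\"unneth spectral sequence
\[
E^2_{s,t} = \mathrm{Tor}^{\ko_*}_{s,t}(\Z/2, \Z/2) \;\Longrightarrow\; \pi_{s+t}(H\Z/2 \wedge_\ko H\Z/2),
\]
using the standard presentation $\ko_* = \Z[\eta,\alpha,\beta]/(2\eta,\eta^3,\eta\alpha,\alpha^2-4\beta)$. A minimal resolution of $\Z/2$ over $\ko_*$, together with Stong's theorem, should identify $H\Z/2 \wedge_\ko H\Z/2$ with a wedge $\bigvee_i \Sigma^{n_i} H\Z/2$ whose indexing set matches a basis of $\cA(1)^*$. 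Taking $\pi_{-*}$ and dualizing then produces $H_\ko^*(H\Z/2) \cong \cA(1)$ as graded $\Z/2$-vector spaces. To upgrade this to a \emph{canonical} algebra isomorphism, I would observe that $\Sq^1$ and $\Sq^2$ arise as the first two $k$-invariants of the Postnikov tower of $\ko$, and so they define natural stable cohomology operations on the $\ko$-module cohomology of any spectrum; this produces a canonical algebra map $\cA(1) \to H_\ko^*(H\Z/2)$, which the dimension count from the K\"unneth calculation then forces to be an isomorphism.

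The main obstacle is justifying the splitting $H\Z/2 \wedge_\ko H\Z/2 \simeq \bigvee_i \Sigma^{n_i} H\Z/2$ cleanly, since $\ko_*$ is not Koszul in the naive sense (the relation $\alpha^2 = 4\beta$ is not monomial). A perhaps cleaner alternative avoids the K\"unneth spectral sequence entirely by exhibiting a minimal cell $\ko$-module decomposition of $H\Z/2$ with exactly $8$ cells in bijection with a basis of $\cA(1)$, and reading off $H_\ko^*(H\Z/2)$ from the resulting cellular filtration; such a decomposition is implicit in the standard analysis of the Adams spectral sequence over $\ko$ that will be invoked in later calculations.
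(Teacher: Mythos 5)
The paper does not actually prove this lemma --- it is quoted verbatim from Baker~\cite[Theorem 5.1]{Bak20} --- so there is no in-paper argument to compare against, and your proposal must stand on its own. The framing is correct: the adjunction $\mathrm{Map}_\ko(H\Z/2,H\Z/2)\simeq\mathrm{Map}_{H\Z/2}(H\Z/2\wedge_\ko H\Z/2,\,H\Z/2)$, the splitting of $H\Z/2$-modules into wedges of suspensions, and the endgame of realizing $\Sq^1,\Sq^2$ as $\ko$-linear operations and closing with a dimension count are all fine. The gap is in the central computation. The K\"unneth spectral sequence $\mathrm{Tor}^{\ko_*}_{s,t}(\Z/2,\Z/2)\Rightarrow\pi_{s+t}(H\Z/2\wedge_\ko H\Z/2)$ does not collapse, and its $E^2$-page is infinite-dimensional: $\Z/2$ has infinite projective dimension over $\ko_*$, and already $\mathrm{Tor}_1$ has generators corresponding to $2,\eta,\alpha,\beta$ in total degrees $1,2,5,9$, whereas the abutment $\cA(1)_*$ vanishes above degree $6$. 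The degree-$9$ class is a permanent cycle for filtration reasons, so it must be \emph{hit} by a differential, and this is only the first of infinitely many cancellations. A minimal resolution therefore cannot by itself ``identify'' the answer, and Stong's theorem gives no handle on the differentials of \emph{this} spectral sequence.

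There are two standard ways to repair this, both of which you gesture at without carrying out. First, base-change before taking Tor: $H\Z/2\wedge_\ko H\Z/2\simeq H\Z/2\wedge_{H\Z/2\wedge\ko}(H\Z/2\wedge H\Z/2)$, and the K\"unneth spectral sequence over $\pi_*(H\Z/2\wedge\ko)\cong(\cA\otimes_{\cA(1)}\Z/2)_*$ --- this is where Stong's theorem actually enters --- collapses because the dual Steenrod algebra $\cA_*$ is free over this sub-Hopf-algebra (Milnor--Moore), giving $\pi_*(H\Z/2\wedge_\ko H\Z/2)\cong\cA_*\otimes_{(\cA\otimes_{\cA(1)}\Z/2)_*}\Z/2\cong\cA(1)_*$, which is $8$-dimensional with the right Poincar\'e series. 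Second, your ``cleaner alternative'' is in fact the standard proof (and essentially Baker's), but the cell decomposition is not merely ``implicit'': it is constructed explicitly by splicing Wood's cofiber sequence $\Sigma\ko\xrightarrow{\eta}\ko\to \mathit{ku}$ with $\Sigma^2\mathit{ku}\xrightarrow{v}\mathit{ku}\to H\Z$ and $H\Z\xrightarrow{2}H\Z\to H\Z/2$, exhibiting $H\Z/2$ as a finite cell $\ko$-module with cells in degrees $0,1,2,3,3,4,5,6$, matching the Poincar\'e series $(1+t)(1+t^2)(1+t^3)$ of $\cA(1)$. Either route completes the argument; as written, neither is.
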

Thus, for any $\ko$-module $M$, $H_\ko^*(M)$ is naturally an $\cA(1)$-module by postcomposition of a class in $\cA(1)$, thought of as a $\ko$-linear map $H\Z/2\to \Sigma^t H\Z/2$.
\begin{thm}[Baker--Lazarev~\cite{BL01}]
\label{BL_thm}
Let $M$ be a finite type, bounded-below $\ko$-module. Then there is a strongly convergent spectral sequence of Adams type
\begin{equation}\label{BLASS}
    E_2^{s,t} = \Ext_{\cA(1)}^{s,t}(H_\ko^*(M), \Z/2) \Longrightarrow \pi_{t-s}(M)_2^\wedge.
\end{equation}
\end{thm}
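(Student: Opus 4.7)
The plan is to build a $\ko$-module analogue of the classical mod~$2$ Adams spectral sequence, using $H\Z/2$ in place of the sphere-level mod~$2$ Eilenberg--MacLane spectrum, and then exploit the identification $H_\ko^*(H\Z/2)\cong\cA(1)$ to identify the $E_2$-page with $\Ext_{\cA(1)}$. I would proceed in three main steps: construct the resolution, identify the $E_2$-page, and verify convergence.

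First, construct the canonical Adams resolution internally to $\ko$-modules. Let $\eta\colon \ko\to H\Z/2$ be the $E_\infty$-$\ko$-algebra map recalled just before the statement, and let $\overline{H\Z/2}$ denote the fiber of $\eta$ in $\ko$-modules. Set $M_0 = M$ and $M_{s+1} = M_s\wedge_\ko \overline{H\Z/2}$, yielding a tower
\begin{equation*}
    \cdots\to M_2\to M_1\to M_0 = M
\end{equation*}
whose successive cofibers are $K_s = M_s\wedge_\ko H\Z/2$. The associated exact couple, obtained from the long exact sequences of each cofiber triangle, produces a conditionally convergent spectral sequence with $E_1^{s,t} = \pi_{t-s}(K_s)$, abutting to $\pi_*$ of the homotopy limit of the induced quotient tower $\{M/M_s\}$.

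Second, identify the $E_2$-page. Using that $H\Z/2$ is an $E_\infty$-$\ko$-algebra and that $\cA(1)$ is of finite type and free over $\Z/2$, a K\"unneth argument yields $H_\ko^*(H\Z/2\wedge_\ko H\Z/2)\cong\cA(1)\otimes_{\Z/2}\cA(1)$, so $\cA(1)$ acquires the structure of a Hopf algebra and $H_\ko^*(M)$ becomes a module over it. The $d_1$-complex $\pi_*K_\bullet$ is then, up to regrading, the standard cobar (equivalently bar) complex computing $\Ext_{\cA(1)}^{s,t}(H_\ko^*(M),\Z/2)$, mirroring Milnor's identification of the classical Adams $E_2$-term. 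This step uses only the finite-type hypothesis on $M$ together with the internal Hopf-algebra structure, and is essentially formal in any well-behaved symmetric monoidal category of module spectra.

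Finally, strong convergence. The bounded-below and finite-type hypotheses on $M$ transfer to each $M_s$ and $K_s$, supplying the vanishing lines and Mittag--Leffler conditions needed for the spectral sequence to converge strongly to $\pi_*$ of the tower's homotopy limit. The main obstacle, and where I expect the technical work to concentrate, is identifying this homotopy limit with the $2$-completion $\pi_*(M)_2^\wedge$: one must show that $H\Z/2$-nilpotent completion in the category of $\ko$-modules agrees with $2$-completion under our hypotheses. I would adapt Bousfield's criterion by noting that, after smashing with $H\Z/2$, the fiber $\overline{H\Z/2}$ becomes a $2$-local equivalence, which reduces the desired comparison to a statement about connective, finite-type spectra where the classical identification of $H\Z/2$-nilpotent completion with $2$-completion applies. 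Extra care is needed because $\ko$ itself has nontrivial Postnikov structure, so the argument must be carried out internally to $\ko$-modules rather than simply imported from the sphere.
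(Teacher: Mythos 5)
The paper does not prove this theorem; it is quoted verbatim from Baker--Lazarev \cite{BL01}, with the input $H_\ko^*(H\Z/2)\cong\cA(1)$ supplied by the preceding lemma. Your sketch --- the canonical $H\Z/2$-based Adams resolution internal to $\ko$-modules, the K\"unneth/cobar identification of the $E_2$-page, and strong convergence via comparing the $H\Z/2$-nilpotent completion in $\ko$-modules with $2$-completion under the bounded-below, finite-type hypotheses --- is essentially the same route taken in that source, so the proposal is correct and matches the standard proof.
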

We will apply this for $M = M^\ko (\widehat A\circ f_{0,U})$, which satisfies the hypotheses of \cref{BL_thm}. The first step is to determine $H_\ko^*(M^\ko (\widehat A\circ f_{0,U}))$.
\begin{thm}[{\cite[Theorem 2.28(3)]{DY:2023tdd}}]
\label{DY23_main_thm}
Let $X$ be a space, $a\in H^1(X;\Z/2)$ and $b\in H^2(X;\Z/2)$, and let $V_\ko(a,b,X)$ denote the $\cA(1)$-module which is a free $H^*(X;\Z/2)$-module on a single generator $Q$ and with $\cA(1)$-action specified by the formulas
\begin{subequations}
\label{fake_Sq12}
\begin{align}
    \Sq^1(Qx) &= Q(ax + \Sq^1(x))\\
    \Sq^2(Qx) &= Q(bx + a\Sq^1(x) + \Sq^2(x))
\end{align}
\end{subequations}
Then, there is a natural isomorphism of $\cA(1)$-modules
\begin{equation}
    H_\ko^*(M^\ko(\widehat A\circ f_{a,b})) \overset\cong\longrightarrow V_\ko(a,b,X).
\end{equation}
\end{thm}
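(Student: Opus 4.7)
The plan is to identify $H_\ko^*(M^\ko(\widehat A\circ f_{a,b}))$ as an $\cA(1)$-module in two stages: first establish a Thom-class isomorphism of $H^*(X;\Z/2)$-modules $H_\ko^*(M^\ko(\widehat A\circ f_{a,b})) \cong H^*(X;\Z/2)\cdot Q$ for a canonical class $Q$ in degree $0$, and then determine the $\cA(1)$-action on $Q$ by reduction to a universal example. Given these two inputs, the formulas \eqref{fake_Sq12} follow formally from the Cartan relation $\Sq^n(Qx) = \sum_{i+j=n}\Sq^i(Q)\,\Sq^j(x)$ (valid in $\cA(1)$ for $n\le 2$) once one checks the two equalities $\Sq^1 Q = aQ$ and $\Sq^2 Q = bQ$.

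For the Thom-class step, the crucial input is that the composite $B\O/B\Spin \xrightarrow{T} B\GL_1(\MTSpin) \xrightarrow{\widehat A} B\GL_1(\ko) \to B\GL_1(H\Z/2)$ is null-homotopic, because any real virtual bundle carries a canonical mod-$2$ Thom class and so $H\Z/2$ sees no twist from $B\O/B\Spin$. Consequently the $\ko$-module twist $\widehat A\circ f_{a,b}$ carries an $H\Z/2$-orientation, which one upgrades to a $\ko$-linear Thom class $Q \in H_\ko^0(M^\ko(\widehat A\circ f_{a,b}))$ using the canonical $E_\infty$-$\ko$-algebra structure on $H\Z/2$ obtained from Postnikov truncation $\ko \to H\Z$ followed by mod-$2$ reduction. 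A standard argument (projection formula plus naturality in $X$) then identifies $H_\ko^*(M^\ko(\widehat A\circ f_{a,b}))$ as a free $H^*(X;\Z/2)$-module of rank one on $Q$.

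For the $\cA(1)$-action on $Q$, naturality of $T$ and of $\widehat A$ reduces the computation to the universal example $X = K(\Z/2,1)\times K(\Z/2,2)$ with $(a,b) = (\iota_1,\iota_2)$ the fundamental classes. There I would pull back further along the maps $B\O_1 \to K(\Z/2,1)$ and $B\O_2 \to K(\Z/2,2)$ classifying $w_1$ and $w_2$ of the tautological bundles: on these loci the twist refines to an honest virtual bundle, the $\ko$-module Thom spectrum becomes $\ko \wedge \mathrm{Th}(-)$, and the classical formulas $\Sq^1 U = w_1 U$ and $\Sq^2 U = w_2 U$ determine the restriction of $\Sq^1 Q$ and $\Sq^2 Q$. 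A degree argument then pins down the universal values: $\Sq^1 Q$ lies in $H_\ko^1$, which as a free $H^*(X;\Z/2)$-module of rank one forces $\Sq^1 Q \in \iota_1 \cdot Q \cdot \Z/2$, and similarly $\Sq^2 Q$ must lie in $(\iota_2 + \iota_1^2)\cdot Q\cdot \Z/2 \oplus \iota_2\cdot Q\cdot\Z/2$; the restriction to the bundle loci, where $\iota_1$ and $\iota_2$ become genuinely independent, forces the coefficients to be $(1,0)$ for $\Sq^1$ and $(0,1)$ for $\Sq^2$. Pullback along $(a,b)\colon X \to K(\Z/2,1)\times K(\Z/2,2)$ then gives the general identity.

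The main obstacle I anticipate is the compatibility check sandwiched between the first two steps: ensuring that the Thom class $Q$ constructed abstractly from the $\ko$-algebra structure on $H\Z/2$ agrees with the universal twist $T$ built by Beardsley tightly enough that the bundle-theoretic computations on the $B\O_n$ loci really do compute $\Sq^i Q$ for the abstract $Q$. A secondary subtlety is verifying that the $H^*(X;\Z/2)$-module structure used to extend from $Q$ to $Qx$ via Cartan is the one coming from the $H\Z/2\wedge_\ko H\Z/2$-action on $H_\ko^*$; this should follow from the $E_\infty$-$\ko$-algebra structure on $H\Z/2$ making its Thom classes genuinely multiplicative, but it needs to be nailed down carefully before the Cartan formula can be applied as stated.
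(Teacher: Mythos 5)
The paper does not prove this theorem; it is quoted verbatim from Debray--Yu \cite[Theorem 2.28(3)]{DY:2023tdd}, so there is no in-paper argument to compare against. That said, your reconstruction follows the same strategy as the cited proof: a Thom isomorphism coming from the triviality of the twist after base change to $H\Z/2$, determination of $\Sq^1 Q$ and $\Sq^2 Q$ by naturality and reduction to honest vector-bundle twists, and the Cartan formula to propagate to $Qx$. Two small remarks. First, your orientation step is even easier than you make it: $\GL_1(H\Z/2)$ is contractible (its $\pi_0$ is the unit group of $\Z/2$ and its higher homotopy vanishes), so \emph{every} twist trivializes over $H\Z/2$ and the Thom class exists with no further argument; the real content, which you correctly flag, is pinning down the induced $\ko$-linear class $Q$ compatibly with Beardsley's $T$. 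Second, in the detection step the tautological bundle $V\to B\O_2$ does not realize the twist $(0,\iota_2)$ since $w_1(V)\neq 0$; you should restrict instead to $B\SO_2$, or use $V\oplus 3\det(V)$ over $B\O_2$, which has $w_1=0$ and $w_2=w_2(V)$. With those adjustments the degree count and the injectivity of restriction to $B\O_1$ and $B\SO_2$ in degrees $\le 2$ do force $\Sq^1Q=aQ$ and $\Sq^2Q=bQ$ as you claim.
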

Part of the content of \cref{DY23_main_thm} is that the formulas in~\eqref{fake_Sq12} indeed satisfy the Adem relations and therefore define an $\cA(1)$-action. If there is a vector bundle $V\to X$ with $w_1(V) = a$ and $w_2(V) = b$, then~\eqref{fake_Sq12} is the Wu formula, with $Q$ equal to the Thom class; thus one way to think of \cref{DY23_main_thm} is that the Wu formula holds whether or not $a$ and $b$ come from a vector bundle!
\begin{defn}\label{M1defn}
Let $M_0\coloneqq\cA(1)\otimes_{\cA(0)}\Z/2$ and $M_1$ denote the unique nontrivial $\cA(1)$-module extension of $\Sigma^4 M_0$ by $M_0$. Pictures of $M_0$ and $M_1$ can be found in~\cite[Figures 14, resp.\ 17]{BC18}.
\end{defn}
\begin{defn}
Let $M$ and $N$ be $\cA(1)$-modules, and for any $\cA(1)$-module $L$, let $L_{>n}$ denote the sub-$\cA(1)$-module of $L$ spanned by homogeneous elements of degree strictly greater than $n$. We say $M$ and $N$ are \emph{isomorphic up to degree $n$}, written $M\cong_{\le n} N$, if there is an $\cA(1)$-module isomorphism $M/M_{>n}\cong N/N_{>n}$.
\end{defn}
By using \cref{DY23_main_thm} and the known $\cA(1)$-module structure on $H^*(M\O_2;\Z/2)$ (see for example~\cite[Figure 6.6]{Cam17}), one can prove the following proposition.
\begin{prop}\label{GMA1}
There is an isomorphism up to degree $6$
\begin{equation}
    H_\ko^*(M^\ko (\widehat A\circ f_{0,U})) \cong_{\le 6}
        \textcolor{BrickRed}{M_1} \oplus
        \textcolor{MidnightBlue}{\Sigma^4 M_0} \oplus
        \textcolor{Green}{\Sigma^6 \Z/2}.
\end{equation}
\end{prop}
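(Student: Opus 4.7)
Via \cref{DY23_main_thm} applied with $X = M\O_2$, $a = 0$, and $b = U$ (the mod-$2$ Thom class of the universal rank-$2$ bundle over $B\O_2$), the plan is to identify $H_\ko^*(M^\ko(\widehat A \circ f_{0,U})) \cong V_\ko(0, U, M\O_2)$ and then compute the $\cA(1)$-module structure of the right-hand side through degree $6$. The module $V_\ko(0,U,M\O_2)$ is a free $H^*(M\O_2;\Z/2)$-module on a degree-$0$ generator $Q$, with twisted action $\Sq^1(Qx) = Q\Sq^1(x)$ and $\Sq^2(Qx) = Q(Ux + \Sq^2(x))$.

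First I would recall the $\cA(1)$-structure on $H^*(M\O_2;\Z/2) \cong U\cdot\Z/2[w_1,w_2]$, as displayed in \cite[Figure~6.6]{Cam17}: this is pinned down by the Wu formula $\Sq(U) = U(1 + w_1 + w_2)$, the unstable formulas $\Sq^1(w_1) = w_1^2$, $\Sq^1(w_2) = w_1 w_2$, and $\Sq^2(w_2) = w_2^2$, together with the Cartan formula. Combining these with the twisted formulas above, and using the key product identity $U^2 = Uw_2$ (the mod-$2$ Euler class of the tautological bundle over $B\O_2$ is $w_2$), I would tabulate $\Sq^1$ and $\Sq^2$ on the basis $\{Q \cdot Uw_1^i w_2^j\}$ of $V_\ko(0, U, M\O_2)$ up to degree $6$.

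Given this data, the decomposition can be read off from a color-coded picture in the style of \cite{BC18}. Concretely: the pair $(Q, QUw_1^2)$ generates a sub-$\cA(1)$-module with the cells of $M_1$, and the nontrivial extension class is witnessed by the coincidence $\Sq^1(QUw_1^2) = QUw_1^3 = \Sq^2\Sq^1\Sq^2(Q)$, which prevents a splitting into $M_0 \oplus \Sigma^4 M_0$. The element $QUw_2$ generates a copy of $\Sigma^4 M_0$, since $\Sq^1(QUw_2) = 0$ and $\Sq^2(QUw_2) = Q(Uw_2^2 + Uw_1^2 w_2)$ lands in degree $6$; and a suitably chosen element in degree $6$ contributes the $\Sigma^6 \Z/2$ summand, namely one whose $\Sq^1$- and $\Sq^2$-images land in degrees strictly greater than $6$ and therefore vanish after the $(\cdot)/(\cdot)_{>6}$ truncation.

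The principal obstacle is bookkeeping. I would need to verify both (i) that the three candidate submodules are pairwise linearly disjoint modulo degree $>6$, so that their sum is genuinely direct, and (ii) that the $M_1$-candidate is the nontrivial extension rather than the split $M_0 \oplus \Sigma^4 M_0$. Both checks reduce to careful tracking of the twisted action on explicit basis elements, with particular attention to distinguishing genuinely new $\cA(1)$-generators from those already produced by previously chosen ones.
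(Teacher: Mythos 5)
Your overall strategy coincides with the paper's: invoke \cref{DY23_main_thm} with $X = M\O_2$, $(a,b) = (0,U)$, use the Wu and Cartan formulas together with $U^2 = Uw_2$ to write out the twisted $\cA(1)$-action, and read off the decomposition. Your identifications of the $\textcolor{BrickRed}{M_1}$ summand on $\{Q, QUw_1^2\}$ (with the extension witnessed by $\Sq^1(QUw_1^2) = QUw_1^3 = \Sq^2\Sq^1\Sq^2(Q)$) and of the $\textcolor{MidnightBlue}{\Sigma^4 M_0}$ summand on $QUw_2$ are both correct.

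However, the "bookkeeping" you defer is exactly where the argument breaks, because your checklist only verifies that the three candidate submodules are \emph{independent}, not that they \emph{span}. They do not: since $V_\ko(0,U,M\O_2)$ is free over the unreduced $H^*(M\O_2;\Z/2)$, its degree-$5$ part is two-dimensional, spanned by $QUw_1^3$ and $QUw_1w_2$, whereas $M_0 = \cA(1)\otimes_{\cA(0)}\Z/2$ has cells only in degrees $0,2,3,5$, so $M_1\oplus\Sigma^4M_0\oplus\Sigma^6\Z/2$ is only one-dimensional in degree $5$. A direct check shows that the image of $\cA(1)^{>0}$ in degree $5$ is spanned by $QUw_1^3$ alone ($\Sq^1(QUw_2)=0$, $\Sq^2(QUw_1)=\Sq^1(QUw_1^2)=QUw_1^3$), so $QUw_1w_2$ is a fourth indecomposable generator in degrees $\le 6$. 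Worse, $\Sq^1(QUw_1w_2) = QUw_1^2w_2$ is precisely the residual degree-$6$ class you hoped to split off as $\Sigma^6\Z/2$; being in the image of $\Sq^1$ from a class outside the first two summands, it cannot generate a trivial summand. The correct decomposition up to degree $6$ replaces $\Sigma^6\Z/2$ by $\Sigma^5$ of the two-cell module $\{x,\Sq^1x\}$. This is not cosmetic: the new generator contributes a nonzero class to $\Ext^{0,5}$, i.e.\ a filtration-$0$ candidate for a $\Z/2$ in $\Omega_5^{\mathrm{GM}}$ detected by the characteristic number $(M,F)\mapsto \int_F w_1(\nu_F)w_2(\nu_F)$; evaluating on the Wu manifold $\SU_3/\SO_3$ with $F$ an embedded dual of $w_2$ gives $\int_F i^*(z_3) + \int_F w_1(TF)^3 = 1$, so the class survives and the statement (and the paper's figure, which shares the same omission) needs to be corrected in degrees $5$ and $6$.
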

See \cref{fig:GM}, left, for a picture of \cref{GMA1}. $\Ext_{\cA(1)}(\textcolor{MidnightBlue}{M_0}, \Z/2)$ and $\Ext_{\cA(1)}(\textcolor{BrickRed}{M_1}, \Z/2)$ can be found in~\cite[Figures 15, resp.\ 18]{BC18}. Quotienting above degree $6$ induces an isomorphism in Ext in degrees $t-s\le 5$, so
we can draw the $E_2$-page of Baker--Lazarev's Adams spectral sequence in \cref{fig:GM}, right. In the range we are interested in, there are no differentials for degree reasons, and all extensions are resolved by the $h_0$-action on $E_\infty$, so we discover that there is no $2$-torsion in degrees $5$ and below. Assembling this and \cref{GM_odd}, we have the following.
\begin{prop}\label{prop:bordcompGM}
There are isomorphisms
\begin{equation}
\begin{aligned}
    \Omega_0^{\mathrm{GM}} &\cong \Z\\
    \Omega_1^{\mathrm{GM}} &\cong 0\\
    \Omega_2^{\mathrm{GM}} &\cong 0\\
    \Omega_3^{\mathrm{GM}} &\cong 0\\
    \Omega_4^{\mathrm{GM}} &\cong \Z^2\\
    \Omega_5^{\mathrm{GM}} &\cong 0.
\end{aligned}
 \end{equation}
\end{prop}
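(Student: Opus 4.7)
The plan is to identify $\Omega_k^{\mathrm{GM}}$ with $\pi_k(M^{\MTSpin} f_{0,U})$ by way of \cref{prop:GMcharpairs}, and then compute these homotopy groups one prime at a time in the range $0 \le k \le 5$. The odd-primary part is already settled by \cref{GM_odd}, which gives $\Z[1/2]$ in degree $0$, $\Z[1/2]^{\oplus 2}$ in degree $4$, and $0$ elsewhere; so the entire remaining burden is to compute the $2$-local homotopy groups. Since the Atiyah-Bott-Shapiro map is $7$-connected, step~\ref{item:approx} of the recipe lets me replace the spin Thom spectrum by the $\ko$-module Thom spectrum $M^{\ko}(\widehat A \circ f_{0,U})$ in this range, which is the input to the Baker-Lazarev Adams-type spectral sequence of \cref{BL_thm}.

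Next, I would assemble the $E_2$-page. By \cref{DY23_main_thm} applied with $X = M\O_2$, $a = 0$, and $b = U$, the relevant cohomology is $V_\ko(0, U, M\O_2)$, a free $H^*(M\O_2;\Z/2)$-module on a class $Q$ with $\cA(1)$-action determined by the twisted Wu formulas
\begin{subequations}
\begin{align}
    \Sq^1(Qx) &= Q\,\Sq^1(x),\\
    \Sq^2(Qx) &= Q(Ux + \Sq^2(x)).
\end{align}
\end{subequations}
Feeding in the standard $\cA(1)$-module structure on $H^*(M\O_2;\Z/2)$ through degree $6$ and bookkeeping the twist by $U$ produces the decomposition recorded in \cref{GMA1}, namely $\textcolor{BrickRed}{M_1} \oplus \textcolor{MidnightBlue}{\Sigma^4 M_0} \oplus \textcolor{Green}{\Sigma^6 \Z/2}$ up to degree $6$. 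I would then read off $\Ext_{\cA(1)}$ of each summand from the charts in~\cite[Figures 15, 18]{BC18} and the trivial module, giving the $E_2$-page of~\eqref{BLASS} in the range $t - s \le 5$ pictured in \cref{fig:GM}.

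Finally, I would verify the convergence step: in the range $t - s \le 5$ there is no room for nontrivial Adams differentials (the only potential targets lie outside the range or are in filtrations too high to be hit), and all additive extensions are already visible as $h_0$-towers on the $E_\infty$-page, which read as $\Z$ in degree $0$ and $\Z^{\oplus 2}$ in degree $4$ with no $2$-torsion in degrees $\le 5$. Combining with \cref{GM_odd} by the arithmetic fracture square gives the stated groups.

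The main obstacle is \cref{GMA1} itself: the twisted Wu action is genuinely different from what a vector bundle would produce (since $U$ is not $w_2$ of a bundle over $M\O_2$), so one must carefully compute $\Sq^1$ and $\Sq^2$ on a basis of $Q \cdot H^{\le 6}(M\O_2;\Z/2)$, identify the resulting indecomposable summands through degree $6$, and in particular verify that the extension producing $M_1$ rather than $\Sigma^4 M_0 \oplus M_0$ really occurs. Once that module decomposition is in hand, the rest of the argument is mechanical.
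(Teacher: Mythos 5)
Your proposal follows the paper's proof essentially step for step: the identification of GM structures with $(M\O_2,0,U)$-twisted spin structures, the odd-primary computation via \cref{GM_odd}, the reduction to the $\ko$-module Thom spectrum, the Baker--Lazarev spectral sequence with $E_2$-page determined by \cref{DY23_main_thm} and \cref{GMA1}, and the collapse/extension argument are exactly what the paper does. Your specialization of the twisted Wu formulas to $a=0$, $b=U$ is correct, and your identification of \cref{GMA1} as the only nontrivial computational input is accurate.
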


As a check, we see that our results agree with the computations of \cite{GM}, who computed these bordism groups in degrees $0$ and $4$. 

\begin{figure}[!ht]
\begin{subfigure}[c]{0.5\textwidth}
\hspace{16mm}
	\begin{tikzpicture}[scale=0.8, every node/.style = {font=\tiny}]
		\foreach \y in {0, 1, ...,6} {
			\node at (-2, \y) {$\y$};
		}
        \begin{scope}
            \clip (-1, -0.2) rectangle (5, 6.6);
        
		\begin{scope}[BrickRed]
			\tikzpt{0}{0}{$Q$}{};
			\tikzpt{0}{2}{$UQ$}{};
            \tikzpt{0}{3}{$w_1UQ$}{};
			\tikzpt{0}{4}{}{};
            \tikzpt{0}{5}{}{};
            \tikzpt{0}{6}{}{};
            \sqone(0, 2);
            \sqone(0, 4);
             \sqtwoR(0, 0);
              \sqtwoR(0, 3);
              \sqtwoL(0, 4);
              \sqone(0, 6);
		\end{scope}
  \begin{scope}[MidnightBlue]
			\tikzpt{2}{4}{$w_2 UQ$}{};
            \tikzpt{2}{6}{$\alpha Q$}{};
             \sqtwoR(2, 4);
             \sqone(2, 6);
		\end{scope}
        \end{scope}
        \begin{scope}[Green]
            \clip (3, 5) rectangle (5, 6.6);
            \tikzpt{4}{6}{$w_1^2w_2UQ$}{};
            \sqtwoR(4, 6);
        \end{scope}
	\end{tikzpicture}
\end{subfigure}
\qquad
\begin{subfigure}[!ht]{0.4\textwidth}
\includegraphics[width=43mm,scale=0.3]{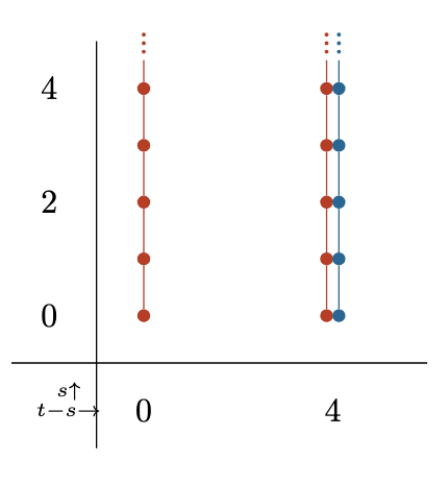}
\end{subfigure}
\caption{Left: the $\cA(1)$-module structure on $H_\ko^*(M^\ko(\widehat A\circ f_{0,U}))$ in low degrees. The pictured module contains all classes in degrees $6$ and below. As will be the case in all the figures, curved lines that join points separated by two degrees denote actions by $\Sq^2$, and straight lines joining points separated by one degree denote $\Sq^1$ actions. 
Right: The $E_2$-page of the Baker--Lazarev Adams spectral sequence computing $2$-completed $(M\O_2, 0, U)$-twisted spin bordism groups. We use this spectral sequence to prove \cref{prop:bordcompGM}.}
    \label{fig:GM}
\end{figure}

\begin{rem}
    The map $M\O_2\to K(\Z/2, 2)$ is $3$-connected, so in dimensions $3$ and below, a GM-structure is equivalent to a $(K(\Z/2,2),0, T)$-twisted spin structure, where $T \in H^2(K(\Z/2,2);\Z/2)$ is the canonical class. But this is equivalent to an orientation! See~\cite[\S 3]{Bea17} or~\cite[Example 1.28(2)]{DY:2023tdd}. Therefore $\Omega^{\mathrm{GM}}_* \cong \Omega^{\SO}_*$ for $* = 0,1,2,3$, as we saw in \cref{prop:bordcompGM}.
\end{rem}

We now interpret the physical consequences of the GM characteristic pairs. The manifold  $M\backslash F$ is oriented so we do not demand that the theory placed on the manifold to have time-reversal symmetry. Because $F$ is unoriented, a theory defined on the defect supported at $F$ can have a time-reversal symmetry. Due to the fact that the structure on $M\backslash F$ does not extend across $F$, it is possible to define a 4d theory and a 2d theory within the same background manifold, such that the 4d theory does not necessarily have a time-reversal symmetry but the 2d theory does. Our characteristic bordism computation implies that there are two bordism classes of pairs $(M,F)$ in dimension 4, suggesting that there are two classes of QFTs which host these time-reversal invariant topological defects. This example shows how the global structure of the defect $F$ can lead to different symmetry structures for theories that live on $F$ and outside of it.

\subsubsection{Symmetry Approximations to GM}\label{subsubsection:approxGM}
We give the first example of how the Smith long exact sequences can be used to approximate the sequences that arise from characteristic bordism, in the spirit of \S\ref{subsection:smithLESintro}. We will show that a spin-$
\O_2$ structure approximates the structure of GM-characteristic pairs. In our notation, spin-$\O_2$ is the group $(\Spin \times \O_2)/(\langle -1,x \rangle)$ where $(\langle -1,x \rangle)$ generates the diagonal $\Z/2$-subgroup. The classifying space of spin-$\O_2$ structures fits into the following pullback square:
    \begin{equation}
    \begin{tikzcd}[column sep=2cm, row sep=2cm]
B(\Spin\!\text{-}\O_2) \arrow[r,"V"] \arrow[d] \arrow[dr, phantom, "\lrcorner", very near start] & B\O_2\arrow[d, "w_2(V)"] \\
 B\SO \arrow[r, "w_2(TM)",swap] & K(\Z/2,2)\,.
\end{tikzcd}
\end{equation}
The map $B\O_2 \rightarrow K(\Z/2,2)$ is the mod 2 Euler class of $V$, which is $w_2(V)$.
The pullback then implies that the data of a spin-$\O_2$ structure is that of an orientation on $TM$, and an identification $w_2(TM) = w_2(V)$. We note that due to the map $z:B\O_2\rightarrow M\O_2$, i.e. taking the zero section, we have a map $B(\Spin\!\text{-}\O_2) \rightarrow B\mathrm{GM}$.
\begin{lem}
\label{the_spin_O2_twist}
    Let $V\to B\O_2$ be the tautological rank-$2$ vector bundle. A $\Spin\!\text{-}\O_2$ structure on $M$ is equivalent to a $(B\O_2, V\oplus 3 \det(V))$-twisted spin structure.
\end{lem}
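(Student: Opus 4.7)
The plan is to reduce the equivalence to the Stiefel-Whitney class computation $w_1(V\oplus 3\det V) = 0$ and $w_2(V \oplus 3\det V) = w_2(V)$ on $B\O_2$, after which \cref{nonVB} makes the identification essentially tautological.

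First, I will verify the claimed Stiefel-Whitney classes. Because $\det V$ is a line bundle with $w_1(\det V) = w_1(V)$, the Whitney sum formula gives $w_1(V \oplus 3\det V) = w_1(V) + 3 w_1(V) = 0 \in H^1(B\O_2;\Z/2)$. For the degree-$2$ class, I will use that $(1+x)^3 \equiv 1 + x + x^2 + x^3 \pmod 2$, so the total Stiefel-Whitney class of $3\det V$ is $1 + w_1(V) + w_1(V)^2 + w_1(V)^3$; the Whitney sum formula then yields
\[
    w_2(V \oplus 3\det V) = w_2(V) + w_1(V)\cdot w_1(V) + w_1(V)^2 = w_2(V),
\]
as required.

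Given this, \cref{nonVB} implies that a $(B\O_2, V\oplus 3\det V)$-twisted spin structure on a vector bundle $E\to M$ is equivalent to a $(B\O_2, 0, w_2)$-twisted spin structure, namely a map $f\colon M\to B\O_2$ together with trivializations of $w_1(E)$ and of $w_2(E) + w_1(E)^2 + f^*w_2$. Applying this with $E = TM$, the first trivialization supplies an orientation on $M$, and once $w_1(TM) = 0$ the second trivialization reduces to an identification $w_2(TM) = f^*w_2 = w_2(V_M)$ where $V_M \coloneqq f^*V$. This is precisely the data specified by the pullback square defining $B\Spin\!\text{-}\O_2$: an orientation on $TM$, a principal $\O_2$-bundle $V_M \to M$, and an identification of $w_2(TM)$ with $w_2(V_M)$.

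The last step is to promote this bijection of data to an equivalence of the two classifying tangential structures over $B\O$; the key point is that both are presented as the same homotopy pullback along the second-Stiefel-Whitney map $B\SO \times B\O_2 \to K(\Z/2,2)$, with the $H^1(M;\Z/2)$-torsor of spin structures recovered from the homotopy fiber sequence $K(\Z/2,1) \to B\Spin \to B\SO$. The main obstacle is to carefully match these homotopy pullbacks at the space level (rather than merely on $\pi_0$), but once the Stiefel-Whitney class computation is in hand this is automatic because the twist is characterized entirely by the pair $(w_1, w_2) = (0, w_2)$ on both sides, precisely as in the framework of \cref{nonVB} used systematically in the computations that follow.
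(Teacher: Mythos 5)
Your proposal is correct and follows essentially the same route as the paper: both arguments reduce to the Whitney sum formula computation showing $w_1(V\oplus 3\det(V)) = 0$ and $w_2(V\oplus 3\det(V)) = w_2(V)$, so that the spin condition on $TM\oplus V\oplus 3\det(V)$ unpacks to an orientation on $M$ together with the identification $w_2(TM) = w_2(V)$ demanded by the pullback square for $B(\Spin\!\text{-}\O_2)$. The only cosmetic difference is that you compute the Stiefel--Whitney classes of the twist bundle first and then invoke \cref{nonVB}, whereas the paper applies the Whitney sum formula directly to the total bundle.
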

\begin{proof}
    Since a spin structure induces an orientation, we must have $w_1(TM\oplus V \oplus 3\det(V))=0$. It follows from a straightforward application of the Whitney sum formula that $w_1(TM\oplus V \oplus 3\det(V))=0$ implies that $w_1(TM)=0$ since $w_1(V)=w_1(3\det(V))$. Furthermore  $w_2(TM\oplus V \oplus 3\det(V))= 0$ implies
\begin{subequations}
\begin{align}
    w_2(TM)+w_2(V \oplus 3 \det(V)) +w_1(TM)w_1(TM \oplus 3 \det(V))&= 0\,,\\
     w_2(TM) + w_2(V) +w_2(3\det(V))+w_1(V)w_1(3\det(V))&= 0\,,
\end{align}
\end{subequations}
and thus $w_2(TM)= w_2(V)$. 
\end{proof}
See~\cite[Corollary 14.16]{DDHM23} for an analogue of \cref{the_spin_O2_twist} with $\O_2$ replaced with the dihedral group $D_8$.

We compute the bordism groups for spin-$\O_2$ using the shearing construction in Lemma \ref{lem:shearing} to write this twisted spin structure as the spectrum 
\begin{equation}
    \MTSpin \wedge (B\O_2)^{V\oplus 3\det(V) - 5}\,,
\end{equation}
There is no odd-primary torsion, which can be proven in essentially the same manner as in \cref{GM_odd}. At $p = 2$, we use the Adams spectral sequence. The Wu formula tells us the $\cA(1)$-action on $H^*((B\O_2)^{V + 3\det(V) - 5};\Z/2)$:
\begin{equation}
    \Sq^1 U = 0, \quad \Sq^2 U = w_2(V)U\,.
\end{equation}
where $U\in H^0((B\O_2)^{V\oplus 3\det(V)};\Z/2)$ is the Thom class. Computing with this quickly gives a complete calculation of the $\cA(1)$-module structure in low degrees. To specify it, though, we must name a few commonly occurring $\cA(1)$-modules.
\begin{defn}\hfill
\begin{enumerate}
    \item The \emph{elephant}, denoted $R_2$, is the kernel of the unique nonzero $\cA(1)$-module homomorphism $\Sigma^{-1}\cA(1)\to\Sigma^{-1}\Z/2$.\footnote{The name ``elephant'' is due to Buchanan--McKean~\cite[Figure 1]{BM23}.}
    \item The \emph{upside-down question mark} is $Q\coloneqq \cA(1)/(\Sq^1, \Sq^2\Sq^3)$.
\end{enumerate}
\end{defn}
\begin{prop}
There is an $\cA(1)$-module isomorphism
\begin{equation}
    H^*((B\O_2)^{V + 3\det(V) - 5};\Z/2) \cong
        \textcolor{BrickRed}{R_2} \oplus
        \textcolor{Green}{\Sigma^3 \cA(1)} \oplus
        \textcolor{MidnightBlue}{\Sigma^4 Q} \oplus 
        \textcolor{Fuchsia}{\Sigma^5\cA(1)} \oplus P,
\end{equation}
where $P$ is concentrated in degrees $7$ and above.
\end{prop}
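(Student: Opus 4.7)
The plan is to use the Thom isomorphism and the Wu/Cartan formulas to compute the $\cA(1)$-action on $\Z/2[w_1, w_2]\cdot U$ explicitly, then exhibit the four named summands by choosing appropriate generators.

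First, the Thom isomorphism identifies $H^*((B\O_2)^{V+3\det(V)-5};\Z/2)$ with the free $H^*(B\O_2;\Z/2)$-module $\Z/2[w_1,w_2]\cdot U$, with $U$ in degree $0$ since the virtual bundle has rank zero. Wu's formula for the rank-$2$ tautological bundle gives $\Sq^1 w_1 = w_1^2$, $\Sq^1 w_2 = w_1 w_2$, $\Sq^2 w_1 = 0$, and $\Sq^2 w_2 = w_2^2$. Combined with the stated values $\Sq^1 U = 0$ and $\Sq^2 U = w_2 U$, the Cartan formula then determines $\Sq^1$ and $\Sq^2$ on every monomial $w_1^i w_2^j U$.

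Next, I would tabulate this action through degree $6$ and identify the four summands. Since $\Sq^1 U = 0$, the submodule generated by $U$ (together with the extra class $w_1 U$ needed to account for a second class in degree $2$) factors through $R_2 = \ker(\Sigma^{-1}\cA(1)\to\Sigma^{-1}\Z/2)$; a direct check matches its Poincaré series $1 + t + 2t^2 + t^3 + t^4 + t^5$ with the orbit. A degree-$3$ class (e.g.\ $w_1^3 U$, possibly plus lower-order corrections) on which $\Sq^1$ and $\Sq^2$ act freely provides the $\Sigma^3\cA(1)$-summand. A degree-$4$ class obtained from $w_2^2 U$ (adjusted so that $\Sq^1$ and $\Sq^2\Sq^3$ annihilate it) realizes $\Sigma^4 Q$, and a remaining degree-$5$ generator produces the last free summand $\Sigma^5\cA(1)$.

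That this really assembles into a direct sum decomposition of $H^*$ modulo a piece $P$ concentrated in degrees $\ge 7$ is a dimension check: $\Z/2[w_1,w_2]\cdot U$ has dimensions $1,1,2,2,3,3,4$ in degrees $0,\dots,6$, which is exactly the sum of the contributions of $R_2,\Sigma^3\cA(1),\Sigma^4 Q,\Sigma^5\cA(1)$ in those degrees; the complement is then an $\cA(1)$-module $P$ living entirely in degrees $\ge 7$.

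The main obstacle will be choosing the generators so that the four candidate submodules form a genuine direct sum rather than a filtration with these subquotients. Because $\Sq^2$ mixes monomials (for instance $\Sq^2(w_1^2 U) = (w_1^4 + w_1^2 w_2)U$), the generators in degrees $3,4,5$ must be adjusted by explicit lower-order terms so that the defining relations of $R_2$ and $Q$ hold on the nose, and so that no accidental relations arise on the free $\cA(1)$-summands in the range under consideration. Once the generators are chosen correctly, the rest of the verification is a routine degree-by-degree check through degree $6$.
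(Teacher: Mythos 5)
Your proposal is correct and follows essentially the same route as the paper: apply the Wu and Cartan formulas to the Thom class action $\Sq^1 U = 0$, $\Sq^2 U = w_2 U$, tabulate the $\cA(1)$-action on $\Z/2[w_1,w_2]\cdot U$ through degree $6$, and match generators ($U, w_1U$ for $R_2$; $w_1^3U$; $w_2^2U$; $w_1w_2^2U$) against the Poincar\'e series $1,1,2,2,3,3,4$. Your dimension bookkeeping is accurate, and your caveat about adjusting generators so the summands split rather than merely filter is exactly the routine verification the paper leaves implicit.
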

We draw a picture of this isomorphism in \cref{fig:spinO2}, left. Next, we look up the Ext groups of these summands to determine the $E_2$-page of the Adams spectral sequence: $\Ext_{\cA(1)}^{s,t}(\textcolor{BrickRed}{R_2}, \Z/2)$ and $\Ext_{\cA(1)}^{s,t}(\textcolor{MidnightBlue}{Q}, \Z/2)$ can be found in~\cite[Figure 29]{BC18}. Using this, we draw the $E_2$-page of the Adams spectral sequence in \cref{fig:spinO2}, right.

\begin{figure}[!ht]
\centering
\begin{subfigure}[c]{0.6\textwidth}
	\begin{tikzpicture}[scale=0.56, every node/.style = {font=\tiny}]
		\foreach \y in {0, 1, ...,11} {
			\node at (-2, \y) {$\y$};
		}
		\begin{scope}[BrickRed]
            \Rtwo{0}{0}{$U$}{$Uw_1$};
	\end{scope}
 \begin{scope}[Green]
		\Aone{3.5}{3}{$Uw_1^3$};
	\end{scope}
     \begin{scope}[MidnightBlue]
        \SpanishQnMark{6}{4}{$Uw_2^2$};
	\end{scope}
    \begin{scope}[Fuchsia]
        \Aone{7.5}{5}{$Uw_1w_2^2$};
    \end{scope}
	\end{tikzpicture}
\end{subfigure}
\begin{subfigure}[!ht]{0.32\textwidth}
\includegraphics[width=46mm,scale=0.7]{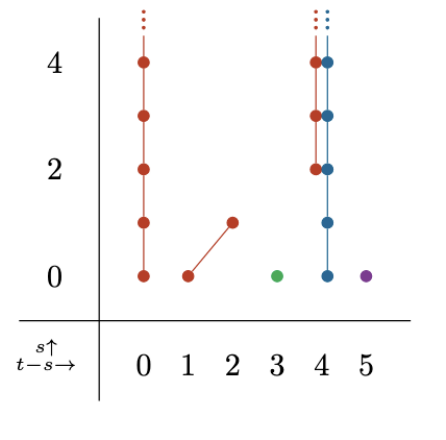}
\end{subfigure}
\caption{Left: The $\cA(1)$-module structure on $H^*((B\O_2)^{V\oplus 3\det(V)-5})$; this summand includes all classes in degree $6$ and below. Right: The $E_2$-page of the Adams spectral sequence computing $2$-completed spin-$\O_2$ bordism. We use this in \cref{lem:bordismspinO2}.}
    \label{fig:spinO2}
\end{figure}

\begin{prop}\label{lem:bordismspinO2}
    There are isomorphisms 
     \begin{equation}
\begin{aligned}
    \Omega_0^{\Spin\!\text{-}\O_2} &\cong \Z\\
    \Omega_1^{\Spin\!\text{-}\O_2}&\cong \Z/2\\
    \Omega_2^{\Spin\!\text{-}\O_2} &\cong \Z/2\\
    \Omega_3^{\Spin\!\text{-}\O_2} &\cong \Z/2\\
    \Omega_4^{\Spin\!\text{-}\O_2} &\cong \Z \oplus \Z\\
    \Omega_5^{\Spin\!\text{-}\O_2} &\cong \Z/2\\
\end{aligned}
 \end{equation}
\end{prop}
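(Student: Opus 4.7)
The plan is to compute $\pi_*\bigl(\MTSpin \wedge (B\O_2)^{V\oplus 3\det(V) - 5}\bigr)$ in low degrees by separating the odd-primary and $2$-primary contributions, following the template already established for Guillou-Marin bordism.

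For the odd-primary part, I would mimic the proof of \cref{GM_odd}: the forgetful map $\phi\colon\MTSpin\to\MTSO$ becomes an equivalence after inverting $2$, so
\begin{equation}
\pi_*\bigl(\MTSpin \wedge (B\O_2)^{V\oplus 3\det(V) - 5}\bigr)\otimes\Z[1/2] \cong \Omega_*^{\SO}\bigl((B\O_2)^{V\oplus 3\det(V) - 5}\bigr)\otimes\Z[1/2].
\end{equation}
Since $\Omega_q^{\SO}\otimes\Z[1/2]$ vanishes in degrees $q<4$ other than $q=0$, and $H_*(B\O_2;\Z[1/2])$ is concentrated in even degrees (in fact $H_{2k}(B\O_2;\Q) \cong \Q$ generated by powers of the Pontryagin class), the Atiyah-Hirzebruch spectral sequence collapses in the range $*\le 5$ and is free of extension problems. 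Combined with the Thom isomorphism (the twist $V\oplus 3\det(V)$ is oriented, being of even total Stiefel-Whitney dimension after base-change), this yields $\Z[1/2]$ in degree $0$, $\Z[1/2]^{\oplus 2}$ in degree $4$, and zero otherwise (for $*\le 5$).

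For the $2$-primary part, I would use the Adams spectral sequence whose $E_2$-page is already displayed on the right of \cref{fig:spinO2}. From the $\cA(1)$-module decomposition and the standard Ext charts for $R_2$, $\cA(1)$, and $Q$ (references to~\cite{BC18} are given above), the $E_2$-page in the range $t-s\le 5$ consists of: an $h_0$-tower in $t-s=0$ coming from $\textcolor{BrickRed}{R_2}$; a single $\Z/2$ in $t-s=1$ and a single $\Z/2$ in $t-s=2$ (also from $\textcolor{BrickRed}{R_2}$); a class in $t-s=3$ from $\textcolor{Green}{\Sigma^3\cA(1)}$; two $h_0$-towers in $t-s=4$ (one from $\textcolor{BrickRed}{R_2}$ and one from $\textcolor{MidnightBlue}{\Sigma^4 Q}$); and a class in $t-s=5$ from $\textcolor{Fuchsia}{\Sigma^5\cA(1)}$. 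I would then check that all potential Adams differentials $d_r$ landing in or leaving this range are blocked: the target/source groups are either zero or the relevant bidegrees rule out differentials by bidegree constraints, exactly as in the GM case.

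The remaining task is resolving extensions. The two $h_0$-towers in $t-s=4$ separately contribute $\Z_2^\wedge$ summands because they are not joined by any $h_0$-multiplication line on the $E_\infty$-page, giving $\Z_2^\wedge\oplus\Z_2^\wedge$; combined with the $\Z[1/2]^{\oplus 2}$ from the odd-primary analysis, this assembles to $\Z\oplus\Z$ in degree $4$. The two isolated $\Z/2$ classes in $t-s=1,2$ and the isolated classes in $t-s=3,5$ give $\Z/2$'s, as there are no hidden extensions possible in those degrees. The $h_0$-tower in $t-s=0$ gives $\Z$, matching the expected orientation class. The main obstacle is verifying that no hidden $h_0$ extension connects the two towers in $t-s=4$ --- I would handle this by exhibiting two geometric generators (for instance, a product involving $\CP^2$ with a suitable $\O_2$-bundle, and a manifold detecting the second tower through its Pontryagin-type characteristic number) whose signatures or related integer invariants are linearly independent, confirming the rank-$2$ free summand.
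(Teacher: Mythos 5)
Your proposal follows essentially the same route as the paper: shear to $\MTSpin\wedge(B\O_2)^{V\oplus 3\det(V)-5}$, dispose of odd torsion by base-changing to $\MTSO$ as in \cref{GM_odd}, and run the Adams spectral sequence off the $\cA(1)$-module decomposition $\textcolor{BrickRed}{R_2}\oplus\textcolor{Green}{\Sigma^3\cA(1)}\oplus\textcolor{MidnightBlue}{\Sigma^4 Q}\oplus\textcolor{Fuchsia}{\Sigma^5\cA(1)}\oplus P$; your reading of the $E_2$-page and of the collapse (free summands split off as wedges of $H\Z/2$'s, so support no differentials) matches \cref{fig:spinO2}. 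One slip: $H_*(B\O_2;\Z[1/2])$ is concentrated in degrees divisible by $4$ (a polynomial algebra on the dual of $p_1$), not in all even degrees --- if $H_2(B\O_2;\Q)$ were $\Q$ as you parenthetically assert, your own degree-$2$ answer would be wrong; as it stands the conclusion you draw is correct. Also, the degree-$4$ extension is forced without exhibiting geometric generators: the rational computation gives rank $2$, and two infinite $h_0$-towers on $E_\infty$ account exactly for $\Z_2^{\wedge}\oplus\Z_2^{\wedge}$ with no room for extra torsion.
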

In degrees $3$ and below, \cref{lem:bordismspinO2} is due to Stehouwer~\cite[\S 4.1]{Ste22}; the rest is new.

We can construct a Smith long exact sequence with $\Spin\text{-}\O_2$ using the notation in \S\ref{subsection:smithLESintro} in the following way: we take the manifold $X= B\O_2$ and the two bundles over $B\O_2$ to be $V$ and $\sigma=\det(V)$ as per \eqref{eq:smithLES}. This gives the sequence\footnote{In previous work \cite[Proof of Proposition A.25]{DYY23}, we also studied an additional Smith long exact sequence involving spin-$\O_2$ bordism.}
\begin{equation}\label{eq:smithSpinO2}
       \ldots \rightarrow \Omega^{\Spin}_{k-1}\left((B\O_2)^{\sigma-1}\right) \rightarrow \Omega^{\Spin}_k(B\O_1) \rightarrow \Omega^{\Spin\text{-}\O_2}_k\xlongrightarrow{S_V} \Omega^{\Spin}_{k-2}\left((B\O_2)^{\sigma-1}\right)\rightarrow \Omega^{\Spin}_{k-1}(B\O_1)\rightarrow \ldots\,,
    \end{equation}
where the sphere bundle of $V\rightarrow B\O_2$ is given by $B\O_1$ and the pullback of $V\oplus 3 \det(V)$ is given by 4 times the canonical bundle over $B\O_1$, which is spin. Applying the Smith map with respect to the bundle $V$ leads to a $(B\O_2,2V+3\sigma)$-twisted spin structure, but since $2V+2\sigma$ is spin, this is equivalent to a $(B\O_2,\sigma)$-twisted spin structure (see e.g.\ \cite[Theorem 1.29]{Deb21}), and gives the fourth entry in \eqref{eq:smithSpinO2}, and matches the target of the map in \eqref{eq:GM}.

By inspecting the groups $\Omega^{\Spin\text{-}\O_2}_k$ in Proposition \ref{lem:bordismspinO2}
we see why $\Omega^{\Spin\text{-}\O_2}_k$ only approximates $\Omega^{\mathrm{GM}}_k$. Nevertheless, we will be able to gain some information on the characteristic LES by studying $(B\O_2, \sigma)$-twisted spin bordism.
\begin{prop}\label{smith_spO2}
There are isomorphisms
\begin{equation}
\begin{aligned}
    \Omega_0^{\Spin}((B\O_2)^{\sigma-1}) &\cong \Z/2\\
    \Omega_1^{\Spin}((B\O_2)^{\sigma-1}) &\cong \Z/2\\
    \Omega_2^{\Spin}((B\O_2)^{\sigma-1}) &\cong \Z\oplus \Z/8\\
    \Omega_3^{\Spin}((B\O_2)^{\sigma-1}) &\cong \Z/2\\
    \Omega_4^{\Spin}((B\O_2)^{\sigma-1}) &\cong 0\\
    \Omega_5^{\Spin}((B\O_2)^{\sigma-1}) &\cong \Z/16.
\end{aligned}
\end{equation}
\end{prop}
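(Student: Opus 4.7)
The plan is to follow the two-step template used in \cref{prop:bordcompGM,lem:bordismspinO2}: compute the odd-primary part by reducing to oriented bordism and running the Atiyah-Hirzebruch spectral sequence, and handle the $2$-primary part with an Adams-style spectral sequence. By \cref{shearcor}, $\Omega_k^{\Spin}((B\O_2)^{\sigma-1}) = \pi_k(\MTSpin\wedge(B\O_2)^{\sigma-1})$, and since $w_1(\sigma) = w_1(V)$ and $w_2(\sigma) = 0$, a $(B\O_2, \sigma)$-twisted spin structure is equivalent to a $(B\O_2, w_1(V), 0)$-twisted spin structure in the sense of \cref{nonVB}.

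For the odd-primary part I use that $\MTSpin\to \MTSO$ is a $\Z[1/2]$-equivalence, reducing to $\Omega_*^{\SO}((B\O_2)^{\sigma-1})\otimes \Z[1/2]$. The twisted Thom isomorphism identifies $H_*((B\O_2)^{\sigma-1};\Z[1/2])$ with $H_*(B\O_2;\Z[1/2]_{w_1(\sigma)})$. Pulling back through the double cover $B\SO_2\to B\O_2$ and taking invariants for the combined deck-plus-sign $\Z/2$-action, these classes are precisely the odd powers of the Euler class $e\in H^2(B\SO_2;\Z)$, giving $\Z[1/2]$ in degrees $2, 6, 10, \ldots$ and zero otherwise. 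Since $\Omega_*^{\SO}\otimes \Z[1/2]$ is $\Z[1/2]$ exactly in degrees $0, 4, 8, \ldots$, the Atiyah-Hirzebruch spectral sequence collapses in the range $k\le 5$ to a single $\Z[1/2]$-summand in degree $2$, accounting for the free $\Z$ summand in $\Omega_2^{\Spin}((B\O_2)^{\sigma-1})$.

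For the $2$-primary part I apply the Baker-Lazarev Adams spectral sequence (\cref{BL_thm}) to $M^\ko(\widehat A\circ f_{w_1,0})$, which agrees with $\MTSpin\wedge(B\O_2)^{\sigma-1}$ through degree $7$ by the $7$-connectivity of $\widehat A$. By \cref{DY23_main_thm} the input is the $\cA(1)$-module $V_\ko(w_1, 0, B\O_2) = \Z/2[w_1, w_2]\cdot Q$ with the Wu-formula action $\Sq^1(Qx) = Q(w_1x + \Sq^1 x)$ and $\Sq^2(Qx) = Q(w_1\Sq^1 x + \Sq^2 x)$. A direct computation shows that the submodule generated by $Q$ is three-dimensional, spanned by $\{Q, Qw_1, Qw_1^3\}$ with $\Sq^1 Q = Qw_1$ and $\Sq^2 Qw_1 = Qw_1^3$; extending this analysis to the successive generators $Qw_2$, $Qw_1^2$, $Qw_1w_2$, $Qw_2^2, \ldots$ through degree $6$, I would decompose $V_\ko(w_1, 0, B\O_2)$ in this range as a direct sum of standard $\cA(1)$-modules such as $R_2$, $M_0$, $M_1$, shifted free summands, the question mark $Q$, and suspensions of $\Z/2$, whose $\Ext$ groups are tabulated in \cite{BC18}.

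Once the module decomposition is pinned down, the $E_2$-page is a direct sum of known Ext charts, and exactly as in \cref{prop:bordcompGM,lem:bordismspinO2} there is no room for Adams differentials in the range $t-s\le 5$ for degree reasons. The remaining work is to resolve the hidden $h_0$-extensions that produce the $\Z/8$ in degree $2$ and the $\Z/16$ in degree $5$. I expect the main technical obstacle to be the careful identification of $\cA(1)$-module summands in degrees $4$--$6$, since the $w_1$-twist mixes monomials of differing $w_1$-weight; as a consistency check, the computed groups should fit into the Smith long exact sequence~\eqref{eq:smithSpinO2} together with the known values $\Omega_*^{\Spin\!\text{-}\O_2}$ (\cref{lem:bordismspinO2}) and $\Omega_*^{\Spin}(B\O_1)$.
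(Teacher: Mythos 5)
Your overall framework is sound, and the odd-primary half of the argument is correct and in fact more explicit than what the paper records: reducing to $\MTSO$, identifying $H_*(B\O_2;\Z[1/2]_{w_1})$ with the anti-invariants $\Z[1/2]\{e,e^3,\dots\}$ of $H_*(B\SO_2;\Z[1/2])$, and collapsing the Atiyah--Hirzebruch spectral sequence does account for exactly one $\Z$ in degree $2$ and no other free or odd-torsion contributions through degree $5$. The translation of a $(B\O_2,\sigma)$-twisted spin structure into the twist $(w_1,0)$ and the choice of the Baker--Lazarev spectral sequence with input $V_\ko(w_1,0,B\O_2)$ are also correct.

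The gap is in the $2$-primary computation, where the $\cA(1)$-module decomposition \emph{is} the entire content and is not carried out; worse, the one concrete step you do take points in the wrong direction. The submodule generated by the Thom class $Q$ is indeed the three-dimensional question mark $\{Q,Qw_1,Qw_1^3\}$, but it is \emph{not} a direct summand: since $\Sq^1(w_1^2)=0$, one has $\Sq^1(Qw_1^2)=Qw_1^3$, so any complementary submodule could only contain $Qw_2$ in degree $2$, which is impossible for dimension reasons. The indecomposable piece containing $Q$ is the full infinite span of $\set{Qw_1^k: k\ge 0}$, i.e.\ $H^*((B\O_1)^{\sigma-1};\Z/2)$, and its Ext chart is what produces the $\Z/2,\Z/2,\Z/8$ in degrees $0$--$2$ (the \pinm pattern). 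This module is absent from your candidate list ($R_2$, $M_0$, $M_1$, free modules, question marks, suspensions of $\Z/2$); substituting the finite question mark for it would give the wrong groups already in degree $2$. The paper avoids this entirely by invoking the spectrum-level splitting $(B\O_2)^{\sigma-1}\simeq (B\O_1)^{\sigma-1}\vee M$ of \cite[Lemma 3.30]{Deb21} (induced by the determinant map), identifying $\Omega_*^\Spin((B\O_1)^{\sigma-1})\cong\Omega_*^{\Pin^-}$ with the known values $\Z/2,\Z/2,\Z/8,0,0,0$, and then computing only the small complementary module $H^*(M;\Z/2)\cong\Sigma^2 Q\oplus\Sigma^3\cA(1)$ (upside-down question mark on $Uw_2$, free on $Uw_1w_2$) through degree $5$; the $\Sigma^2 Q$ summand supplies the $h_0$-tower in degree $2$ and the $\Z/16$ in degree $5$. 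To repair your argument you would need to either prove this splitting at the module level (separating $\set{Qw_1^k}$ as a summand before decomposing the rest) or cite it, and then finish the decomposition of the complement in degrees $4$--$6$.
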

\begin{proof}
By~\cite[Lemma 3.30]{Deb21}, there is a splitting
\begin{equation}
    (B\O_2)^{\sigma-1} \overset\simeq\longrightarrow (B\O_1)^{\sigma-1}\vee M
\end{equation}
for a spectrum $M$ whose mod $2$ cohomology is, as an $\cA(1)$-module, isomorphic to a complementary subspace of $H^*((B\O_2)^{\sigma-1};\Z/2)$ to the one spanned by $\{Uw_1^k: k\ge 0\}$. Since $\Omega_*^{\Spin}((B\O_1)^{\sigma-1})\cong\Omega_*^{\Pin^-}$~\cite[\S 7]{Pet68}, we will focus on computing $\Omega_*^{\Spin}(M)$, then at the end direct-sum on \pinm bordism groups, which are computed in~\cite{ABP69}.

There is no odd-primary torsion in $\Omega_*^{\Spin}(M)$, and the proof is similar to previous cases in this paper. At $p = 2$, we again use the Adams spectral sequence. Using the description of $H^*(M;\Z/2)$ in the previous paragraph, we obtain an $\cA(1)$-module isomorphism
\begin{equation}
    H^*(M;\Z/2) \cong \textcolor{BrickRed}{\Sigma^2 Q} \oplus \textcolor{MidnightBlue}{\Sigma^3\cA(1)} \oplus P,
\end{equation}
where $P$ is concentrated in degrees $6$ and above. We draw this in \cref{pinmbutO2}, left. We have already seen the Ext groups of $\textcolor{BrickRed}{Q}$ and $\textcolor{MidnightBlue}{\cA(1)}$, so we can draw the $E_2$-page of the Adams spectral sequence in \cref{pinmbutO2}, right. Margolis' theorem~\cite{Mar74} implies this spectral sequence collapses in the range depicted, and all extension problems are solved by the $h_0$-action on the $E_\infty$-page, finishing the proof.
\end{proof}
\begin{figure}[!ht]
\centering
\begin{subfigure}[c]{0.3\textwidth}
    \begin{tikzpicture}[scale=0.6, every node/.style = {font=\tiny}]
	\foreach \y in {2, 3, ...,9} {
		\node at (-2, \y) {$\y$};
    }
    \begin{scope}[BrickRed]
        \SpanishQnMark{0}{2}{$Uw_2$};
    \end{scope}
    \begin{scope}[MidnightBlue]
        \Aone{1.75}{3}{$Uw_1w_2$};
    \end{scope}
    \end{tikzpicture}
\end{subfigure}
\begin{subfigure}[c]{0.5\textwidth}
\includegraphics[width=60mm,scale=.9]{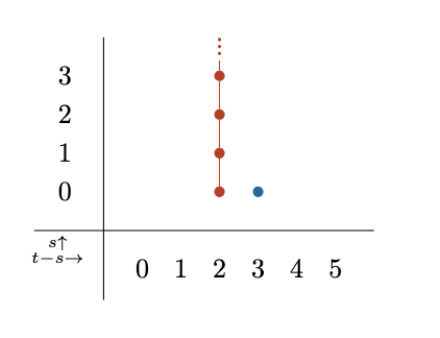}
\end{subfigure}
\caption{Left: The $\cA(1)$-module structure on $H^*((B\O_2)^{\sigma-1};\Z/2)/H^*((B\O_1)^{\sigma-1}; \Z/2)$; this summand includes all classes in
degree $5$ and below. Right: The $E_2$-page of the corresponding Adams spectral sequence, which computes a summand of the $2$-completion of $\Omega_*^{\Spin}((B\O_2)^{\sigma-1})$. We use this in \cref{smith_spO2}.}
\label{pinmbutO2}
\end{figure}
\begin{thm}
\label{GM_LES_thm}
Assume \cref{char_conj}, so that we may let $F_{\mathrm{GM}}$ denote the fiber of the characteristic map of spectra $\mathit{MTGM}\to\Sigma^2 \MTSpin\wedge (B\O_2)^{\sigma-1}$. Then there are isomorphisms
\begin{equation}
\begin{alignedat}{2}
    \pi_0(F_{\mathrm{GM}}) &\cong \Z\qquad\qquad & \pi_3(F_{\mathrm{GM}}) &\cong 0\\
    \pi_1(F_{\mathrm{GM}}) &\cong \Z/2\qquad\qquad & \pi_4(F_{\mathrm{GM}}) &\cong \Z\oplus\Z/2\\
    \pi_2(F_{\mathrm{GM}}) &\cong \Z/2\qquad\qquad & \pi_5(F_{\mathrm{GM}}) &\cong 0.
\end{alignedat}
\end{equation}
In addition, the characteristic long exact sequence is as given in \cref{GM_char_LES}.
\end{thm}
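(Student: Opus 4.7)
The plan is to apply the long exact sequence of homotopy groups associated to the fiber sequence $F_{\mathrm{GM}} \to \mathit{MTGM} \xrightarrow{\mathcal R} \Sigma^2 \MTSpin \wedge (B\O_2)^{\sigma-1}$ provided by \cref{char_conj}, and then feed in the bordism groups computed in \cref{prop:bordcompGM,smith_spO2}. Since $\pi_k(\Sigma^2 \MTSpin \wedge (B\O_2)^{\sigma-1}) = \Omega_{k-2}^{\Spin}((B\O_2)^{\sigma-1})$, unwinding the suspension gives
\begin{equation}
\cdots \to \Omega_{k-1}^{\Spin}((B\O_2)^{\sigma-1}) \to \pi_k(F_{\mathrm{GM}}) \to \Omega_k^{\mathrm{GM}} \xrightarrow{R_*} \Omega_{k-2}^{\Spin}((B\O_2)^{\sigma-1}) \to \pi_{k-1}(F_{\mathrm{GM}}) \to \cdots
\end{equation}

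For $k \in \{0,1,2,5\}$, the vanishing of $\Omega_k^{\mathrm{GM}}$ in degrees $1$, $2$, $3$, $5$ and of $\Omega_k^{\Spin}((B\O_2)^{\sigma-1})$ in negative degrees immediately isolates $\pi_k(F_{\mathrm{GM}})$. Substitution gives $\pi_0(F_{\mathrm{GM}}) \cong \Omega_0^{\mathrm{GM}} = \Z$, $\pi_1(F_{\mathrm{GM}}) \cong \Omega_0^{\Spin}((B\O_2)^{\sigma-1}) = \Z/2$, $\pi_2(F_{\mathrm{GM}}) \cong \Omega_1^{\Spin}((B\O_2)^{\sigma-1}) = \Z/2$, and $\pi_5(F_{\mathrm{GM}}) \cong \Omega_4^{\Spin}((B\O_2)^{\sigma-1}) = 0$.

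Everything then reduces to analyzing the single map $R_* : \Omega_4^{\mathrm{GM}} \cong \Z^2 \to \Omega_2^{\Spin}((B\O_2)^{\sigma-1}) \cong \Z \oplus \Z/8$, since $\pi_3(F_{\mathrm{GM}}) = \mathrm{coker}(R_*)$ and $\pi_4(F_{\mathrm{GM}})$ fits into the extension
\begin{equation}
0 \to \Omega_3^{\Spin}((B\O_2)^{\sigma-1}) = \Z/2 \to \pi_4(F_{\mathrm{GM}}) \to \ker(R_*) \to 0.
\end{equation}
Granting surjectivity of $R_*$, we obtain $\pi_3(F_{\mathrm{GM}}) = 0$; the kernel $\ker(R_*)$ is a torsion-free subgroup of $\Z^2$ of rank $1$, hence $\cong \Z$; and the extension of $\Z$ by $\Z/2$ splits because $\Z$ is free, yielding $\pi_4(F_{\mathrm{GM}}) \cong \Z \oplus \Z/2$.

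The hard part is therefore verifying that $R_*$ is surjective in degree $4$. Geometrically $R_*$ sends a GM pair $(M,F)$ to $F$ with its normal $\O_2$-bundle $\nu_F$; the $\Z$ summand of $\Omega_2^{\Spin}((B\O_2)^{\sigma-1})$ is detected by the Euler number $e(\nu_F) = F \cdot F$, while the $\Z/8$ summand is the image of $\Omega_2^{\Pin^-}$ under the splitting $(B\O_2)^{\sigma-1} \simeq (B\O_1)^{\sigma-1} \vee M$ used in the proof of \cref{smith_spO2}, and is detected by the Arf--Brown--Kervaire invariant of the induced \pinm structure on $F$ when $F$ is nonorientable. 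The Freedman--Kirby pair $(\CP^2, \CP^1)$ has self-intersection $1$ and hence hits the generator of the $\Z$ summand, while a GM pair with $F = \RP^2$ embedded as a characteristic surface (for instance, the standard $\RP^2 \hookrightarrow \CP^2$) carries the Arf--Brown--Kervaire generator of $\Omega_2^{\Pin^-} = \Z/8$. Alternatively, surjectivity can be deduced by comparing with the spin-$\O_2$ Smith long exact sequence~\eqref{eq:smithSpinO2} via the map induced by the zero section $B\O_2 \to M\O_2$, which yields a natural transformation of long exact sequences whose right-hand column is the identity on $\Omega_{k-2}^{\Spin}((B\O_2)^{\sigma-1})$, so surjectivity of the Smith map in degree $4$ descends to surjectivity of $R_*$. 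Either approach completes the proof and produces the characteristic long exact sequence displayed in \cref{GM_char_LES}.
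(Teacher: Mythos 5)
Your main line of argument is correct and is essentially the paper's: run the long exact sequence of homotopy groups of the fiber sequence with the inputs from \cref{prop:bordcompGM} and \cref{smith_spO2}, note that degrees $0,1,2,5$ are forced immediately, and observe that degrees $3$ and $4$ reduce to the single map $R_4\colon\Omega_4^{\mathrm{GM}}\cong\Z^2\to\Omega_2^{\Spin}((B\O_2)^{\sigma-1})\cong\Z\oplus\Z/8$. You make explicit a point the paper's proof leaves implicit (surjectivity of $R_4$ is only asserted in \cref{GM_char_LES}; it does not follow from exactness alone and must be checked on generators, as the paper does explicitly in the analogous $\mathrm{KT}^+$ case). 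Your generator argument is sound: $(\CP^2,\CP^1)$ maps to a class that is an odd multiple of the generator of the $\Z$ summand (it has Adams filtration $0$ because $\int_{\CP^1}w_2(\nu_F)=e(\nu_F)\bmod 2\ne 0$) and is trivial in the $\Omega_2^{\Pin^-}\cong\Z/8$ summand since $S^2$ bounds as a spin manifold, while a characteristic $\RP^2$ (in $S^4$ or $\CP^2$) maps to a unit of $\Z/8$ --- the Arf--Brown--Kervaire invariant of any pin$^-$ structure on $\RP^2$ is $\pm 1$ --- together with an even multiple of the $\Z$ generator; the resulting two classes visibly generate $\Z\oplus\Z/8$. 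The remaining bookkeeping ($\ker R_4\cong\Z$, splitting of the extension by $\Z/2$) is correct.

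However, your fallback argument via the spin-$\O_2$ Smith long exact sequence is wrong and should be deleted. Its premise --- that the Smith map $S_V\colon\Omega_4^{\Spin\text{-}\O_2}\to\Omega_2^{\Spin}((B\O_2)^{\sigma-1})$ is surjective --- is false: by exactness of \eqref{eq:smithSpinO2}, $\mathrm{coker}(S_V)$ in this degree is the kernel of the map $\Omega_3^{\Spin}(B\O_1)\cong\Z/8\to\Omega_3^{\Spin\text{-}\O_2}\cong\Z/2$ (see \cref{lem:bordismspinO2}), which has order at least $4$. Consequently the comparison along the zero section $B(\Spin\text{-}\O_2)\to B\mathrm{GM}$ cannot establish surjectivity of $R_4$; the failure of $S_V$ to be surjective here is precisely the sense in which the Smith sequence only \emph{approximates} the characteristic sequence (compare \cref{KT_correction}). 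Only the direct generator computation closes the gap.
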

\begin{proof}
We will use the characteristic long exact sequence to compute $\pi_*(F_{\mathrm{GM}})$ in degrees $5$ and below. To do so, we need $\Omega_*^{\mathrm{GM}}$ in degrees $5$ and below, which we computed in \cref{prop:bordcompGM}, and $\Omega_*^\Spin((B\O_2)^{\sigma-1})$ in degrees $4$ and below, which we computed in \cref{smith_spO2}. Then, the theorem follows by exactness of the characteristic long exact sequence.
\end{proof}

\begin{figure}[!ht]
    \centering
\begin{tikzcd}
	{k} & {\pi_k(F_{\mathrm{GM}})} & {\Omega_k^{\mathrm{GM}}} & {\Omega_{k-2}^{\Spin}((B\O_2)^{\sigma-1})} \\
	0 & \Z \arrow[r,"\cong"]& {\Z} & 0 \\
	1 & {\Z/2} & 0 & 0 \\
	2 & {\Z/2}  & 0 & {\Z/2}\arrow[from=4-4,to=3-2,in=-150, out=30,"\cong",swap] \\
	3 & {0} & 0 & {\Z/2} \arrow[from=5-4,to=4-2,in=-150, out=30,"\cong",swap]\\
	4 & {\Z \oplus\Z/2}\arrow[r,"\phi"]
    & {\Z^2}\arrow[r, ->>] & {\Z\oplus\Z/8} \\
	5 & {0} & {0} & {\Z/2} \arrow[from=7-4,to=6-2,in=-150, out=30,"{(0, 1)}"]\\
	6 & {} & {} & 0\\
\end{tikzcd}
    \caption{The characteristic long exact sequence for Guillou--Marin's characteristic structure, which we prove in \cref{GM_LES_thm}. The map $\phi$ sends $(1,0)\mapsto (0,8)$ and $(0,1)\mapsto 0$.}
    \label{GM_char_LES}
\end{figure}
After applying Anderson duality to this long exact sequence, we obtain a long exact sequence of invertible field theories describing anomalies of field theories with Guillou--Marin structure and their relationships with anomalies on the respective defect theories.
\begin{cor}
\label{GM_physics}
Assuming \cref{char_conj}, consider a $k$-dimensional field theory on manifolds with Guillou--Marin structure. For $k = 0$, $1$, and $3$, the defect map $\mho_\Spin^{k-1}((B\O_2)^{\sigma-1})\to\mho_{\mathrm{GM}}^{k+1}$ is $0$. For $k= 2$, the map is nonzero and there is a $\Z$-valued obstruction to pulling the anomaly of the theory back to the defect.
\end{cor}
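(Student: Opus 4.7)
The plan is to derive \cref{GM_physics} from \cref{GM_LES_thm} by Anderson-dualizing the characteristic long exact sequence displayed in \cref{GM_char_LES} and then reading off the image of the defect map in each degree. Applying the functor $I_\Z$ to the cofiber sequence $F_{\mathrm{GM}}\to \mathit{MTGM} \to \Sigma^2(\MTSpin\wedge (B\O_2)^{\sigma-1})$ reverses arrows; combining with the identification $\mho^k_\xi = \pi_{-k-1}(I_\Z\mathit{MT\xi})$ and the suspension shift by $\Sigma^{-2}$ produces, in each degree, a map of $\mho$-groups of the form
\begin{equation*}
    \mho^{k-1}_\Spin((B\O_2)^{\sigma-1}) \xlongrightarrow{D} \mho^{k+1}_\mathrm{GM},
\end{equation*}
which is the defect anomaly map of the statement. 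I would then evaluate both source and target via the universal coefficient sequence $0\to\Ext(\Omega^\xi_n,\Z)\to\mho^n_\xi\to\Hom(\Omega^\xi_{n+1},\Z)\to 0$ using the bordism groups computed in \cref{prop:bordcompGM} and \cref{smith_spO2}.

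For $k = 0, 1$, and $3$, inspection of \cref{prop:bordcompGM} shows that $\Omega^\mathrm{GM}_{k+1}$ is either $0$ or torsion-free and $\Omega^\mathrm{GM}_{k+2}$ is $0$, so the UCT forces $\mho^{k+1}_\mathrm{GM} = 0$ and $D$ vanishes for trivial reasons. The one interesting degree is $k=2$, where $\mho^3_\mathrm{GM} \cong \Hom(\Omega^\mathrm{GM}_4, \Z) = \Z^2$ and $\mho^1_\Spin((B\O_2)^{\sigma-1})$ is an extension of $\Hom(\Z\oplus\Z/8,\Z) = \Z$ by $\Ext(\Z/2,\Z) = \Z/2$. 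Here the key step is naturality of the Anderson UCT: the projections to the $\Hom$-quotients intertwine $D$ with $\Hom(R_4, \Z)$, where $R_4\colon \Z^2 \twoheadrightarrow \Z\oplus\Z/8$ is the surjection recorded in \cref{GM_char_LES}. Because dualizing a surjection of finitely generated abelian groups yields an injection, $\Hom(R_4,\Z)\colon \Z\hookrightarrow\Z^2$ is injective, so the image of $D$ contains a $\Z$ and its cokernel has rank one; this free $\Z$ in the cokernel is the claimed obstruction to lifting a bulk anomaly to a defect anomaly.

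The only real obstacle is bookkeeping: tracking the degree shifts between bordism, its Anderson dual, the $\Sigma^2$ appearing in $\mathcal R$, and the physics convention that the anomaly of a $k$-dimensional field theory is classified by $\mho^{k+1}$. Once these conventions are pinned down consistently, the proof reduces to elementary homological algebra applied to the maps already determined in \cref{GM_LES_thm}, with no further spectral-sequence computations needed.
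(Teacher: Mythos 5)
Your proposal is correct and follows exactly the route the paper intends (the paper leaves the proof implicit, simply stating that one applies Anderson duality to the long exact sequence of \cref{GM_char_LES}): the degree bookkeeping, the vanishing of $\mho^{k+1}_{\mathrm{GM}}$ for $k=0,1,3$ via the universal coefficient sequence, and the identification of $D$ with $\Hom(R_4,\Z)$ on the $\Hom$-quotients for $k=2$ are all as needed, and since $\ker R_4 = \{0\}\times 8\Z$ the image of $\Hom(R_4,\Z)$ is a rank-one direct summand of $\Z^2$, giving cokernel $\Z$. No gaps.
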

We would be interested in understanding this $2$-dimensional obstruction more explicitly in example field theories. Because the obstruction group is torsion-free, the effect on anomalies should be visible at the perturbative level, and thus may be easier to detect than in other examples.

\begin{rem}\label{KT_correction}
A version of a characteristic long exact sequence was proposed in \cite[Remark 6.15]{KT90} where it was stated to take the form
 \begin{equation}\label{eq:GMlongexact}
       \ldots \rightarrow \Omega^{\Spin}_{k-1}\left((B\O_2)^{\sigma-1}\right) \rightarrow \Omega^{\Pin^-}_k \rightarrow \Omega^{\mathrm{GM}}_k\xlongrightarrow{R} \Omega^{\Spin}_{k-2}\left((B\O_2)^{\sigma-1}\right)\rightarrow \Omega^{\Pin^-}_{k-1}\rightarrow \ldots\,
    \end{equation}
However, this is not exact as can be seen by tensoring with $\mathbb{Q}$. What the sequence in \eqref{eq:smithSpinO2} offers is an approximation to such a sequence.
\end{rem}
\begin{rem}[Spin-$\O_2$ bordism in higher degrees]
It is possible to completely solve spin-$\O_2$ bordism in all degrees by borrowing a strategy used by Buchanan--McKean~\cite{BM23} and Mills~\cite{Mil24} to compute spin\textsuperscript{$h$} bordism groups. As input, one shows that $H^*((B\O_2)^{V + 3\det(V) -5};\Z/2)$ is a direct sum of a copy of $\Sigma^{8k}R_2 \oplus \Sigma^{8k+4}Q$ for all $k\ge 0$, together with a free $\cA(1)$-module with degreewise finite rank. Combining this with Anderson--Brown--Peterson's determination of $H^*(\MTSpin;\Z/2)$~\cite{ABP67}, one has a complete description of $H^*(\mathit{MT}(\Spin\text{-}\O_2);\Z/2)$ and can apply a theorem of Stolz~\cite[Theorem 4.1]{Sto94} to decompose $\mathit{MT}(\Spin\text{-}\O_2)$ into well-understood spectra and therefore compute spin-$\O_2$ bordism groups in arbitrarily high degrees.

Following Anderson--Brown--Peterson~\cite{ABP66, ABP67} and Buchanan--McKean~\cite{BM23}, one can show that this implies there exists a collection of twisted $\mathit{KO}$- and $\mathit{KSp}$-cohomology characteristic classes such that the corresponding characteristic numbers, together with Stiefel-Whitney numbers, detect spin-$\O_2$ bordism. For example, for $k = 0,1,2,4$, $\Omega_k^{\Spin\text{-}\O_2}$ is detected by twisted $\mathit{KO}$- and $\mathit{KSp}$-characteristic numbers corresponding to the $\textcolor{BrickRed}{R_2}$ and $\textcolor{MidnightBlue}{\Sigma^4 Q}$ summands in \cref{fig:spinO2}, respectively; $\Omega_3^{\Spin\text{-}\O_2}$ is detected by the Stiefel-Whitney number $(M,E)\mapsto \int_M w_1(E)^3$, where $E\to M$ is the $\O_2$-bundle associated to the spin-$\O_2$ structure.
\end{rem}

\subsection{\texorpdfstring{Kirby--Taylor$^{-}$ Characteristic Bordism}{}}\label{subsection:KTcharBord}
In this subsection, we study the bordism groups with respect to the characteristic structure introduced in \cref{def:KT}. 

The obstruction to finding a submanifold $i:F\hookrightarrow M$ with respect to the $\mathrm{KT}^{\pm}$ structures, by \cref{construction:PT}, corresponds to the obstruction to finding a lift $f$ in the following diagram,  
\begin{equation}
    \begin{tikzcd}
        & &M\O_2 \arrow[d, "U"]\\
        B\O \arrow[rr,"\mathcal{P}^{\pm}",swap ] \arrow[urr,dotted,"f"] & & K(\Z/2,2)\,.
    \end{tikzcd}
\end{equation}
where $\mathcal{P}^-=w_2(TM)+w_1(TM)^2$, $\cP^+=w_2(TM)$ and $U$ is the Thom class for $M\O_2$. Kirby--Taylor characteristic pairs fit into the following pullback square:
\begin{equation}\label{eq:KTpullback}
    \begin{tikzcd}[column sep=2cm, row sep=2cm]
        B\mathrm{KT}^{\pm} \arrow[r] \arrow[d] & M\O_2 \arrow[d,"U"]\\
        B\O \arrow[r,"\mathcal{P}^{\pm}", swap] & K(\Z/2,2)\,.
        \arrow["\lrcorner"{anchor=center, pos=0.125}, draw=none, from=1-1, to=2-2]
    \end{tikzcd}
\end{equation}
The data of the map consist of three pieces:
\begin{itemize}
\item The tangent bundle of $M$,
\item A map $M\rightarrow M\O_2$, which gives the submanifold $F$,
\item The identification $\mathcal{P}=U$, such that we can (and do) choose a pin$^\pm$ structure on $TM\oplus U$. 
\end{itemize}
This is the twisted pin$^\pm$ structure twisted by $U$. 
\begin{prop}\label{prop:KT-bord}
    The tangential structure of $\mathrm{KT}^{-}$-characteristic pairs is equivalent to an $(M\O_2,U)$-twisted pin$^{-}$-structure.\footnote{Analogously to \cref{nonVB}, an $(X, b)$-twisted \pinm structure on a vector bundle $E\to M$ is the data of a map $f\colon E\to X$ and a trivialization of $w_2(E) + w_1(E)^2 + f^*(b)$.}
\end{prop}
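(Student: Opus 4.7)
The plan is to unpack the definition of the classifying space $B\mathrm{KT}^-$ via the pullback square~\eqref{eq:KTpullback} and match the resulting data with the non-vector-bundle form of a twisted \pinm structure, as in the footnote to the statement (an analogue of \cref{nonVB}). The argument is essentially formal, so the work lies in being precise about how a ``commuting square'' becomes a trivialization.

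First I would note that a map $M\to B\mathrm{KT}^-$ is, by the universal property of the pullback in~\eqref{eq:KTpullback}, the data of a map $g\colon M\to B\O$ (which is always present, canonically, as the classifying map of $TM$), a map $f\colon M\to M\O_2$, and a homotopy between the two induced maps $M\to K(\Z/2, 2)$, namely $\mathcal P^-\circ g$ and $U\circ f$. Since $K(\Z/2, 2)$ represents $H^2(-;\Z/2)$, such a homotopy is the same data as a null-homotopy of the difference $\mathcal P^-(TM) + f^*(U) = w_2(TM) + w_1(TM)^2 + f^*(U)$ (the signs coincide mod $2$), i.e.\ a trivialization of this class in $H^2(M;\Z/2)$.

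Second, I would compare this to the definition recalled in the footnote: an $(M\O_2, U)$-twisted pin$^{-}$-structure on $TM$ is precisely a map $f\colon M\to M\O_2$ together with a trivialization of $w_2(TM) + w_1(TM)^2 + f^*(U)$. This matches the data extracted from the pullback square bijectively, with the canonical map $g$ from the $B\mathrm{KT}^-$ side being absorbed into the ambient data $(M, TM)$ on the twisted-pin side. The equivalence is natural in $M$, so it also induces the evident equivalence of classifying spaces $B\mathrm{KT}^- \simeq B\O \times_{K(\Z/2,2)} M\O_2$, which is the shearing-type presentation used elsewhere in \S\ref{subsec:tangential}.

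I do not expect any real obstacle: the only subtlety is making sure the trivialization extracted from the homotopy in the pullback agrees with the trivialization appearing in the definition of a twisted pin$^{-}$-structure, including the torsor structure over $H^1(M;\Z/2)$. Both sides are torsors for $H^1(M;\Z/2)$ in the obvious way (changing the homotopy by an element of $\pi_1\mathrm{Map}(M, K(\Z/2,2)) = H^1(M;\Z/2)$ on one side; changing the trivialization by an element of $H^1(M;\Z/2)$ on the other), and the identification above is equivariant, completing the equivalence. The analogous argument for $\mathrm{KT}^+$ (replacing $\mathcal P^-$ by $\mathcal P^+ = w_2(TM)$) will give \cref{prop:KT+structures}, the only change being that one trivializes $w_2(TM) + f^*(U)$ and interprets this as the Wang-style $(M\O_2, w_1, w_1^2 + U)$-twisted spin structure recorded in Table~\ref{tab:results1}.
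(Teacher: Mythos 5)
Your argument is correct and is essentially the paper's own: the paper likewise reads the three pieces of data off the pullback square~\eqref{eq:KTpullback} (the tangent bundle, the map $M\to M\O_2$, and the identification $\mathcal P^- = f^*U$) and observes that the identification is exactly a trivialization of $w_2(TM)+w_1(TM)^2+f^*(U)$, i.e.\ the $(M\O_2,U)$-twisted \pinm structure of the footnote; your explicit check of the $H^1(M;\Z/2)$-torsor equivariance is a welcome precision that the paper leaves implicit. (One peripheral slip in your closing aside: the $\mathrm{KT}^+$ twist recorded in Table~\ref{tab:results1} is over $B\O_1\times M\O_2$, not $M\O_2$ alone.)
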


For the purpose of computation, we can translate these data into the following twisted spin structure. We first unfold the computations in the case where $\mathcal{P}=w_2(TM)+w_1(TM)^2$. As a twisted spin structure, the diagram in \eqref{eq:KTpullback} can equivalently be expressed as the pullback square:
\begin{equation}\label{eq:KTtwistedspin}
    \begin{tikzcd}[column sep=2cm, row sep=2cm]
 B\mathrm{KT}^{-} \arrow[r,"\sigma"] \arrow[d] \arrow[dr, phantom, "\lrcorner", very near start] & M\O_2 \times B\O_1 \arrow[d, "{(U+w_1^2(\sigma),w_1(\sigma))}"] \\
B\O \arrow[r, "{(w_2,w_1)}",swap] & K(\Z/2,2) \times K(\Z/2,1)\,,
\end{tikzcd}
\end{equation}
The data of KT$^-$ characteristic pairs thus consists of the following:
\begin{itemize}
    \item a map to $M\O_2$, and a principal $\Z/2$-bundle $\sigma$ which is the tautological bundle over $B\O_1 \cong \mathbb{R}P^\infty$,
    \item a map to $B\O$ given by a vector bundle $V$,
    \item an identification $w_1(V) = w_1(\sigma)$\,,
    \item an identification $U= w_2(V)+w_1(\sigma)^2$, i.e. $U=w_2(V)+w_1(V)^2$. 
\end{itemize}

Along the manifold $F$ we have the decomposition $TM|_F= TF \oplus \nu_F$  where the normal bundle $\nu_F$ is a rank $2$ vector bundle and $w_2(TM) = w_2(\nu_F)+w_1(\nu_F)^2$ by \cref{cor:atoN}, and $w_1(TM) = w_1(\sigma)$ since $M$ was unoriented. Then by the Whitney sum formula we obtain the tangential structure on $F$:


\begin{lem}\label{KT_minus_F_str}
 Let $V$ be a $B\O_2$-bundle and $\sigma$ be a $B\O_1$-bundle.  The submanifold $F$ in a KT$^-$ characteristic pair has a twisted spin structure with twistings given by:
   \begin{equation}\label{eq:KTonFminus}
   \begin{aligned}
    w_1(TF) &= w_1(\sigma)+w_1(V)\,,\\
    w_2(TF) &= w_1(\sigma)^2+w_1(\sigma)w_1(V)+w_1(V)^2\,.
\end{aligned}
\end{equation}
\end{lem}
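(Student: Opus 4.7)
The plan is to apply the Whitney sum formula to the splitting $TM|_F = TF \oplus \nu_F$ and solve for $w_1(TF)$ and $w_2(TF)$ in terms of the $\mathrm{KT}^-$ data. To do so, I need to express both $w_i(TM)|_F$ and $w_i(\nu_F)$ in terms of the pulled-back classes $w_1(\sigma)$, $w_1(V)$, and $w_2(V)$ on $F$.

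First I would identify the normal bundle. Because the $B\O_2$-factor of the $\mathrm{KT}^-$-structure arises from the Pontryagin-Thom construction (\cref{construction:PT}) applied to the map $M\to M\O_2$, the submanifold $F$ is the transverse preimage of the zero section $B\O_2\hookrightarrow M\O_2$, and its normal bundle is the pullback of the tautological rank-$2$ bundle. Thus $\nu_F\cong V|_F$, so $w_1(\nu_F)=w_1(V)$ and $w_2(\nu_F)=w_2(V)$ on $F$.

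Next I would read off $w_i(TM)|_F$ from the pullback square~\eqref{eq:KTtwistedspin}. The identification $(w_2(TM),w_1(TM)) = (U + w_1(\sigma)^2, w_1(\sigma))$ on $M$ gives $w_1(TM)=w_1(\sigma)$ globally, hence $w_1(TM)|_F = w_1(\sigma)$. For the degree-$2$ class, pulling back along $i\colon F\hookrightarrow M$ and using that the Thom class $U$ restricts to $w_2(\nu_F)=w_2(V)$ on $F$ (this is the content of \cref{cor:atoN} applied to $\cP^-$) yields $w_2(TM)|_F = w_2(V) + w_1(\sigma)^2$.

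The final step is purely algebraic: substitute into the Whitney sum formulas
\begin{align*}
w_1(TM)|_F &= w_1(TF) + w_1(\nu_F),\\
w_2(TM)|_F &= w_2(TF) + w_1(TF)\,w_1(\nu_F) + w_2(\nu_F),
\end{align*}
and solve. The first equation immediately gives $w_1(TF) = w_1(\sigma)+w_1(V)$. Substituting this and $w_2(\nu_F)=w_2(V)$ into the second, the $w_2(V)$ terms cancel, leaving $w_2(TF) = w_1(\sigma)^2 + (w_1(\sigma)+w_1(V))w_1(V) = w_1(\sigma)^2 + w_1(\sigma)w_1(V) + w_1(V)^2$, as claimed. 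There is no real obstacle in this argument; the only point requiring care is making sure the identification of $U|_F$ with $w_2(\nu_F)$ (rather than a more complicated combination) is justified by the transversality of $F$ with the zero section of $M\O_2$.
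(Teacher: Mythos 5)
Your proof is correct and follows essentially the same route as the paper: identify $\nu_F$ with the pullback of the tautological bundle over $B\O_2$ via transversality in the Pontryagin--Thom construction, use \cref{cor:atoN} to get $U|_F = w_2(\nu_F)$ and hence $w_2(TM)|_F = w_2(V)+w_1(\sigma)^2$, and then apply the Whitney sum formula to $TM|_F = TF\oplus\nu_F$. If anything, your bookkeeping is more careful than the paper's one-sentence justification, which writes $w_2(TM)|_F = w_2(\nu_F)+w_1(\nu_F)^2$ where $w_2(\nu_F)+w_1(TM)^2|_F = w_2(V)+w_1(\sigma)^2$ is meant; your version is the one that actually yields the stated formula~\eqref{eq:KTonFminus}.
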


By unwinding the definitions of the twisted tangential structure in \cref{prop:KT-bord} allows us to write a KT$^-$ structure as a $(B\O_1\times M\O_2,w_1 ,U)$-twisted spin structure. Thus we want to compute the low-degree homotopy groups of the corresponding $\ko$-module Thom spectrum $M^\ko(\widehat A\circ f_{w_1, U})$.

The twist $\widehat A\circ f_{w_1, U}\colon B\O_1\times M\O_2\to B\GL_1(\ko)$ is an external sum of the twists $\widehat A\circ f_{w_1, 0}\colon B\O_1\to B\GL_1(\ko)$ and $\widehat A\circ f_{0,U}\colon M\O_2\to B\GL_1(\ko)$. Beardsley~\cite[Theorem 1]{Bea17} has shown that the Thom spectrum functor is symmetric monoidal, implying an equivalence
\begin{equation}\label{smash_split}
    M^\ko(\widehat A\circ f_{w_1, U}) \simeq M^\ko(\widehat A\circ f_{w_1,0})\wedge_\ko M^\ko(\widehat A\circ f_{0,U}).
\end{equation}
This is more helpful than it looks: we have a Künneth formula for $\ko$-modules $M$ and $N$ of the form
\begin{equation}
    H_\ko^*(M\wedge_\ko N) \cong H_\ko^*(M)\otimes_{\Z/2} H_\ko^*(N),
\end{equation}
which is an $\cA(1)$-module isomorphism if we give the right-hand side the $\cA(1)$-module structure satisfying the Cartan formula
\begin{equation}
    \Sq^k (x \otimes y) = \sum_{i+j=k} \Sq^i(x) \otimes \Sq^j(y)\,.
\end{equation}
\begin{sloppypar}
Moreover, both of the factors on the right-hand side of~\eqref{smash_split} are familiar: we computed $H_\ko^*(M^\ko(\widehat A\circ\allowbreak f_{0,U}))$ in \cref{GMA1}. If $\sigma\to B\O_1$ denotes the tautological bundle, then $(w_1, 0) = (w_1(\sigma), w_2(\sigma))$, so $M^\ko (\widehat A\circ\allowbreak f_{w_1,0})\simeq \ko\wedge (B\O_1)^{\sigma-1}$. Thus we can determine the $E_2$-page of the Baker--\allowbreak Lazarev Adams spectral sequence in the range we need by tensoring together these two $\cA(1)$-modules. We also compute the tensor product for $(B\O_1)^{3\sigma-3}$ in place of $(B\O_1)^{\sigma-1}$, as we will need it later.
\end{sloppypar}

We draw the most relevant submodules of $H_\ko^*(M^\ko(\widehat A\circ f_{0,U});\Z/2)$ and $H^*((B\O_1)^{\sigma-1};\Z/2)$ involved in the tensor product in Figure \ref{fig:tensorforKT-}.
\begin{figure}[!ht]
\hspace{5mm}
\begin{subfigure}[c]{0.5\textwidth}
	\begin{tikzpicture}[scale=0.8, every node/.style = {font=\tiny}]
		\foreach \y in {0, 1, ...,5} {
			\node at (-2, \y) {$\y$};
		}
 \begin{scope}[MidnightBlue]
 \tikzpt{0}{0}{$Q$}{};
         \tikzpt{0}{2}{}{};
          \tikzpt{0}{3}{}{};
          \tikzpt{0}{4}{}{};
        \tikzpt{0}{5}{}{};
          \sqtwoL(0,0);
              \sqtwoL(0,3);
              \sqone(0, 2);
               \sqone(0, 4);
               \begin{scope}
                \clip (-1, 4) rectangle (3, 5.6);
                \sqtwoR(0, 4);
                \end{scope}
	\end{scope}
	\end{tikzpicture}
\end{subfigure}
\qquad
\begin{subfigure}[c]{0.4\textwidth}
	\begin{tikzpicture}[scale=0.8, every node/.style = {font=\tiny}]
		\foreach \y in {0, 1, ...,5} {
			\node at (-2, \y) {$\y$};
		}
  \begin{scope}[Green]
		\tikzpt{0}{0}{}{};
         \tikzpt{0}{1}{}{};
         \tikzpt{0}{2}{}{};
         \tikzpt{0}{3}{}{};
         \tikzpt{0}{4}{}{};
          \tikzpt{0}{5}{}{};
         \sqone(0, 0);
         \sqtwoL(0, 1);
          \sqone(0, 2);
           \sqtwoR(0, 2);
           \sqone(0, 4);
           \begin{scope}
                \clip (-1, 4) rectangle (3, 5.6);
                \sqtwoL(0, 5);
                \end{scope}
	\end{scope}
	\end{tikzpicture}
\end{subfigure}
\caption{Left: The $\cA(1)$-module $M_1$ (\cref{M1defn}). In \cref{GMA1} we showed that, modulo classes of degree at least $7$, $M_1$ is a summand of $H_\ko^*(M^\ko (\widehat A\circ f_{0,U}))$. Right: the $\cA(1)$-module structure on $H^*((B\O_1)^{\sigma-1};\Z/2)$.}
    \label{fig:tensorforKT-}
\end{figure}
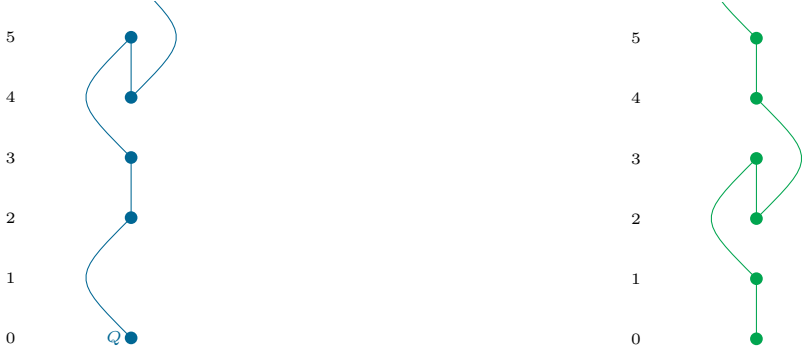

\begin{prop}
There are isomorphisms of $\cA(1)$-modules
\begin{equation}
    \begin{aligned}
        M_1\otimes H^*((B\O_1)^{\sigma-1};\Z/2) &\cong M_1 \otimes H^*((B\O_1)^{3\sigma-3}; \Z/2)\\
        &\cong \cA(1) \oplus \Sigma^2\cA(1) \oplus \Sigma^4 \cA(1) \oplus \Sigma^4\cA(1) \oplus P,
    \end{aligned}
\end{equation}
where $P$ is concentrated in degrees $5$ and above.
\end{prop}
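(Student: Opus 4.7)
The plan is to prove each isomorphism by exhibiting four explicit free $\cA(1)$-summands and matching dimensions.

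First I will record the $\cA(1)$-action on $H^*((B\O_1)^{n\sigma-n};\Z/2)$ for $n=1$ and $n=3$ using the Wu formula for the total Stiefel-Whitney class $w(n\sigma) = (1+w_1)^n$. A direct computation with the Cartan formula yields
\begin{equation*}
  \Sq^1(Uw_1^k) = (1+k)\,Uw_1^{k+1}, \qquad \Sq^2(Uw_1^k) = \Bigl(\tbinom{n}{2}+\tbinom{k}{2}\Bigr)\,Uw_1^{k+2} \pmod 2,
\end{equation*}
so the two cohomologies differ only via $\Sq^2 U$, which vanishes when $n=1$ and equals $Uw_1^2$ when $n=3$. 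The $\cA(1)$-action on $M_1$ is read off from \cite[Figure~17]{BC18} and the left half of Figure~\ref{fig:tensorforKT-}: writing $m_i$ for the basis element of $M_1$ in degree $i$, the nonzero actions in degrees $\le 5$ are $\Sq^2 m_0 = m_2$, $\Sq^1 m_2 = m_3$, $\Sq^2 m_3 = m_5$, and $\Sq^1 m_4 = m_5$.

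Next I will match dimensions. The Poincaré series of $M_1$ is $1+t^2+t^3+t^4+t^5+t^6+t^7+t^9$, computed from the extension $0\to M_0\to M_1\to\Sigma^4 M_0\to 0$ and the fact that $M_0 = \cA(1)\otimes_{\cA(0)}\Z/2$ has Poincaré series $1+t^2+t^3+t^5$. Since $H^*((B\O_1)^{n\sigma-n};\Z/2)$ has Poincaré series $1/(1-t)$, the tensor product has dimensions $1,1,2,3,4,5$ in degrees $0$ through $5$. The Poincaré series of $\cA(1)$ is $1+t+t^2+2t^3+t^4+t^5+t^6$, so the target $\cA(1)\oplus\Sigma^2\cA(1)\oplus\Sigma^4\cA(1)\oplus\Sigma^4\cA(1)$ also has dimensions $1,1,2,3,4,5$ in these degrees. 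Consequently any complement $P$ of four free summands is automatically concentrated in degrees $\geq 5$, once the summands are exhibited.

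The heart of the proof is to exhibit the four free $\cA(1)$-generators. My candidates are
\begin{equation*}
  g_0 = m_0\otimes U, \qquad g_2 = m_0\otimes Uw_1^2, \qquad g_4 = m_4\otimes U, \qquad g_4' = m_3\otimes Uw_1,
\end{equation*}
with possible corrections by lower-weight pure tensors in the $n = 3$ case. Freeness of $\cA(1)\cdot g_*$ for each $g_*$ can be certified by showing that the top class $\Sq^3\Sq^2\Sq^1 g_*$ is nonzero, since $\cA(1)$ is a Poincaré-duality algebra with socle in degree $6$; these top-class computations reduce to short calculations using the explicit formulas from Step~1. I will then verify that the four orbits are mutually transverse in degrees $\le 5$ by expanding each orbit in the pure-tensor basis $\{m_i\otimes Uw_1^k\}$ and checking the resulting vectors are linearly independent. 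The same four generators work for $n=3$: the effect of $\Sq^2 U = Uw_1^2$ is to pull $m_0\otimes Uw_1^2$ into the orbit of $g_0$, so $g_2$ may need to be replaced by a combination such as $m_2\otimes U$ to remain transverse, but the overall decomposition is unchanged.

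The main obstacle will be the linear-algebra bookkeeping in degrees $3$, $4$, and $5$, where the Cartan formula produces sums of several pure tensors and orbits can mix. A systematic way to handle this is to peel off the orbits one at a time: verify freeness and extent of $\cA(1)\cdot g_0$ first, then work modulo this subspace to analyze $g_2$, then modulo $\cA(1)\cdot g_0+\cA(1)\cdot g_2$ to analyze $g_4$ and $g_4'$. Once the four free summands are extracted, the dimension count from Step~2 forces the isomorphism, proving both cases simultaneously.
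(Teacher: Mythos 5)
Your overall strategy---count dimensions via Poincar\'e series, then exhibit free cyclic summands and use that free $\cA(1)$-modules split off because $\cA(1)$ is Frobenius---is sound and is essentially the direct computation the paper leaves implicit. However, two of your computational certificates are wrong as stated. First, your Cartan-formula computation drops the cross term: the total square of $Uw_1^k$ in $H^*((B\O_1)^{n\sigma-n};\Z/2)$ is $Uw_1^k(1+w_1)^{n+k}$, so $\Sq^2(Uw_1^k)=\binom{n+k}{2}Uw_1^{k+2}$, not $\bigl(\binom{n}{2}+\binom{k}{2}\bigr)Uw_1^{k+2}$; for instance with $n=1$ one has $\Sq^2(Uw_1)=Uw_1^3\ne 0$. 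In particular the two modules do not ``differ only via $\Sq^2 U$'': since $\binom{k+3}{2}-\binom{k+1}{2}=2k+3$ is odd for every $k$, the $\Sq^2$-action on $Uw_1^k$ differs between $n=1$ and $n=3$ in every degree, which is exactly why the text warns that the two isomorphisms are not the same maps of underlying vector spaces. Second, your freeness certificate is vacuous: $\Sq^3\Sq^2=0$ by the Adem relations, so $\Sq^3\Sq^2\Sq^1 g=0$ for every $g$. The socle of $\cA(1)$ in degree $6$ is generated by $\Sq^2\Sq^3\Sq^1=\Sq^2\Sq^2\Sq^2$, and it is the nonvanishing of \emph{that} operation on $g$ which certifies that $\cA(1)g$ is free.

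There is also a gap in the splitting step: linear independence of the four orbits in degrees $\le 5$ does not make their sum direct, since the cyclic modules extend to degrees $6$, $8$, $10$, $10$ and a nonzero intersection must contain a socle class, which lives in those higher degrees. The right criterion (again by self-injectivity of $\cA(1)$) is linear independence of the four socle images; only the two degree-$10$ classes require an actual comparison. All of this generator-hunting can be bypassed by the route the paper uses for \cref{jokertensor}: $H_*(M_1;Q_1)=0$ because $M_1$ is an extension of two copies of $M_0$, each with vanishing $Q_1$-Margolis homology, while $H_*\bigl(H^*((B\O_1)^{n\sigma-n};\Z/2);Q_0\bigr)=0$ for odd $n$; the K\"unneth formula for Margolis homology then shows both Margolis homologies of the tensor product vanish, so it is free, and your Poincar\'e-series count forces the generators to lie in degrees $0,2,4,4$ and $\ge 6$, which is precisely the statement.
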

Beware: the two isomorphisms in the $\sigma-1$ and $3\sigma-3$ cases are \emph{not} the same maps of underlying vector spaces!

Since $M_0\cong_{\le 3} M_1$, this also gives us information on $\Sigma^4M_0\otimes H^*((B\O_1)^{\sigma-1};\Z/2)$ (and similarly for $(B\O_1)^{3\sigma-3}$) in degrees $4$ and below. Thus:
\begin{prop}\label{just_free}
There are (different!) $\cA(1)$-module isomorphisms
\begin{subequations}
\begin{align}
    H_\ko^*(M^\ko(\widehat A\circ f_{0,U})\wedge (B\O_1)^{\sigma-1}) &\cong \textcolor{BrickRed}{\cA(1)} \oplus
    \textcolor{MidnightBlue}{\Sigma^2 \cA(1)} \oplus
    \textcolor{Green}{\Sigma^4\cA(1)} \oplus
    \textcolor{Fuchsia}{\Sigma^4\cA(1)} \oplus
    \textcolor{Orange}{\Sigma^4\cA(1)} \oplus P^+\\
    H_\ko^*(M^\ko(\widehat A\circ f_{0,U})\wedge (B\O_1)^{3\sigma-3}) &\cong \textcolor{BrickRed}{\cA(1)} \oplus
    \textcolor{MidnightBlue}{\Sigma^2 \cA(1)} \oplus
    \textcolor{Green}{\Sigma^4\cA(1)} \oplus
    \textcolor{Fuchsia}{\Sigma^4\cA(1)} \oplus
    \textcolor{Orange}{\Sigma^4\cA(1)} \oplus P^-,
\end{align}
\end{subequations}
for $\cA(1)$-modules $P^\pm$ concentrated in degrees $5$ and above.
\end{prop}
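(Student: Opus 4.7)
The plan is to expand the left-hand side using the Künneth formula of the previous display, distribute over the $\cA(1)$-module splitting of $H_\ko^*(M^\ko(\widehat A\circ f_{0,U}))$ coming from \cref{GMA1}, and then reduce every piece to the computation of the previous proposition.

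First, I would combine the decomposition $M^\ko(\widehat A\circ f_{w_1,U})\simeq M^\ko(\widehat A\circ f_{w_1,0})\wedge_\ko M^\ko(\widehat A\circ f_{0,U})$ from~\eqref{smash_split} with the identification $M^\ko(\widehat A\circ f_{w_1,0})\simeq \ko\wedge (B\O_1)^{\sigma-1}$ (respectively $\ko\wedge (B\O_1)^{3\sigma-3}$, obtained the same way from the twist $(3w_1, 0) \simeq (w_1, 0)$ on cohomology but distinct as a Thom spectrum) and the Künneth formula to rewrite the $\cA(1)$-module $H_\ko^*(M^\ko(\widehat A\circ f_{0,U})\wedge (B\O_1)^{\sigma-1})$ as $H_\ko^*(M^\ko(\widehat A\circ f_{0,U}))\otimes_{\Z/2} H^*((B\O_1)^{\sigma-1};\Z/2)$ equipped with the Cartan $\cA(1)$-action.

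Second, I would substitute the decomposition from \cref{GMA1}, which gives $H_\ko^*(M^\ko(\widehat A\circ f_{0,U}))\cong_{\le 6}M_1\oplus \Sigma^4 M_0\oplus\Sigma^6\Z/2$, and distribute the tensor product across the three summands. The $M_1$ tensor factor is precisely what the preceding proposition computes, producing the four free summands $\cA(1)\oplus \Sigma^2\cA(1)\oplus \Sigma^4\cA(1)\oplus \Sigma^4\cA(1)$ plus a tail in degrees $\ge 5$ which we may safely throw into $P^+$ (respectively $P^-$). The $\Sigma^6\Z/2$ factor contributes nothing in degrees $\le 5$ and is swept into $P^\pm$ as well.

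The fifth $\Sigma^4\cA(1)$ summand must come from $\Sigma^4 M_0\otimes H^*((B\O_1)^{\sigma-1};\Z/2)$, and this is where the main work lies. My approach here is to use $M_0\cong_{\le 3} M_1$, which implies $\Sigma^4 M_0\otimes H^*((B\O_1)^{\sigma-1};\Z/2)\cong_{\le 7}\Sigma^4 M_1\otimes H^*((B\O_1)^{\sigma-1};\Z/2)$; by the previous proposition, the latter has a single $\Sigma^4\cA(1)$ summand in degrees $\le 7$, with everything else lying in degree $\ge 8$. Hence in degrees $\le 4$ the $\Sigma^4 M_0$ factor produces exactly one $\Sigma^4\cA(1)$-generator (corresponding to the bottom class of $\Sigma^4 M_0$ tensored with the Thom class of $(B\O_1)^{\sigma-1}$), and all further contributions live in degrees $\ge 5$ and are absorbed into $P^\pm$. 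The $(B\O_1)^{3\sigma-3}$ case is handled identically, using the second half of the preceding proposition. The two resulting isomorphisms differ because the Cartan formula is applied against distinct Wu-formulas for $\sigma$ versus $3\sigma$ (in particular $\Sq^2 U = 0$ for the former and $\Sq^2 U = w_1^2 U$ for the latter), so although the lists of summands coincide, the splittings as vector spaces do not.

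The step I expect to be the main obstacle is verifying that the candidate degree-$4$ generator produced from $\Sigma^4 M_0$ genuinely spawns a free $\cA(1)$-summand rather than merely a single class; once the $M_0\cong_{\le 3}M_1$ reduction is invoked this reduces to a finite, mechanical check that the $\Sq^i$-actions on the relevant generators are linearly independent modulo the previously identified summands, which is manageable by direct inspection of the Cartan formula on the bottom portions of each factor.
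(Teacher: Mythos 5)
Your proposal is correct and is essentially the paper's own argument: apply the $\ko$-module K\"unneth formula to the smash decomposition, distribute over the splitting $M_1\oplus\Sigma^4 M_0\oplus\Sigma^6\Z/2$ from \cref{GMA1}, read off the first four free summands from the preceding proposition on $M_1\otimes H^*((B\O_1)^{\sigma-1};\Z/2)$, and use $M_0\cong_{\le 3}M_1$ to extract the fifth $\Sigma^4\cA(1)$ from the $\Sigma^4 M_0$ factor, with the two cases differing only via the Wu formula on the respective Thom classes. One minor imprecision: $\Sigma^4(M_1\otimes H^*((B\O_1)^{\sigma-1};\Z/2))$ also contains a $\Sigma^6\cA(1)$ summand with classes starting in degree $6$, so "everything else lying in degree $\ge 8$" is not literally true, but your subsequent (and correct) claim that all further contributions lie in degrees $\ge 5$ is all that the statement requires.
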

We draw these isomorphisms in \cref{fig:KT-} (left) and \cref{fig:KT+}.

Margolis' theorem~\cite{Mar74} implies that the respective Adams spectral sequences collapse without extension questions in the degrees we care about. Thus:
\begin{prop}\label{prop:bordcompKT-}
    There are isomorphisms
 \begin{equation}
\begin{aligned}
    \Omega_0^{\mathrm{KT}^-} &\cong \Z/2\\
    \Omega_1^{\mathrm{KT}^-} &\cong 0\\
    \Omega_2^{\mathrm{KT}^-} &\cong \Z/2\\
    \Omega_3^{\mathrm{KT}^-} &\cong 0\\
    \Omega_4^{\mathrm{KT}^-} &\cong (\Z/2)^{\oplus 3}.
\end{aligned}
 \end{equation}
\end{prop}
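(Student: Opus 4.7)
The plan is to run the Baker--Lazarev Adams spectral sequence (\cref{BL_thm}) on the $\ko$-module Thom spectrum for the twist $f_{w_1, U}\colon B\O_1\times M\O_2\to B\GL_1(\MTSpin)$ identified above \cref{just_free}, feeding in the $\cA(1)$-module splitting already recorded there. By \cref{prop:KT-bord}, a $\mathrm{KT}^-$-structure is an $(M\O_2, U)$-twisted $\Pin^-$-structure, which unpacks (via \cref{nonVB}) to a $(B\O_1\times M\O_2, w_1, U)$-twisted spin structure; by Hebestreit--Joachim the bordism groups we seek are $\pi_*(M^\MTSpin f_{w_1, U})$, and by item~(\ref{item:approx}) in the list preceding \cref{GMA1}, in degrees $\le 7$ these agree with $\pi_*(M^\ko(\widehat A\circ f_{w_1, U}))$.

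First I handle odd primes, in the spirit of \cref{GM_odd}. Both $B\O_1$ and $M\O_2$ have only $2$-primary mod $p$ cohomology for odd $p$, so the classical $p$-local Adams spectral sequence for $M^\ko(\widehat A\circ f_{w_1, U})$ has trivial $E_2$-page for every odd prime. Hence $\Omega_k^{\mathrm{KT}^-}$ is a $2$-group for $k\le 4$ and it suffices to $2$-complete.

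At $p=2$ I combine Beardsley's monoidality of the $\ko$-module Thom spectrum functor (equation~\eqref{smash_split}) with the $\ko$-module K\"unneth formula, which is an isomorphism of $\cA(1)$-modules under the Cartan diagonal action, to obtain
\begin{equation}
H_\ko^*(M^\ko(\widehat A\circ f_{w_1, U})) \cong H^*((B\O_1)^{\sigma-1};\Z/2)\otimes_{\Z/2} H_\ko^*(M^\ko(\widehat A\circ f_{0, U})).
\end{equation}
\cref{just_free} identifies the right-hand side, in the range we need, with $\cA(1)\oplus \Sigma^2\cA(1)\oplus (\Sigma^4\cA(1))^{\oplus 3}$ modulo summands in degrees $\ge 5$. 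Since $\Ext_{\cA(1)}^{s,t}(\cA(1),\Z/2)$ is a single $\Z/2$ concentrated at $(s,t)=(0,0)$, the $E_2$-page of the Baker--Lazarev spectral sequence in the range $t-s\le 4$ sits entirely on the horizontal line $s=0$, with a single $\Z/2$ in columns $t-s=0$ and $2$ and three copies of $\Z/2$ in column $t-s=4$. All differentials target the empty region $s\ge 1$, and the $h_0$-action vanishes on free $\cA(1)$-modules, so there are no hidden additive extensions. Reading off the $E_\infty$-page then yields $\Omega_0^{\mathrm{KT}^-}\cong\Z/2$, $\Omega_1^{\mathrm{KT}^-}\cong 0$, $\Omega_2^{\mathrm{KT}^-}\cong\Z/2$, $\Omega_3^{\mathrm{KT}^-}\cong 0$, and $\Omega_4^{\mathrm{KT}^-}\cong(\Z/2)^{\oplus 3}$.

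The main obstacle is establishing the free splitting in \cref{just_free}: although the Cartan formula together with the explicit $\cA(1)$-module structures on both tensor factors (from \cref{GMA1} and the standard computation of $H^*((B\O_1)^{\sigma-1};\Z/2)$) reduces it to bookkeeping, one must exhibit an honest $\cA(1)$-linear isomorphism onto a sum of shifts of $\cA(1)$ rather than merely matching Poincar\'e series, and take care to distinguish this splitting from the superficially similar one for $(B\O_1)^{3\sigma-3}$ that will be reused for $\mathrm{KT}^+$ in the next subsection.
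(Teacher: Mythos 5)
Your argument is essentially the paper's own proof: identify a $\mathrm{KT}^-$-structure as a $(B\O_1\times M\O_2, w_1, U)$-twisted spin structure, split the $\ko$-module Thom spectrum as the smash product~\eqref{smash_split}, feed the K\"unneth formula and the free decomposition of \cref{just_free} into the Baker--Lazarev spectral sequence, and read off the answer from an $E_2$-page concentrated on the line $s=0$ in the range $t-s\le 4$. The one inaccuracy is your odd-primary step: $M\O_2$ does \emph{not} have trivial reduced mod-$p$ cohomology for odd $p$ (one has $\widetilde H^4(M\O_2;\Z/p)\cong\Z/p$, which is precisely why $\Omega_4^{\mathrm{FK}}$ and $\Omega_4^{\mathrm{GM}}$ acquire free summands). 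The conclusion nevertheless survives because the other smash factor $(B\O_1)^{\sigma-1}$ already has vanishing reduced mod-$p$ homology for all odd $p$, so $(B\O_1)^{\sigma-1}\wedge M^\ko(\widehat A\circ f_{0,U})$ is trivial after inverting $2$; you should rest the odd-primary vanishing on that factor alone.
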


\begin{figure}[!ht]
\begin{subfigure}[c]{0.6\textwidth}
	\begin{tikzpicture}[scale=0.54, every node/.style = {font=\tiny}]
		\foreach \y in {0, 1, ...,10} {
			\node at (-2.25, \y) {$\y$};
		}
		\begin{scope}[BrickRed]
		\Aone{0}{0}{$Q$};
         \tikzpt{0}{2}{$Q(U+t^2)$}{};
          \node at (1.59,2.6) {$QUt$};
	\end{scope}
 \begin{scope}[MidnightBlue]
        \tikzpt{3}{2}{$Qt^2$}{};
         \tikzpt{3}{3}{}{};
          \tikzpt{3}{4}{}{};
          \tikzpt{3}{5}{}{};
        \tikzpt{4.5}{5}{}{};
        \tikzpt{4.5}{6}{}{};
        \tikzpt{4.5}{7}{}{};
        \tikzpt{4.5}{8}{}{};
         \sqone(3, 2);
         \sqone(3, 4);
         \sqone(4.5, 5);
         \sqone(4.5, 7);
        \sqtwoL(3,2)
        \sqtwoCR(3, 3);
         \sqtwoCR(3, 4);
         \sqtwoCR(3, 5);
         \sqtwoR(4.5, 6);
         \node[left=-1em] at (2.5,4) {$QUt^2$};
         \node at (4.5,4.7) {$\beta$};
	\end{scope}
  \begin{scope}[Green]
    \Aone{6}{4}{$Qt^4$};
	\end{scope}
 \begin{scope}[Fuchsia]
    \Aone{8}{4}{$QUw_2$};
	\end{scope}
     \begin{scope}[Orange]
        \Aone{10}{4}{$QUw_1^2$};
	\end{scope}
	\end{tikzpicture}
\end{subfigure}
\hfill
\begin{subfigure}[!ht]{0.33\textwidth}
\includegraphics[width=45mm]{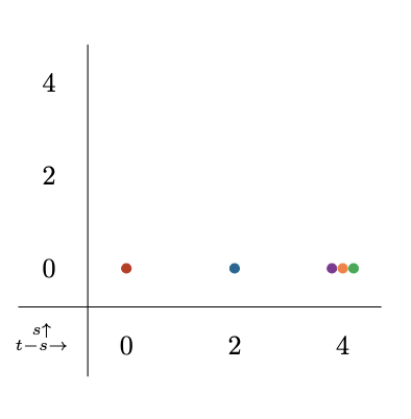}
\end{subfigure}
\caption{Left: The $\cA(1)$-modules that arise from tensoring the two $\cA(1)$-modules in Figure \ref{fig:tensorforKT-}. The class $\beta = Q(Ut^3+t^5)$. Right: The $E_2$-page of the Adams spectral sequence computing the nontrivial homotopy groups in degrees less than 5.} 
    \label{fig:KT-}
\end{figure}

\begin{rem}
    \sloppy
    Analogous to GM-\allowbreak characteristic bordism in \S\ref{subsection:GMcharBord} the computations for KT$^-$-\allowbreak characteristic bordism agree with $\Omega^\O_k$ for $k=0,1,2,3$. Compare~\cite[Theorem 7.1]{KT90}.
\end{rem}

We now remark on the physics of the KT$^-$ pair. A similar interpretation works for the KT$^+$ pair. Upon inserting a defect along an unoriented submanifold $F$ in $M$, the manifold $M\backslash F$ has a pin$^-$ structure due to Lemma \ref{lem:restricta}. Therefore a theory outside of the defect, defined on $M\backslash F$, can support fermions and a time-reversal symmetry with ${T}^2=1$, but those fermions do not extend to $F$. In electron systems, we will interpret a defect as a local deformity of the background lattice, but there is nothing obstructing the electrons from existing on the defect. If the global structure of the lattice is proliferated with defects wrapped on unorientable manifolds such as the type of $F$, then the situation can be potentially modeled with a  KT$^-$ pair in the IR, where the defects collectively behave as a codimension 2-submanifold. If the global structure of the defect makes it Poincaré dual to $w_2(TM)+w_1(TM)^2$ then the electrons now cannot restrict onto the defects!

Now we will draw out the (conjectural) characteristic long exact sequence for KT$^-$ bordism, analogously to how we proved \cref{GM_LES_thm} for Guillou--Marin bordism. The first step is to compute the bordism groups for the symmetry type of the characteristic submanifold $F$; as we showed in \cref{KT_minus_F_str}, this is a $(B\O_1\times B\O_2; a+b, a^2+ab+b^2)$-twisted spin structure, where we let $a\coloneqq w_1\in H^1(B\O_1;\Z/2)$, $b\coloneqq w_1\in H^1(B\O_2;\Z/2)$, and $c \coloneqq w_2\in H^2(B\O_2;\Z/2)$. Let $\tau^-\coloneqq (a+b, a^2+ab+b^2)$.

For notational convenience, let $M\tau^-\coloneqq M^\MTSpin f_{a+b, a^2+ab+b^2}$ (which we defined, along with its approximation $M^\ko (\widehat A\circ f_{a+b, a^2+ab+b^2})$, in \cref{item:approx}). Thus the homotopy groups of $M\tau^-$ are isomorphic to the $(B\O_1\times B\O_2, \tau^-)$-twisted spin bordism groups.\footnote{It is possible to find a vector bundle $W\to B\O_1\times B\O_2$ realizing this twist, but we do not need that, so do not bother.}\textsuperscript{,}\footnote{See also Hertl~\cite{Her17} for additional calculations of twisted spin bordism over $B\O_1\times B\O_2$.}
\begin{lem}\label{klein_split_off}
Let $MV(a+b, a^2+ab+b^2)$ denote the $\ko$-module Thom spectrum defined in the same way as $M^\ko (\widehat A\circ f_{a+b, a^2+ab+b^2})$, but over $B\O_1\times B\O_1$ rather than $B\O_1\times B\O_2$. Then the standard inclusion $\O_1\times\O_1\hookrightarrow \O_1\times\O_2$ induces the inclusion of the first direct summand in a $\ko$-module splitting
\begin{equation}
    M^\ko(\widehat A\circ f_{a+b, a^2+ab+b^2}) \overset\simeq\longrightarrow MV(a+b, a^2+ab+b^2)\vee M
\end{equation}
for some $\ko$-module spectrum $M$ such that the induced map $H_\ko^*(M)\to H_\ko^*(M^\ko(\widehat A\circ f_{a+b, a^2+ab+b^2}))$ is the inclusion of a subspace complementary to that spanned by $\set{Ua^ib^j: i,j\ge 0}$.
\end{lem}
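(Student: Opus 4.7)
The plan is to construct a $\ko$-module retraction of $\iota_*$ by lifting an explicit geometric retraction at the level of classifying spaces. The key observation is that $\det\colon \O_2 \to \O_1$ retracts the standard inclusion $\O_1 \hookrightarrow \O_2$ (which sends $a \mapsto \mathrm{diag}(a,1)$, whose determinant is $a$). Setting $r \coloneqq \mathrm{id}_{B\O_1} \times B\det \colon B\O_1\times B\O_2 \to B\O_1\times B\O_1$, we have $r\circ\iota = \mathrm{id}$ on the nose.

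I would next verify that $r$ intertwines the two twists. Let $a$ denote the generator of $H^1$ of the first $B\O_1$, let $b, c$ denote $w_1, w_2$ of the tautological rank-$2$ bundle on $B\O_2$, and let $b'$ denote the generator of $H^1$ of the second $B\O_1$ in $B\O_1\times B\O_1$. Then $r^*$ acts by $a\mapsto a$ and $b'\mapsto w_1(\det V) = b$, while $\iota^*$ acts by $a\mapsto a$, $b\mapsto b'$, $c\mapsto 0$. In particular $r^*(a+b', a^2+ab'+b'^2) = (a+b, a^2+ab+b^2)$, and $\iota^*$ carries the $B\O_1\times B\O_2$ twist to the $B\O_1\times B\O_1$ twist analogously. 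By the functoriality of the $\ko$-module Thom spectrum construction (\cite{ABGHR14a, ABGHR14b}, used in \cite[\S 1.2]{DY:2023tdd} in a similar way), $\iota$ and $r$ lift to $\ko$-module maps $\iota_*$ and $r_*$ between the respective Thom spectra with $r_*\circ \iota_* = \mathrm{id}_{MV(a+b, a^2+ab+b^2)}$.

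Hence $e \coloneqq \iota_*\circ r_*$ is an idempotent $\ko$-module endomorphism of $M^\ko(\widehat A\circ f_{a+b, a^2+ab+b^2})$ with image equivalent to $MV(a+b, a^2+ab+b^2)$. The resulting splitting reads $M^\ko(\widehat A\circ f_{a+b, a^2+ab+b^2}) \simeq MV(a+b, a^2+ab+b^2) \vee M$, where $M \coloneqq (1-e)M^\ko(\widehat A\circ f_{a+b, a^2+ab+b^2})$.

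Finally, I would verify the cohomology claim by tracking $e$ through \cref{DY23_main_thm}: the module $H_\ko^*(M^\ko(\widehat A\circ f_{a+b, a^2+ab+b^2}))$ is free over $\Z/2[a,b,c]$ on a Thom class $U$, so the induced idempotent on cohomology sends $Ua^ib^j\mapsto Ua^ib^j$ and annihilates $Ua^ib^jc^k$ for $k\ge 1$ (because $\iota^*(c) = 0$). Therefore $H_\ko^*(M)$ is the span of $\{Ua^ib^jc^k : i, j\ge 0,\ k\ge 1\}$, a subspace complementary to the span of $\{Ua^ib^j\}$, as required. The main subtlety I anticipate is ensuring that the idempotent and the resulting splitting live in the category of $\ko$-modules rather than merely of spectra, but this is built into the Ando--Blumberg--Gepner--Hopkins--Rezk framework once one has a morphism of spaces over $B\GL_1(\ko)$.
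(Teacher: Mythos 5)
Your proposal is correct and follows essentially the same route as the paper, which also obtains the splitting from the retraction $(\id,\det)\colon B\O_1\times B\O_2\to B\O_1\times B\O_1$ of the standard inclusion (citing the analogous argument of \cite[Lemma 3.30]{Deb21}). Your write-up simply makes explicit the twist-compatibility check and the identification of $H_\ko^*(M)$ with the complement of $\mathrm{span}\set{Ua^ib^j}$, both of which the paper leaves implicit.
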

\begin{sloppypar}
The proof is completely analogous to that of~\cite[Lemma 3.30]{Deb21}: the map $M^\ko (\widehat A\circ\allowbreak f_{a+b, a^2+ab+b^2})\to\allowbreak MV(a+b, a^2+ab+b^2)$ induced by $(\id, \det)\colon B\O_1\times B\O_2\to\allowbreak B\O_1\times B\O_1$ provides a section to the inclusion map.
\end{sloppypar}

The matrix $\begin{pmatrix}1 & 0\\1 & 1\end{pmatrix}$ is an element of $\GL_2(\mathbb F_2) = \mathrm{Aut}(\O_1\times\O_1)$; the induced map on cohomology sends $a\mapsto a+b$ and $b\mapsto b$, so it sends $a+b\mapsto a$ and $a^2+ab+b^2\mapsto a^2+ab+b^2$. Thus we have a $\ko$-module equivalence $MV(a+b, a^2+ab+b^2) \simeq MV(a, a^2+ab+b^2)$. The low-degree homotopy groups of $MV(a, a^2+ab+b^2)$ are calculated in~\cite[Theorem 4.26, case $n = 2$]{Deb21}; see also~\cite{BottSpiral} for partial information in higher degrees. Thus, we will focus on the complementary summand $M$, and add on the $(B\O_1\times B\O_1, a+b, a^2+ab+b^2)$-twisted spin bordism groups later.
\begin{lem}\label{non_klein_calc}
There is an $\cA(1)$-module isomorphism
\begin{equation}
    H_\ko^*(M)\cong \Sigma^2\cA(1) \oplus \Sigma^3\cA(1) \oplus P,
\end{equation}
where $P$ is concentrated in degrees $4$ and above. The two listed free summands are spanned by $Uc$, resp.\ $Ubc$.
\end{lem}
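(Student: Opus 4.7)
The strategy is to apply \cref{DY23_main_thm} to compute the $\cA(1)$-module structure on $H_\ko^*$ of the full Thom spectrum over $B\O_1\times B\O_2$, transfer the splitting of \cref{klein_split_off} to a direct sum decomposition of $\cA(1)$-modules, and then exhibit the two free summands by inspection. Writing $a=w_1\in H^1(B\O_1;\Z/2)$, $b=w_1\in H^1(B\O_2;\Z/2)$, and $c=w_2\in H^2(B\O_2;\Z/2)$, so that $H^*(B\O_1\times B\O_2;\Z/2)=\Z/2[a,b,c]$, the Wu formula for a rank-two bundle (where $w_3=w_4=0$) gives $\Sq^1 c = bc$ and $\Sq^2 c = c^2$. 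Combined with the Cartan formula this determines the ordinary Steenrod action on $\Z/2[a,b,c]$, and \cref{DY23_main_thm} with twisting classes $a+b$ and $a^2+ab+b^2$ then identifies
\[
H_\ko^*\bigl(M^\ko(\widehat A\circ f_{a+b,\,a^2+ab+b^2})\bigr)\;\cong\;\Z/2[a,b,c]\cdot U
\]
with $\cA(1)$-action
\begin{align*}
\Sq^1(Ux)&=U\bigl((a+b)x+\Sq^1 x\bigr),\\
\Sq^2(Ux)&=U\bigl((a^2+ab+b^2)x+(a+b)\Sq^1 x+\Sq^2 x\bigr).
\end{align*}
By \cref{klein_split_off}, $H_\ko^*(M)$ is the $\cA(1)$-submodule spanned by those $Ux$ with $c\mid x$; one checks this submodule is closed under $\cA(1)$ using the factorizations $\Sq^1(cx')=c(bx'+\Sq^1 x')$ and $\Sq^2(cx')=c(cx'+b\Sq^1 x'+\Sq^2 x')$.

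Next I would define an $\cA(1)$-module map $\phi\colon\Sigma^2\cA(1)\oplus\Sigma^3\cA(1)\to H_\ko^*(M)$ sending the two generators to $Uc$ and $Ubc$, and show it is a split injection. Using the formulas above one computes, for instance,
\[
\Sq^1(Uc)=Uac,\quad \Sq^2(Uc)=U(a^2c+c^2),\quad \Sq^1(Ubc)=U(abc+b^2c),\quad \Sq^2(Ubc)=U(a^2bc+ab^2c+bc^2),
\]
and similarly for the remaining basis monomials of $\cA(1)$. Since $\cA(1)$ is a connected graded finite-dimensional Hopf algebra, it is a local Frobenius algebra whose socle is the unique class in degree $6$; in particular, $\cA(1)$ is self-injective, so every $\cA(1)$-linear injection out of a free module automatically splits. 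The kernel of $\phi$, if nonzero, would contain the socle of some free summand, so injectivity of $\phi$ reduces to checking that the top class acts nontrivially on each of $Uc$ (yielding a nonzero element in degree $8$) and $Ubc$ (yielding a nonzero element in degree $9$); because these lie in different degrees, there is no further independence check.

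Finally, a direct dimension count of $\Z/2[a,b,c]/\Z/2[a,b]$ shows that $H_\ko^*(M)$ has exactly one class in degree $2$ (namely $Uc$) and two classes in degree $3$ (namely $Uac$ and $Ubc$), all of which lie in the image of $\phi$. This forces the complement $P$ of $\operatorname{im}(\phi)$ to be concentrated in degrees $\ge 4$, as claimed. The main obstacle is the bookkeeping in the injectivity check: the twisted Wu-type formulas produce many polynomial terms, and the characteristic-$2$ cancellations make it easy to miscount, so one must systematically tabulate $\alpha\cdot Uc$ and $\alpha\cdot Ubc$ for each of the eight admissible monomials $\alpha$ in $\cA(1)$ and verify that the two top classes survive.
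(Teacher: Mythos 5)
Your proposal is correct and follows the route the paper takes implicitly --- the paper states this lemma without proof, the computation being routine given \cref{DY23_main_thm} and \cref{klein_split_off}: identify $H_\ko^*(M)$ with the $\cA(1)$-submodule of $c$-divisible monomials, verify freeness of the cyclic modules on $Uc$ and $Ubc$ via the socle/top-class criterion, and split off the resulting free module using self-injectivity of $\cA(1)$, exactly as in the paper's other decompositions (e.g.\ \cref{GMA1}, \cref{just_free}). Your displayed formulas for $\Sq^1$ and $\Sq^2$ on $Uc$ and $Ubc$ check out, as does the freeness criterion (e.g.\ $\Sq^2\Sq^2\Sq^2(Uc) = U a b^3 c^2 \ne 0$), and the dimension count in degrees $2$ and $3$ correctly forces $P$ into degrees $4$ and above.
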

\begin{cor}\label{KT_minus_2nd}
The low-degree $(B\O_1\times B\O_2, a+b, a^2+ab+b^2)$-twisted spin bordism groups are:
\begin{equation}
\begin{aligned}
    \Omega_0^\Spin(B\O_1\times B\O_2, \tau^-) &\cong \Z/2\\
    \Omega_1^\Spin(B\O_1\times B\O_2, \tau^-) &\cong \Z/2\\
    \Omega_2^\Spin(B\O_1\times B\O_2, \tau^-) &\cong \Z/2\oplus \Z/2\\
    \Omega_3^\Spin(B\O_1\times B\O_2, \tau^-) &\cong \Z/2 \oplus \Z/2.
\end{aligned}
\end{equation}
The following are complete invariants:
\begin{subequations}
\begin{align}
    \int b\colon &\Omega_1^\Spin(B\O_1\times B\O_2, \tau^-) \longrightarrow \Z/2\\
    \int c, \int b^2\colon &\Omega_2^\Spin(B\O_1\times B\O_2, \tau^-) \longrightarrow \Z/2\oplus \Z/2.
\end{align}
\end{subequations}
\end{cor}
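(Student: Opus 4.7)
The plan is to combine Lemma \ref{klein_split_off}, which decomposes $M^\ko(\widehat A\circ f_{a+b, a^2+ab+b^2})$ into a ``Klein summand'' $MV(a+b, a^2+ab+b^2)$ and a complementary spectrum $M$, with Lemma \ref{non_klein_calc}, which pins down $H_\ko^*(M)$ through degree $3$. Because homotopy groups split over wedge sums, the two summands can be analyzed independently and the resulting bordism groups summed degree by degree.

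For the Klein summand, the $\mathrm{GL}_2(\mathbb F_2)$-automorphism $a\mapsto a+b$, $b\mapsto b$ already noted in the excerpt identifies $MV(a+b, a^2+ab+b^2)$ with $MV(a, a^2+ab+b^2)$, whose low-degree homotopy is tabulated in \cite[Theorem 4.26, case $n = 2$]{Deb21}; the plan is simply to read off the contributions in degrees $0$ through $3$ from that reference. For the complementary summand $M$, I would run the Baker-Lazarev Adams spectral sequence of Theorem \ref{BL_thm}. Lemma \ref{non_klein_calc} shows that, in the relevant range, $H_\ko^*(M)$ is a sum of two free $\cA(1)$-modules shifted to degrees $2$ and $3$, spanned by $Uc$ and $Ubc$. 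Hence the $E_2$-page consists of a single $\Z/2$ on the $s=0$ line in each of $t-s = 2$ and $t-s = 3$; no differentials or $h_0$-extensions are possible in this range.

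Adding the two contributions yields the groups $\Z/2$, $\Z/2$, $(\Z/2)^{\oplus 2}$, $(\Z/2)^{\oplus 2}$ in degrees $0,1,2,3$, matching the statement. To verify the claimed complete invariants I would pair $\int b$, $\int c$, and $\int b^2$ against explicit generators of each summand: the Klein summand contributes classes detected by $a,b$-characteristic numbers (notably $\int b$ in degree $1$ and $\int b^2$ in degree $2$), while the generator in degree $2$ coming from the $\Sigma^2\cA(1)$ piece of $M$ is detected by the free class through $Uc$, hence by $\int c$.

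The main obstacle I expect is bookkeeping around the identifications: confirming that the generator of $\Sigma^2\cA(1)\subset H_\ko^*(M)$ contributes a bordism class genuinely detected by $\int c$ (as opposed to a combination of $c$ and $b^2$) after tracing through the splitting of Lemma \ref{klein_split_off} and the automorphism of $\O_1\times\O_1$, and checking that the listed invariants exhaust the dual basis. Since each bordism group in question is a $\Z/2$-vector space of rank at most $2$, this ultimately reduces to evaluating the proposed characteristic numbers on a short list of explicit twisted spin manifolds (a point, $S^1$ with a nontrivial line bundle on the $B\O_1$-factor, and surfaces equipped with $\O_2$-bundles of prescribed $w_1$ and $w_2$) and checking linear independence.
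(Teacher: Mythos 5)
Your proposal is correct and follows essentially the same route as the paper: split off the $B\O_1\times B\O_1$ summand via \cref{klein_split_off}, identify it with $MV(a,a^2+ab+b^2)$ by the $\GL_2(\mathbb F_2)$ automorphism and quote \cite[Theorem 4.26]{Deb21}, and handle the complementary summand $M$ using \cref{non_klein_calc} (the paper phrases this via Margolis' theorem, which is equivalent to your observation that the free $\cA(1)$-summands sit in Adams filtration $0$ with no room for differentials or extensions). The verification of the invariants by evaluating on filtration-$0$ classes coming from the free summands is likewise the argument the paper gives.
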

\begin{proof}
This is a combination of the bordism groups from~\cite[Theorem 4.26]{Deb21} (for the $\O_1\times \O_1$ part) and Margolis' theorem, which tells us that $M$ is equivalent to $\Sigma^2 H\Z/2\vee \Sigma^3 H\Z/2\vee M'$ for a $3$-connected spectrum $M'$. Integrals of mod $2$ cohomology classes appearing as bordism invariants occur for classes whose images in the Adams spectral sequence are in filtration $0$: see~\cite[Figure 1, right]{Deb21}, and free $\cA(1)$-modules always meet this criterion.
\end{proof}
\begin{rem}
We do not need complete invariants in dimension $3$, but one can show that they are given by $\int bc$ together with the Smith homomorphism associated to the determinant line bundle of the $\O_2$-bundle, which lands in $\Omega_2^{\Pin^+}(B\O_2)$; then project onto $\Omega_2^{\Pin^+}\cong\Z/2$.
\end{rem}
Using \cref{prop:bordcompKT-,KT_minus_2nd}, together with the characteristic long exact sequence, we can compute some homotopy groups of the fiber $F_{\mathrm{KT}^-}$. Unfortunately, we learn less than we did for Guillou--Marin characteristic bordism.
\begin{thm}
\label{KT-_LES_thm}
Assume \cref{char_conj} is true, so that there is a characteristic map of spectra $\mathit{MTKT}^-\to \Sigma^2 M^\MTSpin \tau^-$, and let $F_{\mathrm{KT}^-}$ denote the fiber of this map. Then there is an abelian group $A_-$ of order $4$ and a nonzero abelian group $B_-$ such that
\begin{equation}
\begin{aligned}
    \pi_0(F_{\mathrm{KT}^-}) &\cong \Z/2\\
    \pi_1(F_{\mathrm{KT}^-}) &\cong \Z/2\\
    \pi_2(F_{\mathrm{KT}^-}) &\cong A_-\\
    \pi_3(F_{\mathrm{KT}^-}) &\cong 0\\
    \pi_4(F_{\mathrm{KT}^-}) &\cong B_-,
\end{aligned}
\end{equation}
and the characteristic long exact sequence is as given in \cref{KT-_char_LES}.
\end{thm}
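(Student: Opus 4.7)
Under Conjecture \ref{char_conj}, the cofiber sequence $F_{\mathrm{KT}^-}\to \mathit{MTKT}^-\to \Sigma^2 M^{\MTSpin}\tau^-$ induces, on homotopy groups, the long exact sequence
\begin{equation*}
\cdots\to \Omega_{k-1}^{\Spin}(\tau^-)\to \pi_k(F_{\mathrm{KT}^-})\to \Omega_k^{\mathrm{KT}^-}\xrightarrow{R}\Omega_{k-2}^{\Spin}(\tau^-)\to \pi_{k-1}(F_{\mathrm{KT}^-})\to \cdots,
\end{equation*}
in which $R$ is the geometric restriction $(M,F)\mapsto F$. The plan is to substitute the groups from Proposition~\ref{prop:bordcompKT-} and Corollary~\ref{KT_minus_2nd}, identify the maps $R$ in the degrees we care about, and then read off the $\pi_k(F_{\mathrm{KT}^-})$ from exactness.

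Degree~$0$ is immediate: since $\Omega_j^{\Spin}(\tau^-)=0$ for $j<0$, the sequence collapses to $\pi_0(F_{\mathrm{KT}^-})\cong \Omega_0^{\mathrm{KT}^-}=\Z/2$. The key input for $\pi_1$ and $\pi_2$ is that on any closed surface $M$ the Wu formula forces $w_2(TM)=w_1(TM)^2$, so $w_2(TM)+w_1(TM)^2=0$ and every class in $\Omega_2^{\mathrm{KT}^-}$ is represented by a pair $(M,\emptyset)$. Hence $R\colon \Omega_2^{\mathrm{KT}^-}\to \Omega_0^{\Spin}(\tau^-)$ is zero. Combined with $\Omega_1^{\mathrm{KT}^-}=0$, the segment
\begin{equation*}
\Omega_1^{\Spin}(\tau^-)=\Z/2\to \pi_2(F_{\mathrm{KT}^-})\to \Omega_2^{\mathrm{KT}^-}=\Z/2\xrightarrow{\,0\,}\Omega_0^{\Spin}(\tau^-)=\Z/2\to \pi_1(F_{\mathrm{KT}^-})\to 0
\end{equation*}
gives $\pi_1(F_{\mathrm{KT}^-})\cong \Z/2$ and presents $A_-=\pi_2(F_{\mathrm{KT}^-})$ as an extension of $\Z/2$ by $\Z/2$, so $\lvert A_-\rvert=4$ as claimed.

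For degrees~$3$ and~$4$ the task reduces to analysing $R\colon \Omega_4^{\mathrm{KT}^-}=(\Z/2)^3\to \Omega_2^{\Spin}(\tau^-)=(\Z/2)^2$. The vanishing $\pi_3(F_{\mathrm{KT}^-})=0$ is equivalent to surjectivity of this $R$, which by the complete invariants of Corollary~\ref{KT_minus_2nd} means exhibiting KT$^-$ pairs $(M^4,F^2)$ whose characteristic submanifolds attain linearly independent values of $\int_F w_2(\nu_F)$ and $\int_F w_1(\nu_F)^2$. A natural source is products and twisted line-bundle constructions over non-orientable surfaces; the named generators $QUw_2$ and $QUw_1^2$ of the two $\Sigma^4\cA(1)$-summands in Proposition~\ref{just_free} sit in Adams filtration zero and are precisely the Thom-class cups of the normal-bundle characteristic classes, so they are the correct targets to hit. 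Nontriviality of $B_-$ is then automatic: since a map $(\Z/2)^3\to (\Z/2)^2$ must have kernel of dimension at least one, the exact sequence shows $\pi_4(F_{\mathrm{KT}^-})$ surjects onto a nonzero $\ker R$, and $B_-$ is moreover an extension of $\ker R$ by a quotient of $\Omega_3^{\Spin}(\tau^-)=(\Z/2)^2$.

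The main obstacle is constructing explicit degree-$4$ generators of $\Omega_4^{\mathrm{KT}^-}$ and evaluating the relevant normal-bundle characteristic numbers of the restricted submanifold $F$; once surjectivity of $R$ in degree~$4$ is in hand, everything in \cref{KT-_char_LES} follows formally from the long exact sequence together with the surface-level vanishing argument above.
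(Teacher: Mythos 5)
Your strategy is the paper's: feed the groups from \cref{prop:bordcompKT-} and \cref{KT_minus_2nd} into the long exact sequence of the fiber sequence and pin down the maps $R$ in low degrees. Your treatment of degrees $0$--$2$ matches the paper's argument exactly: the only non-formal input there is $R_2=0$, and your reason (the Wu formula forces $w_2+w_1^2=0$ on every closed surface, so the characteristic submanifold is dual to $0$ and its class in $\Omega_0^{\Spin}(\tau^-)\cong\Z/2$ vanishes) is the same as the paper's, which evaluates $R$ on the generator $(\RP^2,\varnothing)$. This correctly yields $\pi_0\cong\Z/2$, $\pi_1\cong\Z/2$, and $|A_-|=4$, and your observation that $B_-\ne 0$ no matter what $R_4$ is (any map $(\Z/2)^{\oplus 3}\to(\Z/2)^{\oplus 2}$ has nonzero kernel, and $\pi_4(F_{\mathrm{KT}^-})$ surjects onto that kernel) is also correct.

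The claim you do not actually establish is $\pi_3(F_{\mathrm{KT}^-})=0$. As you say, since $\Omega_3^{\mathrm{KT}^-}=0$ this is equivalent to surjectivity of $R_4\colon(\Z/2)^{\oplus 3}\to(\Z/2)^{\oplus 2}$, and you reduce it to exhibiting $\mathrm{KT}^-$ pairs $(M^4,F^2)$ on which the complete invariants $\int_F w_2(\nu_F)$ and $\int_F w_1(\nu_F)^2$ of \cref{KT_minus_2nd} take linearly independent values --- but you stop there, so this remains a gap relative to the statement as written. It is worth noting that the paper's own proof is also silent on exactly this point: it verifies only $R_2=0$ and asserts the rest follows by exactness, whereas the analogous surjectivity for $\mathrm{KT}^+$ is proved in detail in \cref{KT+_LES_thm} by producing explicit generators and evaluating characteristic numbers. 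Your gap is closed the same way here: for instance $(\CP^2,\CP^1)$ is a $\mathrm{KT}^-$ pair (as $\CP^2$ is oriented, $w_2+w_1^2=w_2$ is dual to $\CP^1$) with $\int_F w_2(\nu_F)=1$ and $\int_F w_1(\nu_F)^2=0$, while $(S^4,\RP^2)$ is a $\mathrm{KT}^-$ pair (both classes vanish on $S^4$, and $[\RP^2]=0$ in $H_2(S^4;\Z/2)$) with $w(\nu_F)=1+a$, so $\int_F w_2(\nu_F)=0$ and $\int_F w_1(\nu_F)^2=1$; these images span $(\Z/2)^{\oplus 2}$, giving the needed surjectivity. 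With that supplied, your argument is complete and coincides with the intended one.
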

\begin{proof}
Looking at \cref{KT-_char_LES}, most of the theorem follows from our prior bordism computations in \cref{prop:bordcompKT-,KT_minus_2nd} along with exactness -- \emph{as soon as we know $R\colon \Omega_2^{\mathrm{KT}^-}\to \Omega_0^\Spin(B\O_2\times B\O_1, \tau^-)$ vanishes}. We will show this by computing on a generator.

The pair $(\RP^2, \varnothing)$ has a $\mathrm{KT}^-$ characteristic structure, because $w_2(\RP^2) + w_1(\RP^2)^2 = 0$ (in fact this is true for all closed $2$-manifolds by the Wu formula), and $\varnothing$ is Poincaré dual to $0$. This pair represents the nonzero element of $\Omega_2^{\mathrm{KT}^-}$ because it is detected by $(M,F)\mapsto \int w_2(M)$. However, $R$ sends $(\RP^2, \varnothing)\mapsto\varnothing$, which (vacuously) bounds. Thus, as claimed, the characteristic map vanishes on $\Omega_2^{\mathrm{KT}^-}$.
\end{proof}

\begin{figure}[!ht]
    \centering
\begin{tikzcd}
	{k} & {\pi_k(F_{\mathrm{KT}^-})} & {\Omega_k^{\mathrm{KT}^-}} & {\Omega_{k-2}^{\Spin}(B\O_1\times B\O_2, \tau^-)} \\
	0 & \Z/2 \arrow[r,"\cong"]& {\Z/2} & 0 \\
	1 & {\Z/2} & 0 & 0 \\
	2 & {A_-}\ar[r, ->>]  & {\Z/2}\arrow[r, "0"] & {\Z/2}\arrow[from=4-4,to=3-2,in=-150, out=30,"\cong",swap] \\
	3 & {0} & 0 & {\Z/2} \arrow[from=5-4,to=4-2,in=-150, out=30,hookrightarrow,swap]\\
	4 & {B_-} \arrow[r]
    & {(\Z/2)^{\oplus 3}}\arrow[r, ->>] & {(\Z/2)^{\oplus 2}}
\end{tikzcd}
    \caption{The characteristic long exact sequence for the $\mathrm{KT}^-$ characteristic structure, which we prove in \cref{KT-_LES_thm}. Here $A_-$ is either $\Z/4$ or $\Z/2\oplus\Z/2$ and $B_-$ is some nonzero abelian group.}
    \label{KT-_char_LES}
\end{figure}

Just as we did for Guillou--Marin bordism, we can apply Anderson duality to interpret the characteristic long exact sequence in terms of anomalies. As all of the bordism groups we found are torsion, this is simpler, but means that all anomaly-theoretic information in this case is invisible to perturbative computations.
\begin{cor}
\label{KT_minus_physics}
Assuming \cref{char_conj}, consider a $k$-dimensional field theory on manifolds with $\mathrm{KT}^-$ structure. For $k = 0$, $1$, and $2$, the defect map $\mho_\Spin^{k-1}(B\O_1\times B\O_2, \tau^-)\to\mho_{\mathrm{KT}^-}^{k+1}$ is $0$. For $k= 3$, the map is nonzero and there is a $B_-$-valued obstruction to pulling the anomaly of the theory back to the defect.
\end{cor}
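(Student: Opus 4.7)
The plan is to apply Anderson duality to the characteristic long exact sequence established in \cref{KT-_LES_thm} (displayed in \cref{KT-_char_LES}), obtaining a long exact sequence of cohomology theories $\mho^\ast$ which, by the Freed-Hopkins-Grady dictionary, classify reflection-positive invertible field theories. Under this dualization the defect anomaly map $D\colon \mho_{\Spin}^{k-1}(B\O_1\times B\O_2,\tau^-)\to \mho_{\mathrm{KT}^-}^{k+1}$ arises as the Anderson dual of the characteristic map $\mathcal R$ on spectra, so it can be analyzed from $R_{k+2}$ and $R_{k+1}$ as they appear in \cref{KT-_char_LES}.

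Next, I would use the universal coefficient short exact sequence
\[
0 \to \Ext(\Omega_m^\xi, \Z) \to \mho^m_\xi \to \Hom(\Omega_{m+1}^\xi, \Z) \to 0
\]
to evaluate both source and target of $D$ from \cref{prop:bordcompKT-} and \cref{KT_minus_2nd}. Naturality of the UCT in the spectrum decomposes $D$, at the associated-graded level, as $\Ext(R_{k+1},\Z)$ on the Ext summands and $\Hom(R_{k+2},\Z)$ on the Hom summands. Since $\Omega_k^\Spin(B\O_1\times B\O_2,\tau^-)$ is torsion for $k\le 3$, all relevant Hom summands vanish, and the analysis reduces to Pontryagin duality of the maps $R_m$ on finite abelian groups.

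The case analysis is then immediate. For $k=0$ the source $\mho^{-1}_\Spin(\dots,\tau^-)$ vanishes ($\Omega_{-1}^\Spin=0$ and $\Omega_0^\Spin(\dots)=\Z/2$ contributes only to $\Ext$, a $\mho^0$-piece). For $k=2$ the target $\mho^3_{\mathrm{KT}^-}$ vanishes, because $\Omega_3^{\mathrm{KT}^-}=0$ and $\Omega_4^{\mathrm{KT}^-}=(\Z/2)^{\oplus 3}$ is torsion. For $k=1$, source and target are both $\Z/2$, entirely in the Ext summand, and the relevant map is Pontryagin dual to $R_2$, which vanishes by the $(\RP^2,\varnothing)$ argument in the proof of \cref{KT-_LES_thm}; hence $D=0$. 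For $k=3$, $D$ on the Ext summand is Pontryagin dual to the surjection $R_4\colon (\Z/2)^{\oplus 3}\twoheadrightarrow (\Z/2)^{\oplus 2}$, which dualizes to an injection $(\Z/2)^{\oplus 2}\hookrightarrow (\Z/2)^{\oplus 3}$, hence $D$ is nonzero. The obstruction to pulling back the bulk anomaly to the defect is the cokernel of $D$; by exactness of the Anderson-dualized LES this cokernel injects into the Anderson dual of $\pi_4(F_{\mathrm{KT}^-})=B_-$, giving a $B_-$-valued obstruction.

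The main obstacle is the bookkeeping around Anderson duality: tracking degree shifts, verifying that $D$ really is identified with the dual of the correct $R_m$, and confirming that the $k=3$ obstruction naturally lands in a group identified with $B_-$ despite our incomplete knowledge of $\pi_5(F_{\mathrm{KT}^-})$. Because all relevant groups are finite and largely torsion, the UCT plus Pontryagin duality makes every case effectively computable, and no higher-order extension data is required; the nonvanishing of $D$ at $k=3$ follows already from the injectivity of $\Ext(R_4,\Z)$, independently of whatever $\Omega_5^{\mathrm{KT}^-}$ turns out to be.
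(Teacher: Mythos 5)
Your proof is correct and is essentially the argument the paper intends: the paper leaves the corollary as an immediate consequence of applying Anderson duality to the characteristic long exact sequence of \cref{KT-_LES_thm}, noting that all groups involved are torsion so that only the $\Ext$ (Pontryagin-dual) parts of the universal coefficient sequence survive, and the vanishing/nonvanishing of $D$ is read off from $R_2=0$ and the surjectivity of $R_4$ exactly as you do. Your UCT bookkeeping and case analysis simply make explicit what the paper treats as routine.
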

It would be interesting to determine the precise structure of $B_-$ and compute this obstruction in examples.

\begin{rem}\label{rem:differentfromKT}
    In the work of \cite{KT90} where the $\mathrm{KT}^-$ characteristic pairs first appeared, the authors demanded a map from the characteristic structures on $(M,F)$ to the space of \pinm structures on $F$. In particular, the authors were explicitly using the fact that $F$ was a pin$^-$ manifold in the computations of their characteristic bordism group denoted $\Omega^{!}_*$. The important distinction that we would like to draw here is that the condition of Definition \ref{def:KT} does not imply that $F$ has to have a \pinm structure necessarily. In fact, \cite[Lemma 6.7]{KT90} gives the full condition for when $F$ is \pinm. Since the physical context in which we interpret characteristic structures does not require the choice of tangential structure on $F$, we will continue forward using Definition \ref{def:KT}, and hence take a different approach than Kirby and Taylor. This is also reflected in the difference of our bordism computations versus Kirby and Taylor in degrees 3 and 4, see \cite[Theorem 7.2]{KT90} and
    \cite[Theorem 7.3]{KT90} respectively for the results. In this paper, we do not encode the condition that Kirby and Taylor requires to give $F$ a \pinm structure in a homotopical manner.
\end{rem}

\subsection{\texorpdfstring{Kirby--Taylor$^{+}$ Characteristic Bordism}{}}\label{subsection:pin+KT}
Similar to the case of $B\mathrm{KT}^-$ from \S\ref{subsection:KTcharBord}, we can construct the pullback for $B\mathrm{KT}^+$ where  the manifold $F$ is Poincaré dual to $w_2(TM)$. This gives
\begin{equation}\label{eq:KT+pullback}
    \begin{tikzcd}[column sep=2cm, row sep=2cm]
 B\mathrm{KT}^{+} \arrow[r,"\sigma"] \arrow[d] \arrow[dr, phantom, "\lrcorner", very near start] & M\O_2 \times B\O_1 \arrow[d, "{(U,w_1(\sigma))}"] \\
B\O \arrow[r, "{(w_2,w_1)}",swap] & K(\Z/2,2) \times K(\Z/2,1)\,,
\end{tikzcd}
\end{equation}
which implies that the data of a $\mathrm{KT}^+$ structure consists of an identification $w_2(V)=U$ and $w_1(V)=w_1(\sigma)$.
\begin{prop}\label{prop:KT+structures}
    The  structure of $\mathrm{KT}^{+}$-characteristic pairs is equivalent to $(M\O_2,U)$-twisted pin$^{+}$ structures.
\end{prop}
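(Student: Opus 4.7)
The plan is to mimic the argument for \cref{prop:KT-bord} almost verbatim, with the only change being that the $w_1^2$ summand is absent throughout. Concretely, I would first unpack the pullback \eqref{eq:KT+pullback}: a point in $B\mathrm{KT}^+$ consists of (i) a classifying map $M\to B\O$ for the (stable) tangent bundle $TM$, (ii) a map $f\colon M\to M\O_2$, which by \cref{construction:PT}(2) encodes an embedded submanifold $F\hookrightarrow M$ Poincaré dual to the pullback of the Thom class $U\in H^2(M\O_2;\Z/2)$, and (iii) an identification of $w_2(TM)$ with $f^*(U)$ in $H^2(M;\Z/2)$.

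Next I would match this against the definition of an $(M\O_2,U)$-twisted pin$^+$ structure. Recall that the pin$^+$ condition is the trivialization of $w_2$ alone (no $w_1^2$ term), so by the direct analogue of \cref{nonVB} for pin$^+$, an $(X,b)$-twisted pin$^+$ structure on a vector bundle $E\to M$ is data of a map $g\colon M\to X$ and a trivialization of $w_2(E)+g^*(b)$. Specializing to $X=M\O_2$, $b=U$, and $E=TM$, this is precisely the data (ii)–(iii). Since $M$ carries no orientation constraint, the map to $B\O$ in (i) is automatic, and no further compatibility is needed.

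There is essentially no obstacle here: the statement is definition-chasing, and the only substantive input is the Pontryagin-Thom construction (\cref{construction:PT}(2)) translating ``Poincaré dual submanifold'' into ``lift across the Thom class,'' together with the fact that we do not need the $w_1^2$ correction that appeared in the $\mathrm{KT}^-$ case, because the characteristic class being dualized is $w_2(TM)$ rather than $w_2(TM)+w_1(TM)^2$. Equivalently, the difference between $\mathrm{KT}^-$ and $\mathrm{KT}^+$ is exactly the difference between the two definitions of pin$^\pm$ structure, so the parallel with \cref{prop:KT-bord} is exact.
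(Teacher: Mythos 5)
Your proposal is correct and matches the paper's (largely implicit) argument: the paper likewise just unpacks the pullback square \eqref{eq:KT+pullback} into the data of the tangent bundle, a map to $M\O_2$, and the identification $w_2(TM)=U$, exactly as was done for $\mathrm{KT}^-$ in \cref{prop:KT-bord}, with the $w_1^2$ term dropped. Nothing further is needed.
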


The structure on the submanifold $F$ can be determined analogous to the KT$^-$ case in \cref{KT_minus_F_str}.
\begin{lem}\label{KT_plus_F_str}
 Let $V$ be a $B\O_2$-bundle and $\sigma$ be a $B\O_1$-bundle.  The submanifold $F$ in a KT$^+$ characteristic pair has a twisted spin structure with twistings given by:
   \begin{align}\label{eq:KTonFplus}
    w_1(TF) &= w_1(\sigma)+w_1(V)\,,\\ \notag 
    w_2(TF) &= w_1(V)^2+w_1(\sigma)w_1(V)\,.
\end{align}
\end{lem}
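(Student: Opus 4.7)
The plan is to reproduce the argument of Lemma \ref{KT_minus_F_str} \emph{mutatis mutandis}, adjusting only for the fact that in the KT$^+$ case the submanifold $F$ is Poincaré dual to $w_2(TM)$ rather than $w_2(TM)+w_1(TM)^2$. The starting point is the same: the decomposition $TM|_F = TF \oplus \nu_F$ together with the global identifications coming from the pullback square \eqref{eq:KT+pullback}, namely $w_1(TM) = w_1(\sigma)$ and $w_2(TM) = U = w_2(V)$, where $V$ is now interpreted as the rank-$2$ bundle whose restriction to $F$ is the normal bundle $\nu_F$. In particular $w_i(V)|_F = w_i(\nu_F)$ for $i=1,2$, and the asserted formulas should be read as equalities in $H^*(F;\Z/2)$.

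First I would extract the restriction identity from Corollary \ref{cor:atoN}: since $F$ is Poincaré dual to $w_2(TM)$ and $\nu_F$ has rank $2$, we obtain $i^*w_2(TM) = e(\nu_F)\bmod 2 = w_2(\nu_F) = w_2(V)$. Next, the degree-$1$ piece of the Whitney sum formula applied to $TM|_F = TF \oplus \nu_F$ yields $w_1(TM)|_F = w_1(TF) + w_1(\nu_F)$, which combined with $w_1(TM) = w_1(\sigma)$ gives the first asserted equation, $w_1(TF) = w_1(\sigma) + w_1(V)$. This step is formally identical to the KT$^-$ calculation, because the orientation data $\sigma$ on the bulk is unchanged.

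The degree-$2$ piece is where the calculation departs from the KT$^-$ case. Whitney gives $w_2(TM)|_F = w_2(TF) + w_1(TF)w_1(\nu_F) + w_2(\nu_F)$. Substituting $w_2(TM)|_F = w_2(V)$, $w_2(\nu_F) = w_2(V)$, and the previously derived expression for $w_1(TF)$, the two copies of $w_2(V)$ cancel mod $2$, leaving
\begin{equation*}
    w_2(TF) = (w_1(\sigma) + w_1(V))\,w_1(V) = w_1(V)^2 + w_1(\sigma)\,w_1(V),
\end{equation*}
which is the second asserted equation. The only qualitative difference from Lemma \ref{KT_minus_F_str} is the disappearance of the $w_1(\sigma)^2$ term, and this is a direct reflection of the missing $w_1(TM)^2$ correction in the Poincaré-dual class defining KT$^+$.

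There is no real obstacle here: once Corollary \ref{cor:atoN} has translated the Poincaré-duality condition into a restriction formula for $w_2(\nu_F)$, the rest is purely a two-line Stiefel--Whitney bookkeeping exercise. The main thing to be careful about is tracking the cancellation of $w_2(V)$ terms mod $2$, which is what produces the simpler formula in the KT$^+$ setting.
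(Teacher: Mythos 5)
Your proposal is correct and is exactly the argument the paper intends: the paper simply declares the KT$^+$ case ``analogous to the KT$^-$ case,'' and the underlying computation is precisely the one you carried out, namely \cref{cor:atoN} giving $w_2(TM)|_F = w_2(\nu_F)$ (with no $w_1(\sigma)^2$ correction since the dual class is $w_2(TM)$ alone) followed by the Whitney sum formula for $TM|_F = TF\oplus\nu_F$ in degrees $1$ and $2$. The cancellation of the two $w_2(V)$ terms mod $2$ and the resulting disappearance of $w_1(\sigma)^2$ relative to \cref{KT_minus_F_str} is exactly right.
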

We can express a $(M\O_2,U)$-twisted pin$^{+}$ structure as a $(B\O_1\times M\O_2,w_1,w_1^2 + U)$-twisted spin structure. The computation determining the $E_2$-page of the Baker--Lazarev Adams spectral sequence for the corresponding $\ko$-module twist is completely analogous to our argument for $\mathrm{KT}^-$, splitting the
Thom spectrum $M^\ko(\widehat A\circ f_{w_1, w_1^2+U})$ as a smash product of $M^\ko(\widehat A\circ f_{0,U})$ and $\ko\wedge(B\O_1)^{3\sigma-3}$.
Again we tensor together the corresponding two $\cA(1)$-modules in \cref{just_free}, then take Ext (which is trivial, because in the degrees we need, this $\cA(1)$-module is free).
The individual modules are drawn on the left in Figure \ref{fig:tensorforKT-} and in Figure \ref{fig:pin+module}, respectively. 
\begin{figure}[!ht]
\centering
	\begin{tikzpicture}[scale=0.8, every node/.style = {font=\tiny}]
		\foreach \y in {0, 1, ...,6,7} {
			\node at (-2, \y) {$\y$};
		}
  \begin{scope}[BrickRed]
		\tikzpt{0}{0}{$U$}{};
         \tikzpt{0}{1}{}{};
         \tikzpt{0}{2}{}{};
         \tikzpt{0}{3}{}{};
         \tikzpt{0}{4}{}{};
          \tikzpt{0}{6}{}{};
          \tikzpt{0}{5}{}{};
           \tikzpt{0}{7}{}{};
         \sqone(0, 0);
          \sqtwoL(0, 0);
           \sqtwoL(0, 3);
            \sqtwoR(0, 4);
          \sqone(0, 2);
           \sqone(0, 4);
            \sqone(0, 6);
            \begin{scope}
                \clip (-1, 6.5) rectangle (0.5, 7.5);
                \sqtwoL(0, 7);
            \end{scope}
	\end{scope}
	\end{tikzpicture}
\caption{The $\cA(1)$-module structure for $H^*((B\O_1)^{3\sigma-3};\Z/2)$. The corresponding twisted spin structure is equivalent to a \pinp structure. The Steenrod actions on the Thom class are given by $\Sq^1U = w_1(3\sigma)U$ and $\Sq^2 U = w_2(3\sigma) U$.}
    \label{fig:pin+module}
\end{figure}
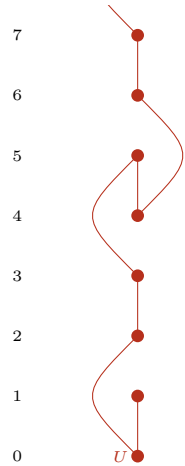
We compute the bordism groups in Figure \ref{fig:KT+}, which shows:
\begin{prop}\label{prop:bordcompKT+}
There are isomorphisms
\begin{equation}
\begin{aligned}
    \Omega_0^{\mathrm{KT}^+} &\cong \Z/2\\
    \Omega_1^{\mathrm{KT}^+} &\cong 0\\
    \Omega_2^{\mathrm{KT}^+} &\cong \Z/2\\
    \Omega_3^{\mathrm{KT}^+} &\cong 0\\
    \Omega_4^{\mathrm{KT}^+} &\cong (\Z/2)^{\oplus 3}.
\end{aligned}
 \end{equation}
\end{prop}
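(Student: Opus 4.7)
The plan is to parallel the argument used for Proposition \ref{prop:bordcompKT-}. By Proposition \ref{prop:KT+structures}, a $\mathrm{KT}^+$-structure is equivalent to a $(B\O_1 \times M\O_2, w_1, w_1^2 + U)$-twisted spin structure, so by the shearing construction and the approximation of $\MTSpin$ by $\ko$ in degrees below $8$, the desired bordism groups are the homotopy groups of the $\ko$-module Thom spectrum $M^\ko(\widehat A \circ f_{w_1, w_1^2 + U})$. By Beardsley's symmetric monoidality of the Thom spectrum functor (see \eqref{smash_split}), this decomposes as a $\ko$-module smash product of $M^\ko(\widehat A \circ f_{w_1, w_1^2}) \simeq \ko \wedge (B\O_1)^{3\sigma - 3}$ (the $\ko$-module analogue of $\MTPin^+$) with $M^\ko(\widehat A \circ f_{0, U})$.

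The key computational input has already been assembled: the second isomorphism of Proposition \ref{just_free} identifies the $\cA(1)$-module $H_\ko^*$ of this smash product, up to degree $4$, with a free module
\begin{equation*}
\textcolor{BrickRed}{\cA(1)} \oplus \textcolor{MidnightBlue}{\Sigma^2 \cA(1)} \oplus \textcolor{Green}{\Sigma^4 \cA(1)} \oplus \textcolor{Fuchsia}{\Sigma^4 \cA(1)} \oplus \textcolor{Orange}{\Sigma^4 \cA(1)}.
\end{equation*}
One derives this by applying the Künneth formula for $\ko$-module cohomology together with the Cartan formula to the two factors, whose $\cA(1)$-module structures are drawn in \cref{fig:tensorforKT-} (left) and \cref{fig:pin+module}.

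Next, I will feed this into the Baker-Lazarev Adams spectral sequence of Theorem \ref{BL_thm}. Since $\Ext_{\cA(1)}^{s,t}(\cA(1), \Z/2) = \Z/2$ is concentrated in bidegree $(0,0)$, each free $\Sigma^k \cA(1)$ summand contributes a single $\Z/2$ on the $s=0$ line in topological degree $k$. In the range $t-s \le 4$, the $E_2$-page therefore has one $\Z/2$ in degrees $0$ and $2$, three copies in degree $4$, and nothing in degrees $1$ or $3$; see \cref{fig:KT+}. There is no room for differentials, and because each Adams tower has height one there are no $h_0$-extensions, so the $E_\infty$-page reads off the claimed $2$-completed bordism groups.

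Finally, I will dispatch the odd-primary contributions by the argument used in Lemma \ref{GM_odd}: base-changing along the $\Z[1/2]$-local equivalence $\MTSpin \to \MTSO$, the twist $(w_1, w_1^2 + U)$ factors through the projection $B\O/B\Spin \to B\O/B\SO \simeq K(\Z/2, 1)$ onto its $w_1$-component, giving an orientation-twisted $\SO$-bordism calculation over $B\O_1 \times M\O_2$. The corresponding Atiyah-Hirzebruch spectral sequence with $\Z[1/2]$-coefficients collapses to zero in total degrees $\le 4$, since $\Omega_q^{\SO} \otimes \Z[1/2]$ is concentrated in degrees divisible by $4$ and the $\Z[1/2]$-reduced homology of $(B\O_1)^{\sigma-1} \wedge (M\O_2)_+$ vanishes in the relevant range. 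The main obstacle is thus simply the $\cA(1)$-module identification already packaged into Proposition \ref{just_free}; once that is invoked, the rest is mechanical.
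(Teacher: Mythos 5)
Your proposal is correct and takes essentially the same route as the paper's own proof: identify a $\mathrm{KT}^+$ structure with a $(B\O_1\times M\O_2, w_1, w_1^2+U)$-twisted spin structure, split $M^\ko(\widehat A\circ f_{w_1,w_1^2+U})$ as the $\ko$-module smash product of $M^\ko(\widehat A\circ f_{0,U})$ with $\ko\wedge (B\O_1)^{3\sigma-3}$, invoke the second isomorphism of \cref{just_free} to see the relevant $\cA(1)$-module is free through degree $4$, and read the answer off the $s=0$ line of the Baker--Lazarev spectral sequence. The only difference is that you spell out the odd-primary vanishing explicitly (via the $\Z[1/2]$-acyclicity of $(B\O_1)^{\sigma-1}$ after base change to $\MTSO$), a step the paper leaves implicit for the $\mathrm{KT}^{\pm}$ cases.
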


\begin{figure}[!ht]
\centering 
	\begin{tikzpicture}[scale=0.8, every node/.style = {font=\tiny}]
            \clip (-2.25, -0.25) rectangle (10.25, 6.5);
    
		\foreach \y in {0, 1, ...,6} {
			\node at (-2, \y) {$\y$};
		}
		\begin{scope}[BrickRed]
		\Aone{0}{0}{$Q$};
            \node[left] at (0, 2) {$QU$};
          \node at (1.51,2.7) {$\alpha$};
	\end{scope}
 \begin{scope}[MidnightBlue]
        \Aone{3}{2}{$Qt^2$};
         \node at (2.7,4) {$\gamma$};
         \node at (4.5,4.7) {$Ut^3$};
	\end{scope}
  \begin{scope}[Green]
            \Aone{6}{4}{$Qt^4$};
	\end{scope}
 \begin{scope}[Fuchsia]
        \Aone{7.5}{4}{$QUw_2$};
	\end{scope}
     \begin{scope}[Orange]
        \Aone{9}{4}{$QUw_1^2$};
	\end{scope}
	\end{tikzpicture}
\caption{The $\cA(1)$-modules up to degree 4 for the tensor $H((M\O_2)^U;\Z/2) \otimes H((B\O_1)^{3\sigma};\Z/2)$. The classes $\alpha = Q(Ut+t^3)$, $\beta = Q(Ut^2+t^4)$.
This module is isomorphic to the one in \cref{fig:KT-}, but we indicate how the marked classes in \cref{fig:KT-} have changed in the modules above.}
    \label{fig:KT+}
\end{figure}
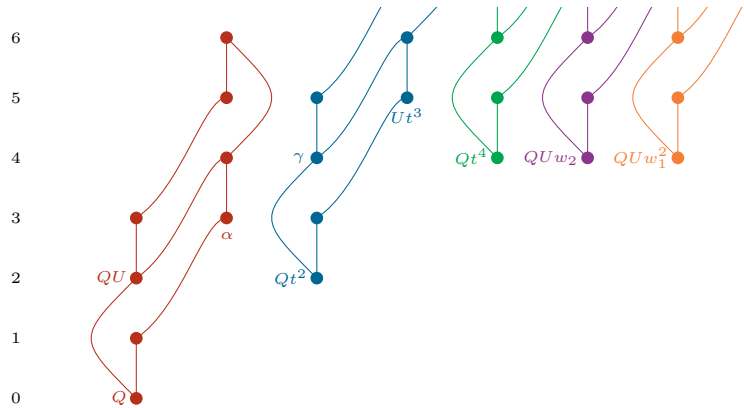

Now we will work out the characteristic long exact sequence for $\mathrm{KT}^+$ bordism in low degrees (as always, to do this we have to assume \cref{char_conj}). The first step is to compute the bordism groups for the structure present on $F$. In \cref{KT_plus_F_str}, we saw that this is a $(B\O_1\times B\O_2, a+b, ab+b^2)$-twisted spin structure, where we again let $a\coloneqq w_1\in H^1(B\O_1;\Z/2)$ and $b\coloneqq w_1\in H^1(B\O_2;\Z/2)$. We will refer to this twisting data as $\tau^+$.
\begin{thm}\label{2nd_part_KT+_thm}
The low-degree $(B\O_1\times B\O_2, a+b, ab+b^2)$-twisted spin bordism groups are
\begin{equation}
    \begin{aligned}
        \Omega_0^\Spin(B\O_1\times B\O_2;\tau^+) &\cong \Z/2\\
        \Omega_1^\Spin(B\O_1\times B\O_2;\tau^+) &\cong \Z/2\\
        \Omega_2^\Spin(B\O_1\times B\O_2;\tau^+) &\cong (\Z/2)^{\oplus 2}\\
        \Omega_3^\Spin(B\O_1\times B\O_2;\tau^+) &\cong \Z/8 \oplus \Z/2\\
        \Omega_4^\Spin(B\O_1\times B\O_2;\tau^+) &\cong (\Z/2)^{\oplus 4}.
    \end{aligned}
\end{equation}
\end{thm}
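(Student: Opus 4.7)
The plan is to follow the strategy of \cref{KT_minus_2nd} with modifications accounting for the different twisting data. Write $\widetilde M \coloneqq M^\ko(\widehat A\circ f_{a+b,\, ab+b^2})$. First, the argument of \cref{klein_split_off} adapts verbatim: the composition $(\id,\det)\colon B\O_1\times B\O_2\to B\O_1\times B\O_1$ provides a section to the standard inclusion, so we obtain a $\ko$-module splitting $\widetilde M\simeq MV^+(a+b, ab+b^2)\vee \widetilde M_{\mathrm{red}}$, in which $\widetilde M_{\mathrm{red}}$ corresponds on $\cA(1)$-cohomology to the complement of the subspace spanned by $\{Ua^ib^j : i,j\ge 0\}$.

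For the Klein summand $MV^+(a+b, ab+b^2)$, I would apply the $\GL_2(\mathbb F_2)$-automorphism of $\O_1\times\O_1$ represented by $\left(\begin{smallmatrix}1 & 1 \\ 0 & 1\end{smallmatrix}\right)$, which converts the twisting data to $(a, ab)$. This brings the calculation into the framework of \cite[Theorem 4.26]{Deb21}, whose low-degree bordism groups should supply the $\Z/8$ summand in degree $3$ that distinguishes the $\mathrm{KT}^+$ answer from its $\mathrm{KT}^-$ counterpart, and should also contribute to the rank of the degree-$4$ group.

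For the complementary summand $\widetilde M_{\mathrm{red}}$, the Steenrod formulas of \cref{DY23_main_thm} specialized to $(a+b, ab+b^2)$ determine the $\cA(1)$-action on $H_\ko^*(\widetilde M_{\mathrm{red}})$. Applied to the generators $Qc$ and $Qbc$, where $c = w_2$ of the tautological rank-$2$ bundle on $B\O_2$, a direct calculation should yield an isomorphism
\[
H_\ko^*(\widetilde M_{\mathrm{red}}) \cong \Sigma^2\cA(1) \oplus \Sigma^3\cA(1) \oplus P^+
\]
with $P^+$ concentrated in degrees $\ge 5$. Margolis' theorem then forces collapse of the Baker-Lazarev Adams spectral sequence in the relevant range, and the odd-primary part of each bordism group vanishes by an argument parallel to \cref{GM_odd}, exploiting the fact that $B\O_1\times B\O_2$ has no odd-primary reduced cohomology.

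The main obstacle I anticipate is handling possible hidden $h_0$-extensions in degrees $3$ and $4$ that could couple the Klein tower to the complementary summand, and pinning down the precise $(\Z/2)^{\oplus 4}$ structure in degree $4$. I would address this by evaluating explicit bordism invariants --- integrals of mod-$2$ monomials in $a$, $b$, and $c$ of the appropriate total degree, together with the twisted $\eta$-invariant detecting the $\Z/8$ summand --- on convenient manifold generators such as $\RP^3$, Klein bottles times circles, and products involving $\RP^2$, each equipped with suitable $\O_1\times\O_2$-bundles.
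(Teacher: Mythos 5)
Your overall architecture is the same as the paper's: split off the $B\O_1\times B\O_1$ part using the section induced by $(\id,\det)$ (as in \cref{klein_split_off}), apply a $\GL_2(\mathbb F_2)$-automorphism of $\O_1\times\O_1$ to convert the restricted twist to $(a,ab)$ and quote the ``dpin'' computation, and handle the complementary summand with the Baker--Lazarev spectral sequence. The gap is in your central computational claim, the $\cA(1)$-module structure of the complementary summand. You assert $H_\ko^*(\widetilde M_{\mathrm{red}})\cong\Sigma^2\cA(1)\oplus\Sigma^3\cA(1)\oplus P^+$ with $P^+$ concentrated in degrees $\ge 5$, i.e.\ you import the answer from the $\mathrm{KT}^-$ case (\cref{non_klein_calc}) unchanged. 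This is false, and a dimension count already shows it: in degree $4$ the complement of $\operatorname{span}\{Ua^ib^j\}$ is four-dimensional, spanned by $Ua^2c$, $Uabc$, $Ub^2c$, $Uc^2$, whereas $\Sigma^2\cA(1)\oplus\Sigma^3\cA(1)$ is only two-dimensional there, so your $P^+$ cannot start in degree $5$. (Note that even in the $\mathrm{KT}^-$ case the remainder $P$ is only $3$-connected, not $4$-connected.)

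The underlying reason the module changes is that replacing the degree-two twist $a^2+ab+b^2$ by $ab+b^2$ alters the operation $\Sq^2(Qx)=Q(\beta x+\alpha\Sq^1 x+\Sq^2 x)$ of \cref{DY23_main_thm} by the term $Qa^2x$, which changes which cyclic submodules are free. The correct decomposition in the relevant range is
\[
\Sigma^2\cA(1)\oplus\Sigma^3 J\oplus\Sigma^4\cA(1)\oplus\Sigma^4 J\oplus(\text{$4$-connected}),
\]
where $J=\cA(1)/(\Sq^3)$ is the Joker and the generators are $Uc$, $Ubc$, $Ua^2c$, and $U(a^2c+b^2c)$. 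The Joker summands carry Ext above the generator's stem, and together with the two degree-$4$ summands they supply exactly the extra $\Z/2$'s needed for the stated answers: with your module, the free summands contribute nothing to $\pi_4$, so you would compute $\Omega_4\cong\pi_4(MV(a,ab))$, which is far smaller than $(\Z/2)^{\oplus 4}$. Evaluating bordism invariants on explicit generators, as you propose, can only resolve differentials and extensions; it cannot repair an $E_2$-page that is too small. You need to redo the tensor-product/module computation for the twist $(a+b,ab+b^2)$ rather than transporting it from $(a+b,a^2+ab+b^2)$.
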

\begin{proof}
The outline of this proof is identical to that of \cref{KT_minus_2nd} and its constituent lemmas (\cref{klein_split_off,non_klein_calc}); only the specifics of the computation differ. Thus in this proof we focus on the differences.
\begin{itemize}
    \item Define $MV(a+b, ab+b^2)$ analogously to how we defined $MV(a+b, a^2+ab+b^2)$ in \cref{klein_split_off}; then, the analogue of that lemma for $\mathrm{KT}^+$ exhibits a $\ko$-module splitting
    \begin{equation}
        M^\ko(\widehat A\circ\tau^+)\overset\simeq\longrightarrow MV(a+b, ab+b^2)\vee M'
    \end{equation}
    for some $\ko$-module $M'$; the inclusion map on $H_\ko^*$ sends the cohomology of $M'$ to a subspace complementary to the span of $\set{Ua^ib^j: i,j\ge 0}$.
    \item The same element of $\mathrm{GL}_2(\mathbb F_2)$, interpreted as an automorphism of $B\O_1\times B\O_1$, identifies $MV(a+b, ab+b^2)$ with $MV(a, ab)$. This latter $\ko$-module is studied in~\cite[Appendix F]{KPMT20} for applications to ``dpin bordism,'' including a computation of its homotopy groups in degrees $6$ and below in (\textit{ibid.}, Theorem F.1). See also~\cite{WWZ20, SS24, BottSpiral} for related computations.\footnote{If $\sigma_i\to B\O_1\times B\O_1$ denotes the tautological real line bundle associated to the $i^{\mathrm{th}}$ copy of $B\O_1$, then $a+b = w_1(\sigma_1\boxplus 3\sigma_2)$ and $ab+b^2 = w_2(\sigma_1\boxplus 3\sigma_2)$, which implies by \cref{lem:shearing} that
    \begin{equation}
        MV(a+b, ab+b^2) \simeq \ko\wedge (B\O_1\times B\O_1)^{\sigma_1\boxplus 3\sigma_3 - 4}\simeq \ko\wedge (B\O_1)^{\sigma-1}\wedge (B\O_1)^{3\sigma-3}.
    \end{equation}
    The Smith isomorphism $(B\O_1)^{\sigma-1}\simeq \Sigma^{-1}B\O_1$ (the real analogue of \cref{cpx_smith_isom}; see for example~\cite[Lemma 2.6.5]{Koc96}) identifies these homotopy groups with $(\ko\wedge (B\Z/2)^{3\sigma-3})_{*+1}(B\O_1)$, and these groups have been calculated in low degrees by Guo-Ohmori-Putrov-Wan-Wang~\cite[Theorem 14]{GOPWW20} as the \pinp bordism groups of $B\O_1$, or in all degrees by Bruner~\cite[Remark B.5]{Bru14}.}
    \item Analogously to \cref{non_klein_calc}, we obtain an $\cA(1)$-module isomorphism
    \begin{equation}\label{M_prime_Hko}
        H_\ko^*(M') \cong \textcolor{BrickRed}{\Sigma^2\cA(1)} \oplus \textcolor{Green}{\Sigma^3 J} \oplus \textcolor{MidnightBlue}{\Sigma^4 \cA(1)} \oplus \textcolor{Fuchsia}{\Sigma^4 J} \oplus P,
    \end{equation}
    where $P$ is $4$-connected and $J$ is the \emph{Joker}, defined to be $\cA(1)/(\Sq^3)$. The name is due to Adams. In particular, in degrees $4$ and below, the Baker--Lazarev Adams spectral sequence collapses. We draw the decomposition~\eqref{M_prime_Hko} in \cref{fig:KT+pt2}, left, and give the $E_2$-page of the corresponding Baker--Lazarev Adams spectral sequence in \cref{fig:KT+pt2}, right.
    \qedhere
\end{itemize}
\end{proof}

\begin{figure}[!ht]
\begin{subfigure}[c]{0.45\textwidth}
	\begin{tikzpicture}[scale=0.6, every node/.style = {font=\tiny}]
		\foreach \y in {2, ...,10} {
			\node at (-2, \y) {$\y$};
		}
		\begin{scope}[BrickRed]
		      \Aone{0}{2}{$Uc$};
	    \end{scope}
            \begin{scope}[Green]
                \Joker{2.75}{3}{$Ubc$};
	    \end{scope}
            \begin{scope}[MidnightBlue]
                \Aone{4.75}{4}{$Ua^2c$};
            \end{scope}
            \begin{scope}[Fuchsia]
                \Joker{7.5}{5}{$U(a^2c+b^2c)$};
            \end{scope}
	\end{tikzpicture}
\end{subfigure}\qquad
\begin{subfigure}[!ht]{0.4\textwidth}
\includegraphics[width=51mm,scale=.9]{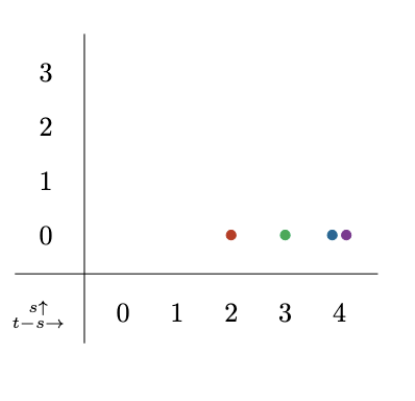}
\end{subfigure}
\caption{Left: The $\cA(1)$-module structure on $H_\ko^*(M')$, as in~\eqref{M_prime_Hko}; this summand includes all classes in
degree $4$ and below. Right: The $E_2$-page of the corresponding Adams spectral sequence, which computes a summand of the $2$-completion of $(B\O_1\times B\O_2, a+b, ab+b^2)$-twisted spin bordism. We use this in \cref{2nd_part_KT+_thm}.}
    \label{fig:KT+pt2}
\end{figure}

\begin{thm}
\label{KT+_LES_thm}
Assume \cref{char_conj} is true, so that there is a characteristic map of spectra $\mathit{MTKT}^+\to \Sigma^2 M^\MTSpin \tau^+$, and let $F_{\mathrm{KT}^+}$ denote the fiber of this map. Then there is an abelian group $A_+$ of order at least $8$ such that
\begin{equation}
\begin{aligned}
    \pi_0(F_{\mathrm{KT}^+}) &\cong \Z/2\\
    \pi_1(F_{\mathrm{KT}^+}) &\cong  0\\
    \pi_2(F_{\mathrm{KT}^+}) &\cong \Z/2\\
    \pi_3(F_{\mathrm{KT}^+}) &\cong 0\\
    \pi_4(F_{\mathrm{KT}^+}) &\cong A_+,
\end{aligned}
\end{equation}
and the characteristic long exact sequence is as given in \cref{KT+_char_LES}.
\end{thm}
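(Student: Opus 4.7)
The plan is to run the characteristic long exact sequence of \cref{char_conj} with $\Omega_*^{\mathrm{KT}^+}$ from \cref{prop:bordcompKT+} and $\Omega_*^{\Spin}(B\O_1\times B\O_2;\tau^+)$ from \cref{2nd_part_KT+_thm} as the two columns of input, following the template of \cref{KT-_LES_thm}. Exactness will force most of the $\pi_k(F_{\mathrm{KT}^+})$ for $k\le 4$ once the characteristic maps $R_k\colon \Omega_k^{\mathrm{KT}^+}\to\Omega_{k-2}^{\Spin}(B\O_1\times B\O_2;\tau^+)$ are pinned down in low degrees.

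The essential new input, compared to the $\mathrm{KT}^-$ case, is that $R_2$ is now an isomorphism rather than zero, and establishing this is the first nontrivial step. For $\mathrm{KT}^-$ the generator of $\Omega_2^{\mathrm{KT}^-}$ could be taken to be $(\RP^2,\varnothing)$ (valid because $w_2(\RP^2)+w_1(\RP^2)^2=0$ by Wu's formula), and $R$ sends it to $\varnothing$, which bounds. For $\mathrm{KT}^+$, however, $w_2(\RP^2)\ne 0$, so $\varnothing$ is not Poincaré dual to $w_2$ on $\RP^2$; the natural generator of $\Omega_2^{\mathrm{KT}^+}\cong\Z/2$ is instead $(\RP^2,\mathrm{pt})$, which is nontrivial because it maps to the generator of $\Omega_2^{\O}$ under the forgetful map $(M,F)\mapsto M$. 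I would then verify that $R(\RP^2,\mathrm{pt})=\mathrm{pt}$, equipped with its induced twisted spin structure, generates $\Omega_0^{\Spin}(B\O_1\times B\O_2;\tau^+)\cong\Z/2$. Combined with $R_3=0$ (forced by $\Omega_3^{\mathrm{KT}^+}=0$), exactness immediately yields $\pi_0\cong\Z/2$, $\pi_1=0$, and $\pi_2\cong\Z/2$.

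To obtain $\pi_3(F_{\mathrm{KT}^+})=0$, I need $R_4\colon(\Z/2)^{\oplus 3}\to(\Z/2)^{\oplus 2}$ to be surjective. Using a characteristic-number description of $\Omega_2^{\Spin}(B\O_1\times B\O_2;\tau^+)$ analogous to \cref{KT_minus_2nd}, this reduces to exhibiting characteristic pairs $(M^4,F^2)$ whose characteristic surfaces realize each of the two generators of the target. Products and twisted $\RP^2$-bundles are natural candidates; the check can be corroborated by identifying the $Qt^4$, $QUw_2$, $QUw_1^2$ summands of \cref{fig:KT+} with concrete geometric generators and tracking them through the characteristic map $\mathcal R$ at the level of $\cA(1)$-modules, where the free summands on either side should be matched on the nose.

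The main obstacle is the bound $|A_+|\ge 8$ for $\pi_4(F_{\mathrm{KT}^+})$. Exactness produces a short exact sequence
\begin{equation*}
    0\longrightarrow (\Z/8\oplus\Z/2)\big/\mathrm{image}(R_5)\longrightarrow \pi_4(F_{\mathrm{KT}^+})\longrightarrow \ker(R_4)\cong\Z/2\longrightarrow 0,
\end{equation*}
so $|A_+|\ge 8$ is equivalent to showing $|\mathrm{image}(R_5)|\le 4$. Since $\Omega_5^{\mathrm{KT}^+}$ is not available, my preferred strategy is to control this image via the Smith long exact sequence approximation: the Smith and characteristic Euler-class maps fit into a commuting diagram, and the $\Z/8$ generator of $\Omega_3^{\Spin}(\tau^+)$ is plausibly an $\eta$- or $\nu$-detected class inherited from $\mathrm{Pin}^+$-type bordism that cannot lie in the image of a characteristic map from a five-dimensional characteristic pair. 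Making this factoring argument precise is the delicate part of the argument; alternatively, extending the Baker--Lazarev Adams spectral sequence underlying \cref{prop:bordcompKT+} by one degree to compute $\Omega_5^{\mathrm{KT}^+}$ directly would bypass the issue but requires additional bookkeeping of differentials and extensions.
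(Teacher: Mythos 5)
Your skeleton and your first two steps track the paper's own proof. The paper likewise takes $(\RP^2,\pt)$ as the generator of $\Omega_2^{\mathrm{KT}^+}$ (detected by $\int_M w_2$, which is the same information as the underlying class of $\RP^2$ in unoriented bordism) and checks $R_2(\RP^2,\pt)=[\pt]\neq 0$; it proves surjectivity of $R_4$ by exhibiting the explicit basis $(\RP^4,\varnothing)$, $(\CP^2,\CP^1)$, $(S^4,\RP^2)$ of $\Omega_4^{\mathrm{KT}^+}$, separated by $\int_M w_1(M)^4$, $\int_F w_2(M)$, $\int_F w_1(F)^2$, and then observes that the images $\CP^1$ and $\RP^2$ are independent in $\Omega_2^{\Spin}(B\O_1\times B\O_2;\tau^+)$. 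Your step for $R_4$ is only a sketch: you should actually name generators and detecting characteristic numbers, since surjectivity of a specific geometric map $(\Z/2)^{\oplus 3}\to(\Z/2)^{\oplus 2}$ cannot be read off from matching free $\cA(1)$-summands on the two sides --- the characteristic map has no reason to respect any chosen splitting.

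The genuine gap is in your treatment of $|A_+|\ge 8$. You correctly reduce to $|\mathrm{im}(R_5)|\le 4$, but your ``preferred strategy'' is not an argument: there is no commuting square in the paper's setup relating $R_5$ to a Smith map whose image you already control, and ``plausibly $\eta$- or $\nu$-detected'' is a guess, not a proof. The observation you are missing makes the step nearly free: one does not need $\Omega_5^{\mathrm{KT}^+}$ itself, only that $2\,\Omega_5^{\mathrm{KT}^+}=0$, which follows from extending the Baker--Lazarev Adams chart of \cref{prop:bordcompKT+} by one degree (no $h_0$-towers in topological degree $5$, no odd torsion). Once $\Omega_5^{\mathrm{KT}^+}$ is an elementary abelian $2$-group, any homomorphism out of it lands in the $2$-torsion subgroup of $\Z/8\oplus\Z/2$, which has order $4$; hence $|\mathrm{im}(R_5)|\le 4$ and $|A_+|=32/|\mathrm{im}(R_5)|\ge 8$. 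With that substitution (and the generators made explicit in the $R_4$ step), your argument closes and coincides with the paper's.
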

\begin{proof}
Throughout this theorem, we will use the $\mathrm{KT}^+$ bordism groups and the $(B\O_1\times B\O_2, \tau^+)$-twisted spin bordism groups, which we computed in low degrees in \cref{prop:bordcompKT+,2nd_part_KT+_thm}, respectively.

We will first make the following three computations.
\begin{enumerate}
    \item\label{R2_KT+} The characteristic map $R_2\colon \Omega_2^{\mathrm{KT}^+}\to \Omega_0^\Spin(B\O_1\times B\O_2, \tau^+)$ is an isomorphism $\Z/2\to\Z/2$.
    \item\label{R4_KT+} The characteristic map $R_4\colon \Omega_4^{\mathrm{KT}^+}\to \Omega_2^\Spin(B\O_1\times B\O_2, \tau^+)$ , which is a map $(\Z/2)^{\oplus 3}\to (\Z/2)^{\oplus 2}$, is surjective.
    \item\label{Z4_5} The complement of the subgroup $4\Z/8\subset\Z/8\subset \Omega_3^\Spin(B\O_1\times B\O_2)$ is not in the image of the characteristic map.
\end{enumerate}
For~\eqref{R2_KT+}, we first show that $(\RP^2, \pt)$ is the nonzero element of $\Omega_2^{\mathrm{KT}^+}\cong\Z/2$. First we need to show this has a $\mathrm{KT}+$ structure, which amounts to the observation that $w_2(\RP^2)\ne 0$, so indeed $\pt$ is Poincaré dual to $w_2$. The $\Z/2$-valued bordism invariant $(M, F)\mapsto \int_M w_2$ is nonzero on $(\RP^2, \pt)$, so indeed the bordism class of $(\RP^2, \pt)$ is nonzero. The characteristic map $R_2$ thus sends this class to the bordism class of the point, which is also nonzero, so $R_2$ is a nonzero map $\Z/2\to\Z/2$, hence an isomorphism.

For~\eqref{R4_KT+}, we will show that $\Omega_4^{\mathrm{KT}^+}$ has a basis (as a $\Z/2$-vector space) given by $(\RP^4, \varnothing)$, $(\CP^2, \CP^1)$, and $(S^4, \RP^2)$ (the last two are the standard generators of Guillou--Marin bordism~\cite{GM}, and have the $\mathrm{KT}^+$-structures induced by their GM-structures). The reader can check that for all three of these pairs $(M, F)$, that $F$ is indeed Poincaré dual to $w_2(M)$. To show these three bordism classes form a basis, evaluate the $(\Z/2)^{\oplus 3}$-valued bordism invariant
\begin{equation}
    (M, F)\longmapsto \begin{pmatrix}
        \int_M w_1(M)^4\\
        \int_F w_2(M)\\
        \int_F w_1(F)^2
    \end{pmatrix}
\end{equation}
on each of the three bordism classes and observe that the resulting vectors are linearly independent over $\Z/2$. Now that we have the generators, we can evaluate $R_4$ on them to obtain the characteristic submanifolds $F$, which are $\varnothing$, $\CP^1$, and $\RP^2$, and the latter two are linearly independent in $\Omega_2^\Spin(B\O_1\times B\O_2, \tau^+)$, as can be shown by evaluating the bordism invariants $\int_F w_2(V)$ and $\int_F w_1(F)^2$.

To prove~\eqref{Z4_5}, it suffices to show $2\Omega_5^{\mathrm{KT}^+} = 0$, which can be shown in the same way as \cref{prop:bordcompKT+}.

With~\eqref{R2_KT+}--\eqref{Z4_5} established, the rest follows by exactness.
\end{proof}

\begin{figure}[!ht]
    \centering
\begin{tikzcd}
	{k} & {\pi_k(F_{\mathrm{KT}^+})} & {\Omega_k^{\mathrm{KT}^+}} & {\Omega_{k-2}^{\Spin}(B\O_1\times B\O_2, \tau^+)} \\
	0 & \Z/2 \arrow[r,"\cong"]& {\Z/2} & 0 \\
	1 & 0 & 0 & 0 \\
	2 & {\Z/2}\ar[r, "0"]  & {\Z/2}\arrow[r, "\cong"] & {\Z/2}\\
	3 & {0} & 0 & {\Z/2} \arrow[from=5-4,to=4-2,in=-150, out=30,"\cong"]\\
	4 & {A_+}\ar[r]
    & {(\Z/2)^{\oplus 3}}\ar[r, ->>]& {(\Z/2)^{\oplus 2}}\\
    5 & & & {\Z/8\oplus\Z/2}\arrow[from=7-4,to=6-2,in=-150, out=30]
\end{tikzcd}
    \caption{The characteristic long exact sequence for the $\mathrm{KT}^+$ characteristic structure, which we prove in \cref{KT+_LES_thm}. Here $A_+$ has order at least $8$.}
    \label{KT+_char_LES}
\end{figure}

Finally, we interpret this long exact sequence physically.
\begin{cor}
\label{KT_plus_physics}
Assuming \cref{char_conj}, consider a $k$-dimensional field theory on manifolds with $\mathrm{KT}^+$ structure. For $k = 0$, and $2$, the defect map $\mho_\Spin^{k-1}(B\O_1\times B\O_2, \tau^+)\to\mho_{\mathrm{KT}^+}^{k+1}$ is $0$. For $k= 1$, the defect map is an isomorphism: the obstruction vanishes.
For $k= 3$, the map is nonzero and there is an $A_+$-valued obstruction to pulling the anomaly of the theory back to the defect.
\end{cor}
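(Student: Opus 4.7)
The plan is to Anderson-dualize the characteristic long exact sequence of \cref{KT+_LES_thm} and then read each of the four cases off the bordism computations in \cref{prop:bordcompKT+,2nd_part_KT+_thm}. Applying Anderson duality $I$ to the cofiber sequence $F_{\mathrm{KT}^+}\to\mathit{MTKT}^+\to\Sigma^2 M^\MTSpin\tau^+$ and using $I(\Sigma^2 Y)\simeq\Sigma^{-2}IY$ produces a cofiber sequence whose long exact sequence in homotopy is
\begin{equation*}
\dotsb\to\mho^{k-1}_\Spin(B\O_1\times B\O_2,\tau^+)\xrightarrow{D}\mho^{k+1}_{\mathrm{KT}^+}\to\mho^{k+1}(F_{\mathrm{KT}^+})\to\mho^{k}_\Spin(B\O_1\times B\O_2,\tau^+)\to\dotsb,
\end{equation*}
in which $D$ is the defect anomaly map named in the corollary. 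Because every bordism group in the relevant range is finite (\cref{prop:bordcompKT+,2nd_part_KT+_thm}), the universal coefficient short exact sequence $0\to\Ext(\Omega_k,\Z)\to\mho^k\to\Hom(\Omega_{k+1},\Z)\to 0$ collapses to $\mho^k\cong\Ext(\Omega_k,\Z)$, and under this identification $D$ is $\Ext(R,\Z)$, the Pontryagin dual of the characteristic bordism map $R$.

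Given this setup, the cases $k=0$ and $k=2$ are immediate: the target $\mho^{k+1}_{\mathrm{KT}^+}$ vanishes both times, since $\Omega_1^{\mathrm{KT}^+}=\Omega_3^{\mathrm{KT}^+}=0$ while $\Omega_2^{\mathrm{KT}^+}$ and $\Omega_4^{\mathrm{KT}^+}$ are torsion, so $D$ is forced to be zero. For $k=1$, \cref{KT+_LES_thm}(\ref{R2_KT+}) identifies $R_2$ as an isomorphism $\Z/2\xrightarrow{\cong}\Z/2$, and Pontryagin-dualizing gives $D\colon\Z/2\xrightarrow{\cong}\Z/2$, i.e.\ the defect map is an isomorphism and the obstruction vanishes identically.

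The only substantive case is $k=3$. By \cref{KT+_LES_thm}(\ref{R4_KT+}), $R_4\colon(\Z/2)^{\oplus 3}\twoheadrightarrow(\Z/2)^{\oplus 2}$ is surjective, so its Pontryagin dual $D=\Ext(R_4,\Z)\colon(\Z/2)^{\oplus 2}\hookrightarrow(\Z/2)^{\oplus 3}$ is injective and nonzero, establishing the first half of the $k=3$ claim. Exactness of the dualized sequence identifies the obstruction to pulling a bulk anomaly in $\mho^4_{\mathrm{KT}^+}$ back to the defect with its image under the connecting map $\mho^4_{\mathrm{KT}^+}\to\mho^4(F_{\mathrm{KT}^+})$; this image coincides with $\operatorname{coker}(D)$. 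Using $\pi_4(F_{\mathrm{KT}^+})\cong A_+$ from \cref{KT+_LES_thm} and the universal coefficient sequence for $F_{\mathrm{KT}^+}$, the receiving group $\mho^4(F_{\mathrm{KT}^+})$ contains $\Ext(A_+,\Z)\cong A_+$, so the obstruction is $A_+$-valued in the sense of the corollary. The main technical point throughout is keeping careful track of the degree shift by $2$ in the cofiber sequence $\mathit{MTKT}^+\to\Sigma^2 M^\MTSpin\tau^+$, which is precisely what produces the offset between the source $\mho_\Spin^{k-1}$ and target $\mho_{\mathrm{KT}^+}^{k+1}$ of $D$ and reflects the codimension-$2$ nature of the $\mathrm{KT}^+$ characteristic submanifold.
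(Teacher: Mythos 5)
Your proposal is correct and follows essentially the same route the paper intends: Anderson-dualize the characteristic long exact sequence of \cref{KT+_LES_thm}, use the universal coefficient sequence (which collapses to $\Ext(\Omega_k,\Z)$ since all groups in range are finite torsion), and read off $D$ as the Pontryagin dual of the characteristic map $R$ computed there. The only difference is that you spell out details the paper leaves implicit (the degree shift from the $\Sigma^2$, the vanishing of the $\Hom$ terms, and the identification of the obstruction with the image in $\mho^4(F_{\mathrm{KT}^+})\supseteq\Ext(A_+,\Z)\cong A_+$), all of which check out.
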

It would be interesting to determine the precise structure of $A_+$ and compute this obstruction in examples.

\subsubsection{Symmetry Approximations to KT}\label{subsubsection:approxKT}
Similar to the case of approximating the Guillou--Marin structure in \S\ref{subsubsection:approxGM}, we will define tangential structures $\Pin^{\pm}\!\text{-}\O_2 := (\Pin^{\pm}\times \O_2)/(\langle-1,x\rangle)$ to be used to approximate $\mathrm{KT}^\pm$-bordism. We will do everything explicitly in the \pinm case, but the \pinp case follows closely. 
\begin{lem}
    A $\Pin^-\!\text{-} \O_2$ tangential structure is a $(B\O_2,V)$-twisted pin$^-$ structure i.e. it
    fits into the following pullback square:
\begin{equation}
    \begin{tikzcd}[column sep=2.5cm, row sep=2.5cm]
B(\Pin^-\!\text{-} \O_2) \arrow[r,"V"] \arrow[d] \arrow[dr, phantom, "\lrcorner", very near start] & B\O_2 \arrow[d, "w_2"] \\
B\O \arrow[r, "w_2(TM)+w_1(TM)^2",swap] & K(\Z/2,2).
\end{tikzcd}
\end{equation}
\end{lem}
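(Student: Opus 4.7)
The plan is to follow the template of \cref{the_spin_O2_twist}, adapting it from spin to pin$^-$. First I would unpack the pullback: a map $M\to B(\Pin^-\!\text{-}\O_2)$ amounts to a stable tangent bundle $TM$ (from $M\to B\O$), a rank-$2$ bundle $V\to M$ (from $M\to B\O_2$), and a trivialization in $H^2(M;\Z/2)$ of $(w_2(TM)+w_1(TM)^2)+w_2(V)$ coming from the nullhomotopy of the two compositions to $K(\Z/2,2)$.

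The central step is to identify this pullback datum with a pin$^-$ structure on the compensated sum $TM\oplus V\oplus 3\det V$, precisely as in the spin-$\O_2$ case. Setting $a:=w_1(TM)$ and $b:=w_1(V)$ and using $\binom{3}{2}\equiv 1\pmod 2$, the Whitney sum formula yields
\begin{align*}
    w_1(TM\oplus V\oplus 3\det V) &= a+b+3b = a,\\
    w_2(V\oplus 3\det V) &= w_2(V)+\binom{3}{2}b^2+b\cdot 3b = w_2(V)+4b^2 = w_2(V),\\
    w_2(TM\oplus V\oplus 3\det V) &= w_2(TM)+a\cdot 0+w_2(V),
\end{align*}
so that $w_2+w_1^2$ of the total bundle equals $w_2(TM)+w_1(TM)^2+w_2(V)$. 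A pin$^-$ structure on the compensated sum is thus exactly a trivialization of this class, i.e.\ the pullback datum. The paper's notation $(B\O_2,V)$-twisted pin$^-$ should be read with the $3\det V$ correction implicit, in parallel with how \cref{the_spin_O2_twist} makes it explicit in the spin setting.

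Finally, I would match this compensated twisted pin$^-$ structure to $B(\Pin^-\!\text{-}\O_2)$ by invoking the shearing construction of \cref{lem:shearing}, together with the group-theoretic observation that the diagonal $\Z/2$ in $(\Pin^-\times\O_2)/\langle(-1,x)\rangle$ is exactly what identifies the central $\{\pm 1\}\subset\Pin^-$ with $\{\pm I\}\subset\O_2$; this is the group-level shadow of the characteristic-class identification above. Bundles with this combined structure group are therefore precisely the bundles classified by the pullback.

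The main obstacle, relative to the spin-$\O_2$ case, is that $TM$ is not oriented, so $w_1(TM)$ contributes throughout the Whitney sum expansion. The bookkeeping to verify that the $3\det V$ compensation eliminates the cross-terms $b^2$ and $ab$ while leaving $w_2(TM)+w_1(TM)^2+w_2(V)$ as the sole obstruction is the most delicate step, and it critically uses that the pin$^-$ obstruction $w_2+w_1^2$ (not merely $w_2$) is what we are trivializing; this is why the same $3\det V$ twist that worked for spin-$\O_2$ continues to work here.
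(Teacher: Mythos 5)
Your argument is correct but takes a genuinely different route from the paper's. You work from the bundle side: you unpack the pullback data as a trivialization of $w_2(TM)+w_1(TM)^2+w_2(V)$ and then, via the Whitney sum formula applied to the compensated bundle $TM\oplus V\oplus 3\det(V)$, recognize this as a pin$^-$ structure on that bundle --- exactly the strategy the paper itself uses for the spin-$\O_2$ case in \cref{the_spin_O2_twist}. The paper instead argues group-theoretically: it presents $\Pin^-\!\text{-}\O_2$ as a central extension $1\to\Z/2\to\Pin^-\!\text{-}\O_2\to\O\times\O_2\to 1$, writes its extension class as a general degree-$2$ polynomial $\alpha=\lambda_1 w_2(TM)+\lambda_2 w_1(TM)^2+\lambda_3 w_2(V)+\lambda_4 w_1(V)^2+\lambda_5 w_1(V)w_1(TM)$, and pins down the coefficients by restricting to the subgroups $\O$, $\O_2$, and $\Pin^-\!\text{-}\O_1$, obtaining $\alpha=w_2(TM)+w_1(TM)^2+w_2(V)$; the lemma then follows because the homotopy fiber of $\alpha$ is precisely the stated pullback. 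The paper's route has the advantage of tying the obstruction directly to the group $\Pin^-\!\text{-}\O_2$, which is the one place your write-up is thin: your closing appeal to the ``group-level shadow'' of the diagonal $\Z/2$ is really an appeal to the fact that the extension class of $\Pin^-\!\text{-}\O_2\to\O\times\O_2$ equals $w_2(TM)+w_1(TM)^2+w_2(V)$, and that is exactly the computation the paper's subgroup-restriction argument supplies; as written, your step from the compensated twisted pin$^-$ description back to the group $\Pin^-\!\text{-}\O_2$ is asserted rather than proved. (A minor point: in your expansion of $w_2(V\oplus 3\det(V))$ the cross terms sum to $6b^2$ rather than $4b^2$, but either way this vanishes mod $2$, so the conclusion is unaffected.)
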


\begin{proof}
    We construct $\Pin^-\!\text{-} \O_2$ from a split  extension arising from the sequence 
    \begin{equation}\label{eq:pinO2seq}
    1 \rightarrow \Z/2 \rightarrow \Pin^-\!\text{-} \O_2 \rightarrow \O \times \O_2 \rightarrow 1\,.
\end{equation}
Therefore  $\Pin^-\!\text{-} \O_2$ arises from a class $\alpha \in H^2(B\O \times B\O_2 ; \Z/2 )$, which parametrizes the extension. The most general form of $\alpha$ is a linear combination of degree two terms built from Stiefel-Whitney classes of $V$ and $TM$, 
\[
    \alpha = \lambda_1 w_2(TM) + \lambda_2 w_1(TM)^2 + \lambda_3 w_2(V) + \lambda_4 w_1(V)^2 + \lambda_5 w_1(V)w_1(TM)
\]
for $\lambda_1,\dotsc,\lambda_5\in\Z/2$. We solve for these coefficients by pulling back to subgroups of the sequence in \eqref{eq:pinO2seq}. 
\begin{itemize}
    \item Pull the extension~\eqref{eq:pinO2seq} back to $\O$, where it becomes
    \begin{equation}
        \shortexact*{\Z/2}{\Pin^-}{\O}.
    \end{equation}
    Using this, we see that when pulling back to $H^*(B\O;\Z/2)$, $\alpha$ is sent to the extension class for $\Pin^-\to\O$, which is $w_2 + w_1^2$. Thus $\lambda_1 =1$ and $\lambda_2 = 1$.
    \item Pulling back along $\O_2\to\Pin^-\text{-}\O_2$, we get the extension
    \begin{equation}
        \shortexact*{\Z/2}{\Pin^+_2}{\O_2},
    \end{equation}
    which is classified by $w_2(V)$. Thus $\lambda_3 = 1$ and $\lambda_4 = 0$.
    \item To compute $\lambda_5$, pull back along the inclusion $\Pin^-\!\text{-}\O_1\hookrightarrow \Pin^-\!\text{-}\O_2$ to obtain the extension
     \begin{equation}
        \shortexact*{\Z/2}{\Pin^-\!\text{-}\O_1}{\O},
\end{equation}
which is classified by $w_2(TM)+w_1(TM)^2$, but $w_1(V)w_1(TM)$ pulls back to a nonzero class, so $\lambda_5 = 0$.
\end{itemize}
We thus see that $\alpha = w_2(V)+w_2(TM)+w_1(TM)^2$. The condition for the square being a pullback is that $w_2(V)= w_2(TM)+w_1(TM)^2$, which is indeed satisfied.
\end{proof}
 We can construct a Smith long exact sequence for the symmetry $\Pin^-\!\text{-} \O_2$ by interpreting it as a twisted pin$^-$ structure. By using the prescription in \S\ref{subsection:smithLESintro} and taking $\xi=\Pin^-$, $X= B\O_2$, and the two bundles over $B\O_2$ to both be $V$, we obtain the following Smith long exact sequence:
 \begin{equation}\label{eq:pinO2exactseq}
       \ldots  \rightarrow \Omega^{\Pin^-}_k((B\O_1)^{\sigma-1}) \xrightarrow{p^*} \Omega^{\Pin^-\text{-}\O_2}_k\xlongrightarrow{S_V} \Omega^{\Pin^-}_{k-2}\left((B\O_2)^{2V-4}\right)\xrightarrow{a} \Omega^{\Pin^-}_{k-1}((B\O_1)^{\sigma-1})\rightarrow \ldots\,.
    \end{equation}
The groups $\Omega_k^{\Pin^-}((B\O_1)^{\sigma-1})$ can be identified with the $(B\O_1\times B\O_1, a, a^2+ab+b^2)$-twisted spin bordism groups, where $\{a,b\}$ is a basis of $H^1(B\O_1\times B\O_1;\Z/2)$. These twisted spin bordism groups are computed in~\cite[Theorem 4.26]{Deb21} in degrees $4$ and below. 
The other two twisted \pinm bordism theories are new. We provide the computation of 
$\Omega^{\Pin^-\text{-}\O_2}_k$ for $k$ up to 4, and the groups $\Omega^{\Pin^-}_{k-2}\left((B\O_2)^{2V-4}\right)$ can be computed with the long exact sequence. 
\begin{prop}\label{pinmo2calc}
The low-degree \pinm-$\O_2$ bordism groups are:
\begin{equation}
\begin{aligned}
    \Omega_0^{\Pin^-\!\!\text{-}\O_2} &\cong \Z/2\\
    \Omega_1^{\Pin^-\!\!\text{-}\O_2} &\cong \Z/2\\
    \Omega_2^{\Pin^-\!\!\text{-}\O_2} &\cong (\Z/2)^{\oplus 2}\\
    \Omega_3^{\Pin^-\!\!\text{-}\O_2} &\cong \Z/2\\
    \Omega_4^{\Pin^-\!\!\text{-}\O_2} &\cong \Z/4\oplus \Z/2 \oplus \Z/2.
\end{aligned}
\end{equation}
\end{prop}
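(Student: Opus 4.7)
The plan is to follow the same method used for $\Spin\text{-}\O_2$-bordism in \cref{lem:bordismspinO2} and for $\mathrm{KT}^{\pm}$-bordism in \cref{prop:bordcompKT-,prop:bordcompKT+}: shear the tangential structure into a twisted spin structure, compute the Baker--Lazarev Adams $E_2$-page from the resulting $\cA(1)$-module, and then resolve differentials and extensions in the range $*\le 4$.

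First I will set up the shearing. The pullback square defining $B(\Pin^-\text{-}\O_2)$ (equivalently, the relation $w_2(V) = w_2(TM) + w_1(TM)^2$) combined with \cref{lem:shearing} and the Thom spectrum factorization $\mathit{MTPin}^-\simeq \MTSpin \wedge (B\O_1)^{\sigma-1}$ produces a spectrum-level equivalence of $\mathit{MT}(\Pin^-\text{-}\O_2)$ with a smash product of $\MTSpin$ against a twisted suspension spectrum over $B\O_1\times B\O_2$. The homotopy groups of this spectrum therefore become $(B\O_1\times B\O_2,a',b')$-twisted spin bordism groups for appropriate classes $a'\in H^1$ and $b'\in H^2$ built out of $w_1(\sigma)$, $w_1(V)$, and $w_2(V)$ via the Whitney sum formula. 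The odd-primary part vanishes in the relevant range by the argument of \cref{GM_odd}: base-changing $\MTSpin\to\MTSO$ kills the degree-$2$ twist, and $\SO$-bordism of $B\O_1\times B\O_2$ is entirely $2$-torsion in low degrees.

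At $p=2$, I will replace $\MTSpin$ by $\ko$ using the Atiyah--Bott--Shapiro orientation (as in \cref{item:approx}) and run the Baker--Lazarev Adams spectral sequence~\eqref{BLASS}. By the K\"unneth splitting~\eqref{smash_split}, the $\cA(1)$-module $H_\ko^*$ decomposes as a tensor product of $H^*((B\O_1)^{\sigma-1};\Z/2)$ (pictured on the right of \cref{fig:tensorforKT-}) with a module built out of $H^*(B\O_2;\Z/2)$ via the formulas~\eqref{fake_Sq12} of \cref{DY23_main_thm}. I would then decompose the tensor product into the standard $\cA(1)$-summands ($\cA(1)$, $M_0$, $M_1$, $R_2$, $Q$, and $\Sigma^k\Z/2$), look up the relevant Ext groups in~\cite{BC18}, and assemble the $E_2$-page through $t-s\le 5$. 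In this range, Margolis's theorem~\cite{Mar74} together with degree considerations will force all differentials to vanish, so $E_\infty = E_2$.

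The main obstacle is resolving the nontrivial $\Z/4$ extension in degree $4$: the $E_\infty$-page produces a filtration whose associated graded contains a $(\Z/2)^{\oplus 2}$ piece, and distinguishing $\Z/4$ from $\Z/2\oplus\Z/2$ requires a genuine geometric argument. I would address this either by constructing a $\Z/4$-valued bordism invariant — a twisted Arf--Brown--Kervaire-type invariant built from the induced \pinm structure on $TM\oplus V$ and characteristic classes of $V$ — and evaluating it on an explicit $4$-dimensional generator, or by comparing with the low-degree portion of the Smith long exact sequence~\eqref{eq:pinO2exactseq}, whose outer terms involve the known groups $\Omega_*^{\Pin^-}((B\O_1)^{\sigma-1})$ from~\cite[Theorem 4.26]{Deb21}. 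The latter route would pin down the extension algebraically, provided one can determine enough of the boundary maps by inspection on explicit generators.
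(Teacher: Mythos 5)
Your overall strategy---shear to a twisted spin structure over $B\O_1\times B\O_2$, pass to $\ko$, and run an Adams-type spectral sequence on the tensor product of $H^*((B\O_2)^{V-2};\Z/2)$ with $H^*((B\O_1)^{\sigma-1};\Z/2)$---is exactly the paper's. But the step you wave at as ``decompose the tensor product into the standard $\cA(1)$-summands'' is where all the work lies, and your anticipated list of summands ($\cA(1)$, $M_0$, $M_1$, $R_2$, $Q$, $\Sigma^k\Z/2$) omits the module that actually appears. Freed--Hopkins give $H^*((B\O_2)^{V-2};\Z/2)\cong J\oplus\Sigma^2\cA(1)\oplus\Sigma^4\cA(1)\oplus P$ with $P$ in degrees $\ge 6$, where $J$ is the Joker; the free pieces tensor trivially, but the essential computation is $J\otimes H^*((B\O_1)^{\sigma-1};\Z/2)\cong R_3\oplus\Sigma\cA(1)\oplus\Sigma^2\cA(1)\oplus(\text{free, degrees}\ge 5)$. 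Identifying the $R_3$ summand is not a routine lookup: the paper does it by computing $Q_0$- and $Q_1$-Margolis homology via the K\"unneth formula and then invoking Yu's classification of bounded-below $\cA(1)$-modules with vanishing $Q_0$-homology and one-dimensional $Q_1$-homology, eliminating the alternatives by evaluating Steenrod operations on the bottom class. Moreover $\Ext_{\cA(1)}(R_3,\Z/2)$ comes from Yu's thesis, not from~\cite{BC18}. Without this your plan cannot produce the $E_2$-chart.

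Your treatment of the degree-$4$ extension is also off. The $\Z/4$ is not a hidden extension requiring a geometric invariant: in the chart coming from $\Ext_{\cA(1)}(R_3,\Z/2)$, the two classes in topological degree $4$ lie in adjacent filtrations joined by $h_0$, so the extension is resolved inside the spectral sequence. Your fallback via the Smith long exact sequence~\eqref{eq:pinO2exactseq} is partly circular: the third term $\Omega_*^{\Pin^-}((B\O_2)^{2V-4})$ is not independently known---the paper computes it \emph{from} $\Omega_*^{\Pin^-\text{-}\O_2}$ using that sequence, not the other way around.
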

\begin{proof}
The groups  $\Omega^{\Pin^-\text{-}\O_2}_k$ can be computed using the techniques employed in \S\ref{subsection:KTcharBord} and \S\ref{subsection:pin+KT}. Using the shearing construction on $(B\O_2,V)$-twisted pin$^-$ bordism, allows us to express it as  the spectrum 
 \begin{equation}
     \MTSpin \wedge (B\O_2)^{V-2} \wedge (B\O_1)^{\sigma-1}\,.
 \end{equation}
We want to run the Adams spectral sequence to compute the $2$-completed homotopy groups of this spectrum. (The odd-primary argument is not much different than in the other bordism computations in this paper.) To do so, we need to determine the $\cA(1)$-module structure on $H^*((B\O_2)^{V-2}\wedge (B\O_1)^{\sigma-1};\Z/2)$, which amounts to tensoring together the $\cA(1)$-modules $H^*((B\O_2)^{V-2};\Z/2)$ and $H^*((B\O_1)^{\sigma-1};\Z/2)$. Freed--Hopkins~\cite[\S D.5]{FH21InvertibleFT} compute $H^*((B\O_2)^{V-2};\Z/2)$ as an $\cA(1)$-module in all degrees; in particular, there is an $\cA(1)$-module isomorphism
\begin{equation}\label{pincp}
    H^*((B\O_2)^{V-2};\Z/2) \cong \textcolor{Fuchsia}J \oplus
        \textcolor{Magenta}{\Sigma^2\cA(1)} \oplus
        \textcolor{amber}{\Sigma^4\cA(1)} \oplus P,
\end{equation}
where $P$ is concentrated in degrees $6$ and above. Here $\textcolor{Fuchsia}{J}$ is the Joker, first seen in \cref{M_prime_Hko}. We draw the isomorphism~\eqref{pincp} in \cref{fig:pin-O2module}; see~\cite[Figure 6.6]{Cam17} or~\cite[Figure 5, case $s = 2$]{FH21InvertibleFT} for more pictures continuing into higher degrees.

Now to tensor with $H^*((B\O_1)^{\sigma-1};\Z/2)$. For any $\cA(1)$-module $M$, $\cA(1)\otimes M$ is a free $\cA(1)$-module on a homogeneous vector space basis for $M$, which accounts for the $\textcolor{Magenta}{\Sigma^2\cA(1)}$ and $\textcolor{amber}{\Sigma^4\cA(1)}$ summands. For the Joker we only have to work a little harder.
\begin{lem}
\label{jokertensor}
There is an isomorphism of $\cA(1)$-modules
\begin{equation}
    J\otimes H^*((B\O_1)^{\sigma-1};\Z/2) \cong R_3 \oplus \Sigma\cA(1)
        \oplus \Sigma^2 \cA(1) \oplus P',
\end{equation}
where $P'$ is a free $\cA(1)$-module concentrated in degrees $5$ and above, and $R_3$ is the $\cA(1)$-module defined in~\cite[\S 5.2]{BC18}.
\end{lem}
\begin{proof}
Let $Q_0\coloneqq\Sq^1$ and $Q_1\coloneqq\Sq^1\Sq^2 + \Sq^2\Sq^1$. Then $Q_0^2 = Q_1^2 = 0$, so we may regard any $\cA(1)$-module $M$ as a chain complex in which $Q_0$ or $Q_1$ is the differential. The homology groups of this complex are called \emph{Margolis homology} and denoted $H_*(M;Q_0)$, resp.\ $H_*(M;Q_1)$. Both of these homology theories satisfy a Künneth formula~\cite[Chapter 19, Proposition 17(b)]{Mar83}: for $i = 0,1$,
\begin{equation}
    H_*(M\otimes N; Q_i) \cong H_*(M; Q_i)\otimes H_*(N; Q_i).
\end{equation}
In this proof, let $P\coloneqq H^*((B\O_1)^{\sigma-1};\Z/2)$. Then there are isomorphisms $H_*(P;Q_0) \cong 0$ and $H_*(P; Q_1) \cong \Z/2$ in degree $1$~\cite[Proposition 4.2]{Bru14}, and $H_*(J; Q_1)\cong \Z/2$ concentrated in degree $2$~\cite[\S 3]{AP76}. Thus $H_*(J\otimes P;Q_0)\cong 0$ and $H_*(J\otimes P; Q_1)\cong\Z/2$ in degree $3$. Yu~\cite[Proposition 2.3]{Yu95} shows that if $M$ is a bounded-below $\cA(1)$-module with vanishing $Q_0$-homology and one-dimensional $Q_1$-homology, then $M$ is the direct sum of a free $\cA(1)$-module and exactly one of $\Sigma^t P$, $\Sigma^t H^*((B\O_1)^{1-\sigma};\Z/2)$, $\Sigma^t R_3$, or $\Sigma^t R_5$ for some $t$. Here $R_5$ is the module defined in~\cite[\S 5.3]{BC18}; see also \textit{ibid.}, Figure 23 for a picture of $H^*((B\O_1)^{1-\sigma};\Z/2)$. The rest of the proof is a process of elimination.

First, we show $t = 0$. It suffices to show that the lowest-degree class in $J\otimes P$, which is the unique nonzero class $x$ in degree $0$ of $J\otimes P$ does not generate a free module, and indeed one can check using the Cartan formula that $\Sq^2\Sq^2\Sq^2(x) = 0$, which suffices (see~\cite[\S D.4]{FH21InvertibleFT}). Thus $x$ is the lowest-degree element in an $\cA(1)$-module $M$ isomorphic to one of $P$, $H^*((B\O_1)^{1-\sigma};\Z/2)$, $R_3$, or $R_5$. The Cartan formula implies $\Sq^2(x) \ne 0$ (which is not true for the lowest-degree element in $P$, so $M\not\cong P$), $\Sq^3(x) \ne 0$ (ruling out $R_5$), and $\Sq^2\Sq^1(x)\ne 0$ (ruling out $H^*((B\O_1)^{1-\sigma};\Z/2)$).

The degrees of the free summands in the proposition statement follow by comparing the dimensions of subspaces of elements of a given degree. For example, $J\otimes P$ is two-dimensional in degree $1$, but $R_3$ is only one-dimensional; therefore there must be a $\Sigma\cA(1)$ summand to account for the difference.
\end{proof}
It is possible to prove \cref{jokertensor} directly, without using Margolis homology, but it is a longer and more arduous computation.
\begin{cor}
There is an isomorphism of $\cA(1)$-modules
\begin{equation}\label{all_tensor_pieces}
    H^*((B\O_2)^{V-2}\wedge (B\O_1)^{\sigma-1}) \cong
        \textcolor{Fuchsia}{
            R_3 \oplus \Sigma\cA(1) \oplus \Sigma^2 \cA(1) 
        } \oplus \textcolor{Magenta}{
            \Sigma^2\cA(1) \oplus \Sigma^3\cA(1) \oplus \Sigma^4\cA(1)
        } \oplus \textcolor{amber}{\Sigma^4\cA(1)} \oplus P''
\end{equation}
for some $\cA(1)$-module $P''$ concentrated in degrees $5$ and above.
\end{cor}
We draw the decomposition~\eqref{all_tensor_pieces} in \cref{fig:E2pinO2}, left. To run the Adams spectral sequence, we need $\Ext_{\cA(1)}^{s,t}(\textcolor{Fuchsia}{R_3}, \Z/2)$, computed by Yu~\cite[Theorem 3.1]{Yu95} (note that Yu refers to this module as $\Sigma^3 N_3$ or $M(3,3)$). Thus we can draw the $E_2$-page of the Adams spectral sequence in \cref{fig:E2pinO2}, right. All differentials in the relevant range vanish for degree reasons, and the only extension question, in topological degree $4$, is solved by the $h_0$-action.
\end{proof}

\begin{figure}
\centering
	\begin{tikzpicture}[scale=0.6, every node/.style = {font=\tiny}]
		\foreach \y in {0, 1, ...,10} {
			\node at (-2, \y) {$\y$};
		}
  \begin{scope}[Purple]
    \Joker{0}{0}{$U$};
	\end{scope}
 \begin{scope}[Magenta]
 \Aone{2}{2}{$w_1^2 U$};
 \end{scope}
  \begin{scope}[amber]
  \Aone{5}{4}{$w_2^2U$};
 \end{scope}
	\end{tikzpicture}
\caption{The $\cA(1)$-module structure in low degrees for $H^*((B\O_2)^{V-2};\Z/2)$; the summand depicted here includes all classes in degrees $5$ and below. The Steenrod actions on the Thom class are given by $\Sq^1U = w_1(V)U$ and $\Sq^2 U = w_2(V) U$. This figure is adapted from~\cite[Figure 6.6]{Cam17}.}
    \label{fig:pin-O2module}
\end{figure}
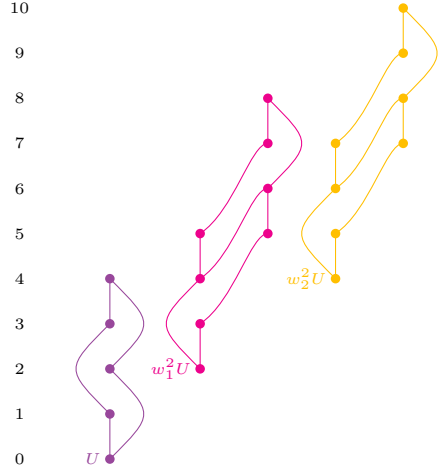

\begin{figure}[!ht]
\centering
\begin{subfigure}[c]{0.6\textwidth}
	\begin{tikzpicture}[scale=0.5, every node/.style = {font=\tiny}]
		\foreach \y in {0, 1, ..., 10} {
			\node at (-2, \y) {$\y$};
		}
		\begin{scope}[Purple]
  \tikzpt{0}{0}{$U$}{};
  \foreach \y in {1, ..., 10} {
    \tikzpt{0}{\y}{}{};
  }
    \tikzpt{1.5}{3}{}{};
    \tikzpt{1.5}{4}{}{};
         \sqone(0, 0);
          \sqtwoL(0, 0);
            \sqone(1.5, 3);
            \sqtwoCR(0, 1);
            \sqtwoCR(0, 2);
             \sqone(0, 2);
             \sqtwoL(0, 3);
              \sqone(0, 4);
               \sqtwoR(0, 4);
        \sqone(0, 6);
        \sqone(0, 8);
        \begin{scope}
            \clip (-1, 9) rectangle (2, 10.6);
            \sqone(0, 10);
        \end{scope}
        \sqtwoL(0, 7);
        \sqtwoR(0, 8);

        \Aone{2.25}{1}{$tU$};
        \Aone{4.25}{2}{$t^2 U$};
		\end{scope}
 \begin{scope}[Magenta]
 \Aone{6.25}{2}{$w_1^2U$};
 \end{scope}
 \begin{scope}[Magenta]
    \Aone{8.25}{3}{$tw_1^2U$};
 \end{scope}
 \begin{scope}[Magenta]
    \Aone{10.25}{4}{$t^2w_1^2U$};
 \end{scope}
   \begin{scope}[amber]
    \Aone{12.25}{4}{$w_2^2U$};
 \end{scope}
	\end{tikzpicture}
\end{subfigure}
\hspace{8mm}
\begin{subfigure}[!ht]{0.3\textwidth}
\includegraphics[width=43mm,scale=.6]{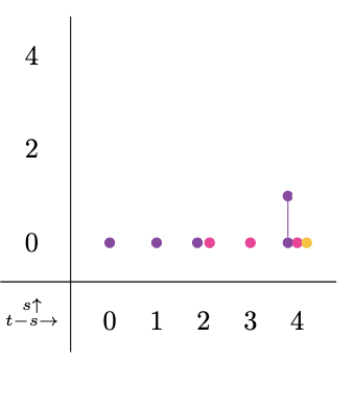}
\end{subfigure}
\caption{Left: the $\cA(1)$-module structure on $H^*((B\O_2)^{V-2}\wedge (B\O_1)^{\sigma-1})$ in low degrees. The pictured submodule includes all classes in degrees $4$ and below. Right: The $E_2$-page of the Adams spectral sequence computing $\Omega_*^{\Pin^-\text{-}\O_2}$. We use this spectral sequence in the proof of \cref{pinmo2calc}.} 
    \label{fig:E2pinO2}
\end{figure}

\section{Topological Obstructions to Spontaneous Symmetry Breaking}
\label{section:SSB}
In this section, we shift our focus and consider the situation where a theory with a finite (higher-form) global symmetry is placed on a $d$-dimensional closed manifold $M$, and study the topological properties that prevent the symmetry from spontaneously breaking. Following \cref{def:SSB}, the obstruction to spontaneous symmetry breaking on $M$ corresponds to the inability to define a domain wall to probe the higher-form symmetry. The support of this required domain wall is a submanifold $F$, defined as the Poincaré dual to the symmetry's cohomology class.

The standard way of presenting spontaneous symmetry breaking in QFT contexts is on non-compact spacetimes. In particular, it is only a feature that makes sense in the infinite volume limit. In condensed-matter systems, one typically considers an increasing family of finite spatial regions and takes the inductive limit over these regions to define the algebra of quasi-local observables. Although each finite system admits a unique, symmetry-invariant ground state, spontaneous symmetry breaking emerges in the thermodynamic limit from the appearance of macroscopically distinct states that are locally indistinguishable but globally inequivalent. 

Nevertheless, in both condensed-matter and modern high-energy contexts, it is often important to define the theory on a closed spatial manifold, possibly endowed with nontrivial global topology. This formulation captures the intrinsically finite extent of physical materials and allows global topological features to play a meaningful role in the characterization of the phase.
Our guiding assumption is that, at sufficiently large scales, physical observables are effectively governed by finite global structure, so that treating spacetime as infinite is, at best, an idealization rather than a fundamental requirement.

Under this assumption, we will show how the ideas of the Pontryagin--Thom theorem, which have already seen application in \S\ref{section:Defects}, can also be applied to finding obstructions to spontaneous symmetry breaking. This gives a more ``intrinsic'' way of realizing analogous characteristic structures which is only dependent on the theory placed on $M$, and not on inserting defects. 

Choose a natural number $n$ and a finite group $G$, abelian if $n\ge 1$.
If an $n$-form $G$-symmetry were to spontaneously break, it would create a domain wall along a codimension $(n+1)$-manifold $F$. The fact that the domain wall comes from symmetry breaking means that the manifold where it is supported must be Poincaré dual to the background field of the $n$-form symmetry, see \cite{Hason:2020yqf}.  By Theorem \ref{construction:PT}, we can formulate the following definition: 
\begin{defn}\label{def:obstructionpoly}
The obstruction to breaking a finite $n$-form $G$-symmetry on a manifold $M$ is the obstruction to finding a lift of the map $f\colon M \to K(G,n+1)$ implementing the symmetry to a map $\widetilde f\colon M\SO_{n+1}$ such that $f = U\circ\widetilde f$:
\begin{equation}
    \begin{tikzcd}
        && M\SO_{n+1} \arrow[d, "U"]\\
        M \arrow[rr,"f",swap ] \arrow[urr,dotted,"\widetilde f"] & &K(G,n+1)\,.
    \end{tikzcd}
\end{equation}
\end{defn}
Here $U\colon M\SO_{n+1}\to K(G, n+1)$ is the $G$-cohomology Thom class. Like in the second part of \cref{construction:PT}, in \cref{def:obstructionpoly} we can replace $M\SO_{n+1}$ with $M\O_{n+1}$ if the theory has a time-reversal symmetry, permitting us to work on unorientable manifolds. By expanding the number of possible backgrounds that our theory can be placed on, this limits the possibilities for $G$. One can only take $G=\Z/2$ with time-reversal acting trivially on this group. We will use this setup as the  focus for the explicit computations in the remainder of this section.

The following standard result of obstruction theory can be used to find the obstruction class:
\begin{prop}\label{prop:obstr}
  For $E$ the fiber of the map $f_{n+1}\colon M\SO_{n+1} \to K(G,n+1)$, the primary obstruction to breaking a finite $n$-form $G$-symmetry is a class $\mathfrak o_{n+1}\in H^{n+2}(K(G,n+1);\pi_{n+2}(E))$. 
\end{prop}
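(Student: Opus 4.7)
The plan is to derive $\mathfrak o_{n+1}$ from a direct application of classical obstruction theory for lifts against a Serre fibration. First I would replace $f_{n+1}\colon M\SO_{n+1}\to K(G,n+1)$ by a Serre fibration in the same homotopy class and let $E$ denote its homotopy fiber, so that we have
$$E\longrightarrow M\SO_{n+1}\xlongrightarrow{f_{n+1}} K(G,n+1).$$
By the standard obstruction-theoretic principle, given any CW-complex $X$ and a map $g\colon X\to K(G,n+1)$, a lift of $g$ through $f_{n+1}$ can be constructed skeleton by skeleton; the first obstacle lies in $H^{m+1}(X;\pi_m(E))$ where $m$ is the smallest index with $\pi_m(E)\neq 0$. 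Because this obstruction is natural in $g$, it is automatically of the form $g^*\mathfrak o$ for a universal class $\mathfrak o\in H^{m+1}(K(G,n+1);\pi_m(E))$, namely a $k$-invariant of the Postnikov tower of the fibration. The proposition asserts that this universal class, with the displayed indices, is the desired primary obstruction $\mathfrak o_{n+1}$.

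The main computational step is to pin down the low-dimensional homotopy groups of $E$. I would combine Hurewicz with the Thom isomorphism $\widetilde H_*(M\SO_{n+1})\cong H_{*-(n+1)}(B\SO_{n+1})$ to conclude that $M\SO_{n+1}$ is $n$-connected and $\pi_{n+1}(M\SO_{n+1})\cong \Z$, with $f_{n+1}$ inducing the canonical quotient $\Z\twoheadrightarrow G$ coming from the Thom class. Feeding this into the long exact sequence
$$\cdots\to\pi_{k+1}(K(G,n+1))\to\pi_k(E)\to\pi_k(M\SO_{n+1})\to\pi_k(K(G,n+1))\to\pi_{k-1}(E)\to\cdots$$
forces $\pi_k(E)=0$ for $k\le n$, and expresses $\pi_{n+1}(E)$ and $\pi_{n+2}(E)$ in terms of $\ker(\Z\to G)$ and $\pi_{n+2}(M\SO_{n+1})$. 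From here the degree $n+2$ appearing in the statement is read off directly from the skeletal obstruction recipe.

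Finally I would identify $\mathfrak o_{n+1}$ with the first nontrivial $k$-invariant of the Postnikov decomposition of the fibration $E\to M\SO_{n+1}\to K(G,n+1)$: the map $f_{n+1}$ is automatically a lift over the bottom Postnikov stage of $K(G,n+1)$, and the obstruction class is precisely the cocycle measuring failure to extend past this stage, promoted to a cohomology class on $K(G,n+1)$ by the universal property of Postnikov sections. The main obstacle will be careful index bookkeeping in the long exact sequence: the precise homotopy group of $E$ contributing the coefficients depends on whether the $\ker(\Z\to G)$ piece is absorbed into a separate orientation-type choice on the prospective Poincar\'e dual submanifold or left inside the obstruction, and this normalization must be chosen to be consistent with the hypotheses of \cref{def:obstructionpoly} before the class $\mathfrak o_{n+1}$ is determined uniquely.
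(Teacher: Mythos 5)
Your overall strategy---replace $f_{n+1}$ by a fibration, run skeletal obstruction theory for the lifting problem, and use naturality to promote the first obstruction to a universal class on $K(G,n+1)$ (equivalently, the first nontrivial $k$-invariant of the Moore--Postnikov factorization)---is precisely the ``standard result of obstruction theory'' that the paper invokes without further proof, and your use of the Thom isomorphism, Hurewicz, and the homotopy long exact sequence to locate the bottom homotopy groups of $E$ is the right computation. So the route is the intended one.

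However, the issue you defer as ``index bookkeeping'' is not cosmetic, and as written your argument cannot produce the statement's coefficient group. Your own long exact sequence gives $\pi_{n+1}(E)\cong\ker\bigl(\pi_{n+1}(M\SO_{n+1})\to G\bigr)=\ker(\Z\to G)$, and every homomorphism from $\Z$ to a finite group has kernel isomorphic to $\Z$; moreover, if the Thom-class map $\Z\to G$ is not surjective (unavoidable when $G$ is not cyclic, so your phrase ``canonical quotient $\Z\twoheadrightarrow G$'' needs care), then $\pi_n(E)=\operatorname{coker}(\Z\to G)\neq 0$ as well. Hence $E$ is never $(n+1)$-connected, and the recipe you describe places the primary obstruction in $H^{n+2}(K(G,n+1);\pi_{n+1}(E))$ (or even in $H^{n+1}(-;\pi_n(E))$ in the non-surjective case): the degree $n+2$ agrees with the statement, but the coefficients are $\pi_{n+1}(E)$, not $\pi_{n+2}(E)$, and no normalization of the kind you mention will convert one into the other. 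Note that the paper's own subsequent use of the proposition is consistent with the corrected form rather than the literal one: in \cref{pi4mo2} (where $n=1$ and one works with $M\O_2$, so the Thom class is an isomorphism on $\pi_2$ and $E$ is $3$-connected) the obstruction is placed in $H^5(K(\Z/2,2);\pi_4(E))$, i.e.\ in $H^{m+1}(-;\pi_m(E))$ with $m=\min\{k:\pi_k(E)\neq 0\}$. You should therefore prove that general statement and record explicitly that the printed indices only match it under a connectivity hypothesis on $E$ that the oriented case never satisfies; concretely, when $\Z\to G$ is onto, the primary obstruction for $M\SO_{n+1}$ is the Bockstein of the fundamental class in $H^{n+2}(K(G,n+1);\Z)$ associated to $0\to\Z\to\Z\to G\to 0$.
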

The primary obstruction is a necessary condition for the existence of a lift; in general, it is not a sufficient condition.
\begin{rem}
\Cref{prop:obstr} identifies the \emph{universal} primary obstruction, i.e.\ for choosing a lift on all manifolds. If one is interested in a specific manifold $M$ with $n$-form $G$-symmetry, the primary obstruction is the pullback of $\mathfrak o_{n+1}$ by the map $f\colon M\to K(G, n+1)$.
\end{rem}

We will give explicit obstructions to breaking 1-form and 2-form symmetries when $G=\Z/2$, which is relevant in many theories and perhaps most notably $\mathrm{SU}_2$ Yang-Mills theory in 4d resp.\ 6d supersymmetric theories \cite{Morrison:2012np,DelZotto:2015isa,Braun:2021sex,Apruzzi:2022dlm,DelZotto:2022fnw}.  Given that the only non-vanishing homotopy group for $K(G,{n+1})$ is $\pi_{n+1}$, the long exact sequence in homotopy groups implies $\pi_*(E)=\pi_*(M\O_{n+1})$ for $* > n+1$. One therefore just needs to understand the homotopy groups of $M\O_{n+1}$, and this can be done using the following theorem of Thom:
\begin{thm}[{\cite[Theorem II.7]{Thom}\label{thm:thom}}]
The map on homotopy groups induced by the map $\Sigma M\O_{k-1}\to M\O_k$ (which is the Thomification of the standard map $B\O_{k-1}\to B\O_k$) is an isomorphism in degrees $*<k$.
\end{thm}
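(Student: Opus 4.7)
The plan is to reduce this to a simple connectivity statement: I will show that both $\Sigma M\O_{k-1}$ and $M\O_k$ are $(k-1)$-connected, so that their homotopy groups vanish in degrees strictly less than $k$, and the induced map in this range is automatically the trivial isomorphism $0\to 0$. The map itself is the expected one: if $j\colon B\O_{k-1}\hookrightarrow B\O_k$ is the standard inclusion, then $j^*V_k\cong V_{k-1}\oplus\underline{\R}$, and Thomifying $j$ gives $\mathrm{Th}(j^*V_k)=\Sigma M\O_{k-1}\to M\O_k$.

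First I would verify that $M\O_k$ is $(k-1)$-connected. Choosing a CW decomposition of $B\O_k$ with a single $0$-cell, the Thom-space construction turns each $n$-cell of $B\O_k$ into an $(n+k)$-cell of $M\O_k=\mathrm{Th}(V_k)$. Consequently $M\O_k$ has only one cell besides the basepoint in dimensions $\le k$, namely the $k$-cell coming from the zero section over the basepoint of $B\O_k$. This immediately gives simple connectedness of $M\O_k$ (for $k\ge 2$; for $k=1$ only $\pi_0$ is at issue, and $M\O_1\simeq \RP^\infty$ is path-connected) together with $\widetilde H_i(M\O_k;\Z)=0$ for $i<k$, so the Hurewicz theorem yields $\pi_i(M\O_k)=0$ for $i<k$. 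Alternatively, the vanishing of reduced homology in this range follows from the Thom isomorphism with suitable (possibly twisted) coefficients. Applying the same argument one stage down, $M\O_{k-1}$ is $(k-2)$-connected, and suspension raises connectivity by one, so $\Sigma M\O_{k-1}$ is also $(k-1)$-connected. Thus $\pi_i(\Sigma M\O_{k-1})=0=\pi_i(M\O_k)$ for all $i<k$, and the induced map is trivially an isomorphism in the stated range.

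The only point requiring any care is the routine CW-cell bookkeeping on $M\O_k$, in particular confirming simple connectedness for $k\ge 2$; once that standard observation is in place, no deeper machinery (such as the Freudenthal suspension theorem or the Pontryagin--Thom identification of homotopy groups with bordism groups) is required in the narrow range $*<k$. Sharper forms of Thom's theorem extend this isomorphism into a larger stable range using precisely those tools, but such sharpening is not required for the statement as formulated here.
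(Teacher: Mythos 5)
Your sub-arguments are all correct---$M\O_k$ and $\Sigma M\O_{k-1}$ are indeed both $(k-1)$-connected by the cell-counting you describe---but the reading of ``degrees $*<k$'' that makes your proof work also makes the theorem vacuous (the map $0\to 0$ is always an isomorphism), and it is not the reading the paper needs. The sentence immediately following the theorem states that it allows one to compute $\pi_{k+i}(M\O_k)$ from the stable homotopy groups of $M\O$ for $i<k$, and that is how the result is invoked later (e.g.\ $\pi_5(M\O_3)\cong\Z/2$ in \S\ref{subsection:breaking2form}, and $\pi_{n+2}(M\O_n)\cong\Z/2$). So the degrees in the statement are spectrum-level degrees, consistent with the rest of the paper: the claim is that $\pi_{k+*}(\Sigma M\O_{k-1})\to\pi_{k+*}(M\O_k)$ is an isomorphism for $*<k$, i.e.\ an isomorphism of homotopy groups of spaces up through dimension roughly $2k$, well above the connectivity of either space. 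Your argument gives no information whatsoever in that range.

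The content you are missing is precisely the ``sharpening'' you set aside in your last paragraph. Since $B\O_{k-1}\to B\O_k$ has homotopy fiber $S^{k-1}$, the pair $(B\O_k, B\O_{k-1})$ has its relative cells in dimensions $\ge k$; Thomifying with the rank-$k$ tautological bundle shifts these up by $k$, so the pair $(M\O_k, \Sigma M\O_{k-1})$ is $(2k-1)$-connected and the induced map on $\pi_i$ is an isomorphism for $i\le 2k-2$ (and surjective for $i=2k-1$). Combining this with the Freudenthal suspension theorem, applied to the $(k-2)$-connected space $M\O_{k-1}$, is what produces the stabilization $\pi_{k+i}(M\O_k)\cong \pi_i(M\O)$ in the stated range. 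Those two tools are therefore not optional: without them you have proved only the trivial part of the statement, which cannot support the computations the paper draws from it.
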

\noindent In particular, this means that for $i<k$ the homotopy groups $\pi_{k+i}(M\O_k)$ can be computed using the stable homotopy groups of $M\O$, which Thom computed~\cite{Thom}.
\begin{rem}
\Cref{thm:thom} can be generalized to the Thom spectra of more general tangential structures, including in particular $M\SO_k$. The stable homotopy groups of the limiting object $M\SO$ are also known, due to Wall~\cite{Wal60}.
\end{rem}
Starting off with a $\Z/2$ 1-form symmetry, we couple it to a background field $B \in H^2(M;\Z/2)$ where $M$ is a closed manifold. The failure for the map of spaces $M\O_{2} \to K(\Z/2,2)$ to be an equivalence is more pronounced in high dimensions. In particular, when the ambient manifold $M$ has dimension less than or equal to 4, then duals can always be constructed so there is no primary obstruction to defining a theory with spontaneously broken $\Z/2$ one-form symmetry on any such manifold. The first dimension when duals cannot always be found is 5. Some candidate theories in 5d with 1-form symmetry would include those in \cite{Bhardwaj:2020phs,Morrison:2020ool,Albertini:2020mdx,Genolini:2022mpi}. There are also many examples of theories with discrete 2-group symmetry \cite{Apruzzi:2021vcu,Apruzzi:2021mlh}, and understanding spontaneous symmetry breaking there would require understanding first breaking the 1-form symmetry subgroup \cite{Brennan:2023mmt}. For notational convenience, let $\mathcal{K} = K(\Z/2,2)$ and let $B\in H^2(\mathcal{K};\Z/2)$ denote the tautological class.
\begin{prop}[Thom~\cite{Thom}]
\label{pi4mo2}
 $\pi_4(M\O_2)\cong\Z$, so that the primary obstruction $\mathfrak o_2$ from \cref{prop:obstr} is an element of $H^5(\mathcal K; \Z)$.
 \end{prop}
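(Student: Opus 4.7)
The plan is to establish $\pi_4(M\O_2)\cong\Z$ by combining the cofiber sequence $B\O_1\hookrightarrow B\O_2\to M\O_2$ (where the first map is the inclusion of the sphere bundle of the tautological rank-$2$ bundle $V\to B\O_2$) with rational-homotopic and geometric inputs; the obstruction-theoretic statement then follows from a homotopy-fiber diagram chase.

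First I would note that $M\O_2$ is $1$-connected and that the inclusion $B\O_1\hookrightarrow B\O_2$ admits the retraction $\det\colon B\O_2\to B\O_1$. This retraction splits the long exact sequence in reduced integral homology coming from the cofiber sequence, giving
\begin{equation*}
    \tilde H_n(M\O_2;\Z) \cong H_n(B\O_2;\Z)/H_n(B\O_1;\Z).
\end{equation*}
Substituting the standard computations (for $H_*(B\O_2;\Z)$ via, e.g., the Serre spectral sequence of the orientation double cover $B\SO_2\to B\O_2\to B\Z/2$) yields $H_3(M\O_2;\Z)=0$ and $H_4(M\O_2;\Z)\cong\Z\oplus\Z/2$. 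Hurewicz then gives $\pi_2(M\O_2)=\Z/2$; a Postnikov-tower Serre spectral sequence argument, together with the vanishing of $H_3(K(\Z/2,2);\Z)$, yields $\pi_3(M\O_2)=0$.

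To establish $\pi_4(M\O_2)\cong\Z$: since $\pi_2=\Z/2$ is torsion and $\pi_3=0$, the space $M\O_2$ is rationally $3$-connected, so by the rational Hurewicz theorem it suffices to compute $H_4(M\O_2;\Q)\cong\Q$, which follows from $H^*(B\O_2;\Q)=\Q[p_1]$ together with the Thom isomorphism. Thus $\pi_4(M\O_2)$ has rank one. To pin down that it is torsion-free, I would use the Pontryagin-Thom interpretation of $\pi_4(M\O_2)$ as a bordism group of embedded closed surfaces $F\hookrightarrow S^4$ (with $TF\oplus\nu_F\cong\epsilon^4$), for which the twisted normal Euler number, valued in $\Z$ via Poincaré duality on $F$ using the orientation character of $\nu_F$, gives a complete additive invariant whose image is $\Z$ (realized by, e.g., Veronese-type embeddings of $\RP^2$ and their connect sums). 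Alternatively, the $k$-invariant for the next Postnikov stage of $M\O_2$, together with the known low-degree integral cohomology of $K(\Z/2,2)$, pins down the torsion structure.

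For the obstruction statement: let $E$ denote the homotopy fiber of the Thom-class map $M\O_2\to K(\Z/2,2)=\mathcal K$. The long exact sequence on homotopy groups, together with the fact that the Thom-class map induces an isomorphism on $\pi_2$ (since both are $\Z/2$ generated by the fundamental class pairing nontrivially with the Thom class), gives $\pi_i(E)=0$ for $i\le 3$ and $\pi_4(E)=\pi_4(M\O_2)=\Z$. Standard obstruction theory then places the first nonvanishing universal obstruction to lifting $\mathcal K\to\mathcal K$ along $M\O_2\to\mathcal K$ in $H^5(\mathcal K;\Z)$, which is the primary obstruction $\mathfrak o_2$ asserted by the proposition. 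The main technical obstacle in this plan is the torsion-free conclusion for $\pi_4$; both the geometric Euler-number argument and the $k$-invariant analysis require honest input beyond the initial integral-homology computation.
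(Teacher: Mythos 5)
The paper does not actually prove this proposition: it is attributed to Thom, and note that $\pi_4(M\O_2)=\pi_{2+2}(M\O_2)$ lies just \emph{outside} the stable range of \cref{thm:thom} (which requires $i<k$, whereas here $i=k=2$), so the citation is doing real work and your attempt to supply an argument is welcome. Your homological setup is correct: the cofiber sequence $B\O_1\to B\O_2\to M\O_2$ splits via $\det$, giving $\tilde H_3(M\O_2;\Z)=0$ and $\tilde H_4(M\O_2;\Z)\cong H_4(B\O_2;\Z)\cong\Z\oplus\Z/2$, and Hurewicz gives $\pi_2(M\O_2)\cong\Z/2$. The obstruction-theoretic half of the statement is also handled correctly.

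However, there are three problems, one of them essential. (i) The essential one: torsion-freeness of $\pi_4(M\O_2)$. Your homology computation only bounds the group (rank one, with possible $2$-torsion, e.g.\ $\Z\oplus\Z/2$ is not excluded by the Postnikov/Serre analysis, since $H_4(M\O_2;\Z)\cong\Z\oplus\Z/2$ must also absorb $E^\infty_{4,0}\subseteq H_4(K(\Z/2,2);\Z)\cong\Z/4$). The claim that the twisted normal Euler number is a \emph{complete} invariant of cobordism of embedded surfaces in $S^4$ is precisely the content of Thom's theorem; asserting it makes the argument circular at its core, so the proposal is an outline rather than a proof. (ii) Your argument for $\pi_3(M\O_2)=0$ is incomplete: from $H_3(M\O_2;\Z)=0$ and $H_3(K(\Z/2,2);\Z)=0$ one only learns that $\pi_3(M\O_2)$ is a quotient of $H_4(K(\Z/2,2);\Z)\cong\Z/4$ via the transgression $d_4$. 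The clean fix is that $\pi_3(M\O_2)=\pi_{2+1}(M\O_2)$ \emph{is} in the stable range of \cref{thm:thom}, hence isomorphic to $\pi_3(M\O)=0$; alternatively, every link in $S^3$ bounds a surface in $D^4$. (iii) The rational step invokes the untwisted Thom isomorphism, which fails for the non-orientable tautological bundle over $B\O_2$ and would give $H_4(M\O_2;\Q)\cong H_2(B\O_2;\Q)=0$, the wrong answer. The correct (and easier) route is the one you already set up: $H_4(M\O_2;\Q)\cong H_4(B\O_2;\Q)\cong\Q$ from the split cofiber sequence, or use the $w_1$-twisted Thom isomorphism.
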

\begin{prop}[Suzuki~\cite{suzuki1958stiefel}]
\label{suzuki}
 %
 $H^5(\mathcal{K};\Z) \cong \Z/4$ with the generator given by $ \frac{1}{4} \Box_{\Z/4} \mathfrak{P}(B)$ where  $\mathfrak{P}: H^2(;\Z/2)\to H^4(-;\Z/4)$ denotes the Pontryagin square and $\Box_{\Z/4}$ is the Bockstein for the sequence 
    \begin{equation}
        0 \longrightarrow \Z \xlongrightarrow{\cdot\, 4} \Z \longrightarrow \Z/4 \rightarrow 0\,.
    \end{equation}
\end{prop}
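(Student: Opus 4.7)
My plan is to compute $H^5(\mathcal{K};\Z)$ as an abstract group via the Bockstein spectral sequence, then identify the specific generator via the natural cohomology operation $\Box_{\Z/4}\mathfrak{P}$.

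First I would compute $H^*(\mathcal{K};\Z/2)$ in low degrees by Serre's theorem: it is polynomial on admissible Steenrod squares $\Sq^I B$ of excess less than two (with $B=\iota_2$), which through degree six are $B$, $\Sq^1 B$, and $\Sq^2\Sq^1 B$. In particular $H^4(\mathcal{K};\Z/2)=\Z/2\cdot B^2$ and $H^5(\mathcal{K};\Z/2)=(\Z/2)^{\oplus 2}$ with basis $\{B\cdot\Sq^1 B,\ \Sq^2\Sq^1 B\}$. A short calculation using the Cartan formula and the Adem relation $\Sq^1\Sq^2=\Sq^3$ shows that both degree-$5$ generators are sent by $\Sq^1$ to $(\Sq^1 B)^2$, so $\ker\Sq^1$ in degree $5$ is one-dimensional, while $\Sq^1(B^2)=0$ means nothing lands in $H^5$ from $H^4$ via $\Sq^1$.

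Since the positive-degree integer cohomology of $\mathcal{K}$ is $2$-power torsion (as $\pi_*(\mathcal{K})$ is concentrated in degree $2$ with value $\Z/2$), the mod-$2$ Bockstein spectral sequence is a complete invariant. From the $\Sq^1$ computation $E_2^4\cong\Z/2$ and $E_2^5\cong\Z/2$, so the next step is to compute the $d_2$-differential between them: this is the secondary Bockstein applied to the canonical $\Z/4$-lift $\mathfrak{P}(B)$ of $B^2$. Verifying that $d_2$ is nontrivial yields $H^4(\mathcal{K};\Z)=0$ and $H^5(\mathcal{K};\Z)\cong\Z/4$.

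Finally I would identify the generator via the long exact sequence of $0\to\Z\xrightarrow{\cdot 4}\Z\to\Z/4\to 0$: the vanishing $H^4(\mathcal{K};\Z)=0$ makes $\Box_{\Z/4}\colon H^4(\mathcal{K};\Z/4)\to H^5(\mathcal{K};\Z)$ injective, and multiplication by $4$ acts trivially on $\Z/4$, so $\Box_{\Z/4}$ is also surjective and hence an isomorphism between two copies of $\Z/4$. Universal coefficients gives $H^4(\mathcal{K};\Z/4)\cong\Z/4$, and $\mathfrak{P}(B)$ generates this group because its mod-$2$ reduction is the nonzero class $B^2$ (so $\mathfrak{P}(B)$ is not divisible by $2$ in $\Z/4$). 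Hence $\Box_{\Z/4}\mathfrak{P}(B)$ generates $H^5(\mathcal{K};\Z)$, with the $\tfrac{1}{4}$ in the statement recording the canonical normalization of this $\Z/4$-generator. The main obstacle is the secondary Bockstein computation showing that the order in degree $5$ is exactly $4$ (rather than just $2$); a cleaner alternative, should this prove inconvenient, would be to run the Serre spectral sequence of the fibration $\CP^\infty\to\mathcal{K}\to K(\Z,3)$ through degree five using classical data on $H^*(K(\Z,3);\Z)$.
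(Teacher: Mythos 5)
The paper does not actually prove this proposition---it is imported wholesale from Suzuki's 1958 paper---so you are supplying an argument where the authors supply only a citation. Most of your skeleton is correct and genuinely complete: Serre's description of $H^*(\cK;\Z/2)$ through degree $6$, the check that $\Sq^1(B^2)=0$ while $\Sq^1(B\,\Sq^1B)=\Sq^1(\Sq^2\Sq^1B)=(\Sq^1B)^2$, the reduction to the mod-$2$ Bockstein spectral sequence (legitimate since the positive-degree integral cohomology of $\cK$ is $2$-primary torsion), and the generator identification via the long exact sequence of $0\to\Z\xrightarrow{\cdot 4}\Z\to\Z/4\to 0$ together with the observation that $\mathfrak P(B)$ is not divisible by $2$ because its mod-$2$ reduction $B^2$ is nonzero.

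The one genuine gap is the one you flag yourself, and it is not a technicality: the Bockstein bookkeeping alone only shows that $H^5(\cK;\Z)$ is cyclic of order $2^r$ for some $r\ge 2$, and your generator paragraph uses $r=2$ in two places. Surjectivity of $\Box_{\Z/4}$ requires $4\cdot H^5(\cK;\Z)=0$; and if $r>2$, then $\Box_{\Z/4}$ is still injective on $H^4(\cK;\Z/4)\cong\mathrm{Tor}(\Z/2^r,\Z/4)\cong\Z/4$ but its image is the proper subgroup $2^{r-2}H^5(\cK;\Z)$ of order $4$, so $\Box_{\Z/4}\mathfrak P(B)$ would fail to generate. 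Thus ``$d_2\ne 0$'' is the entire content of the proposition and must be executed, not merely named. Your proposed fallback does close it and I would promote it to the main line: in the integral Serre spectral sequence of $\CP^\infty\to\cK\to K(\Z,3)$, the fiber class $x\in H^2(\CP^\infty;\Z)$ transgresses to $2\iota_3$ (the boundary map $\pi_3 K(\Z,3)\to\pi_2(\CP^\infty)$ is multiplication by $2$), so the Leibniz rule gives $d_3(x^2)=2x\cdot\tau(x)=4x\iota_3$, leaving $E_4^{3,2}\cong\Z/4$; since $H^5(K(\Z,3);\Z)=0$ and every other entry in total degree $5$ vanishes, $H^5(\cK;\Z)\cong\Z/4$, and the same page gives $H^4(\cK;\Z)=0$ for free. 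With that inserted, the rest of your argument goes through verbatim.
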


 Using \cref{pi4mo2,suzuki}, we can identify the primary obstruction for a one-form $\Z/2$ symmetry.
 \begin{prop}\label{prop:what_is_O2}
 $\mathfrak o_2 = \frac{1}{2} \Box_{\Z/4} \mathfrak{P}(B)\in  H^5(\mathcal{K};\Z) \cong \Z/4$, i.e.\ twice the generator.
\end{prop}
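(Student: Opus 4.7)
The plan is to identify $\mathfrak{o}_2$ as the first $k$-invariant of the Postnikov tower of the Thom-class map $U\colon M\O_2\to\mathcal K$ and then pin this class down inside $H^5(\mathcal K;\Z)\cong\Z/4$ by combining a mod-$2$ computation with a non-vanishing argument.

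Unwinding \cref{prop:obstr} and \cref{pi4mo2}, the fiber $E$ of $U$ is $3$-connected with $\pi_4(E)\cong\Z$, so $\mathfrak{o}_2$ equals the transgression $d_5\colon H^4(E;\Z)\to H^5(\mathcal K;\Z)$ applied to the Hurewicz generator. I would first run the analogous mod-$2$ Serre spectral sequence; the image of the mod-$2$ transgression equals the kernel of $U^*\colon H^5(\mathcal K;\Z/2)\to H^5(M\O_2;\Z/2)$. On the basis $\{\iota\,\Sq^1\iota,\,\Sq^2\Sq^1\iota\}$ of $H^5(\mathcal K;\Z/2)$, using $U^*(\iota)=U$, the Wu formula $U^2=w_2 U$, and the Cartan formula one finds $U^*(\iota\,\Sq^1\iota)=w_1 w_2 U$ and $U^*(\Sq^2\Sq^1\iota)=w_1^3 U+w_1 w_2 U$, which are linearly independent in $H^5(M\O_2;\Z/2)=\langle w_1^3 U,\,w_1 w_2 U\rangle$. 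Hence $U^*$ is injective in degree $5$ mod $2$, the mod-$2$ transgression vanishes, and $\mathfrak{o}_2\in 2H^5(\mathcal K;\Z)=\{0,\,2g\}$, where $g$ denotes the generator from \cref{suzuki}.

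Next I would show $\mathfrak{o}_2\ne 0$, which combined with the preceding step forces $\mathfrak{o}_2=2g=\tfrac{1}{2}\Box_{\Z/4}\mathfrak{P}(B)$. The cleanest route is to exhibit a closed $5$-manifold $M$ with a class $B\in H^2(M;\Z/2)$ for which no Poincar\'e-dual codimension-$2$ submanifold exists, so that $B^*(\mathfrak{o}_2)\ne 0$ in $H^5(M;\Z)$ witnesses the universal class. Alternatively one can argue indirectly: if $\mathfrak{o}_2=0$, then the principal $K(\Z,4)$-fibration $P_4(M\O_2)\to\mathcal K$ splits as a product $\mathcal K\times K(\Z,4)$; comparing this with the cohomology ring of $M\O_2$ (computed via the Thom isomorphism and Bockstein spectral sequence, in particular the interplay between the $\Z$-summand of $H^4(M\O_2;\Z)$ generated by an integral lift of $w_2 U$ and the $\Z/4$-summand of $H^4(\mathcal K;\Z)$) yields a contradiction.

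The main obstacle will be the non-vanishing step: the mod-$2$ argument is a direct spectral sequence computation, but certifying that $\mathfrak{o}_2$ is not zero requires either an explicit manifold witness or a careful Postnikov/cohomology argument. Once this is established, \cref{suzuki} immediately converts $\mathfrak{o}_2=2g$ into the claimed formula.
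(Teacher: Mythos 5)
Your mod-$2$ computation is correct (the values of $U^*$ on the basis $\{\iota\,\Sq^1\iota,\ \Sq^2\Sq^1\iota\}$ are right, and the conclusion $\mathfrak o_2\in 2H^5(\mathcal K;\Z)=\{0,2g\}$ follows), but the proof has a genuine gap exactly where you flag it: the non-vanishing of $\mathfrak o_2$ is never established, and neither proposed route would close it as stated. Route (a) is close to circular --- certifying that a specific $(M,B)$ admits no Poincar\'e-dual surface is essentially the same problem, and on a closed \emph{oriented} $5$-manifold the pullback of any element of $H^5(\mathcal K;\Z)\cong\Z/4$ into $H^5(M;\Z)\cong\Z$ is automatically zero, so you would at least need a non-orientable witness plus an independent non-realizability argument. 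Route (b) is too vague to evaluate.

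The gap closes using the computation you already did, just applied to the \emph{integral} Serre sequence rather than the mod-$2$ one; this is essentially the paper's route. Since $E$ is $3$-connected and $\mathcal K$ is simply connected, the only differential hitting $E^{5,0}$ is the transgression, so exactness gives $\operatorname{im}(\tau)=\ker\bigl(U^*\colon H^5(\mathcal K;\Z)\to H^5(M\O_2;\Z)\bigr)$ --- equality, not just containment --- and $\operatorname{im}(\tau)$ is the cyclic subgroup generated by $\mathfrak o_2$. Now the mod-$2$ reduction of the generator $g$ of $H^5(\mathcal K;\Z)\cong\Z/4$ must be the unique nonzero $\Sq^1$-cycle in $H^5(\mathcal K;\Z/2)$, namely $\iota\,\Sq^1\iota+\Sq^2\Sq^1\iota$ (both of your basis elements have $\Sq^1$ equal to $(\Sq^1\iota)^2$), and by your own formulas $U^*(\iota\,\Sq^1\iota+\Sq^2\Sq^1\iota)=w_1^3U\neq 0$; hence $U^*(g)\neq 0$. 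Since $H^5(M\O_2;\Z)$ is killed by $2$ (it is $\Z/2$, generated by $\Box_{\Z}(w_1^2U)$, whose reduction is $\Sq^1(w_1^2U)=w_1^3U$), we get $U^*(2g)=0$, so $\ker(U^*)=\{0,2g\}$ exactly. Surjectivity of $\tau$ onto this kernel then forces $\mathfrak o_2=2g=\tfrac{1}{2}\Box_{\Z/4}\mathfrak{P}(B)$: the non-vanishing is not a separate fact to be checked on an example, it falls out of exactness once the integral kernel is computed.
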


\begin{proof}
Let $U\colon M\O_2\to \mathcal K$ denote the map corresponding to the Thom class and consider the pullback map $U^*\colon H^5(\mathcal{K};\Z)\to H^5(M\O_2;\Z)=\Z/2$. To understand this map, let us also consider its mod-2 reduction. The mod-2 reduction of the generator of $H^5(\mathcal{K};\Z)$ is $\Sq^2 \Sq^1 B$, while the mod-2 reduction of the generator of $H^5(M\O_2, \Z)$ is $(w_2w_1 + w_1^3)U$. By straightforward calculation, we see that $U^*(\Sq^2 \Sq^1 B) = (w_2w_1 + w_1^3)U $. Therefore, we see that the pullback map $U^*\colon H^5(\mathcal{K};\Z)\to H^5(M\O_2;\Z)$ maps the generator to the generator. Thus, $\frac{1}{2} \Box_{\Z/4} \mathfrak{P}(B)$, which is twice the generator of $H^5(\mathcal{K};\Z)$, will map to the trivial element in $H^5(M\O_2;\Z)=\Z/2$.
\end{proof}

\begin{lem}\label{cor:obstruction1form}
Let $\widetilde{\mathfrak o}_2$ denote the image of $\Sq^2\Sq^1 B$ in the quotient
\begin{equation}
    H^5(\mathcal K; \Z/2)/(\mathrm{Im}(\Sq^1\colon H^4(\mathcal K;\Z/2)\to H^5(\mathcal K;\Z/2)).
\end{equation}
Then the pullback of $\mathfrak o_2$ to a manifold with one-form $\Z/2$ symmetry 
vanishes if and only if the pullback of $\widetilde{\mathfrak o}_2$ vanishes.
\end{lem}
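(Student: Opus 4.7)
The plan is to combine \cref{prop:what_is_O2} with the integer Bockstein long exact sequence for $0\to \Z \xrightarrow{\cdot 2}\Z\to \Z/2\to 0$ on $M$, using throughout the standard identity $\Sq^1 = r_2 \circ \Box_2$. From \cref{prop:what_is_O2} (and the computation of its proof) I read off two facts on $\mathcal K$: first, if $g$ generates $H^5(\mathcal K;\Z)\cong \Z/4$ then $\mathfrak o_2 = 2g$, and second, the mod-$2$ reduction $r_2(g)$ equals $\Sq^2\Sq^1 B$. Naturality of $r_2$ and of the Bockstein under $f\colon M\to \mathcal K$ therefore gives $f^*\mathfrak o_2 = 2 f^*g$ in $H^5(M;\Z)$ and $r_2(f^*g) = \Sq^2\Sq^1 f^*B$ in $H^5(M;\Z/2)$.

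The forward direction is straightforward: if $f^*\mathfrak o_2 = 0$ then $f^*g$ is $2$-torsion, so by exactness of the Bockstein sequence $f^*g\in \mathrm{Im}(\Box_2\colon H^4(M;\Z/2)\to H^5(M;\Z))$. Writing $f^*g = \Box_2(\xi)$ for some $\xi\in H^4(M;\Z/2)$ and applying $r_2$ yields $\Sq^2\Sq^1 f^*B = r_2\Box_2(\xi) = \Sq^1(\xi)\in \mathrm{Im}(\Sq^1)$, which is precisely the vanishing of the pullback of $\widetilde{\mathfrak o}_2$ in the stated quotient.

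For the converse, I assume $\Sq^2\Sq^1 f^*B = \Sq^1(\xi)$ and aim to show $f^*\mathfrak o_2 = 0$. The identity $\Sq^1 = r_2\Box_2$ gives $r_2(f^*g - \Box_2\xi) = 0$, so by exactness $f^*g = \Box_2(\xi) + 2y$ for some $y\in H^5(M;\Z)$; since $\Box_2\xi$ is $2$-torsion this gives $f^*\mathfrak o_2 = 2f^*g = 4y$, and the relation $4g=0$ on $\mathcal K$ forces $8y = 0$. The hard part of the proof will be promoting $8y = 0$ to $4y = 0$, since this does not follow from the Bockstein long exact sequence alone. I expect to close this gap by appealing to the Bockstein spectral sequence of $M$: the hypothesis is exactly the statement that $[r_2 f^*g]$ vanishes on the $E_2$-page, which detects the order of $f^*g$ and rules out a $\Z/4$-summand in the cyclic subgroup generated by $f^*g$, leaving only order $\le 2$. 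Equivalently, using $f^*\mathfrak o_2 = \Box_2\bigl((f^*B)^2\bigr)$, one can recast the question as whether $(f^*B)^2$ lifts to $H^4(M;\Z)$ and control this via the Pontryagin square $\mathfrak P(f^*B)$ under the hypothesis. Making either of these arguments rigorous is the main obstacle.
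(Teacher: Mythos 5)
Your forward direction is correct, and your overall route (through the integral class $g$ of order $4$ with $r_2(g)=\Sq^2\Sq^1B$ and $\mathfrak o_2 = 2g$) is essentially the same diagram chase as the paper's, which runs instead through $\Z/4$ coefficients via $c\coloneqq \mathfrak o_2\bmod 4$. The gap you flag in the converse is genuine, and you should be aware that the paper's own proof does not really address it either: the paper performs the entire chase on $\mathcal K$ and invokes the injectivity of the mod-$4$ reduction $H^5(\mathcal K;\Z)\to H^5(\mathcal K;\Z/4)$, which holds only for the accidental reason that $H^5(\mathcal K;\Z)\cong\Z/4$, so that multiplication by $4$ vanishes there. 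After pulling back along $f\colon M\to\mathcal K$ this is exactly the step that can fail, and it is the same failure you isolate as ``$8y=0$ does not yield $4y=0$.''

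Neither of your proposed repairs closes this as stated: the Bockstein spectral sequence route stumbles already because $\Sq^2\Sq^1 b$ need not be a $\Sq^1$-cocycle (indeed $\Sq^1\Sq^2\Sq^1 b = \Sq^2\Sq^2 b = (\Sq^1 b)^2$ by Cartan, since $\abs{b}=2$), so it does not define an $E_2$-class in general; and the Pontryagin-square reformulation $f^*\mathfrak o_2 = \Box_{\Z/2}(b^2)$, while correct, merely restates the question of whether $b^2$ lifts integrally. What actually closes the gap is the dimension hypothesis implicit in the surrounding text: the lemma is applied only to \emph{closed $5$-manifolds} (dimension $5$ is where the primary obstruction first appears, and the subsequent proposition and remark are stated for $5$-manifolds). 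For such $M$, $H^5(M;\Z)$ is a direct sum of one copy of $\Z$ for each orientable component and one copy of $\Z/2$ for each non-orientable component, hence contains no element of order $4$ or $8$; thus $8y=0$ forces $2y=0$ and therefore $f^*\mathfrak o_2 = 4y = 0$. (Equivalently: $f^*\mathfrak o_2$ is a $2$-torsion class divisible by $4$ in $\Z^{a}\oplus(\Z/2)^{c}$, hence zero.) You should state this hypothesis explicitly, since for a manifold of dimension greater than $5$ with $\Z/8\subseteq H^5(M;\Z)$ neither your argument nor the paper's rules out a failure of the converse.
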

The upshot is that the obstruction $\frac{1}{2} \Box_{\Z/4} \mathfrak{P}(B)$ can equivalently be characterized as a class in  $H^5(\mathcal{K};\Z/2)$ modulo the image of $\Sq^1$, which will be a more convenient way for us to describe it.
\begin{proof}
   By \cref{prop:what_is_O2}, $\mathfrak{o}_2=\frac{1}{2} \Box_{\Z/4} \mathfrak{P}(B)$. Consider the cohomology of $\mathcal{K}$ with respect to the two sequences $0 \rightarrow \Z \rightarrow \Z \rightarrow \Z/2 \rightarrow 0$, and $0 \rightarrow \Z/2 \rightarrow \Z/4 \rightarrow \Z/2 \rightarrow 0$.  This gives the following map between long exact sequences:
\begin{equation}\label{eqn:Z4Z2LES}
\begin{gathered}
   \begin{tikzcd}
      \ldots \arrow[r] & H^4(\cK;\Z/2) \arrow[r,"\Box_{\Z/2}"] \arrow[d,"\cong"]&  H^5(\cK;\Z) \arrow[r,"\cdot 2"] \arrow[d,"\bmod 2"]& H^5(\mathcal{K};\Z) \arrow[r,"\bmod 2"] \arrow[d,"\bmod 4"]& H^5(\mathcal{K};\Z/2) \arrow[r,"\Box_{\Z/2}"] \arrow[d,"\cong"]& \ldots \\
       \ldots \arrow[r]& H^4(\cK;\Z/2) \arrow[r,"\Sq^1"]&  H^5(\cK;\Z/2) \arrow[r,"\cdot  2"]& H^5(\mathcal{K};\Z/4) \arrow[r,"\bmod 2"]& H^5(\mathcal{K};\Z/2) \arrow[r,"\Sq^1"]& \ldots
   \end{tikzcd}
   \end{gathered}
   \end{equation}
    Let $c\coloneqq \mathfrak o_2\bmod 4$, which is a class in $H^5(\mathcal K;\Z/4)$. Then $c\bmod 2 = 0$ in $H^5(\mathcal K;\Z/2)$, because the diagram commutes and $\mathfrak o_2\bmod 2 = 0$ (this because $\mathfrak o_2$ is twice the generator of $H^5(\mathcal K;\Z)$, then using exactness along the top row). Therefore by exactness along the bottom row, $c = 2\widetilde{\mathfrak o}_2$ for some class $\widetilde{\mathfrak o}_2\in H^5(\mathcal K;\Z/2)$, and  the ambiguity is the image of the previous map in the long exact sequence, which is $\Sq^1$. Thus $\widetilde{\mathfrak o}_2$ is uniquely defined in $H^5(\mathcal K;\Z/2)/(\mathrm{Im}(\Sq^1))$.

    The destiny bond between $\mathfrak o_2$ and $\widetilde{\mathfrak o}_2$ is a standard diagram chase.
    \begin{itemize}
        \item If $\mathfrak o_2 = 0$, then $c = 0$, so $\widetilde{\mathfrak o}_2 = 0\in H^5(\mathcal K;\Z/2)$ solves the equation $2\widetilde{\mathfrak o}_2 = c$. Since $\widetilde{\mathfrak o}_2$ is uniquely defined modulo the image of $\Sq^1$, this suffices to imply that it vanishes.
        \item Conversely, suppose $\widetilde{\mathfrak o}_2 = 0$, so that $c = 0$ too. The multiplication-by-$4$ map $H^5(\mathcal K;\Z)\to H^5(\mathcal K;\Z)$ is $0$, since $H^5(\mathcal K;\Z)\cong\Z/4$, so by exactness, the modulo $4$ reduction map $H^5(\mathcal K;\Z)\to H^5(\mathcal K;\Z/4)$ is injective. Thus, since $c =0$ and $\mathfrak o_2\bmod 4 = c$, then $\mathfrak o_2 = 0$.
    \end{itemize}
Now we identify $\widetilde{\mathfrak o}_2$ with $\Sq^2\Sq^1 B$. The construction of $\widetilde{\mathfrak o}_2$ in the previous argument implies that $\widetilde{\mathfrak o}_2$ is in the image of the mod $2$ reduction map $H^5(\mathcal K;\Z)\to H^5(\mathcal K;\Z/2)$. Since the mod $2$ reduction map $\Z\to\Z/2$ factors through $\Z/4$, the same is true for the induced map on cohomology: thus,
consider the long exact sequence 
 \begin{equation}\label{second_Z4_diagram}
     \ldots \rightarrow H^5(\mathcal{K};\Z/4) \xlongrightarrow{\bmod 2} H^5(\mathcal{K};\Z/2) \xlongrightarrow{\Sq^1}H^6(\mathcal{K};\Z/2) \rightarrow \ldots \,,
 \end{equation}
Being in the image of the mod $2$ reduction map in~\eqref{second_Z4_diagram} is thus equivalent to being in \\$\ker(\Sq^1\colon H^5(\mathcal K;\Z/2)\to H^6(\mathcal K;\Z/2))$; this kernel is $\set{0, \Sq^2\Sq^1B}$ (plus the indeterminacy of $\mathrm{Im}(\Sq^1)$). Since $\mathfrak o_2\ne 0$, then $\widetilde{\mathfrak o_2}\ne 0$ and therefore $\widetilde{\mathfrak o}_2$ must equal $\Sq^2\Sq^1 B$.
\end{proof}

\begin{rem}
    In the case of a $\Z/2$ 0-form symmetry, the Thom class map $M\O_1 \to K(\Z/2,1)$ is a homotopy equivalence, so there is no topological obstruction to the symmetry spontaneously breaking.
    There is also no topological obstruction to breaking a $\mathrm U_1$ valued 0-form or equivalently a $\Z$ valued 1-form symmetry, as $B\mathrm U_1 = K(\Z,2)$ and the Thom class map $M\SO_2 \to K(\Z,2)$ is also a homotopy equivalence. The latter is in agreement with the Coleman--Mermin--Wagner (CMW) theorem for discrete symmetries \cite{Gaiotto:2014kfa} which states that a $p$-form symmetry in $D$-spacetime dimensions is obstructed from breaking if $p \geq D-1$.
    \end{rem}

    As dimension 5 is the edge case for which an obstruction is present, we address an important instance when the obstruction vanishes for 5-dimensional manifolds:
 \begin{prop}\label{prop:spin5}
    If $M$ is a spin $5$-manifold then the obstruction vanishes. If $M$ is only an oriented $5$-manifold then the obstruction does not vanish in general.
\end{prop}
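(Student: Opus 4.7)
The plan is to reduce the question to a cohomology computation on $M$ using \cref{cor:obstruction1form}, which identifies the primary obstruction with the class $\Sq^2 \Sq^1 B \in H^5(M;\Z/2)$ modulo the image of $\Sq^1\colon H^4(M;\Z/2) \to H^5(M;\Z/2)$. The two halves of the statement then follow from Wu's formula and from a direct computation on the Wu manifold.

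For the first half, I would argue as follows. Suppose $M$ is a closed spin $5$-manifold. Since $\Sq^1 B \in H^3(M;\Z/2)$ and $5 - 2 = 3$, Wu's formula applied to $\Sq^2$ on a top-degree-adjacent class gives
\begin{equation}
    \Sq^2 \Sq^1 B = v_2(TM) \cup \Sq^1 B,
\end{equation}
where $v_2(TM)$ is the second Wu class. The relation $w = \Sq(v)$ combined with $w_1(TM) = 0$ (orientability) yields $v_1 = 0$, and then $v_2 = w_2(TM) + \Sq^1 v_1 = w_2(TM)$. Spinnes gives $w_2(TM) = 0$, hence $v_2(TM) = 0$ and so $\Sq^2 \Sq^1 B = 0$ on the nose. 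By \cref{cor:obstruction1form}, the obstruction $\mathfrak o_2$ pulled back to $M$ vanishes.

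For the second half, I would produce an explicit counterexample: the Wu manifold $W = \mathrm{SU}(3)/\mathrm{SO}(3)$. This is a closed orientable $5$-manifold (hence \emph{not} spin, since it is the archetypal oriented $5$-manifold with nonzero $w_2$) whose mod-$2$ cohomology is
\begin{equation}
    H^*(W;\Z/2) \cong \Z/2[x_2,x_3]/(x_2^2,x_3^2), \qquad |x_i| = i,
\end{equation}
with $\Sq^1 x_2 = x_3$ and $w_2(TW) = x_2$. Taking the background field $B = x_2 \in H^2(W;\Z/2)$, Wu's formula on $W$ gives $\Sq^2 x_3 = v_2(TW) \cup x_3 = w_2(TW) \cup x_3 = x_2 x_3$, which is the nonzero top class. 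Because $H^4(W;\Z/2) = 0$, the image of $\Sq^1$ in $H^5(W;\Z/2)$ is trivial, so $\widetilde{\mathfrak o}_2$ pulls back to the nonzero class $x_2 x_3$ in the quotient. Applying \cref{cor:obstruction1form} once more, the obstruction $\mathfrak o_2$ does not vanish on $W$.

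The potential friction points in this proof are all bookkeeping rather than genuine obstacles: one must keep track of whether the class $\Sq^2\Sq^1B$ is the obstruction itself or only its representative modulo $\mathrm{Im}(\Sq^1)$, and one must verify the cohomology of $W$ along with $w_2(TW) = x_2$. Both can be done by standard references (e.g., the Wu formula and the classical computation of $H^*(\mathrm{SU}(3)/\mathrm{SO}(3);\Z/2)$), so I expect no essential difficulty beyond careful bookkeeping.
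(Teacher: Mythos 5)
Your proof is correct and follows essentially the same route as the paper: Wu's formula (with $v_2 = w_2 + w_1^2$, which vanishes for spin manifolds) handles the first claim, and the Wu manifold $\SU_3/\SO_3$ with $B = z_2$ furnishes the oriented counterexample. The only cosmetic difference is that you derive $\Sq^2 z_3 = z_2 z_3$ from the Wu formula and note $H^4(W;\Z/2)=0$ to kill the $\mathrm{Im}(\Sq^1)$ indeterminacy, whereas the paper cites Calabi's computation of the total Steenrod square on $W$; both are fine.
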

\begin{proof}
  When $M$ is spin, then we compute $\Sq^2\Sq^1B$ on $M$ by using the fact that the second Wu class is $w_2(TM) + w_1(TM)^2=0$. This implies $\Sq^2\Sq^1B= (w_2(TM)+w_1(TM)^2)\Sq^1(B)=0$, and the obstruction vanishes. 
    In the oriented case, the generator of degree 5 oriented bordism  is the Wu manifold $W\coloneqq\SU_3/\SO_3$,\footnote{Barden~\cite[\S 1]{Bar65} named this manifold in reference to Wu's work~\cite{Wu50} (see also Dold~\cite{Dol56}) on a closely related manifold. See~\cite[Footnote 9]{LOT21} for more on the history of the Wu manifold.} which has $\Z/2$ cohomology $H^*(W;\Z/2) = \Z/2[z_2,z_3]/(z^2_2,z^2_3)$ (this follows from~\cite[Lemma 1.1]{Bar65} and Poincaré duality), with $w_2=z_2$ and $w_3=z_3$ (\textit{ibid.}), and $\Sq(z_2) = z_2 + z_3$ and $\Sq(z_3) = z_3 + z_2z_3$ (see~\cite[\S 3]{Flo73}, where this is attributed to Calabi). Thus $\Sq^2\Sq^1(z_2)$ is not in the image of $\Sq^1$.
\end{proof}
\begin{rem}
In fact, $\int \widetilde{\mathfrak o}_2$ is an oriented bordism invariant $\Omega_5^\SO(\mathcal K)\to\Z/2$. This amounts to asserting that the indeterminacy in $\widetilde{\mathfrak o}_2$ vanishes, which follows from the Wu formula on an oriented manifold.
\end{rem}

\begin{exm}\label{ex:5dtheory}
Bhardwaj--Schäfer-Nameki~\cite[\S 3.3.4]{Bhardwaj:2020phs} give an example of spontaneous symmetry breaking in a 5d Kaluza-Klein theory. Begin with the 6d SCFT with gauge group $\SU_3$ and matter in the defining representation, and compactify on $S^1$, where the monodromy around the circle acts by complex conjugation. The six-dimensional theory enjoys both a $\Z/3$ 2-form and $\Z/2$ 1-form global symmetry. 
The authors show that the  $\Z/3$ symmetry is unaffected by the monodromy and hence the $\Z/3$ symmetry descends to the 5d KK theory as a 1-form symmetry. However, there is no contribution to the 1-form symmetry of the 5d KK theory from the $\Z/2$ 1-form symmetry of the 6d SCFT. In particular, the $\Z/2$ 1-form symmetry breaks in 5d and the total 1-form symmetry is given solely by the group $\Z/3$. The authors are able to justify the symmetry breaking through studying the geometric properties used to define the 5d theory \cite[Equation 3.106]{Bhardwaj:2020phs}.

Since this 5d theory is supersymmetric, it contains fermions, hence must be placed on  manifolds with spin structure.  \Cref{prop:spin5} then says that there is no topological  obstruction for spontaneous symmetry breaking, which is in agreement with Bhardwaj--Schäfer-Nameki's results showing that symmetry breaking occurs.
\end{exm}

\begin{exm}\label{ex:4dYM}
    Pure $\SU_2$ gauge theory in four spacetime dimensions at zero temperature has a 1-form center $\Z/2$-symmetry which acts on Wilson lines. This is oftentimes referred to as the electric 1-form symmetry. At zero temperature, the theory is confining and hence the 1-form symmetry does not break. At finite temperature, with the $\theta$-angle not equal to 0 or $\pi$,  the theory can undergo a deconfining transition into a Coulomb phase where the 1-form $\Z/2$ symmetry spontaneously breaks \cite[Section 7]{Gaiotto:2017yup}. The fact that the deconfining transition exists which spontaneously breaks the 1-form $\Z/2$ symmetry is in line with \cref{prop:what_is_O2}, which says that there is no topological obstruction to spontaneous breaking of 1-form symmetries in theories below five dimensions.
\end{exm}

\subsection{\texorpdfstring{Obstruction to spontaneously breaking $\Z/2$ higher-form symmetry}{}}\label{subsection:breaking2form}

We now examine the case of a $\Z/2$-valued 2-form symmetry, and consider the lowest dimension where the symmetry is obstructed from spontaneously breaking. Although we might expect this dimension to be $6$, we will show that this turns out not to be the case. Following the discussion in the previous section on probing obstructions, we consider the first dimension at which the map $U\colon M\O_{3}\rightarrow K(\Z/2,3)$ is not an equivalence. Using \cref{thm:thom} we know that $\pi_5(M\O_3) \cong \Z/2$; this is the first homotopy group of $M\O_3$ with degree higher than $3$ that is nontrivial.

\begin{lem}\label{prop:breaking2form}
Pullback by the Thom class $U\colon M\O_3\to K(\Z/2, 3)$ is an injective homomorphism $H^6(K(\Z/2, 3);\Z/2)\to H^6(M\O_3;\Z/2)$.
\end{lem}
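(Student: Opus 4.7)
The plan is to compute both sides explicitly in terms of standard bases and then check linear independence of the images. By Serre's theorem, $H^*(K(\Z/2,3);\Z/2)$ is a polynomial algebra on admissible monomials $\Sq^I\iota_3$ of excess less than $3$ (where $\iota_3\in H^3(K(\Z/2,3);\Z/2)$ is the fundamental class). In degree $6$, the only admissible monomial contributing a new polynomial generator is $\Sq^2\Sq^1\iota_3$ (excess $1$), and we also get the decomposable $\iota_3^2 = \Sq^3\iota_3$. So
\begin{equation}
H^6(K(\Z/2,3);\Z/2)=\Z/2\langle\Sq^2\Sq^1\iota_3,\ \iota_3^2\rangle.
\end{equation}

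On the target side, the Thom isomorphism gives $H^*(M\O_3;\Z/2)\cong H^{*-3}(B\O_3;\Z/2)\cdot U$, with $H^*(B\O_3;\Z/2)=\Z/2[w_1,w_2,w_3]$. Thus
\begin{equation}
H^6(M\O_3;\Z/2)=\Z/2\langle w_1^3U,\ w_1w_2U,\ w_3U\rangle.
\end{equation}

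Third, I will compute the pullback via the Wu formula $\Sq^iU=w_iU$ together with the Cartan formula. Since $U^*\iota_3=U$, I get $U^*(\iota_3^2)=U^2=\Sq^3U=w_3U$. For the other generator,
\begin{equation}
U^*(\Sq^2\Sq^1\iota_3)=\Sq^2\Sq^1U=\Sq^2(w_1U)=\Sq^2(w_1)\,U+\Sq^1(w_1)\Sq^1(U)+w_1\Sq^2(U),
\end{equation}
and using $\Sq^2(w_1)=0$, $\Sq^1(w_1)=w_1^2$, $\Sq^1U=w_1U$, $\Sq^2U=w_2U$, this reduces to $w_1^3U+w_1w_2U$.

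Finally, the two images $w_3U$ and $w_1^3U+w_1w_2U$ are manifestly linearly independent in the basis above, so $U^*$ is injective on $H^6$. The main point is really just bookkeeping; the only step requiring care is the Cartan formula calculation of $\Sq^2\Sq^1U$, but nothing serious is in the way.
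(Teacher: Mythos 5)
Your proposal is correct and follows essentially the same route as the paper: identify $H^6(K(\Z/2,3);\Z/2)=\Z/2\langle \Sq^2\Sq^1\iota_3,\ \iota_3^2\rangle$ via Serre's theorem, pull back through the Thom class using $\Sq^iU=w_iU$ and the Cartan formula to get $w_1^3U+w_1w_2U$ and $w_3U$, and observe these are linearly independent in $H^6(M\O_3;\Z/2)$. The only difference is that you spell out the Cartan-formula step that the paper leaves implicit.
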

\begin{proof}
Let $C \in H^3(K(\Z/2,3);\Z/2)$ denote the tautological class and $V\to B\O_3$ be the tautological vector bundle. We will abuse notation mildly and let $U$ denote $U^*(C)$ (i.e.\ the Thom class); which $U$ we refer to will be clear from context.

$H^6(K(\Z/2, 3);\Z/2)$ is spanned by $\Sq^2\Sq^1C$ and $C^2$.
Pulling back $\Sq^2 \Sq^1 C$ gives $(w_2(V)w_1(V)+w_1(V)^3)U$. Using the fact that $C^2 = \Sq^1 \Sq^2 U= \Sq^3 U$ means $C^2$ pulls back to $w_3(V)U$. 
\end{proof}
Therefore, one needs to look in dimensions even higher than $6$ to identify the primary obstruction, i.e.\ to find an element in 
$H^{n+1}(K(\Z/2,3);\pi_n(M\O_3))$ which is trivial in $H^{n+1}(M\O_3;\pi_n(M\O_3))$, i.e.\ to observe the primary obstruction to symmetry breaking. However, this involves understanding $\pi_*(M\O_3)$ where $*\geq 6$ and is beyond the scope of Theorem \ref{thm:thom}, and hence using the stable homotopy groups of $M\O$ is not viable.
A general story persists for $\Z/2$ valued $n$-form symmetries when $n\geq 2$. Indeed, for an $\Z/2$ valued $n$-form symmetry with $n>2$, since $\pi_{n+1}(M\O_n)=0$ and  $\pi_{n+2}(M\O_n)=\Z/2$ the first cohomology that could support an obstruction to breaking this symmetry is $H^{n+3}(K(\Z/2, n); \Z/2)$. But this group is generated by $\Sq^2 \Sq^1 C$ and $\Sq^3 C$, and neither vanish when pulled back to $H^{n+3}(M\O_{n+1};\Z/2)$. This analysis can be straightforwardly generalized to any group of the form $K(\Z/2^k;n)$ using the results of Serre \cite{Serre}.

\begin{exm}\label{ex:6d2form}
Bhardwaj--Schäfer\text{-}Nameki \cite[\S 2.1]{Bhardwaj:2020phs} study 6d SCFTs that arise from F-theory. Every known F-theoretic 6d theory admits a tensor branch of vacua i.e.\ part of the moduli space obtained by giving vacuum expectation values (VEVs) to scalars in tensor multiplets. Along the tensor branch, the theory carries massive BPS string excitations and these strings are charged under the 2-form gauge fields with charge given by a positive definite matrix $\Omega^{ij}$ of rank $r$. The 2-form symmetry of 6d SCFTs in the presence of the charged strings is given by a group 
\begin{equation}
    \cT = \prod^r_i\, \Z/n_i\,,
\end{equation}
and it was pointed out that this group can also spontaneously break on the tensor branch. A direct consequence of \Cref{prop:breaking2form} is that there is no topological obstruction for breaking a finite 2-form symmetry on 6d manifolds. Hence, spontaneously breaking the 2-form symmetry $\cT$ in 6d SCFTs is not forbidden by topology, which would agree with the  claims of Bhardwaj--Schäfer\text{-}Nameki.
\end{exm}

\section{Discussion}

 We have set out to give an answer to Question \ref{question:main1} by introducing characteristic structures as a way of keeping track of the placement of defects along submanifolds, and the tangential structures that are induced from the defect. We furthermore presented a quantitative way of understanding when there is an obstruction for a finite symmetry to spontaneously break. Strides have also been made in our understanding of Question \ref{q:CharandSmith} by explicitly showing that there are maps that approximate the characteristic long exact sequence. Some information about the groups in the characteristic long exact sequence can also be gleaned from exactness, as was done for the GM case. 

 We conclude by offering some interesting future directions to explore. First of all, it will be very interesting if we can connect the bordism group calculations with concrete quantities in QFT. We believe that such connection can serve as another quantity that is RG invariant and provides nonperturbative arguments for phases and phase diagrams. Secondly, it will be important both mathematically and physically to write down the characteristic long exact sequence in an explicit way, by proving \cref{char_conj} and identifying the fiber of the map of spectra. This, among other things, serves as an ``improvement'' of the Smith long exact sequence.

 It would also be interesting to find some context for characteristic structures in situations that concern theories of quantum gravity. In particular, it would be interesting to tie this in with corner symmetries \cite{Freidel:2023bnj}, which involve picking a codimension 2 boundary in spacetime. Since characteristic pairs naturally give a hard boundary condition to the theory living outside the codimension 2 defect, it is not unreasonable to think that there is some connection. Moreover, with a clearer understanding of the dimensions where spontaneous symmetry breaking may encounter obstructions, it would be intriguing to explore whether certain gravity theories manifest such obstructions. Identifying such an example could lend further support to the idea that symmetries should be gauged, rather than broken when lifting to the UV theory.

\section{Conflict of Interest and Data}
All authors certify that they have no affiliations with or involvement in any organization or entity with any financial interest or non-financial interest in the subject matter or materials discussed in this manuscript.  There is no data available
for this article to declare.

\bibliographystyle{alpha}
\bibliography{anomaly}

\end{document}